\newtheorem{theorem}{Theorem}[section]
\newtheorem{lemma}[theorem]{Lemma}
\newtheorem{corollary}[theorem]{Corollary}
\theoremstyle{definition}
\newtheorem{definition}[theorem]{Definition}
\newtheorem{remark}[theorem]{Remark}
\newtheorem{example}[theorem]{Example}
\newcommand{\Z}{\mathbbm Z}
\newcommand{\Co}{\mathbbm C}
\newcommand{\N}{\mathbbm N}
\newcommand{\bbone}{\mathbbm 1}
\newcommand{\poly}{\operatorname{poly}}
\newcommand{\polylog}{\operatorname{polylog}}
\newcommand{\E}{\mathop{\bf E\/}}
\newcommand{\BQP}{\mathsf{BQP}}
\newcommand{\PSPACE}{\mathsf{PSPACE}}
\newcommand{\NC}{\mathsf{NC}}
\newcommand{\QNC}{\mathsf{QNC}}
\DeclarePairedDelimiter\parens{\lparen}{\rparen}
\DeclarePairedDelimiter\abs{\lvert}{\rvert}
\DeclarePairedDelimiter\norm{\lVert}{\rVert}
\DeclarePairedDelimiter\floor{\lfloor}{\rfloor}
\DeclarePairedDelimiter\ceil{\lceil}{\rceil}
\DeclarePairedDelimiter\braces{\lbrace}{\rbrace}
\DeclarePairedDelimiter\bracks{\lbrack}{\rbrack}
\DeclarePairedDelimiter\bbracks{\lbrack\!\lbrack}{\rbrack\!\rbrack}
\DeclarePairedDelimiter\angles{\langle}{\rangle}
\newcommand{\calB}{\mathcal{B}}
\newcommand{\calH}{\mathcal{H}}
\newcommand{\calJ}{\mathcal{J}}
\newcommand{\calN}{\mathcal{N}}
\newcommand{\calO}{\mathcal{O}}
\newcommand{\calP}{\mathcal{P}}
\newcommand{\calT}{\mathcal{T}}
\newcommand{\calX}{\mathcal{X}}
\newcommand{\calY}{\mathcal{Y}}
\newcommand{\ba}{\boldsymbol{a}}
\newcommand{\bj}{\boldsymbol{j}}
\newcommand{\bk}{\boldsymbol{k}}
\newcommand{\bt}{\boldsymbol{t}}
\newcommand{\bw}{\boldsymbol{w}}
\newcommand{\bx}{{\boldsymbol{x}}}
\newcommand{\bz}{\boldsymbol{z}}
\begin{document}

\title{Parallel Quantum Algorithm for Hamiltonian Simulation}

\author{Zhicheng Zhang}
\affiliation{Centre for Quantum Software and Information, University of Technology Sydney, Sydney, Australia}
\affiliation{University of Chinese Academy of Sciences, Beijing, China}
\thanks{part of the work was done when the author was at the University of Chinese Academy of Sciences, Beijing, China.}
\orcid{0000-0002-7436-0426}
\email{iszczhang@gmail.com}

\author{Qisheng Wang}
\affiliation{Graduate School of Mathematics, Nagoya University, Nagoya, Japan}
\affiliation{Department of Computer Science and Technology, Tsinghua University, Beijing, China}
\thanks{part of the work was done when the author was at the Department of Computer Science and Technology, Tsinghua University, Beijing, China.}
\orcid{0000-0001-5107-8279}
\email{qishengwang1994@gmail.com}

\author{Mingsheng Ying}
\affiliation{State Key Laboratory of Computer Science, Institute of Software, Chinese Academy of Sciences, Beijing, China}
\affiliation{Department of Computer Science and Technology, Tsinghua University, Beijing, China}
\orcid{0000-0003-4847-702X}
\email{yingms@ios.ac.cn}
\email{yingmsh@tsinghua.edu.cn}

\maketitle

\begin{abstract}
    We study how parallelism can speed up quantum simulation.
    A parallel quantum algorithm is proposed for simulating the dynamics of a large class of Hamiltonians with good sparse structures, 
    called uniform-structured Hamiltonians, 
    including various Hamiltonians of practical interest like local Hamiltonians and Pauli sums.
    Given the oracle access to the target sparse Hamiltonian,
    in both query and gate complexity,
    the running time of our parallel quantum simulation algorithm 
    measured by the quantum circuit depth has a
    doubly (poly-)logarithmic dependence $\operatorname{polylog}\log(1/\epsilon)$ on the simulation precision $\epsilon$. This presents 
    an \textit{exponential improvement}  over the dependence $\operatorname{polylog}(1/\epsilon)$ of previous optimal sparse Hamiltonian simulation algorithm without parallelism.
    To obtain this result, we introduce a novel notion of parallel quantum walk, based on Childs' quantum walk.
    The target evolution unitary is approximated by a truncated Taylor series, which is obtained by combining these 
    quantum walks in a parallel way.
    A lower bound $\Omega(\log \log (1/\epsilon))$ is established, showing that the $\epsilon$-dependence of the gate depth achieved in this work cannot be significantly improved.

    Our algorithm is applied to simulating three physical models: the Heisenberg model, the Sachdev-Ye-Kitaev model and a quantum chemistry model in second quantization.
    By explicitly calculating the gate complexity for implementing the oracles,
    we show that on all these models, the total gate depth of our algorithm has a $\operatorname{polylog}\log(1/\epsilon)$ dependence in the parallel setting.
\end{abstract}

\newpage
\tableofcontents
\newpage

\section{Introduction}
\label{sec:introduction}

Simulating the quantum Hamiltonian dynamics is a fundamental problem in computational physics.
Despite its ubiquity and importance, the problem is believed to be  intractable for classical computers.  
Quantum computers were originally proposed to efficiently solve this problem~\cite{Feynman82}. The first algorithm for solving this problem was given by Lloyd for  local Hamiltonians~\cite{Lloyd96}, and has been 
followed by many remarkable results over the past twenty years. 
Moreover, these results have found diverse applications in other quantum algorithms (e.g.,~\cite{CKS17,AGGW17,FGG14,CGJ19}) beyond quantum simulation.

While the state-of-the-art has achieved an optimal quantum algorithmic solution to simulating a large class of Hamiltonians~\cite{LC19},
it remains open whether the quantum simulation algorithms can be parallelized in order to provide further speed-up.
In this work, we identify a class of Hamiltonians that can be more efficiently simulated in parallel, called uniform-structured Hamiltonians. 
Then we introduce the notion of parallel quantum walk within Childs' framework~\cite{Childs09,BC12,BCK15,CKS17}.  
Based on it, we propose a parallel quantum simulation algorithm for uniform-structured Hamiltonians.

\subsubsection*{Hamiltonian simulation}

Simulating the time evolution of a quantum system governed by a time-independent Hamiltonian $H$ for a time $t$ is essentially approximating the unitary $e^{-iHt}$ to some precision $\epsilon$, according to the Schr{\"{o}}dinger equation.
In this paper, we focus on the digital quantum simulation (rather than the analog quantum simulation~\cite{LPSS18}),
that is, simulating the Hamiltonian with a fault-tolerant universal quantum computer,
given some oracle access to the Hamiltonian $H$.
The complexity of a quantum simulation algorithm consists of query complexity and gate complexity,
which depends on several factors:
the simulation time $t$, the precision $\epsilon$,
and other parameters of the target Hamiltonian (e.g., size, matrix norm, and sparsity of $H$).

In the literature,  there are basically three approaches to simulating a Hamiltonian:
\begin{itemize}\item 
The \textit{product formula approach} is conceptually the simplest without introducing ancilla qubits.
Early works~\cite{Lloyd96,AT03,BACS06,WBHS10,CK11,CW12} on product-formula-based algorithms often had a poor complexity dependence on the precision $\epsilon$,
which was later improved~\cite{LKW19} by techniques borrowed from other simulation approaches.
This approach has regained attention in recent years due to a number of new results (e.g.,~\cite{CS19,LKW19,Campbell19,COS19,OWC20,CHKT20,SHC21,FSKKE21,HW22,CBH22,LSTT22,ZSJZ22,RWW22})
and its potential to be implemented in the near term.

\item The \textit{LCU (linear combination of unitaries)-based approach}~\cite{CW12} spawned the groundbreaking work~\cite{BCCKS15} that improves the complexity dependence on $\epsilon$
from $\poly\parens*{1/\epsilon}$ to $\polylog\parens*{1/\epsilon}$
(which is also achieved earlier in~\cite{BCCKS14} by fractional queries).
The idea is to approximate $e^{-iHt}$ with a linear combination of unitaries
easier to implement.
This approach can be used to simulating a $d$-sparse Hamiltonian $H$,
which is an Hermitian matrix with at most $d$ nonzero entries in each row.
The Hamiltonian $H$ is accessed by two oracles:
an oracle $\calO_H$ giving the entry $H_{jk}$ according to the index pair $(j,k)$,
and an oracle $\calO_L$ giving the column index of the $c^{\text{th}}$ nonzero entry in row $j$ according to $c$ and $j$.

In particular, combined with Childs' quantum walks~\cite{Childs09,BC12,Kothari14},
this approach achieves a nearly optimal~\cite{BCK15} query complexity for simulating sparse Hamiltonians.
The LCU technique has also been applied to solving the quantum linear systems problem~\cite{CKS17}, which was originally solved in \cite{HHL09} by phase estimation.

\item The lower bound on query complexity for sparse Hamiltonian simulation was finally reached by
the \textit{quantum signal processing approach}~\cite{LYC16,LC17,LC19},
which provides a new way to transform the eigenvalue of a unitary 
by manipulating a single ancilla qubit without performing phase estimation.
The input model was also generalized beyond  sparse matrices by subsequent works on qubitization and block-encoding~\cite{LC19,CGJ19,GSLW19}.\end{itemize}

Later works make further improvements on the complexity dependence on other parameters~\cite{LC17,HHKL18,LW19,Low19,MLC+21},
and the average-case complexity~\cite{ZZS+21}.
We particularly note that almost all\footnote{
The only exception we know is \cite{HHKL18}.
The depth complexity of their algorithm for simulating time-dependent lattice Hamiltonians
is smaller than the size complexity by an $O(n)$ factor, where $n$ is the number of qubits.
}
of the above approaches for Hamiltonian simulation are \textit{sequential}.

\subsubsection*{Parallel quantum computation}

The aim of this paper is to study parallel quantum simulation of Hamiltonians. The computational model we adopt is the quantum circuit model where the running time of a quantum algorithm is measured by the depth of its circuit implementation,
with both gates and oracle queries being allowed to be performed in parallel.

A pioneering work in this direction  is Cleve and Watrous' $O\parens*{\log n}$-depth parallel quantum circuit for $n$-qubit quantum Fourier transform~\cite{CW00},
which can be used to parallelize Shor's factoring algorithm~\cite{Shor94}
with poly-time classical pre- and post-processing.
Following works on parallel Shor's algorithm include factoring on a 2D quantum architecture~\cite{PS13}
and discrete logarithm on elliptic curves~\cite{RS14}.
The limits on parallelizing Grover's 
search algorithm~\cite{Grover96} were first considered by Zalka~\cite{Zalka99}.
This inspired a line of studies~\cite{GWC00,GR04,JMW17,Burchard19,GKLPZ20} on parallel quantum query algorithms and complexity bounds.
The low-depth (parallel) quantum circuit classes are also studied (e.g., \cite{GHP00,MN02,GHMP02,TD04,FGHZ05,HS05,BGH07,TT13,CM20,CCL20,JSTWWZ20}),
amongst which one surprising result is a quantum advantage established by constant-depth quantum circuits over their classical counterparts~\cite{BGK18,WKST19,LeGall19,BGKT20}.
Recently, \cite{QKW22} proposed a quantum algorithm with a constant quantum depth for multivariate trace estimation.

The research on  parallel quantum computation is not restricted to the circuit model.
For example, in measurement-based quantum computing, it was observed that 
parallelism can provide more benefits than in the circuit model~\cite{Jozsa05,BK09,BKP11}.
Another parallel model closer to the current quantum hardware is distributed quantum computing,
which can efficiently simulate the quantum circuit model with low depth overhead~\cite{BBGHKLSS2013}. 
Parallelism is also studied at more abstract levels like quantum programming~\cite{YF09,YZL18}.

\subsection{Main Results}
\label{sub:results}

Our main result is a parallel quantum simulation algorithm for uniform-structured Hamiltonians,
which will be formally defined in Sections~\ref{sub:parallel_version} and~\ref{sub:general_uniform_structured_hamiltonian}.
These Hamiltonians include \textit{local Hamiltonians},
\textit{Pauli sums} and other Hamiltonians of interest.\footnote{A local Hamiltonian can be actually represented as a Pauli sum,
by decomposing each local term with respect to the Pauli basis.
Instead of writing a local Hamiltonian as a Pauli sum,
in this paper we adopt a more natural way to describe the local Hamiltonians,
which still fits well within our framework of uniform-structured Hamiltonians.
}
Roughly speaking, a uniform-structured Hamiltonian $H$ acting on $n$ qubits has the form $H=\sum_{w\in [m]}H_w$,\footnote{In this paper, $[m]$ denotes the set $\braces*{0,1,\ldots,m-1}$ for all $m\in \N$.}
where $m=\poly(n)$, and for each $w$, $H_w$ is a sparse Hamiltonian whose structure is specified by a parameter $s_w$.
Here, the structure of a sparse Hamiltonian is basically a compact way to describe how the non-zero entries are arranged in its matrix representation. For example, for a local Hamiltonian $H$ with $H_w$ being local terms, we can choose each $s_w$ to be an $n$-bit string with $l$ bits of $1$ that indicate the subsystem of $l$ qubits on which $H_w$ non-trivially acts. A good observation of the sparse structure, i.e., a good choice of $s_w$, is definitely important for our algorithm.
We adopt the sparse matrix input model for these Hamiltonians,
that is, a target Hamiltonian $H$ is accessed by two oracles:
an oracle $\calO_H$ giving an entry $H_{jk}$ by the mapping $\ket{j,k,0}\mapsto\ket{j,k,H_{jk}}$,
and an oracle  $\calO_P$ giving the parameter $s_w$ by the mapping $\ket{w,0}\mapsto\ket{w,s_w}$.
Here, $\calO_P$ might be different for various types of Hamiltonians.
Formal details of the input model will be described  in Section~\ref{sec:preliminaries}.

Throughout this paper we assume the target Hamiltonian $H$ is normalized such that $\norm{H}_{\max}=1$,
where $\norm{H}_{\max}:=\max_{jk}\abs*{H_{jk}}$. Then our main result can be stated as the following:
 
\begin{theorem}[Informal version of Theorem~\ref{thm:parallel simulation for uniform-structured Hamiltonian}]\label{main-informal}
	Any uniform-structured Hamiltonian $H=\sum_{w\in [m]}H_w$ acting on $n$ qubits
	with each $H_w$ being $d$-sparse,
	can be simulated for time $t$ to precision $\epsilon$ by
	a quantum circuit of depth $\poly\parens*{\log\log\parens*{1/\epsilon},\log n,m,d,t}$ 
	and size $\poly\parens*{\log\parens*{1/\epsilon},n,m,d,t}$.
\end{theorem}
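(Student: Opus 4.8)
The plan is to build the evolution unitary $e^{-iHt}$ from a truncated Taylor series and implement each term as a product of parallel quantum walks, then recombine with a Linear-Combination-of-Unitaries (LCU) step. First I would reduce the general case to a single short time segment: write $e^{-iHt}$ as a product of $r = O(\|H\| t)$ copies of $e^{-iH(t/r)}$, so it suffices to simulate a segment of size $O(1)$ (in the relevant norm) to precision $\epsilon/r$. For one segment, I would approximate $e^{-iH\tau} \approx \sum_{k=0}^{K}\frac{(-i\tau)^k}{k!}H^k$ with $K = O\bigl(\frac{\log(1/\epsilon)}{\log\log(1/\epsilon)}\bigr)$, the standard Taylor-truncation order. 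The crucial observation — the one that produces the doubly-logarithmic depth — is that each monomial $H^k$ should \emph{not} be computed by $k$ sequential quantum-walk steps but by combining $k$ quantum walks \emph{in parallel}: since $H = \sum_{w\in[m]} H_w$, expanding $H^k$ gives a sum over $w\in[m]^k$ of products $H_{w_1}\cdots H_{w_k}$, and each such product can be realized on a tree of depth $O(\log k)$ of binary "merge" operations between adjacent quantum walk registers, exploiting the uniform structure so that all the walk operators act on disjoint ancilla blocks simultaneously. This is where the novel notion of parallel quantum walk (from the earlier sections) does the work.

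Next I would assemble the whole Taylor sum by LCU. The select oracle must coherently apply the appropriate product of (parallelized) Childs walk operators conditioned on the index $(k, w_1,\dots,w_k)$; the prepare oracle loads the coefficients $\frac{\tau^k}{k!}$ and a uniform superposition over the $[m]^k$ indices. The LCU normalization is $\sum_k \frac{\tau^k}{k!} m^k / (\text{row-sum norms})$, which for a short segment is $O(1)$, so a constant number of rounds of oblivious amplitude amplification suffices; since each round is itself a constant-depth wrapper around one select/prepare pair, this does not spoil the depth bound. The key accounting: the select circuit has depth $O(\log K)$ from the parallel-walk tree times the depth of a single quantum walk step, which is $\operatorname{poly}(\log n, d)$ from implementing $\calO_H$ and $\calO_P$ and the associated isometries; prepare has depth $\operatorname{poly}(\log K, \log n)$. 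Multiplying through the $r = O(t)$ segments and $\log(1/\epsilon)$ amplitude-amplification overhead gives total depth $\operatorname{poly}(\log\log(1/\epsilon), \log n, m, d, t)$, while the size picks up the full $K \cdot m^K$ branching and the $r$ segments, landing at $\operatorname{poly}(\log(1/\epsilon), n, m, d, t)$ as claimed.

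The main obstacle I anticipate is the parallel implementation of $H^k$ as a product of quantum walks with only $O(\log k)$ depth overhead while keeping the ancilla count polynomial and the error controlled. In the sequential quantum walk approach one iterates a single isometry $T$ and reflection, and the state naturally "threads" through one ancilla register; parallelizing forces $k$ independent copies of the walk register plus a binary-tree network of controlled-swap / uncomputation gadgets that stitch the intermediate index registers $w_i$ together and verify the sparsity-structure consistency (the column index produced by $\calO_P$ at level $i$ must feed the row index at level $i+1$). Getting the bookkeeping right — in particular, uncomputing the garbage at each tree node so that the overall map is the clean isometry onto $H^k$ up to the correct normalization, and bounding the accumulated approximation error across the $O(\log k)$ levels — is the delicate part. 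A secondary subtlety is making sure that "uniform-structured" is exactly the hypothesis needed for all these walk blocks to share a single parallel implementation of $\calO_P$; this is presumably why the definition is set up as it is in Sections~\ref{sub:parallel_version} and~\ref{sub:general_uniform_structured_hamiltonian}, and I would lean on those definitions rather than re-deriving the structural conditions here.
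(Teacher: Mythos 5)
Your high-level skeleton (time segmentation, Taylor truncation, parallel realization of the monomials $H^k$, LCU, oblivious amplitude amplification) matches the paper's, but the central mechanism you propose for obtaining $H^k$ in depth $O(\log k)$ does not work as described, and it is not what the paper does. You suggest expanding $H^k$ into products $H_{w_1}\cdots H_{w_k}$ and realizing each product by a binary tree of ``merge'' operations between $k$ quantum-walk registers. The obstruction is that multiplying block-encodings is inherently sequential on the system register: a block-encoding of $AB$ is obtained by applying the encoding of $B$ and then the encoding of $A$ to the \emph{same} system wires, so $k$ independent walk registers cannot be pairwise merged in a depth-$O(\log k)$ tree --- the system state must thread through all $k$ factors, and the dependence chain $j_0\to j_1\to\cdots\to j_k$ that you correctly flag as the delicate point is exactly what such a tree fails to break. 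The paper's solution is structurally different: it defines a \emph{single} walk operator $Q^{(r,m)}=T^{(r,m)\dagger}S^{(r,m)}T^{(r,m)}$, where $T^{(r,m)}$ prepares one superposition over all length-$r$ paths $\bj\in\mathsf{H}^{\bw}$ (tensored with the clause labels $\bw$), and the overlap $\bra{\Psi_j^{(r,m)}}S^{(r,m)}\ket{\Psi_l^{(r,m)}}=\parens*{\parens*{H/md}^r}_{jl}$ yields the monomial in one shot (Lemmas~\ref{lmm:Phi and Phi^bot} and~\ref{lmm:(r,m)-parallel quantum walk}). The $O(\log r)$ depth comes from a tree of \emph{classical reversible gates} computing $g(w_0,t_0)\circ\cdots\circ g(w_{r-1},t_{r-1})$ using the associativity of $\circ$ guaranteed by the uniform-structure definition (Lemma~\ref{lmm:sequence of associative operators}), while the queries to $\calO_P$ and $\calO_H$ are each only $O(1)$-depth because the path endpoints and the $r$ edge weights live on disjoint registers. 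Deferring to ``the earlier sections'' does not rescue the argument, because the object those sections construct is not a tree of merged walks.

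Two smaller points. First, your LCU select is indexed by $(k,w_1,\dots,w_k)$; in the paper the sum over $\bw\in[m]^k$ lives inside the walk operator itself (the pre-walk state), and the select $W$ of Corollary~\ref{cor:linear combination of block-encoded Hamiltonian powers} is controlled only on the binary digits of $r$, applying $W_{2^j}=Q^{(2^j,m)}$ when bit $r_j=1$, which is what keeps the select at $O(\log R)$ walk applications. Second, you first (correctly) state that a constant number of amplitude-amplification rounds suffices because the LCU normalization is $O(1)$, but then multiply the depth by a ``$\log(1/\epsilon)$ amplitude-amplification overhead''; any multiplicative $\log(1/\epsilon)$ factor in the depth would destroy the claimed $\polylog\log\parens*{1/\epsilon}$ dependence, so only the constant-round accounting is consistent with the theorem.
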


Here, the running time of our algorithm, i.e., the quantum circuit depth,
has a doubly \mbox{(poly-)}logarithmic dependence on the precision $\epsilon$.
To the best of our knowledge,
this is the first Hamiltonian simulation algorithm that achieves such dependence on $\epsilon$.  

Applying this theorem (actually Theorem~\ref{thm:parallel simulation for uniform-structured Hamiltonian}) to simulating local Hamiltonians, we have: 

\begin{corollary}[Parallel simulation of local Hamiltonians]
	\label{cor:parallel simulation of local Hamiltonians}
	Let $H=\sum_{w\in [m]}H_w$ be an $l$-local Hamiltonian\footnote{
        In this paper, for an $l$-local Hamiltonian, $l$ is not necessarily constant (as is usually assumed) and can depend on $n$ (e.g., $l=\polylog (n)$).
        Also, the local Hamiltonian $H=\sum_w H_w$ is not necessarily \textit{geometrically local},
        which further requires each $H_w$ to act on adjacent $l$ qubits.
    } acting on $n$ qubits,
	where each $H_w$ acts on a subsystem of $l$ qubits
	whose positions are indicated by $l$ bits of $1$ in an $n$-bit string $s_w$.
	Suppose the oracle $\calO_P$ has access to $s_w$ such that $\calO_P\ket{w,0}=\ket{w,s_w}$. 
	Then $H$ can be simulated for time $t$ to precision $\epsilon$ by a quantum circuit of
	\begin{itemize}
		\item 
			depth $O\parens*{\tau\log \gamma}$ and size $O\parens*{\tau\gamma}$
		  w.r.t.\ queries\footnote{
        For simplicity,
		the depth/size (complexity) w.r.t.\ queries/gates refers to the depth/size of the circuit composed of the specified queries/gates, respectively. See Section~\ref{sub:basis_terminologies} for detailed definitions.
			} to $\calO_H$,
		\item
			depth $O\parens*{\tau\log \gamma}$ and size $O\parens*{m\tau\gamma}$ w.r.t.\ queries to $\calO_P$, and
		\item
			depth $O\parens*{\tau\parens*{\log^2\gamma\cdot \log^2 n +\log^3 \gamma}}$
			and size $O\parens*{\tau\gamma^2\cdot\parens*{mn^4+\gamma^3}}$ w.r.t.\ gates,
	\end{itemize}
	where $\tau:=m2^l\cdot t$, $\gamma:=\log\parens*{\tau/\epsilon}$, and we allow arbitrary one- or two-qubit gates.
\end{corollary}

The best known algorithm for this task is by applying the optimal sparse Hamiltonian simulation of~\cite{LC17,LC19}
(note that $H$ in Corollary~\ref{cor:parallel simulation of local Hamiltonians} is $\parens*{m2^l}$-sparse),  
which requires $O\parens*{\tau+\frac{\log (1/\epsilon)}{\log\log(1/\epsilon)}}$ queries
and a factor $O\parens*{n+\log(1/\epsilon)\cdot \polylog(1/\epsilon)}$ of additional gates.
By introducing parallelism, 
our algorithm exponentially improves the dependence on $\epsilon$, in the depth w.r.t.\ both queries and gates.

It is worth noting the difference between the oracle $\calO_P$ used in Corollary~\ref{cor:parallel simulation of local Hamiltonians} for local Hamiltonians
and the oracle $\calO_L$ in previous works~\cite{AT03,Childs09,BC12,BCK15,LC17} for generic sparse Hamiltonians. Oracle $\calO_L$ computes a function $L\parens*{j,c}$ denoting the column index of the $c^{\text{th}}$ non-zero entry in row $j$ of $H$.
In practice, oracle $\calO_P$ is a more natural choice than $\calO_L$,
because if one wants to exploit the local structure of the Hamiltonian to be simulated,
knowing the locality parameter $s_w$ given by $\calO_P$ is intuitively the minimal requirement.
As evidence, for example, when we apply our algorithm to the Heisenberg model in Section~\ref{sub:simulation_of_the_heisenberg_model},
it turns out that implementing the oracle $\calO_P$ is gate-efficient.
Note that in the local Hamiltonian case, a query to the oracle $\calO_L$ can be achieved by at most $m2^l$ queries to $\calO_P$.

\subsubsection*{Lower bounds}

It was shown in~\cite{BCCKS14} that sparse Hamiltonian simulation requires
$\Omega\parens*{\frac{\log\parens*{1/\epsilon}}{\log\log\parens*{1/\epsilon}}}$ queries.
We are able to further prove a lower bound on the gate depth for the Hamiltonian simulation:

\begin{theorem}[$\epsilon$-dependence depth lower bound for Hamiltonian simulation]
    \label{thm:lower bounds}
    Any quantum algorithm for sparse Hamiltonian simulation to precision $\epsilon$ has depth complexity
    $\Omega\parens*{\log\log\parens*{1/\epsilon}}$ w.r.t.\ gates.
    The same holds even for uniform-structured Hamiltonian simulation.
\end{theorem}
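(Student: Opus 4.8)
The plan is to prove this by a counting argument over shallow circuits, made width- and size-independent by a causal-cone (light-cone) bound. Fix a finite universal gate set $\mathcal{G}$ of one- and two-qubit gates, as is implicit in any claim about gate complexity and already needed for the $\Omega\parens*{\log(1/\epsilon)/\log\log(1/\epsilon)}$ size lower bound of~\cite{LC17,LC19}. The hard instance is the one-qubit Hamiltonian $H = Z$: it is $1$-sparse, $1$-local, a one-term Pauli sum, and normalized so $\norm{H}_{\max} = 1$, hence both sparse and uniform-structured. Since it is a fixed Hamiltonian, the oracles $\calO_H, \calO_P$ carry no information, so a depth-$D$ simulation algorithm run on this instance is nothing but a fixed gate circuit of depth $D$ that is supposed to approximate the single-qubit rotation $e^{-itZ}$ to precision $\epsilon$.

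First I would establish the causal-cone bound: in any depth-$D$ circuit over $\mathcal{G}$ with all ancilla inputs prepared in $\ket{0}$, the quantum channel induced on one designated output wire depends only on the gates lying in the backward causal cone of that wire, and this cone contains at most $2^D$ wires since it at most doubles with each layer. Therefore every such reduced channel equals, after relabeling wires, one obtained from a depth-$D$ circuit on at most $2^D$ qubits, and the number of the latter is at most $\parens*{2^D \abs{\mathcal{G}}}^{2^D D} = 2^{2^{O(D)}}$ --- crucially, independent of the overall number of ancillas and of the circuit size.

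Next I would invoke a packing bound. The one-parameter family of single-qubit channels $\Phi_\theta\colon \rho \mapsto e^{-i\theta Z}\rho\, e^{i\theta Z}$ with $\theta \in [0,\pi)$ has $\epsilon$-covering number $\Omega(1/\epsilon)$, because $\norm{\Phi_\theta - \Phi_{\theta'}} = \Theta\parens*{\abs{\theta-\theta'}}$ in, say, diamond norm. If a depth-$D$ circuit $\epsilon$-simulated $e^{-itZ}$ for every $t$ of interest, then its achievable reduced channels (of which there are at most $2^{2^{O(D)}}$) would have to $\epsilon$-cover $\braces*{\Phi_\theta}$, forcing $2^{2^{O(D)}} \ge \Omega(1/\epsilon)$ and hence $D \ge \Omega\parens*{\log\log(1/\epsilon)}$. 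To obtain a single hard Hamiltonian rather than a family, I would take the union over all $D \le c\log\log(1/\epsilon)$ of the sets of depth-$D$-achievable channels; for a small enough constant $c$ this union still has size $(1/\epsilon)^{o(1)}$, so it cannot $\epsilon$-cover $\braces*{\Phi_\theta}$, and any $t^\star \in [0,\pi)$ with $\Phi_{t^\star}$ outside it yields a Hamiltonian $H = Z$ and time $t^\star$ whose $\epsilon$-simulation provably needs gate depth larger than $c\log\log(1/\epsilon)$. Since $H = Z$ is both sparse and uniform-structured, this proves both halves of the statement.

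The step I expect to be the main obstacle is making the causal-cone argument rigorous in full generality: one must verify that gates outside the backward cone of the output wire do not change its reduced channel (for unitaries this is the standard fact that a unitary acting on a subsystem disappears under the partial trace over that subsystem), and, if the circuit model allows intermediate measurements and classical feedforward, argue that the cone still grows by only a bounded factor per layer --- signals now also travel along classical wires, but one layer of one- and two-qubit operations with classical control still expands the cone by at most a fixed constant factor --- so that the $2^{2^{O(D)}}$ count is unaffected. A secondary, routine task is to fix the precise meaning of ``$\epsilon$-simulation'' in Theorem~\ref{thm:lower bounds} and to check that the packing family and the metric used above are compatible with it; the standard choices (operator-norm closeness of the effective system unitary, or diamond-norm closeness of channels) all work without change.
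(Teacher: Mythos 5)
Your argument takes a genuinely different route from the paper's, but in the model the paper actually works in it has a gap that I do not think can be repaired: it hinges on fixing a \emph{finite} universal gate set. The paper's complexity model counts arbitrary one- and two-qubit gates (and its upper bounds freely use continuously parameterized rotations such as $R_Z(2\pi\gamma\cdot 2^{-b})$), so under that model your hard instance $H=Z$ is simulated exactly, for every $t$, by the single depth-$1$ gate $e^{-itZ}$; the counting/packing step then has nothing to count, since the set of depth-$1$ single-qubit channels is already a continuum containing every $\Phi_\theta$. Your justification that a finite gate set is ``implicit in any claim about gate complexity and already needed'' for the $\Omega(\log(1/\epsilon)/\log\log(1/\epsilon))$ size bound is also not accurate: that bound, and the paper's depth bound, come from an information-flow argument (a reduction from PARITY) that is insensitive to the choice of gate set, not from a counting argument.

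Concretely, the paper takes the $6$-band Hamiltonian of Berry et al.\ determined by an $N$-bit string $x$ with $N=\Theta(\log(1/\epsilon)/\log\log(1/\epsilon))$, for which an $\epsilon$-precise simulation for constant time computes $\mathrm{PARITY}(x)$ with unbounded error (and this Hamiltonian is uniform-structured by the band-Hamiltonian lemma, which handles the second half of the statement); it then observes that a circuit of depth $o(\log N)$ built from one- and two-qubit gates has an output qubit whose backward light cone misses some input qubit $x_k$, so it cannot compute PARITY. That is the same light-cone idea you use, but applied to \emph{which inputs can influence the output} rather than to \emph{how many distinct circuits exist}, which is why it survives a continuous gate set. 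Your packing argument is correct and self-contained if one adds the hypothesis of a finite gate set (and it is arguably more elementary, needing no Hamiltonian family of growing dimension), but as written it proves a strictly weaker statement than the theorem in the paper's model.
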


This theorem
implies that our parallel quantum algorithm given in Theorem~\ref{main-informal} for simulating uniform-structured Hamiltonians cannot be significantly improved in the $\epsilon$-dependence.

\subsubsection*{Applications} 

For illustration, our algorithm is applied in Section~\ref{sec:applications} to simulating three quantum dynamical models in physics and chemistry of practical interest:
\begin{itemize}
    \item 
        The Heisenberg model for studying the self-thermalization and many-body localization~\cite{NH15,LLA15,CMNRS18}; 
    \item 
        The Sachdev-Ye-Kitaev (SYK) model for studying the simplest AdS/CFT duality~\cite{SY93,Kitaev15,MS16,GELdS17}; and 
    \item 
        A second-quantized molecular model for studying the electronic structure of a molecule~\cite{YWBTA14,BBMC20,BBKWLA16}.
\end{itemize}
We explicitly calculate the gate cost for implementing the oracles mentioned above and the total gate complexity for the simulation.
Table~\ref{tab:comparison with prior works simulating three models} shows a comparison of our algorithm with previous best known algorithms on the same tasks. From it, one can see that by introducing parallelism, our algorithm achieves an exponential speed-up on the $\epsilon$-dependence for simulating all these models. 

It is however worth pointing out that the parallel speed-up in the above applications has limitations in practice,
because the $n$-dependence in the depth complexity (w.r.t.\ gates) of our algorithm is worse than previous best algorithms in these cases.
For simulating local Hamiltonians, as stated in Corollary~\ref{cor:parallel simulation of local Hamiltonians},
the $n$-dependence comes from the linear (up to a poly-logarithmic factor) dependence on $m$ (the number of local terms)
and $\norm*{H}_{\max}$ (the norm of the Hamiltonian).
For example, in the SYK model,
$m$ scales as $O\parens*{n^4}$ and $\norm*{H}_{\max}$ scales as $O\parens*{n^{2.5}}$,
so our algorithm has \mbox{$n$-dependence} $\tilde{O}\parens*{n^{6.5}}$.
In comparison, the previous best algorithm~\cite{BBN19} for simulating the SYK model has a sublinear dependence on $m$,
by using asymmetric qubitization to exploit the structure of the Hamiltonian,
which results in a better $n$-dependence $\tilde{O}\parens*{n^{3.5}}$.
Due to the above issue,
in practical scenarios where the required precision is not too high, for example, $\epsilon=2^{-O\parens*{n}}$,
our improvement on the $\epsilon$-dependence becomes less significant because the $n$-dependence is the dominant part.

\begin{table}
\centering
\bgroup
\def\arraystretch{1.3}
\begin{tabular}{c | c c c}
    \hline
    Model & Algorithms & Gate depth ($n,t$) & Gate depth ($\epsilon$)\\
    \hline
    \multirow{3}{*}{Heisenberg model} 
    &Childs et al.~\cite{CMNRS18}
    &$\tilde{O}\parens*{n^3}$ & $O\parens*{\frac{\log\parens*{1/\epsilon}}{\log\log\parens*{1/\epsilon}}}$\\
    &Haah et al.~\cite{HHKL18}
    &$\tilde{O}\parens*{n}$ & $\polylog\parens*{1/\epsilon}$\\
    \cline{2-4}
    &Our algorithm
    &$\tilde{O}\parens*{n^3}$
    &$O\parens*{\log^3\log\parens*{1/\epsilon}}$\\
    \hline
    \multirow{2}{*}{SYK model}
    & Babbush et al.~\cite{BBN19}
    & $\tilde{O}\parens*{n^{3.5} t}$ & $\polylog\parens*{1/\epsilon}$\\
    \cline{2-4}
    & Our algorithm
    &$\tilde{O}\parens*{n^{6.5}t}$ & $O\parens*{\log^3\log\parens*{1/\epsilon}}$\\
    \hline
    \multirow{4}{*}{Molecular model}
    & Babbush et al.~\cite{BBKWLA16}
    & $\tilde{O}\parens*{n^8 t}$ & $O\parens*{\frac{\log\parens*{1/\epsilon}}{\log\log\parens*{1/\epsilon}}}$\\
    & \begin{tabular}{@{}c@{}}
    Later improvements\vspace{-0.1cm}\\ (e.g.~\cite{BBSKSWLA17,BWMMNC18,KMWGACB18,BBMN19,BGMMB19})
    \end{tabular}
    & $\poly\parens*{n,t}$ & $\polylog\parens*{1/\epsilon}$\\
    \cline{2-4}
    & Our algorithm
    & $\tilde{O}\parens*{n^8t}$ & $O\parens*{\log^3\log\parens*{1/\epsilon}}$
    \\
    \hline
\end{tabular}
\egroup
\caption{A comparison of our algorithm  (Theorem~\ref{thm:parallel simulation for uniform-structured Hamiltonian}) with previous best algorithms in simulating three physical models. Here, parameter $n$ is the size of the system to be simulated,\protect\footnotemark{} $t$ is the simulation time, and $\epsilon$ is the precision of simulation. The complexity of an algorithm is measured by the depth complexity w.r.t.\ gates, where for readability the dependence on different parameters are split. The notation $\tilde{O}\parens*{\cdot}$ denotes an asymptotic upper bound suppressing poly-logarithmic factors. For Heisenberg model, we follow the convention of taking $t=n$.}
\label{tab:comparison with prior works simulating three models}
\end{table}

\subsection{High-level Overview of the Algorithm}
\label{sub:high_level_overview}

\footnotetext{More specifically, $n$ has different meanings in different models: for the Heisenberg model, it is the exact number of qubits on the spin chain; for the SYK model, it is a half of the number of Majorana fermions; and for the second-quantized molecular model, it is the number of spin orbitals.}

Our algorithm is motivated by the LCU (linear combination of unitaries)-based approach to Hamiltonian simulation~\cite{CW12,BCCKS15,BCK15}.
The basic idea of this approach is to approximate the target unitary $e^{-iHt}$
with a linear combination of unitaries easier to implement.
In particular, \cite{BCK15} uses
a Chebyshev series approximation $e^{-iHt}\approx\sum_{r}\alpha_r\calT_r\parens*{H}$,
where for each $r$, $\alpha_r\in \Co$ is some appropriate coefficient, and $\calT_r\parens*{x}$ is the $r$-degree Chebyshev polynomial.
Each $\calT_r\parens*{H}$ can be obtained by $r$ steps of Childs' quantum walk~\cite{Childs09,BC12}.
Then a linear combination of these quantum walks is performed by the LCU technique~\cite{CW12,Kothari14,BCK15}.
Essentially, this approach is sequential due to the fact that $r$ steps of quantum walk require $r$ sequential queries,
and to achieve a total precision $\epsilon$ of the simulation, $r$ should be as large as $\Theta\parens*{\log\parens*{1/\epsilon}}$,
inducing a logarithmic precision-dependence.

In this work, we introduce a parallel quantum walk which is implementable with a quantum circuit of \textit{constant} depth w.r.t.\ queries,
for a large class of Hamiltonians with good sparse structures --- uniform-structured Hamiltonians.
The parallel quantum walk is \textit{not} a direct parallelization of Childs' quantum walk;
instead it implements a monomial of $H$.
We express the unitary $e^{-iHt}$ as a Taylor series $e^{-iHt}\approx\sum_r\beta_r H^r$ (like in the previous work~\cite{BCCKS15}),
where each $r$-degree monomial $H^r$ can be obtained (with a proper scaling factor) by an $r$-parallel quantum walk.
These parallel quantum walks are then linearly combined in parallel by a technique described in Section~\ref{sec:linear_combinations_of_hamiltonian_powers},
which exploits parallelism in the LCU algorithm to combine $R$ terms with depth complexity $\polylog\parens*{R}$.
Since there are about $O\parens*{\log\parens*{1/\epsilon}}$ terms in the LCU to achieve a total precision $\epsilon$ of the simulation,
the depth complexity of our parallel algorithm w.r.t.\ queries is roughly $\polylog\log\parens*{1/\epsilon}$,
achieving a doubly (poly-)logarithmic precision-dependence.

\subsubsection*{Parallel quantum walk}
The main ingredient in our parallel simulation algorithm is the parallel quantum walk.
For an intuition of the algorithm,
let us consider a very special case of uniform-structured Hamiltonians for example, a tensor product of Pauli matrices;
that is, $H=\bigotimes_{t\in [n]}\sigma_t$ for $\sigma_t \in \braces*{\bbone, X,Y,Z}$ being Pauli matrices.
Although $H$ is a simple $1$-sparse Hamiltonian,
it suffices for an illustration of the main idea in our algorithm.
To begin with,
one can think of $H$ as a weighted adjacency matrix of a $1$-sparse graph $\mathsf{H}$,
then a step of Childs' quantum walk without parallelism is a ``superposition version'' of a classical random walk:
it performs $\ket{j}\mapsto H\ket{j}=H_{jk}\ket{k}$ for all vertices $j$ in $\mathsf{H}$,
where $k$ is the unique neighbor of vertex $j$,
and $H_{jk}\in \braces*{1,i,-1,-i}$.

For this special case,
multiple steps of Childs' quantum walk can be directly implemented in parallel,
as shown in the following.
Observe that the graph $\mathsf{H}$
consists of $2^{n-1}$ pairs of vertices,
among which for each pair $(j,k)$, multiple steps of Childs' quantum walk simply
alternate between the state $\ket{j}$ and $\ket{k}$ with an accumulated phase.
If we can predict (compute) the destination of $r$ steps of this walk
and the accumulated phase by quantum circuits with depth significantly less than $r$,
then these $r$ steps can be implemented efficiently in parallel.

Note that for $\mathsf{H}$,
there is a ``uniform'' way to determine the unique neighbor of a vertex.
Consider an $n$-bit string $s$ that characterizes the diagonality of Pauli matrices in the sequence $\angles*{\sigma_t}_{t\in [n]}$:
$s$ has its $t^{\text{th}}$ bit being $0$ if $\sigma_t\in \braces*{\bbone,Z}$ is diagonal,
and being $1$ if $\sigma_t\in \braces*{X,Y}$ is off-diagonal.
Then it is easy to see that $j,k$ are neighbors if and only if $k=j\oplus s$, where $\oplus$ is the bit-wise XOR operator.
Assume $s$ is given by an oracle $\calO_P$,
a step of Childs' quantum walk $\ket{j}\mapsto H_{jk}\ket{k}$ can be implemented by first querying $\calO_P$ to calculate $k=j\oplus s$,
followed by querying $\calO_H$ to add the phase $H_{jk}$.

Now $r$ sequential steps of the walk essentially perform the mapping 
$\ket{j}\mapsto H^r \ket{j}=H_{jk}^{\ceil{r/2}} H_{kj}^{\floor{r/2}}\ket{l}$,
where $l=j\oplus s\oplus \ldots \oplus s=j\oplus s^{\oplus r}\in \braces*{j,k}$ is the destination of these steps.
We can perform this mapping in parallel through two stages.
In the first stage, we compute the destination $l$:
first query $r$ oracles $\calO_P$ simultaneously to compute $r$ copies of $s$;
then classically compute $s^{\oplus r}$ in a binary tree of depth $O\parens*{\log r}$,
using the associativity of XOR.
In the second stage,
the accumulated phase $H_{jk}^{\ceil{r/2}} H_{kj}^{\floor{r/2}}$ is computed in a similar fashion,
by querying $\calO_H$ in parallel combined with parallel classical computation.

By a standard technique in reversible computation~\cite{Bennett73},
the above parallel classical computation can be easily converted to parallel quantum computation,
inducing depth complexity $O\parens*{1}$ w.r.t.\ queries and depth complexity $O\parens*{\log r}$ w.r.t.\ gates.

Although for the above example,
one can easily find faster ways (by noting that $s^{\oplus r}$ is either $s$ or $0$ depending on $r\bmod 2$)
to implement multiple steps of Childs' quantum walk in parallel,
we adopt the above two-stage procedure because it captures the main idea of our algorithm.
In general, denote $H$ the Hamiltonian to be simulated and $\mathsf{H}$ its corresponding graph.
We define an $r$-parallel quantum walk that can be implemented in two stages,
where the two oracles $\calO_H$ and $\calO_P$ are queried separately.
The first stage is called \textit{pre-walk}
which, roughly speaking, prepares a superposition over all paths of length $r$ generated from $r$ steps of unweighted random walk on the graph $\mathsf{H}$.
This stage intuitively predicts all possible paths in $r$ steps of quantum walks, and can be done efficiently in parallel,
with only a constant depth complexity w.r.t.\ queries to $\calO_P$,
provided that the Hamiltonian $H$ is uniform-structured.
The second stage is called \textit{re-weight},
which adjusts the weights (i.e., the quantum amplitudes) of the state prepared by pre-walk,
according to the entry values $H_{jk}$ given by the oracle $\calO_H$.
This stage does not depend on the structure of $H$,
and can be done efficiently in parallel with a constant depth complexity w.r.t.\ queries to $\calO_H$.
Finally combining with other techniques as in Childs' quantum walk,
we can implement the monomial operator $H^r$ (with a proper scaling factor, see Section~\ref{sec:parallel_quantum_walk} for details).

Note again that in the above example, 
although the parallel quantum walk performs multiple steps of Childs' quantum walk,
in general they are \textit{not} essentially equivalent. Specifically, $r$-parallel quantum walk implements $H^r$,
while $r$ steps of Childs' quantum walk yields $\calT_r(H)$.
For those readers familiar with block-encoding (see Definition~\ref{def:block-encoding}),
the parallel quantum walk actually provides a depth-efficient way to block-encode the operator $H^r$,
if $H$ is uniform-structured. It is more like an extension of a single step of Childs' quantum walk.

\subsubsection*{Parallel LCU for Hamiltonian series} To approximate the evolution unitary $e^{-iHt}$ by a truncated Taylor series,
the final step of our algorithm is to linearly combine the monomials $H^r$ obtained from the parallel quantum walks discussed above. The ordinary LCU algorithm~\cite{BCK15,CKS17} implementing a linear combination of $R$ unitaries has depth complexity $\Theta\parens*{R}$.
As pointed out in~\cite{CKS17}, if these unitaries are powers of a single unitary,
then the LCU can be done in a parallel way analogous to the phase estimation~\cite{NC10}
with depth complexity $O\parens*{\log R}$. 
We slightly generalize this result to implementing a linear combination of block-encoded (see Definition~\ref{def:block-encoding}) powers of a Hamiltonian
(called a Hamiltonian power series) in parallel.
Since the LCU requires a state corresponding to the coefficients in the linear combination,
we also present a parallel quantum algorithm for this state-preparation procedure,
based on standard results in quantum sampling~\cite{GR02}.

{\vskip 4pt}

To summarize, the  whole algorithm is visualized in Figure~\ref{fig:scheme}.

\begin{figure}
	\centering
	\begin{tikzpicture}[>=stealth]
		\tikzstyle{box}=[draw,
			rounded corners=2mm,
			minimum width=4.5em,
			minimum height=2em];
		\node[draw,very thick,minimum height=3cm, minimum width=5.2cm, align=center] (Parallel-quantum-walk) [label={[label distance=-0.8cm]above:Parallel quantum walk $Q^{(r)}$}] {};
		\node (midpoint) [below = 0.15cm of Parallel-quantum-walk.west] {};
		\node (plus) [right=2.1cm of midpoint]{\textbf{+}};
		\node[draw,align=center] (Pre-walk) [above right = 0.1cm and 0.3cm of midpoint] {1. Pre-walk (Sec.~\ref{sub:pre_walk})};
		\node[draw,align=center] (Re-weight) [below right = 0.1cm and 0.3cm of midpoint] {2. Re-weight (Sec.~\ref{sub:re_weight})};

		\node[box,align=center] (Hamiltonian) [right = 4.7cm of Parallel-quantum-walk.east] {A uniform-structured\\ Hamiltonian $H$\\ (Sec.~\ref{sub:pre_walk} \& Sec.~\ref{sub:general_uniform_structured_hamiltonian})};
		\draw [->] (Hamiltonian.north west) to [bend right=20] node[pos=0.4,below=0.1cm] {sparse structure} node[pos=0.3,above] {$O(1)$-depth w.r.t.\ queries to $\calO_P$} (Pre-walk.east);
		\draw [->] (Hamiltonian.south west) to [bend left=20] node [pos=0.5,above=0.08cm] {entry value} node[pos=0.3,below=0.08cm] {$O(1)$-depth w.r.t.\ queries to $\calO_H$} (Re-weight.east);

		\node[box,align=center] (Hr) [below = 1.2cm of Parallel-quantum-walk.south] {A parallel implementation\\ of $H^r$(with a proper factor;\\ Sec.~\ref{sub:parallel_version})};
		\draw [->] (Parallel-quantum-walk.south) to (Hr.north);

		\node[box,align=center] (unitary) [right = 5cm of Hr.east] {Target unitary $e^{-iHt}$\\ (with small $t$; Sec.~\ref{sec:parallel_hamiltonian_simulation})};
		\draw [->] (Hr) to node[above,align=center] {Parallel LCU (Sec.~\ref{sec:linear_combinations_of_hamiltonian_powers})\\$r=1,2,\ldots,R=\ceil*{\log(1/\epsilon)}$ }
			node[below] {$\polylog\parens{R}$-depth} (unitary);
	\end{tikzpicture}
	\caption{An outline of the parallel quantum algorithm for Hamiltonian simulation.}
	\label{fig:scheme}
\end{figure}

\subsection{Related Works}
\label{sub:related_works}

Our algorithm for Hamiltonian simulation shows that by employing parallelism, the complexity dependence on the precision $\epsilon$ can be significantly reduced from $\polylog\parens*{1/\epsilon}$ to $\polylog\log\parens*{1/\epsilon}$.
For reducing the dependence on other parameters like the simulation time $t$,
Atia and Aharonov~\cite{AA17} studied the fast-forwarding of Hamiltonians (which is further explored in a recent work~\cite{GSS21}) --- 
the ability to simulate a Hamiltonian by a quantum circuit with \textit{size} significantly less than the simulation time $t$ (e.g.\ $\polylog(t)$).
Note that the concept of fast-forwarding does not concern parallelism and depth complexity.
They show that the fast-forwarding of generic Hamiltonians is impossible unless $\BQP=\PSPACE$.\footnote{Note that this differs from the well-known ``no-fast-forwarding theorem'' in~\cite{BACS06} stating that no sparse Hamiltonian can be simulated for time $t$ with sub-linear query complexity in $t$.
Atia and Aharonov's result~\cite{AA17} is not restricted to the query model.
}
Nevertheless, they provided three examples of fast-forwardable Hamiltonians:
Hamiltonians constructed from the modular exponentiation unitary in Shor's algorithm,
commuting local Hamiltonians, and quadratic Hamiltonians.
Interestingly, although these examples belong to
the class of uniform-structured Hamiltonians defined in this paper (by properly setting the oracle $\calO_H$ and $\calO_P$),
we are not aware of a direct way to extend our results to reduce the dependence on $t$ by parallelization.

Jeffery, Magniez, and de Wolf~\cite{JMW17} studied the parallel query complexity of element-distinctness and $k$-sum problems. 
The upper bounds in their results are obtained by what they called ``parallelized quantum walk algorithms''.
But it should be noted that their algorithm is developed in the framework of finding marked elements via quantum walks~\cite{MNRS11},
in particular, a modified quantum walk on multiple copies of Johnson graphs corresponding to a specific function (element-distinctness or $k$-sum),
with parallel queries to that function. In contrast, 
our parallel quantum walk is defined in the framework of quantum walks for Hamiltonians~\cite{Childs09,BC12,BCK15,CKS17},
in particular, a quantum walk that implements a monomial $H^r$ (with a proper scaling factor) by $r$ parallel queries to the oracles accessing a Hamiltonian $H$.

\subsubsection*{Recent Developments}

After the work described in this paper, the depth complexity $O\parens*{\tau\parens*{\log^2\gamma\cdot \log^2 n +\log^3 \gamma}}$ of our parallel quantum algorithm for local Hamiltonian simulation (see Corollary~\ref{cor:parallel simulation of local Hamiltonians}) was further improved to $O\parens*{\tau \log^2\gamma\cdot \log^2 n}$ in \cite{ZLY22}, which is achieved by depth-optimal quantum state preparation \cite{STY+21,Ros21,ZLY22,PZ22}. 

Recently, the depth lower bounds on Hamiltonian simulation were studied in \cite{CCHLLS23}.
They showed that simulating time-independent sparse Hamiltonians in general requires $\Omega(t)$ depth w.r.t.\ to queries;
and based on the random oracle heuristic~\cite{BR93,BDFLSZ11}, simulating time-independent local Hamiltonians
and time-dependent geometrically local Hamiltonians requires $\Omega(t/n^c)$ depth w.r.t.\ gates,
where $c$ is some constant.
This gives a negative answer to the second open question in Section~\ref{sub:discussion} in the general case.

\subsection{Discussion}
\label{sub:discussion}

In this paper, we propose a parallel quantum algorithm for Hamiltonian simulation that achieves a doubly (poly-)logarithmic precision-dependence in the depth complexity.
This exponential improvement from the previous (poly-)logarithmic precision-dependence is rather a theoretical result,
because in practice if the precision required is not too high, the improvement becomes less significant.
Still, regarding the great importance of parallel computing and its success in the classical world,
we hope our work may shed light on the design of parallel quantum algorithms 
and our ideas may contribute to future Hamiltonian simulation in practice.

Some readers might be also concerned that 
in order to extract classical information to the required precision from the output quantum state,
the overhead of measurements could dominate the overall complexity.
Indeed, our analysis only focuses on the asymptotic cost of the quantum simulation itself, without taking the cost of post-processing into account. Nevertheless, one can expect that the output quantum state of our algorithm can be coherently fed to another quantum system (see~\cite{Llo00,GJ09,WLY21}). For example, the output state of a quantum algorithm can be directly used as the input data in quantum machine learning~\cite{KPEKML20}, which is much more efficient than preparing the input data classically.

One direction of future work is to consider whether the complexity of Hamiltonian simulation can be further improved. 
\begin{itemize}
    \item Given simulation time $t$ and precision $\epsilon$, the depth complexity w.r.t.\ queries of our algorithm is $O\parens*{\tau \log \gamma}$, while the query complexity of the best sequential quantum algorithm~\cite{LC17,LC19} is roughly $O\parens*{\tau + \gamma}$, where $\tau \propto t$ and $\gamma = \log\parens*{\tau/\epsilon}$. It remains open whether the multiplicative dependence on $\tau$ and $\gamma$ can be improved to additive $O\parens*{\tau + \log \gamma}$. 
	
    \item Another open question, as mentioned in Section~\ref{sub:related_works},
is whether parallelization can significantly improve the dependence on the simulation time $t$;
more precisely, what kind of Hamiltonians can be simulated for time $t$ by a quantum circuit of depth $\polylog\parens{t}$,
by allowing parallel queries?

    \item Intuitively, the power of parallelism is provided by the use of ancilla qubits.
    Considering the dependence on precision $\epsilon$,
    from the upper bound on the circuit size in Theorem~\ref{main-informal},
    we can derive a simple upper bound $O\parens*{\polylog(1/\epsilon)}$
    on the number of ancilla qubits in our algorithm.
    On the other hand, 
    to achieve a significant speed-up (in particular, a circuit depth $O\parens*{\polylog\log(1/\epsilon)}$),
    we can also derive a lower bound $\Omega\parens*{\frac{\log(1/\epsilon)}{\polylog\log(1/\epsilon)}}$ on the number of ancillae
    in simulating uniform-structured Hamiltonians.
    This is obtained by the fact that the size of a quantum circuit is no greater than the product of the circuit depth and the number of qubits, combined with the lower bound $\Omega\parens*{\frac{\log(1/\epsilon)}{\log\log(1/\epsilon)}}$ on the circuit size (see the proof of Theorem~\ref{thm:lower bounds} and~\cite{LC17,LC19}).
    Therefore, whether the number of ancillae can be further reduced remains an interesting question.
    More generally, 
    the trade-off between the circuit depth and the number of ancillae will be an important issue in parallel quantum algorithms.

	\item 
		As mentioned in Section~\ref{sub:results},
		in several practical applications the depth complexity (w.r.t.\ gates) of our algorithm
		has worse dependence on the system size compared to the previous best algorithms.
		Is it possible to reduce the $n$-dependence in the complexity
		while retaining the improvement on the $\epsilon$-dependence?

\end{itemize}
 
Another important direction is to study what other kinds of Hamiltonians with special structures
can be simulated efficiently in parallel.
Although the class of uniform-structured Hamiltonians in this work includes many physical Hamiltonians,
one might also encounter non-physical ones in other quantum algorithms
that use Hamiltonian simulation as a subroutine (e.g., quantum linear system solver~\cite{HHL09,CKS17}, and quantum differential equations solver~\cite{BCOW17}).
How about exploiting the structures of those non-physical Hamiltonians for parallelism?

Finally, a very interesting attempt is to generalize our ideas and techniques to broader applications.
\begin{itemize}
    \item 
        In this paper we are only concerned about time-independent Hamiltonian simulation.
		For time-dependent Hamiltonian simulation, one approach~\cite{BCCKS15,LW19,KSB19,BCSWW20,CKH21} is to use the Dyson series 
		to replace the Taylor series to
		approximate the unitary evolution.
		An interesting question is whether our algorithm (in particular, the parallel quantum walk) 
		can be generalized for simulating time-dependent Hamiltonians.
    \item 
    The major technical tool introduced in this work is a parallel quantum walk for Hamiltonians,
which we believe can also be extended to general matrices.
This tool might be applicable to other problems,
for example, solving quantum linear differential equations~\cite{BCOW17},
for which an operator $e^{At}$ is to be implemented instead of the unitary $e^{iHt}$ for Hamiltonian simulation,
where $A$ is a general matrix.
    \item
As quantum singular value transform (QSVT)~\cite{GSLW19} has shown its power in manipulating the block-encodings of general matrices,
and serves as a building block for many other quantum algorithms (e.g.,~\cite{AGGW17,GAW19,KP20}),
it is also interesting to ask whether parallelism can provide speed-ups for QSVT.
\end{itemize}

\subsection{Structure of the Paper}
\label{sub:structure_of_the_paper}

For convenience of the reader, some preliminaries are presented in Section~\ref{sec:preliminaries}.  
In Section~\ref{sec:parallel_quantum_walk}, 
we introduce a parallel quantum walk for Hamiltonians. More concretely, 
we revisit the framework of Childs' quantum walk in Section~\ref{sub:quantum_walk_for_hamiltonians},
give a parallelization of it and show how to implement this parallel quantum walk in Section~\ref{sub:parallel_version},
and analyze the complexity in Sections~\ref{sub:pre_walk} and~\ref{sub:re_weight}.
Specifically in Section~\ref{sub:pre_walk}, we define uniform-structured Hamiltonians,
for which the parallel quantum walk can be performed efficiently. 
In Section~\ref{sub:general_uniform_structured_hamiltonian}
we present an extension of the parallel quantum walk to the case of a sum of sparse Hamiltonians,
where we also extend the class of uniform-structured Hamiltonians to include more Hamiltonians of interest.
In Section~\ref{sec:linear_combinations_of_hamiltonian_powers}, we show how to implement a parallel LCU for a Hamiltonian power series.
Section~\ref{sec:parallel_hamiltonian_simulation} assembles the above results to simulate a Hamiltonian by combining parallel quantum walks.
In Section~\ref{sec:lower_bounds}, we prove an $\epsilon$-dependence lower bound on the gate depth for Hamiltonian simulation.
In Section~\ref{sec:applications}, we apply our algorithm to simulate three concrete physical models.

\section{Preliminaries and Notations}
\label{sec:preliminaries}

\subsection{Basic Terminologies} 
\label{sub:basis_terminologies}
\subsubsection*{Sparse matrix input model}

Let $H$ be a $d$-sparse $N\times N$ Hamiltonian acting on $n$ qubits with $\norm{H}_{\max}=1$.
$H$ is accessed by:
\begin{itemize}
	\item 
		An entry oracle $\calO_H^b$ that gives an entry of $H$ with $b$-bit precision such that
		\begin{equation*}
			\calO_H^b\ket{j,k,z}=\ket{j,k,z\oplus H_{jk}},
		\end{equation*}
		for $j,k\in [N]$ and $z,H_{jk}\in [2^b]$,
		where the complex number $H_{jk}$ is stored in a $b$-bit string that contains its real part and imaginary part
		(each with $b/2$ bits assuming $b$ is even),
		and $b$ will be determined by the precision $\epsilon$ of the algorithm.
		When we are just referring to the oracle we may omit the superscript $b$.
	\item
		A sparse structure oracle $\calO_P$ that gives parameters about the sparse structure of $H$ such that
		\begin{equation*}
			\calO_P\ket{x,y}=\ket{x,P(x,y)}
		\end{equation*}
		for $x\in \calX, y\in \calY$,
		where $\calX,\calY$ are sets of integers,
		and $P:\calX\times\calY\rightarrow \calY$
		 is a function determined by the sparse structure of $H$
		such that for all $x\in \calX$, $P\parens*{x,\cdot}$ is a bijection for $y\in \calY$.
\end{itemize}
In many previous works~\cite{AT03,Childs09,BC12,BCK15,CKS17} on sparse Hamiltonians,
instead of $\calO_P$, an oracle $\calO_L$ is given such that
$\calO_L\ket{j,t}=\ket{j,L(j,t)}$,
where $L(j,t)\in [N]$ gives the column index of the $t^{\text{th}}$ nonzero entry in row $j$ of $H$ for $t\in [d]$
and gives $0$ for $t\notin [d]$.
We note that $\calO_L$ can be expressed as a special case of $\calO_P$ by taking $\calX=\calY=[N]$ and $P=L$.
However, compared to these work,
we adopt a more general oracle $\calO_P$
because Hamiltonians investigated in this paper have different sparse structures, which can be better exploited by different concrete forms of $\calO_P$ (see examples in Sections~\ref{sub:parallel_version} and~\ref{sub:general_uniform_structured_hamiltonian}).
Moreover, the implementations of these $\calO_P$'s turn out to be very efficient, compared to much costlier implementations of the oracle $\calO_H$, as shown in Section~\ref{sec:applications} when we calculate the total gate complexity of our algorithm in simulating practical physical models.

As in previous works,
we allow using (single-qubit) controlled versions of these oracles,
and will not explicitly distinguish between the controlled and uncontrolled versions.

\subsubsection*{Complexity model}
In this paper, we will consider both query complexity and gate complexity,
where we allow arbitrary one- or two-qubit gates.
For a parallel quantum algorithm represented by a quantum circuit,
we will measure its cost by its depth and size.
When considering the gate complexity,
each oracle query is temporarily counted as one gate.
The depth of a quantum circuit w.r.t.\ gates is defined as the length of the longest path composed of gates and wires from the circuit inputs to outputs, where the length is the total number of gates on this path.
The size of a quantum circuit w.r.t.\ gates is defined as the total number of gates in the entire quantum circuit.

We adopt the definition of parallel query complexity in previous works (see for example~\cite{JMW17}). 
The query complexity is calculated for each oracle separately.
When considering the query complexity with respect to an oracle $\calO$,
all gates and queries to other oracles are ignored.
We allow queries to multiple copies of $\calO$ in parallel ---
that is, we can perform the mapping
\begin{equation*}
	\ket{\psi_1}\otimes \ldots \otimes \ket{\psi_r}\mapsto \calO\ket{\psi_1}\otimes \ldots \otimes \calO\ket{\psi_r}
\end{equation*}
in a single time-step for some $r$.
From the viewpoint of a quantum circuit, these parallel queries to $\calO$ act as ``gates'' in the same circuit layer.
The depth of a quantum circuit w.r.t.\ queries to $\calO$ is defined as the length of the longest path composed of queries to $\calO$ and wires from the circuit inputs to outputs, where the length is the total number of queries to $\calO$ on this path.
The size of a quantum circuit w.r.t.\ queries to $\calO$ is then defined as the number of total queries in the entire quantum circuit.

We use the following terminologies. The depth/size complexity of an algorithm (resp.\ problem) refers to the depth/size (w.r.t.\ queries or gates if specified) of quantum circuits implementing (resp.\ solving) it. Without specification, the depth/size (complexity) implicitly means the depth/size (complexity) w.r.t.\ gates.
The query/gate depth refers to the depth complexity w.r.t.\ queries/gates.
We will also use the adjectives ``$\alpha$-depth and $\beta$-size'' to describe a circuit of depth $\alpha$ and size $\beta$ (w.r.t.\ queries or gates if specified).

\subsubsection*{Error model}

A pure state $\ket{\psi}$ is said to be approximated by $\ket{\tilde{\psi}}$ to precision $\epsilon$,
if they are close in the $l_2$-norm such that $\norm*{\ket{\psi}-\ket{\tilde{\psi}}}\leq \epsilon$.
A unitary $U$ is said to be implemented to precision $\epsilon$,
if a unitary $\tilde{U}$ is actually implemented such that
$\norm*{U-\tilde{U}}\leq \epsilon$, where the norm is the spectral norm.
The terms ``$\epsilon$-precise'' and ``$\epsilon$-close'' will also be used interchangeably.

\subsection{Parallel Quantum Circuit}
\label{sub:parallel_quantum_circuit}

Now we review some basic techniques for constructing parallel quantum circuits. 
Although the results are known in the previous literature, 
we provide their proofs in order to  illustrate basic ideas in designing  parallel quantum circuits.

\begin{lemma}[Parallel copying~\cite{CW00,MN02}]
	\label{lmm:parallel copy}
	Let $\mathrm{COPY}_b$ be a unitary that creates $b$ copies of a bit (including the original copy);
	that is,
	\begin{equation*}
		\mathrm{COPY}_b\ket{x}\ket{0}^{\otimes b-1}=\ket{x}^{\otimes b}
	\end{equation*}
	for all $x\in \braces*{0,1}$.
	There is an $O\parens*{\log b}$-depth and $O(b)$-size quantum circuit that implements $\mathrm{COPY}_b$.
\end{lemma}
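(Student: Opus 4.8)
The plan is to build the circuit as a balanced binary tree of CNOT gates. First I would observe that the naive approach — applying $b-1$ CNOT gates from the source qubit to each of the $b-1$ target qubits in sequence — has depth $\Theta(b)$, since all these gates share the control qubit and hence cannot be parallelized. The fix is the standard doubling trick: at each time step, every qubit that currently holds a copy of $x$ acts as a control for a fresh CNOT onto a qubit that still holds $0$, so the number of copies doubles each step.

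Concretely, I would describe the construction in rounds indexed by $i = 0, 1, \ldots, \lceil \log_2 b\rceil - 1$. Before round $i$ there are (at most) $2^i$ qubits holding the value $x$; in round $i$ we apply, in parallel, up to $2^i$ CNOT gates, each with a distinct control among the qubits holding $x$ and a distinct target among the qubits still holding $0$, chosen so that after the round at most $2^{i+1}$ qubits hold $x$. Since the controls are pairwise distinct and the targets are pairwise distinct, all gates in a given round act on disjoint pairs of qubits and thus form a single circuit layer of depth $1$. After $\lceil \log_2 b \rceil$ rounds all $b$ qubits hold $x$ (in the last round we only use as many CNOTs as needed to top up to exactly $b$). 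This gives depth $\lceil \log_2 b\rceil = O(\log b)$, and the total number of gates is $\sum_i (\text{gates in round } i) = b - 1 = O(b)$, since each CNOT converts exactly one $\ket{0}$ into a $\ket{x}$ and we need $b-1$ such conversions.

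To finish I would verify correctness on basis states: since every gate is a CNOT and each target qubit is acted on exactly once (the first time it is used as a target it is flipped from $0$ to $x$, and it is never touched as a target again, only possibly later as a control), the net effect on $\ket{x}\ket{0}^{\otimes b-1}$ is exactly $\ket{x}^{\otimes b}$, and linearity extends this to all inputs in the computational basis, which determines the unitary $\mathrm{COPY}_b$.

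I do not expect a genuine obstacle here — the result is elementary and classical in spirit. The only point requiring a little care is the bookkeeping in the last round when $b$ is not a power of two: one must use exactly $b - 2^{\lfloor \log_2 b\rfloor}$ CNOTs in that round rather than the full $2^{\lfloor \log_2 b\rfloor}$, and check that there are enough controls (qubits holding $x$) and enough targets (qubits holding $0$) available, which is immediate since $b - 2^{\lfloor \log_2 b\rfloor} \le 2^{\lfloor \log_2 b\rfloor}$. Everything else is a direct depth/size count of the binary tree.
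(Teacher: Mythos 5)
Your proposal is correct and is essentially the same construction as the paper's: the paper builds $\mathrm{COPY}_b$ recursively (one CNOT followed by two parallel $\mathrm{COPY}_{b/2}$ instances), which unrolls to exactly your round-by-round doubling of CNOT layers, with the same $O(\log b)$ depth and $b-1$ gate count. Your extra care about the last round when $b$ is not a power of two is a detail the paper sidesteps by assuming $b$ is a power of two without loss of generality.
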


\begin{proof}
	Suppose $b$ is a power of two w.l.o.g.
	Then as shown in Figure~\ref{fig:parallel quantum circuit for COPY},
	the gate $\mathrm{COPY}_{b}$ can be inductively constructed from $\mathrm{COPY}_{b/2}$,
	with $\mathrm{COPY}_1$ being the identity operator.
	\begin{figure}[!ht]
		\centering
		\begin{quantikz}[row sep=0.3cm, column sep=0.9cm]
			\lstick{$\ket{x}$} & \ctrl{2}  & \gate[2,bundle={2}]{\mathrm{COPY}_{b/2}} & \qw\\
			\lstick{$\ket{0}^{\otimes b/2-1}$} & \qwbundle[alternate]{} &				& \qwbundle[alternate]{}\\
			\lstick{$\ket{0}$} & \targ{}       & \gate[2,bundle={2}]{\mathrm{COPY}_{b/2}} & \qw\\
			\lstick{$\ket{0}^{\otimes b/2-1}$} & \qwbundle[alternate]{} &                               & \qwbundle[alternate]{}
		\end{quantikz}
		\caption{Inductive construction of the gate $\mathrm{COPY}_{b}$}
		\label{fig:parallel quantum circuit for COPY}
	\end{figure}
	It is easy to check the depth and size of this quantum circuit by induction.
\end{proof}

Lemma~\ref{lmm:parallel copy} can be easily extended from copying a single bit to copying an $m$-bit string, with the  circuit depth and size multiplied by $m$. 
We use $\mathrm{COPY}_b^m$ to denote such a circuit. 
For the case $b=2$, we will omit the subscript and write $\mathrm{COPY}^m$.

\begin{lemma}[Parallel controlled rotations~\cite{CW00,MN02}]
	\label{lmm:parallel controlled Z-rotation}
	Let $\mathrm{C}_b$-$R_Z$ be a unitary that performs a phase shift on a single qubit controlled by $b$ qubits;
	that is,
	\begin{equation}\label{same-XY}
	\mathrm{C}_b\text{-}R_Z\ket{\gamma}\ket{\psi}=\ket{\gamma}R_Z\parens*{2\pi\gamma\cdot 2^{-b}}\ket{\psi}
	\end{equation}
	for all $\gamma\in [2^b]$ and $\ket{\psi}\in \Co^2$,
	where $R_Z(\theta)=e^{-i\theta Z/2}$. 
	Then $\mathrm{C}_b$-$R_Z$ can be implemented by an $O\parens*{\log b}$-depth and $O(b)$-size quantum circuit.
	
	The same conclusion holds for $\mathrm{C}_b$-$R_X$ and $\mathrm{C}_b$-$R_Y$ defined by replacing $R_Z(\theta)$
	in \eqref{same-XY} with
	$R_X(\theta)=e^{-i\theta X/2}$ and $R_Y(\theta)=e^{-i\theta Y/2}$, respectively.
\end{lemma}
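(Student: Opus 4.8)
The plan is to turn $\mathrm{C}_b\text{-}R_Z$ into a quantum fan-out followed by a single layer of rotations on pairwise-disjoint qubits, exploiting that all $Z$-rotations commute. Writing the control value in binary as $\gamma=\sum_{j=0}^{b-1}\gamma_j 2^j$ with $\gamma_j\in\braces*{0,1}$, we have $2\pi\gamma\cdot 2^{-b}=\sum_{j}\gamma_j\cdot 2\pi\cdot 2^{-(b-j)}$, hence $R_Z(2\pi\gamma\cdot 2^{-b})=\prod_{j=0}^{b-1}R_Z\bigl(\gamma_j\cdot 2\pi\cdot 2^{-(b-j)}\bigr)$, a product of single-qubit phase rotations, the $j$-th controlled by the $j$-th qubit of the register $\ket{\gamma}$. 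Applied naively these $b$ controlled rotations all act on the same target qubit and need depth $\Theta(b)$; the point is to spread them out using ancillas.

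First I would apply $\mathrm{COPY}_b$ of Lemma~\ref{lmm:parallel copy} to the target qubit together with $b-1$ fresh ancillas. Although $\ket{\psi}$ is not a classical bit, $\mathrm{COPY}_b$ extended by linearity acts as a fan-out gate, sending $(\alpha\ket{0}+\beta\ket{1})\ket{0}^{\otimes(b-1)}$ to $\alpha\ket{0}^{\otimes b}+\beta\ket{1}^{\otimes b}$, in $O(\log b)$ depth and $O(b)$ size. Next, for each $j\in[b]$ I would apply, controlled on the $j$-th qubit of $\ket{\gamma}$, the rotation $R_Z(2\pi\cdot 2^{-(b-j)})$ to the $j$-th copy. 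These $b$ controlled-$R_Z$ gates act on pairwise-disjoint pairs of qubits, so they form one circuit layer of $O(1)$ depth and $O(b)$ size; on the branch $\ket{0}^{\otimes b}$ they accumulate the phase $\prod_j e^{-i\gamma_j\pi 2^{-(b-j)}}=e^{-i\pi\gamma 2^{-b}}$, and on $\ket{1}^{\otimes b}$ the conjugate phase. Finally I would run $\mathrm{COPY}_b^{-1}$ to disentangle the ancillas, returning them to $\ket{0}^{\otimes(b-1)}$ and leaving the target qubit in $\alpha e^{-i\pi\gamma 2^{-b}}\ket{0}+\beta e^{+i\pi\gamma 2^{-b}}\ket{1}=R_Z(2\pi\gamma 2^{-b})\ket{\psi}$, as required. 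The three stages sum to $O(\log b)$ depth and $O(b)$ size.

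For $\mathrm{C}_b\text{-}R_X$ and $\mathrm{C}_b\text{-}R_Y$ I would conjugate by a fixed single-qubit basis change. Since $R_X(\theta)=H\,R_Z(\theta)\,H$, and $R_Y(\theta)=V\,R_Z(\theta)\,V^{\dagger}$ for a fixed single-qubit unitary $V$ with $V Z V^{\dagger}=Y$, applying $H$ (resp.\ $V^{\dagger}$) to the target qubit before $\mathrm{C}_b\text{-}R_Z$ and $H$ (resp.\ $V$) after it implements the claimed gate, adding only $O(1)$ to the depth and size.

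I expect the only real subtlety — more a point to state carefully than an obstacle — to be the correctness of the fan-out step: that $\mathrm{COPY}_b$ on a superposition produces the GHZ-type state (not genuine copies, which no-cloning forbids), that the parallel controlled rotations then distribute exactly $R_Z(2\pi\gamma 2^{-b})$ across the two GHZ branches, and that $\mathrm{COPY}_b^{-1}$ restores the ancillas to $\ket{0}$ independently of $\gamma$ and $\ket{\psi}$. Once this is checked, the depth and size accounting over the three stages is immediate.
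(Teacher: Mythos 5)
Your proposal is correct and follows essentially the same route as the paper: fan out the target qubit into a GHZ-type state with $\mathrm{COPY}_b$, apply the $b$ bit-controlled $R_Z$ rotations in a single parallel layer on the disjoint entangled qubits, and uncompute with $\mathrm{COPY}_b^\dagger$, with the $X$ and $Y$ cases handled by conjugating with a fixed single-qubit basis change. The only differences are cosmetic (bit-indexing convention and your explicit tracking of the phases on both GHZ branches, which the paper states slightly more loosely).
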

\begin{proof}
	Let $\gamma_j$ represent the $j^{\text{th}}$ bit of $\gamma$ for $j\in [b]$.
	A parallel quantum circuit for $\mathrm{C}_b$-$R_Z$ is shown in Figure~\ref{fig:parallel quantum circuit for C_b-R_Z}.
	To see its correctness,
	suppose $\ket{\psi}=\alpha\ket{0}+\beta\ket{1}$,
	then $\mathrm{COPY}_b\ket{\psi}\ket{0}^{\otimes b-1}=\alpha\ket{0}^{\otimes b}+\beta\ket{1}^{\otimes b}$
	is an entangled state.
	Since applying $R_Z$ to any of the $b$ entangled qubits will add to the relative phase,
	the whole state becomes $\ket{\gamma}\parens*{\alpha\ket{0}^{\otimes b}+e^{i2\pi \gamma \cdot 2^{-b}}\beta\ket{1}^{\otimes b}}$
	after applying $\mathrm{C}$-$R_Z$ in parallel.
	The final state is obtained by reverse computation with $\mathrm{COPY}_b^\dagger$.
	It is easy to check the depth and size of the quantum circuit by Lemma~\ref{lmm:parallel copy}.
	\begin{figure}[!ht]
		\centering
		\begin{quantikz}[row sep=0.3cm, column sep=0.2cm]
			\lstick{$\ket{\gamma_0}$}     & \qw                                   & \ctrl{4}                    & \qw                         & \qw    & \qw      & \qw & \qw\\
			\lstick{$\ket{\gamma_1}$}     & \qw                                   & \qw                         & \ctrl{4}                    & \qw    & \qw      & \qw &\qw\\
			\lstick{$\vdots$}             &                                       &                             &                             & \ddots &          & &\\
			\lstick{$\ket{\gamma_{b-1}}$} & \qw                                   & \qw                         & \qw                         & \qw    & \ctrl{4} & \qw &\qw\\
			\lstick{$\ket{\psi}$}         & \gate[4, nwires={3}]{\mathrm{COPY}_b} & \gate{R_Z \parens*{2^{0}\pi}} & \qw                         & \qw    & \qw      & \gate[4,nwires={3}]{\mathrm{COPY}_b^\dagger}&\qw\\[-0.4cm]
			\lstick{$\ket{0}$}            &                                       & \qw                         & \gate{R_Z \parens*{2^{-1}\pi}} & \qw    & \qw      & \qw &\qw\\[-0.4cm]
			\lstick{$\vdots$}             &                                       &                             &                             & \ddots &          & &\\[-0.4cm]
			\lstick{$\ket{0}$}            &                                       & \qw                         & \qw                         & \qw &\gate{R_Z\parens*{2^{-b+1}\pi}} &\qw &\qw
		\end{quantikz}
		\caption{A parallel quantum circuit for $\mathrm{C}_b$-$R_Z$}
		\label{fig:parallel quantum circuit for C_b-R_Z}
	\end{figure}

	The cases of $X$ and $Y$-rotations can be easily proved by combining the above proof and the identities $R_X(\theta)=\textit{Had}\,R_Z(\theta)\,\textit{Had}$ and $R_Y(\theta)=R_Z(\pi/2)R_X(\theta)R_Z(\pi/2)$,
	where $\textit{Had}$ stands for the Hadamard gate.
\end{proof}

The following lemma accommodates  a classical parallel computing technique into the quantum setting.

\begin{lemma}[Sequence of associative operators]
	\label{lmm:sequence of associative operators}
	Let $\circ$ be an associative operator. 
	Given a unitary $U_\circ$ that performs 
	$\ket{x,y,z}\mapsto\ket{x,y,z\oplus \parens*{x\circ y}}$
	and its inverse $U_\circ^\dagger$, 
	then for all $m\in \N$, the mapping
	\begin{equation*}
		\ket{x_1,\ldots,x_m}\ket{z}\mapsto \ket{x_1,\ldots,x_m}\ket{z\oplus \parens*{x_1\circ\ldots \circ x_m}}
	\end{equation*}
	can be implemented by a quantum circuit of depth $O\parens*{\log m}$ and size $O(m)$ w.r.t.\ $U_\circ$ and its inverse, 
	where the additional gate complexity is often negligible compared to $U_\circ$ and thus omitted.
\end{lemma}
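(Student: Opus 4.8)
The plan is to implement the $m$-fold $\circ$-product by a balanced binary tree of depth $O(\log m)$, exactly mirroring the classical parallel-prefix (or rather, parallel-reduction) trick, but done reversibly so that no garbage is left behind. First I would assume without loss of generality that $m$ is a power of two (pad the input with neutral copies, or simply handle the general case by the usual rounding argument, since an $m$ that is not a power of two only changes constants). I would allocate $m-1$ fresh ancilla registers initialized to $\ket{0}$, arranged as the internal nodes of a complete binary tree whose $m$ leaves hold $x_1,\dots,x_m$. The tree is filled level by level: at the bottom level we apply $U_\circ$ in parallel to the $m/2$ disjoint pairs $(x_{2i-1},x_{2i})$, writing $x_{2i-1}\circ x_{2i}$ into the $i^{\text{th}}$ ancilla of that level; at the next level we again apply $U_\circ$ in parallel to the $m/4$ pairs of those results, and so on. Each of the $\log m$ levels is a single parallel layer of $U_\circ$ gates acting on disjoint registers, so the depth spent going up the tree is $O(\log m)$ and the total number of $U_\circ$ invocations is $m/2 + m/4 + \cdots + 1 = m-1 = O(m)$. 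After the final merge, the root ancilla holds $x_1\circ\cdots\circ x_m$ (associativity is precisely what guarantees that this bracketing equals the intended left-to-right product).

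Next, I would copy the value from the root ancilla into the output register $\ket{z}$ via a layer of CNOTs (adding it in with $\oplus$), which costs $O(1)$ depth and negligible extra gates. The crucial final step is to uncompute all the internal-node ancillas by running the entire up-the-tree circuit in reverse, applying $U_\circ^\dagger$ at each level in the opposite order; since the inputs $x_1,\dots,x_m$ are untouched throughout, this restores every ancilla to $\ket{0}$ and leaves exactly the claimed state $\ket{x_1,\dots,x_m}\ket{z\oplus(x_1\circ\cdots\circ x_m)}$. The reverse pass again has depth $O(\log m)$ and uses $O(m)$ copies of $U_\circ^\dagger$, so the overall cost is $O(\log m)$-depth and $O(m)$-size of $U_\circ$ and its inverse, as stated. (One subtlety worth a sentence: if $x \circ y$ is wider than a single register one treats it with a register of the appropriate fixed width, and the ``$\oplus$'' at the leaves is well-defined because each ancilla starts at $\ket{0}$ — this is the standard Bennett-style compute–copy–uncompute pattern, cf. the discussion of reversible computation earlier in the paper.)

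I do not expect any genuine obstacle here; the only thing to be careful about is bookkeeping. The main point to get right is that the ancillas used as scratch for the tree must be cleanly uncomputed — writing $x\circ y$ with $\oplus$ into a known-zero register rather than overwriting an input is what makes the reverse pass return them to $\ket{0}$, and one must double-check that the parallel layers at each level act on genuinely disjoint sets of registers so that the depth really is one $U_\circ$-layer per tree level rather than something larger. A secondary, purely cosmetic point is handling $m$ not a power of two: I would just note that padding or an unbalanced-but-still-$O(\log m)$-deep tree handles it, and not belabor the arithmetic. The ``additional gate complexity is often negligible'' clause is justified simply by observing that the only non-$U_\circ$ gates are the $O(m)$ CNOTs in the copy step and the wiring, all in $O(\log m)$ depth, which is dominated by (or comparable to) the $U_\circ$ cost whenever $U_\circ$ is nontrivial.
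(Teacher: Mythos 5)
Your proposal is correct and follows essentially the same route as the paper's own proof: a balanced binary tree of parallel $U_\circ$ applications computing partial products into ancillas, followed by copying the root value into $\ket{z}$ and uncomputing the tree by reverse computation with $U_\circ^\dagger$. The depth and size accounting matches the paper's, and your extra remarks on disjointness of registers at each level and on handling $m$ not a power of two are harmless elaborations of the same argument.
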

\begin{proof}
	Let the partial sum $S_\circ\parens*{l,r}:=x_l\circ\ldots \circ x_r$.
	Our goal is to compute $S_\circ\parens*{1,m}=x_1\circ\ldots\circ x_m$.
	We can compute $S_\circ\parens*{l,r}$ inductively by 
	\begin{equation*}
		S_\circ\parens*{l,r}=S_\circ \parens*{l,p}\circ S_\circ \parens*{p+1,r},
	\end{equation*}
	where $p=(l+r-1)/2$.
	Assuming $m$ is a power of two w.l.o.g.,
	then the computation of $S_\circ\parens*{1,m}$ forms a tree of depth $O\parens*{\log m}$ and size $O(m)$ w.r.t.\ $U_\circ$,
	where the root is $S_\circ\parens*{1,m}$ and the leaves are $S_\circ\parens*{i,i}=x_i$ for $i\in [m]$.
	Once $S_\circ\parens*{1,m}$ is computed into an ancilla space,
	apply COPY gate to copy the result into $\ket{z}$,
	then clean all the garbage partial sums by reverse computation with $U_\circ^\dagger$.
\end{proof}

\begin{corollary}[Parallel addition of a sequence~\cite{CW00}]
	\label{cor:parallel addition of a sequence}
	The addition of a sequence, i.e., the mapping
	\begin{equation*}
		\ket{x_1,\ldots,x_m}\ket{z}\mapsto \ket{x_1,\ldots,x_m}\ket{z\oplus \parens*{x_1+\ldots+x_m}}
	\end{equation*}
	for all $x_1,\ldots,x_m,z\in [2^b]$,
	can be implemented by an $O\parens*{\log m+\log b}$-depth and $O\parens*{mb}$-size quantum circuit,
	where the addition is in $\Z_{2^b}$.
\end{corollary}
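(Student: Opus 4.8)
The plan is to reduce the statement directly to Lemma~\ref{lmm:sequence of associative operators} by taking the associative operator $\circ$ to be addition modulo $2^b$. First I would note that $+$ on $\Z_{2^b}$ is associative, so Lemma~\ref{lmm:sequence of associative operators} applies once we exhibit a unitary $U_+$ implementing $\ket{x,y,z}\mapsto\ket{x,y,z\oplus(x+y)}$ together with its inverse. For this I would invoke a standard reversible ripple-carry (or carry-lookahead) adder on $b$-bit registers: such a circuit uses $O(b)$ gates, and if one wants the XOR-into-a-fresh-register form stated in the lemma rather than the in-place form, one first computes $x+y$ into a clean ancilla, XORs it into $\ket{z}$, then uncomputes. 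The subtlety in depth is that a naive ripple-carry adder has depth $\Theta(b)$, which would ruin the bound; so here I would use a logarithmic-depth carry-lookahead adder (or equivalently observe that carry propagation is itself a scan over an associative operator, so a parallel-prefix construction gives depth $O(\log b)$ with $O(b)$ size). This yields $U_+$ and $U_+^\dagger$ each of depth $O(\log b)$ and size $O(b)$.

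Next I would feed this $U_+$ into Lemma~\ref{lmm:sequence of associative operators}. That lemma produces the map $\ket{x_1,\dots,x_m}\ket{z}\mapsto\ket{x_1,\dots,x_m}\ket{z\oplus(x_1+\cdots+x_m)}$ using $O(\log m)$-depth and $O(m)$-size of applications of $U_+$ and $U_+^\dagger$ (plus negligible extra gates and a $\mathrm{COPY}$ step, which by Lemma~\ref{lmm:parallel copy} costs $O(\log b)$ depth and $O(b)$ size). Composing the depths multiplicatively along the longest path --- $O(\log m)$ layers of $U_+$, each of depth $O(\log b)$ --- would give $O(\log m\cdot\log b)$; to get the additive bound $O(\log m+\log b)$ claimed in the corollary one observes that the $m/2^i$ independent additions at level $i$ of the binary tree all act on disjoint registers and hence occupy the same circuit layers, so the carry-lookahead sub-circuits at different tree levels compose in series only through their data dependencies, and a careful accounting (pipelining the prefix computations, or simply noting that the partial sums have bit-length $O(b+\log m)=O(b)$ throughout) gives total depth $O(\log m+\log b)$. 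The size is $O(m)$ adders each of size $O(b)$, i.e.\ $O(mb)$, matching the claim.

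The main obstacle I expect is exactly this depth bookkeeping: getting $O(\log m + \log b)$ rather than the easy $O(\log m\cdot\log b)$. The clean way to handle it is to not treat the $b$-bit adder as a black box inside the $\circ$-tree, but rather to interleave the two parallel-prefix structures --- one scan over the $m$ summands, one scan over the $b$ carry positions --- into a single prefix computation, so the whole thing is one $O(\log(mb))=O(\log m+\log b)$-depth network. Since the paper only needs this corollary as a subroutine and the weaker $O(\log m\log b)$ bound would still suffice for a $\polylog$ overall statement, I would state the sharp bound but keep the proof at the level of ``use a logarithmic-depth adder in Lemma~\ref{lmm:sequence of associative operators}, with depths combining additively because the prefix networks can be merged,'' and not grind through the explicit merged circuit.
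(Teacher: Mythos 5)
Your skeleton is the same as the paper's: the paper proves this corollary by invoking Lemma~\ref{lmm:sequence of associative operators} with $\circ$ taken to be addition in $\Z_{2^b}$, ``combined with classical techniques of three-two adder, pairwise representation and carry-lookahead adder,'' deferring details to~\cite{CW00}. You correctly identify both the reduction and the one genuine difficulty --- that treating a $O(\log b)$-depth adder as a black box inside the $O(\log m)$-level tree only yields $O(\log m\cdot\log b)$ depth, not the claimed $O(\log m+\log b)$.

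However, your resolution of that difficulty is where the gap lies. The standard (and the paper's intended) fix is the carry-save technique named in the paper's proof: keep each intermediate partial sum in a redundant \emph{pairwise} representation, i.e.\ as two $b$-bit numbers whose sum is the true partial sum, and combine nodes of the tree with three-to-two (carry-save) adders, which have \emph{constant} depth because they do no carry propagation at all. The tree then costs $O(\log m)\cdot O(1)$ depth, and a single carry-lookahead addition of depth $O(\log b)$ is performed only once at the root, giving the additive bound. Your proposed alternatives --- ``interleave the two parallel-prefix structures into a single prefix computation'' or ``pipeline the prefix computations'' --- do not correspond to a standard construction and are not obviously correct: the $m$-summand tree and the $b$-bit carry scan are not two instances of the same prefix problem that can simply be merged, and your parenthetical observation that the partial sums stay $O(b)$ bits long is true but irrelevant, since the obstruction is repeated carry propagation, not operand growth. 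So while you located the right pressure point, the argument you offer for the sharp depth bound would not go through as written; you need the three-two adder and pairwise representation (or an equivalent redundant-arithmetic argument) to close it.
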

\begin{proof}
	By Lemma~\ref{lmm:sequence of associative operators},
	combined with classical techniques of three-two adder, pairwise representation and carry-lookahead adder.
	A detailed proof can be found in~\cite{CW00}.
\end{proof}

\begin{corollary}[Parallel controlled $Z$ gate]
	\label{cor:parallel controlled Z gate}
	Let $\mathrm{C}_b$-$Z$ be a unitary that perform $Z$ gate on a single qubit controlled by $b$ qubits;
	that is,
	\begin{equation*}
		\mathrm{C}_b\text{-}Z\ket{x}\ket{\psi}=
		\begin{cases}
			\ket{x}Z\ket{\psi},& x=2^b-1,\\
			\ket{x}\ket{\psi},& o.w.
		\end{cases}
	\end{equation*}
	for all $x\in [2^b]$ and $\ket{\psi}\in \Co^2$,
	then $\mathrm{C}_b$-$Z$ can be implemented by an $O\parens*{\log b}$-depth and $O(b)$-size quantum circuit.
\end{corollary}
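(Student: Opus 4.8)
The plan is to reduce $\mathrm{C}_b$-$Z$ to the computation of a single AND of the $b$ control bits, which is exactly the situation handled by the parallel prefix construction of Lemma~\ref{lmm:sequence of associative operators}. First I would note that $\mathrm{C}_b$-$Z$ applies $Z$ to $\ket{\psi}$ precisely when $x=2^b-1$, i.e.\ when $x_0\wedge x_1\wedge\cdots\wedge x_{b-1}=1$, where $\wedge$ denotes logical AND and $x_j$ the $j^{\text{th}}$ bit of $x$. Since $\wedge$ is associative and the Toffoli gate (a three-qubit gate, hence a constant number of one- and two-qubit gates, and its own inverse) realizes $\ket{x,y,z}\mapsto\ket{x,y,z\oplus(x\wedge y)}$, one may invoke Lemma~\ref{lmm:sequence of associative operators} with $\circ=\wedge$, with $U_\circ$ the Toffoli gate, and with $m=b$.

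Concretely, I would proceed in three steps. (i) Using one fresh ancilla qubit initialized to $\ket{0}$, apply the circuit of Lemma~\ref{lmm:sequence of associative operators} to perform $\ket{x_0,\ldots,x_{b-1}}\ket{0}\mapsto\ket{x_0,\ldots,x_{b-1}}\ket{x_0\wedge\cdots\wedge x_{b-1}}$; by that lemma this costs $O(\log b)$-depth and $O(b)$-size of Toffoli gates, hence the same bounds in one- and two-qubit gates (the extra COPY step there is a single CNOT, as $z$ is one bit). (ii) Apply one controlled-$Z$ gate with the ancilla as control and $\ket{\psi}$ as target, which multiplies $\ket{\psi}$ by $Z$ exactly on the branch where the AND equals $1$. (iii) Run the circuit of step (i) in reverse to restore the ancilla, together with all garbage partial products, to $\ket{0}$. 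Composing, the depth is $O(\log b)+O(1)+O(\log b)=O(\log b)$ and the size is $O(b)+O(1)+O(b)=O(b)$, as claimed.

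Correctness is the standard compute--copy--uncompute argument: by linearity it suffices to check a computational-basis control string $x$ together with a fixed $\ket{\psi}$, for which step (i) writes $\mathrm{AND}(x)$ into the ancilla, step (ii) applies $Z^{\mathrm{AND}(x)}$ to $\ket{\psi}$, and step (iii) --- acting only on the control and ancilla/garbage registers and therefore commuting with the $Z$ acting on $\ket{\psi}$ --- cleans the ancilla, since $\mathrm{AND}(x)$ is a function of $x$ alone. I do not expect a genuine obstacle here; the only things to watch are the bookkeeping that a single ancilla qubit suffices and that Toffoli gates expand into $O(1)$ elementary gates without spoiling the depth bound. (Alternatively, one could conjugate the target by Hadamards to turn $\mathrm{C}_b$-$Z$ into a $b$-fold controlled NOT, or compute $x_0+\cdots+x_{b-1}$ via Corollary~\ref{cor:parallel addition of a sequence} and test equality with $b$; both routes yield the same $O(\log b)$-depth, $O(b)$-size circuit.)
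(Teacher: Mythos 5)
Your proposal is correct and follows essentially the same route as the paper: reduce the $b$-fold control to a single bit via Lemma~\ref{lmm:sequence of associative operators} applied to an associative Boolean operation on the control bits, apply one singly-controlled $Z$, and uncompute. The only discrepancy is that the paper's text takes $\circ$ to be OR and computes $x_0\vee\ldots\vee x_{b-1}$, which does not match the stated definition (triggering on $x\neq 0$ rather than $x=2^b-1$); your use of AND is the correct choice here, so if anything your write-up fixes what appears to be a typo in the paper's proof.
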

\begin{proof}
	Write $x$ as a bit string $x=x_0\ldots x_{b-1}$,
	then the $\mathrm{C}_b$-$Z$ gate can be implemented in the following ways.
	First take $\circ$ to be AND gate in Lemma~\ref{lmm:sequence of associative operators}
	to compute $x_0\wedge \ldots \wedge x_{b-1}$ in the ancilla space by an
	$O\parens*{\log b}$-depth and $O\parens*{b}$-size quantum circuit,
	then conditioned on this apply a $\mathrm{C}$-$Z$ gate to $\ket{\psi}$,
	and finally clean the garbage by reverse computation.
\end{proof}

The following lemma translates the parallel classical results of computing elementary arithmetic functions~\cite{Reif83} to the quantum case using the technique of reversible computing~\cite{Bennett73}.

\begin{lemma}[Parallel quantum circuit for elementary arithmetics~\cite{Reif83}]
	\label{lmm:parallel quantum circuit for elementary arithmetics}
	Let $f$ be one of the following elementary arithmetic functions:
	addition, subtraction, multiplication, division, cosine, sine, arctangent,
	exponentiation, logarithm,
	maximum, minimum, factorial.\footnote{In particular, for factorial, $x$ is required to be $O\parens*{b^2}$.}
        Then the unitary that performs
	\begin{equation*}
		\ket{\tilde{x}}\ket{\tilde{y}}\ket{z}\mapsto \ket{\tilde{x}}\ket{\tilde{y}}\ket{z\oplus \tilde{f}(\tilde{x},\tilde{y})}
	\end{equation*}
	for all $\tilde{x},\tilde{y},z\in [2^b]$
	can be implemented by an $O\parens*{\log^2 b}$-depth and $O \parens*{b^4}$-size quantum circuit,\footnote{The complexity in~\cite{Reif83} is more refined and complicated. Here we take a simple upper bound.}
	where $\tilde{x},\tilde{y}$ are floating number representation of $x,y$ on suitable intervals,
	and $\tilde{f}(\tilde{x},\tilde{y})$ is $2^{-b}$-close to $f(x,y)$.
	(For unary function the second operand $y$ is omitted, e.g., $f(x,y)=cos(x)$.)
	In particular, for $f$ being addition, subtraction or multiplication, the depth can be $O\parens*{\log b}$.
\end{lemma}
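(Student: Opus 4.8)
The plan is to compose three standard ingredients: Reif's classical parallel circuits for the elementary functions, Bennett's reversible simulation, and a direct embedding of reversible gates into quantum gates. First I would invoke the classical result of~\cite{Reif83}: for each of the listed functions there is a Boolean circuit that, on input the floating-point encodings $\tilde{x},\tilde{y}$ of $x,y$ (using $O(b)$ bits each), outputs a $2^{-b}$-approximation $\tilde{f}(\tilde{x},\tilde{y})$, with depth $O\parens*{\log^2 b}$ and size $\poly(b)$; a crude bound $O\parens*{b^4}$ on the size is more than enough for us. For the transcendental functions this proceeds by range reduction followed by evaluating a truncated power series (or a Newton-type iteration for division and logarithm), each stage being built from iterated additions and multiplications; for addition and subtraction a carry-lookahead adder already achieves depth $O\parens*{\log b}$, and for multiplication a carry-save (Wallace-tree) reduction of the $b$ partial products to two summands followed by one carry-lookahead addition also achieves depth $O\parens*{\log b}$. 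These match exactly the bounds claimed in the lemma, including the improved $O\parens*{\log b}$ depth for $+,-,\times$.

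Next I would turn this circuit into a reversible one without harming the depth, using Bennett's compute--copy--uncompute scheme~\cite{Bennett73}. Replace each classical gate by the corresponding reversible gate (AND, OR, NOT, fan-out become Toffoli, $\mathrm{CNOT}$, NOT acting on fresh ancillas that hold the intermediate wire values), so that after the forward pass an ancilla register holds $\tilde{f}(\tilde{x},\tilde{y})$ together with $\poly(b)$ bits of garbage; then copy $\tilde{f}(\tilde{x},\tilde{y})$ out by a layer of bitwise $\mathrm{CNOT}$s, of depth $O(1)$ and size $O(b)$ (cf.\ Lemma~\ref{lmm:parallel copy}); then run the forward pass in reverse to erase all garbage and restore the ancillas to $\ket{0}$. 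The forward and backward passes each have the depth and size of the classical circuit up to a constant factor, and the copy step adds only $O(1)$ depth, so the whole reversible circuit still has depth $O\parens*{\log^2 b}$ (resp.\ $O\parens*{\log b}$) and size $O\parens*{b^4}$.

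Finally, I would read the resulting reversible circuit as a quantum circuit: every reversible gate is a permutation on one, two, or three qubits, hence realized by $O(1)$ one- and two-qubit gates, so the quantum circuit has the same depth and size up to constants and performs $\ket{\tilde{x}}\ket{\tilde{y}}\ket{0}\mapsto\ket{\tilde{x}}\ket{\tilde{y}}\ket{\tilde{f}(\tilde{x},\tilde{y})}$ on the computational basis. To get the stated form $\ket{\tilde{x}}\ket{\tilde{y}}\ket{z}\mapsto\ket{\tilde{x}}\ket{\tilde{y}}\ket{z\oplus\tilde{f}(\tilde{x},\tilde{y})}$, compute $\tilde{f}$ into a fresh ancilla register as above, XOR it into $\ket{z}$ with a single layer of $b$ $\mathrm{CNOT}$s, and then uncompute the ancilla by running the computation backwards; this once more only doubles the depth and size, leaving the bounds unchanged.

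The reversible-to-quantum embedding and the XOR bookkeeping are routine. The one point that genuinely needs care is that Bennett's simulation preserves \emph{depth}, not merely size --- which it does, since compute, copy, and uncompute run as three sequential blocks --- and, relatedly, that the bespoke $O\parens*{\log b}$-depth adder and multiplier survive this transformation unchanged; both are immediate once the scheme is laid out. The mathematical substance of the lemma is thus entirely contained in the classical construction of~\cite{Reif83}, which we take as a black box.
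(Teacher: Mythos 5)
Your proposal is correct and follows exactly the route the paper intends: the paper gives no explicit proof of this lemma, merely noting that it ``translates the parallel classical results of computing elementary arithmetic functions~\cite{Reif83} to the quantum case using the technique of reversible computing~\cite{Bennett73}'', which is precisely the Reif-circuit plus Bennett compute--copy--uncompute argument you spell out. Your write-up simply makes explicit the standard details (garbage cleanup, the XOR into $\ket{z}$, depth preservation) that the paper takes for granted.
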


In the remainder of this paper,
the name ``elementary arithmetic function'' may also refer to a composition of a constant number of the arithmetic functions in Lemma~\ref{lmm:parallel quantum circuit for elementary arithmetics}.
As elementary arithmetic operations are frequently used in this paper,
in some cases we will measure the efficiency of computation with respect to these building blocks.
More specifically we have the following definition for \textit{arithmetic-depth-efficient} computation.\footnote{Note that this differs from the notion of \textit{depth-efficient} computation, which often refers to a poly-logarithmic depth computation. As an analog of the classical depth-efficient class $\NC$, \cite{MN02} introduced the quantum depth-efficient class $\QNC$.}

\begin{definition}[Arithmetic-depth-efficient computation]
	\label{def:depth-efficient computation}
    A quantum circuit on $b$ qubits is called arithmetic-depth-efficient if it is $O\parens*{\log^2 b}$-depth and $O\parens*{b^4}$-size.
    A function $f$ is \textit{arithmetic-depth-efficiently computable} if the mapping
	$\ket{x}\ket{z}\mapsto \ket{x}\ket{z\oplus f(x)}$ for all $x,z\in [2^b]$\footnote{Functions (or arithmetic operators) with multiple inputs and outputs can be easily converted to this form.}
	can be implemented by an arithmetic-depth-efficient quantum circuit.
\end{definition}

\subsection{Block-encoding}
\label{sub:block_encoding}

Block-encoding is a recently introduced fundamental tool for arithmetic operations on matrices represented as a block of a unitary.
It has been developed through a series of researches in quantum algorithms~\cite{HHL09, BCK15, CKS17, LC19, AGGW17, CGJ19, GSLW19}.

\begin{definition}[Block-encoding]
    \label{def:block-encoding}
	An $(s+a)$-qubit unitary $U$ is an $(\alpha,a,\epsilon)$-block-encoding of an $s$-qubit operator $A$ if
	\begin{equation*}
		\norm*{A-\alpha\parens*{\bra{0}^{\otimes a}\otimes \bbone}U\parens*{\ket{0}^{\otimes a}\otimes \bbone}}\leq \epsilon.
	\end{equation*}
	Intuitively, this implies the $2^s\times 2^s$ upper-left block of $U$ is $(\epsilon/\alpha)$-close to $A/\alpha$; that is,
	\begin{equation*}
	    U=\begin{pmatrix}
	        B&*\\
	        *&*
	    \end{pmatrix}
	    \text{\qquad and \qquad}
	    \norm*{B-A/\alpha}\leq \epsilon/\alpha.
	\end{equation*}
\end{definition}

In this paper, we will slightly abuse the terminology in such a way that if the condition is
\begin{equation*}
	\norm*{A-\alpha\parens*{\bbone\otimes \bra{0}^{\otimes a}}U\parens*{\bbone\otimes \ket{0}^{\otimes a}}}\leq \epsilon,
\end{equation*}
then for simplicity $U$ is also called an $(\alpha,a,\epsilon)$-block-encoding of $A$.

\section{Parallel Quantum Walk}
\label{sec:parallel_quantum_walk}

In this section, we define a parallel quantum walk within the framework of Childs' quantum walks~\cite{Childs09,BC12,BCK15,CKS17}.
In Section~\ref{sub:quantum_walk_for_hamiltonians}, we revisit Childs' quantum walk.
Then we propose a parallelization of it in Section~\ref{sub:parallel_version} and show how to implement the parallel quantum walk
in two stages: pre-walk and re-weight, whose complexities are analyzed in Section~\ref{sub:pre_walk} and Section~\ref{sub:re_weight} respectively.
In particular, in Section~\ref{sub:pre_walk} we define uniform-structured Hamiltonians
for which the parallel quantum walk can be implemented efficiently.
The Hamiltonians considered in Section~\ref{sub:parallel_version} are $d$-sparse. 
In Section~\ref{sub:general_uniform_structured_hamiltonian}, we further consider a sum of $d$-sparse Hamiltonians and extend the parallel quantum walk and the class of uniform-structured Hamiltonians.

\subsection{A Quantum Walk for Hamiltonians}
\label{sub:quantum_walk_for_hamiltonians}

Let $H$ be a $d$-sparse $N\times N$ Hamiltonian acting on $n$ qubits (i.e.\ $N=2^n$).
By analogy with the classical Markov chain,
the Hermitian $H$ can be seen as a transition matrix with ``complex probability''
on a \mbox{$d$-sparse} undirected graph whose adjacency matrix is given by replacing each nonzero entry in $H$ with $1$.
Following~\cite{CK11}, this graph is called \textit{the graph of the Hamiltonian},
which we will often denote as $\mathsf{H}$ in the serif font throughout this paper.
We write $(j,k)\in \mathsf{H}$ if an undirected edge $(j,k)$ exists in the graph $\mathsf{H}$.
Following Childs' extension~\cite{Childs09, BC12, Kothari14, BCK15, CKS17} of Szegedy's quantum walk~\cite{Szegedy04},
we define for all $j\in [N]$ the post-transition state of $\ket{j}$:
\begin{equation}
	\ket{\psi_j}:=\frac{1}{\sqrt{d}}\sum_{\parens*{j,k}\in \mathsf{H}}\parens*{\sqrt{H_{jk}^*}\ket{k}+\sqrt{1-\abs*{H_{jk}}}\ket{k+N}}\label{eq:psi_j},
\end{equation}
as a generalization of the classical random walk,
where $z^*$ stands for the complex conjugate of $z$,
and the square root $\sqrt{H_{jk}^*}$ is chosen such that $\sqrt{H_{jk}^*}\parens*{\sqrt{H_{jk}}}^* = H_{jk}^*$.
Note that $\ket{\psi_j}\in \Co^{2N}$ because $N\leq k+N\leq 2N-1$ for $k\in [N]$.
The factor $1/\sqrt{d}$ and the garbage states $\ket{k+N}$ are introduced
to keep $\ket{\psi_j}$ a normalized state.
Now we are ready to define the quantum walk,
a unitary acting on the extended space $\Co^{2N}\otimes \Co^{2N}$
for the $N\times N$ Hamiltonian $H$.

\begin{definition}[Quantum walk for Hamiltonians~\cite{CKS17}]
	\label{def:quantum walk for H}
	Given Hamiltonian $H$ as above.
	Let $\calH=\Co^{2N}\otimes\Co^{2N}$ be the state space.
	For each $j\in [N]$, define
	\begin{equation}
	    \ket{\Psi_j}:=\ket{j}\otimes\ket{\psi_j}=\frac{1}{\sqrt{d}}\sum_{\parens*{j,k}\in \mathsf{H}}\ket{j}\parens*{\sqrt{H_{jk}^*}\ket{k}+\sqrt{1-\abs*{H_{jk}}}\ket{k+N}},
	    \label{eq:Psi_j}
	\end{equation}
	where $\ket{\psi_j}$ is defined in \eqref{eq:psi_j}.
	Let $T:\calH\rightarrow\calH$ be any unitary such that
	\begin{equation}
		T\parens*{\ket{j}\otimes\ket{z}}=
		\begin{cases}
			\ket{\Psi_j},&{\rm if}\  j\in[N]\ {\rm and}\  z=0,\\
			\text{any state,}&o.w.
		\end{cases}
		\label{eq:T}
	\end{equation}
	for all $j,z \in [2N]$.
	Let $S:\calH\rightarrow\calH$ be the SWAP operator such that
	$S\ket{a,b}=\ket{b,a}$ for all $a,b\in [2N]$.
	Then a step of quantum walk for $H$ is defined as $Q:=S\parens*{2\Pi_T-\bbone}$,
	where $\Pi_T=T\parens*{\ket{0}\!\bra{0}\otimes \bbone_{n+1}}T^\dagger$ is a projector,
	with $\ket{0}\in \Co^{2N}$ and $\bbone_{n+1}$ the identity operator acting on $n+1$ qubits.
\end{definition}

The following lemma from~\cite{CKS17} shows that we can implement a polynomial of $H$ by multiple steps of quantum walk. 
More precisely, $r$ iterations of $Q$ block-encodes a $r$-degree Chebyshev polynomial of $H$.

\begin{lemma}
	\label{lmm:r step quantum walk}
	$T^\dagger Q^r T$ is a $(1, n+2, 0)$-block-encoding of $\calT_r(H/d)$,
	if $T$ in \eqref{eq:T} is performed precisely,
	where $\calT_r(x)$ is the $r$-degree Chebyshev polynomial (of the first kind).
\end{lemma}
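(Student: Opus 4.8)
The plan is to analyze the action of $Q = S(2TT^\dagger - \bbone)$ restricted to a suitable $Q$-invariant subspace and recognize the Chebyshev recurrence. First I would fix the $2N^2$-dimensional (or relevant) invariant subspace spanned by the states $\{\ket{\Psi_j}\}_{j\in[N]}$ together with their ``swapped'' partners $\{S\ket{\Psi_j}\}_{j\in[N]}$. The key structural fact to establish is the overlap identity
\begin{equation*}
	\bra{\Psi_j} S \ket{\Psi_k} = \frac{1}{d}\bra{\psi_j}\,\big(\text{phase}\big)\,\ket{\psi_k}\text{-type sum} = \frac{(H/d)_{jk}}{1}\,,
\end{equation*}
more precisely $\langle \Psi_j \rvert S \lvert \Psi_k\rangle = H_{jk}/d$, which follows by expanding both $\ket{\Psi_j}$ and $S\ket{\Psi_k}$ using \eqref{eq:Psi_j}, swapping registers, and using that the only surviving terms come from the ``non-garbage'' parts $\sqrt{H_{jk}^*}$ on one side and $\sqrt{H_{kj}^*}$ on the other, with $\sqrt{H_{jk}^*}(\sqrt{H_{kj}})^* = \sqrt{H_{jk}^*}(\sqrt{H_{jk}})^{} $ giving $H_{jk}^*$ — here one uses Hermiticity $H_{kj} = H_{jk}^*$ and the chosen branch of the square root so that the product is exactly $(H^*)_{jk}=H_{kj}$; combined with the $1/\sqrt d$ normalizations this yields $H_{jk}/d$ (up to the conjugation convention, which I would track carefully). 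The garbage states $\ket{k+N}$ are orthogonal to everything relevant and drop out.

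Next I would invoke the standard Szegedy-style argument: let $\Pi = TT^\dagger$ be the projector onto $\mathrm{span}\{\ket{\Psi_j}\}$ (valid since $T$ is an isometry on the $z=0$ sector), so that $2TT^\dagger - \bbone = 2\Pi - \bbone$ is the reflection about that subspace, and $Q = S(2\Pi - \bbone)$. Restricted to the invariant subspace, $Q$ acts as a product of two reflections, hence block-diagonalizes into $2\times 2$ rotation blocks indexed by the eigenvalues of the ``discriminant'' matrix, which by the overlap identity above is exactly $H/d$ (with entries in $[-1,1]$ since $\|H\|_{\max}=1$ and... actually one needs $\|H/d\|\le 1$, which holds for $d$-sparse $H$ with $\|H\|_{\max}=1$). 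On each block with discriminant eigenvalue $\lambda = \cos\theta$, $Q$ acts as rotation by $\theta$, so $Q^r$ acts as rotation by $r\theta$, and the relevant matrix element picks out $\cos(r\theta) = \mathcal{T}_r(\cos\theta) = \mathcal{T}_r(\lambda)$. Assembling over all eigenvalues gives $T^\dagger Q^r T = \mathcal{T}_r(H/d)$ exactly on the image, which is precisely the statement that $T^\dagger Q^r T$ is a $(1, n+2, 0)$-block-encoding of $\mathcal{T}_r(H/d)$ — the $n+2$ ancilla qubits accounting for the dimension doubling in each of the two registers ($2N \otimes 2N$ versus the $N$-dimensional system).

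Alternatively, and perhaps cleaner to write, I would verify the Chebyshev recurrence directly: show $T^\dagger Q^{r+1} T = 2(H/d)\, T^\dagger Q^{r} T - T^\dagger Q^{r-1} T$ by using $Q T = S T$ on the image (since $2\Pi-\bbone$ fixes $\ket{\Psi_j}$), the identity $T^\dagger S T = H/d$ (the overlap computation above), and $T^\dagger S (2\Pi - \bbone) S = 2 T^\dagger S \Pi S - \bbone$ together with $\Pi = TT^\dagger$; the cross terms telescope into the three-term recurrence, with base cases $T^\dagger Q^0 T = \bbone = \mathcal{T}_0(H/d)$ and $T^\dagger Q^1 T = T^\dagger S T = H/d = \mathcal{T}_1(H/d)$. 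The main obstacle I anticipate is bookkeeping the complex-conjugation/square-root branch conventions in \eqref{eq:psi_j}–\eqref{eq:Psi_j} so that the discriminant comes out as $H/d$ rather than $H^*/d$ or $|H|/d$, and making sure the ``any state'' freedom in the definition of $T$ (on the $z\neq 0$ sector) genuinely does not leak into the block-encoded matrix element — this is handled by noting $T$ only ever acts on $\ket{0}$-initialized ancillas and $T^\dagger$ post-selects on them, so the undefined part of $T$ is irrelevant, but it should be stated explicitly. Everything else is the standard quantum-walk computation and should be routine.
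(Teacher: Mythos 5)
Your proposal is correct, and it actually does more than the paper itself: the paper does not prove Lemma~\ref{lmm:r step quantum walk} for general $r$ --- it imports the statement from~\cite{CKS17} and only proves the $r=1$ case as Corollary~\ref{cor:TdaggerST}. Your computation of the overlap $\bra{\Psi_j}S\ket{\Psi_k}=H_{jk}/d$ (garbage components orthogonal, square-root branch chosen so that $\sqrt{H_{jk}^*}(\sqrt{H_{jk}})^*=H_{jk}^*$, Hermiticity to convert $H_{kj}^*$ into $H_{jk}$) coincides with the paper's proof of that corollary. For general $r$ you supply the standard qubitization argument that the paper defers to the literature: the two-dimensional $Q$-invariant subspaces spanned by $T\ket{\lambda}$ and $ST\ket{\lambda}$ for each eigenvector $\ket{\lambda}$ of $H/d$, on which $Q$ acts as $\bigl(\begin{smallmatrix}0&-1\\1&2\lambda\end{smallmatrix}\bigr)$, giving the relevant matrix element $c_r=a_r+\lambda b_r$ satisfying $c_{r+1}=2\lambda c_r-c_{r-1}$ with $c_0=1$, $c_1=\lambda$, hence $c_r=\calT_r(\lambda)$; your alternative three-term operator recurrence is the same computation packaged differently. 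Both are sound, and the normalization $\|H/d\|\le 1$ that you flag is indeed what makes $\lambda=\cos\theta$ legitimate.

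One point deserves more care than your closing remark gives it. You dismiss the ``any state'' freedom in \eqref{eq:T} on the grounds that $T$ only ever acts on $\ket{0}$-initialized ancillas and $T^\dagger$ post-selects on them. That disposes of the $r=1$ case, but for $r\ge 2$ the reflection $2TT^\dagger-\bbone$ is applied to states such as $ST\ket{\lambda}$ that lie outside the range of the intended projector, and the component of $S\ket{\Psi_l}$ along the ``undefined'' basis vectors $T(\ket{j}\otimes\ket{0})$ with $j\in[2N]\setminus[N]$ would then contaminate the recurrence. The fix is to read $2TT^\dagger-\bbone$ as the reflection about $\mathrm{span}\{\ket{\Psi_j}:j\in[N]\}$ only, i.e.\ $2T\Pi_0T^\dagger-\bbone$ with $\Pi_0=\sum_{j\in[N]}\ket{j}\!\bra{j}\otimes\ket{0}\!\bra{0}$ (this is the convention of~\cite{CKS17}; the paper's literal $TT^\dagger=\bbone$ for unitary $T$ is a notational slip). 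With that reading your argument goes through exactly.
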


The proof of Lemma~\ref{lmm:r step quantum walk} involves some interesting techniques,
which were later used in~\cite{LC19} to develop qubitization.
Here we only give a proof for the special case when $r=1$
(thus $T^\dagger QT= T^\dagger ST$), which provides a basis for our generalization to parallel quantum walks in Section~\ref{sub:parallel_version}.

\begin{corollary}
	\label{cor:TdaggerST}
	$T^\dagger S T$ is a $(1, n+2, 0)$-block-encoding of $H/d$,
	if $T$ in \eqref{eq:T} is performed precisely.
\end{corollary}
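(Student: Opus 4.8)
The plan is to compute the upper-left block of $T^\dagger S T$ directly on basis states $\ket{j}\otimes\ket{0}$ with $j\in[N]$ and read off the matrix element. First I would apply $T$ to $\ket{j}\otimes\ket{0}$ to obtain $\ket{\Psi_j}=\ket{j}\otimes\ket{\psi_j}$ as in \eqref{eq:Psi_j}, then apply the SWAP $S$ to get $\ket{\psi_j}\otimes\ket{j}$, and finally examine $\bra{\Psi_k}\parens*{\ket{\psi_j}\otimes\ket{j}}$ for $k\in[N]$, since projecting onto the block-encoded subspace means pairing with $T\parens*{\ket{k}\otimes\ket{0}}=\ket{\Psi_k}$ on the output side. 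The key computation is the inner product
\begin{equation*}
	\bra{\Psi_k}\parens*{S\,T}\parens*{\ket{j}\otimes\ket{0}}
	=\bra{\Psi_k}\parens*{\ket{\psi_j}\otimes\ket{j}}
	=\parens*{\bra{k}\otimes\bra{\psi_k}}\parens*{\ket{\psi_j}\otimes\ket{j}}
	=\braket{k}{\psi_j}\cdot\braket{\psi_k}{j}.
\end{equation*}
Then I would substitute the definition \eqref{eq:psi_j}: $\braket{k}{\psi_j}=\frac{1}{\sqrt d}\sqrt{H_{jk}^*}$ when $(j,k)\in\mathsf H$ (and $0$ otherwise, also noting $\ket{k}$ with $k\in[N]$ is orthogonal to all the garbage kets $\ket{k'+N}$), and similarly $\braket{\psi_k}{j}=\frac{1}{\sqrt d}\parens*{\sqrt{H_{kj}^*}}^*=\frac1{\sqrt d}\sqrt{H_{kj}}$. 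Multiplying and using Hermiticity $H_{kj}=H_{jk}^*$ together with the branch choice $\sqrt{H_{jk}^*}\parens*{\sqrt{H_{jk}}}^*=H_{jk}^*$ (stated right after \eqref{eq:psi_j}), the product collapses to $\frac1d H_{jk}^*$; I would double-check whether the conjugation lands in the right place, and if a global transpose/conjugate appears, absorb it into the "slight abuse" of block-encoding terminology noted after Definition~\ref{def:block-encoding}. This gives exactly $\parens*{H/d}_{kj}$ (or its conjugate-transpose), so the relevant block of $T^\dagger S T$ equals $H/d$ on the nose, yielding a $(1,n+2,0)$-block-encoding; the ancilla count $n+2$ comes from $\calH=\Co^{2N}\otimes\Co^{2N}$ having $2(n+1)$ qubits total against an $n$-qubit operator.

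I should also confirm that $T^\dagger S T$ is well-defined as a unitary on the relevant subspace even though $T$ is only partially specified by \eqref{eq:T}: since we only ever pre- and post-select on states of the form $\ket{j}\otimes\ket{0}$, $j\in[N]$, the "any state" freedom in \eqref{eq:T} does not affect the extracted block, and unitarity of $S$ and $T$ guarantees $T^\dagger S T$ is a genuine unitary. The main obstacle is purely bookkeeping: getting the square-root branch and the complex conjugation exactly right so that the block is $H/d$ rather than $\bar H/d$ or $H^{\mathsf T}/d$ — but this is precisely what the relaxed block-encoding convention in Section~\ref{sub:block_encoding} is designed to handle, so no genuine difficulty arises. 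This case $r=1$ then serves as the base for the parallel generalization in Section~\ref{sub:parallel_version}.
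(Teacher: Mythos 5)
Your proposal is correct and follows essentially the same route as the paper: both reduce the claim to evaluating $\bra{\Psi_j}S\ket{\Psi_l}$ on computational basis inputs and checking that the garbage components $\ket{k+N}$ contribute nothing. The only cosmetic difference is that you factor the inner product as $\braket{k}{\psi_j}\cdot\braket{\psi_k}{j}$ using $S(\ket{j}\otimes\ket{\psi_j})=\ket{\psi_j}\otimes\ket{j}$, whereas the paper splits $\ket{\Psi_j}=\ket{\Phi_j}+\ket{\Phi_j^\bot}$ and verifies the cross terms vanish; your sign/conjugation bookkeeping indeed lands on $H_{kj}/d$ exactly (no transpose issue arises, since $H$ is Hermitian and the branch convention $\sqrt{H_{jk}^*}\parens*{\sqrt{H_{jk}}}^*=H_{jk}^*$ does the rest).
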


\begin{proof}
	To see
	\begin{equation*}
		\parens*{\bbone\otimes \bra{0}^{\otimes n+2}}T^\dagger S T\parens*{\bbone \otimes \ket{0}^{\otimes n+2}} = H/d,
	\end{equation*}
	it suffices to prove
	\begin{equation}
		\parens*{\bra{j}\otimes \bra{0}^{\otimes n+2}}T^\dagger S T \parens*{\ket{l}\otimes \ket{0}^{\otimes n+2}} = H_{jl}/d.
		\label{eq:TdaggerST}
	\end{equation}
	We first write the state $\ket{\Psi_j}$ in \eqref{eq:Psi_j} into two parts $\ket{\Psi_j}=\ket{\Phi_j}+\ket{\Phi_j^\bot}$,
	where the subnormalized states
	\begin{equation*}
		\ket{\Phi_j}:=\frac{1}{\sqrt{d}}\sum_{\parens*{j,k}\in \mathsf{H}}\ket{j}\sqrt{H_{jk}^*}\ket{k} \quad\text{and}\quad
		\ket{\Phi_j^\bot}:=\frac{1}{\sqrt{d}}\sum_{\parens*{j,k}\in \mathsf{H}}\ket{j}\sqrt{1-\abs*{H_{jk}}}\ket{k+N}
	\end{equation*}
	with $\ket{j},\ket{k}\in \Co^{2N}$.
	It is easy to verify that
	\begin{align*}
		&\bra{\Phi_j}S\ket{\Phi_l}=H_{jl}/d\\
		&\bra{\Phi_j}S\ket{\Phi_l^{\bot}}=\bra{\Phi_j^\bot}S\ket{\Phi_l}=\bra{\Phi_j^\bot}S\ket{\Phi_l^\bot}=0
	\end{align*}
	for all $j,l\in [N]$. Thus the LHS of \eqref{eq:TdaggerST} is $\bra{\Psi_j}S\ket{\Psi_l}=H_{jl}/d$.
\end{proof}

Now we briefly illustrate how to implement one step of quantum walk by a quantum circuit.
For the sake of simplicity we only consider query complexity,
and assume each step is performed precisely.
It suffices to show how to implement the unitary $T$ (and thus $T^\dagger$),
because in $Q=ST\parens*{2\ket{0}\!\bra{0}\otimes \bbone_{n+1}-\bbone_{2n+2}}T^\dagger$
only the operator $T$ requires oracle queries.
As the ordinary quantum walk does not assume any special structure of the sparse Hamiltonian $H$
(e.g., assume $H$ is local),
we take $\calO_P=\calO_L$ here.

\begin{lemma}[Query complexity of $T$~\cite{BCK15}]
	\label{lmm:complexity of T}
	The unitary $T$ can be implemented by a quantum circuit with $O(1)$ queries to $\calO_H$ and $\calO_L$.
\end{lemma}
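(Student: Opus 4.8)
The plan is to unfold the definition of $T$ in \eqref{eq:T} and show that the state $\ket{\Psi_j} = \ket{j}\otimes\ket{\psi_j}$ can be prepared from $\ket{j}\otimes\ket{0}$ using a constant number of queries to $\calO_H$ and $\calO_L$, together with cheap auxiliary gates. First I would fix the input $\ket{j,0}$ (with $j\in[N]$; the behaviour on other inputs is unconstrained, so $T$ can be completed to a unitary arbitrarily) and recall from \eqref{eq:psi_j} that
\begin{equation*}
\ket{\psi_j}=\frac{1}{\sqrt d}\sum_{t\in[d]}\Bigl(\sqrt{H_{j,L(j,t)}^*}\,\ket{L(j,t)}+\sqrt{1-\abs{H_{j,L(j,t)}}}\,\ket{L(j,t)+N}\Bigr),
\end{equation*}
using that $t\mapsto L(j,t)$ enumerates exactly the $d$ neighbours of $j$ in $\mathsf H$. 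So the task is: starting from $\ket{j}\ket{0}$, (i) create the uniform superposition $\frac{1}{\sqrt d}\sum_{t\in[d]}\ket{j}\ket{t}$ over the neighbour-index register; (ii) query $\calO_L$ once to map $\ket{j}\ket{t}\mapsto\ket{j}\ket{L(j,t)}$ into a fresh column register (keeping $t$ around for uncomputation, or observing $\calO_L$ is already of the right reversible form); (iii) query $\calO_H$ once to write $H_{j,L(j,t)}$ into an ancilla with $b$-bit precision; (iv) apply a single-qubit controlled rotation on the ``flag'' qubit, conditioned on that ancilla, to create the amplitude pair $\bigl(\sqrt{H_{jk}^*},\sqrt{1-\abs{H_{jk}}}\bigr)$ across $\ket{k}$ and $\ket{k+N}$ — this is the standard two-step rotation (one to split by magnitude $\abs{H_{jk}}$, one to imprint the complex phase onto the $\ket k$ branch) used in \cite{BC12,BCK15}; and (v) uncompute the $H_{jk}$ ancilla with a second query to $\calO_H$ and the index register $t$ with a second query to $\calO_L$, leaving exactly $\ket{\Psi_j}$.

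Step (i) needs a small remark since $d$ need not be a power of two: preparing $\frac{1}{\sqrt d}\sum_{t\in[d]}\ket t$ exactly is still possible with $O(\operatorname{polylog} d)$ gates and no queries by a sequence of controlled rotations (or, if only $\epsilon$-precision is claimed at this stage, approximately — but since the lemma as stated deals with query complexity only and assumes $T$ is performed precisely, I would just cite exact preparation). Counting queries: step (ii) and its uncomputation in (v) use $\calO_L$ twice; step (iii) and its uncomputation in (v) use $\calO_H$ twice. Hence $O(1)$ queries to each of $\calO_H$ and $\calO_L$, which is the claim. The controlled rotations of step (iv) are $O(1)$ in number and contribute only to gate complexity, which the lemma statement explicitly sets aside.

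The one genuinely delicate point — and the part I would write most carefully — is step (iv): one must be sure that the rotation producing $\sqrt{H_{jk}^*}$ on the $\ket k$ branch is consistent with the branch-value convention $\sqrt{H_{jk}^*}\,(\sqrt{H_{jk}})^* = H_{jk}^*$ fixed after \eqref{eq:psi_j}, so that Corollary~\ref{cor:TdaggerST} (and ultimately Lemma~\ref{lmm:r step quantum walk}) goes through with the stated block-encoding. Concretely, one computes $\abs{H_{jk}}$ and $\arg H_{jk}$ from the $b$-bit entry, applies a $Y$-rotation by an angle $\theta$ with $\cos(\theta/2)=\abs{H_{jk}}^{1/2}$ to split the flag qubit, and then a $Z$-rotation by $-\tfrac12\arg H_{jk}$ on the $\ket k$ branch to fix the phase; verifying that the resulting amplitudes are precisely those in \eqref{eq:psi_j} is the crux. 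Everything else is bookkeeping: tracking which ancilla registers are borrowed and returned, and confirming that all garbage is uncomputed so that the output is a product state $\ket{j}\otimes\ket{\psi_j}$ rather than something entangled with workspace. I expect no real obstacle beyond getting this phase-and-magnitude encoding exactly right and matching it to the conventions already in place.
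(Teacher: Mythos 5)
Your proposal follows the same route as the paper's proof: prepare the uniform superposition over $[d]$, query $\calO_L$ to turn neighbour indices into column indices, query $\calO_H$ to write $H_{jk}$ into an ancilla, rotate conditioned on it, and uncompute — giving $O(1)$ queries to each oracle. The extra care you take with the phase-and-magnitude convention for $\sqrt{H_{jk}^*}$ is consistent with how the paper handles it (it defers that detail to the re-weight analysis in Lemma~\ref{lmm:walk stage complexity}), so there is nothing to change.
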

\begin{proof}
	Let $\calH^A\otimes \calH^B$ be the state space
	with $\calH^A=\calH^B=\Co^{2N}$.
	As in the definition \eqref{eq:T} of $T$,
	we only consider its action on the initial state $\ket{j}\otimes\ket{0}$ for $j\in [N]$,
	with $\ket{j},\ket{0}\in\Co^{2N}$.
	Then $T$ can be implemented in the following way:

	\begin{enumerate}
		\item 
			Prepare a uniform superposition over computational basis states of size $d$ in $\calH^B$.
		\item
			Query the oracle $\calO_L$ to obtain in $\calH^B$ a superposition over nonzero entries in row $j$.
		\item
			Query the oracle $\calO_H$ to compute $H_{jk}$ in an ancilla space,
			conditioned on which rotates the state in $\calH^B$,
			then uncompute $H_{jk}$ by reverse computation.
	\end{enumerate}
	That is,
	\begin{align*}
		\ket{j}\ket{0}&\xrightarrow{1} \ket{j}\frac{1}{\sqrt{d}}\sum_{t\in [d]}\ket{t}\\
					  &\xrightarrow{2} \ket{j} \frac{1}{\sqrt{d}}\sum_{(j,k)\in \mathsf{H}}\ket{k}\\
					  &\xrightarrow{3} \ket{j} \frac{1}{\sqrt{d}}\sum_{(j,k)\in \mathsf{H}}\ket{k} \ket{H_{jk}}\\
					  &\xrightarrow{3} \ket{j} \frac{1}{\sqrt{d}}\sum_{(j,k)\in \mathsf{H}}\parens*{\sqrt{H_{jk}^*}\ket{k}+\sqrt{1-\abs*{H_{jk}}}\ket{k+N}}\ket{H_{jk}}\\
					  &\xrightarrow{3} \ket{j} \ket{\psi_j}.
	\end{align*}
\end{proof}

\subsection{Parallelization}
\label{sub:parallel_version}

The algorithm for the quantum walk described in Section~\ref{sub:quantum_walk_for_hamiltonians} is highly sequential,
because $r$ steps of quantum walk need $r$ iterations of $Q$,
which in total requires $\Theta(r)$ sequential queries to $\calO_H$ and $\calO_L$. 
Now we define a parallel quantum walk,
which can be implemented by a quantum circuit of only constant depth w.r.t.\ queries for uniform-structured Hamiltonians (to be defined in Section~\ref{sub:pre_walk}). 
Slightly abusing the notation, we denote $\bj\in \mathsf{H}^r$
if a path $\bj=\parens*{j_0,\ldots,j_r}$\footnote{In this paper, a vector is written in a bold font. The coordinates will be indexed from $0$, for example, $\ba=(a_0,\ldots,a_{k-1})$, where the dimension $k$ is indicated by the context.} of length $r+1$ exists in the graph $\mathsf{H}$.

\begin{definition}[$r$-parallel quantum walk for Hamiltonians]
	\label{def:parallel quantum walk}
	Given Hamiltonian $H$ as above.
	Let $\calH=\calH^A\otimes \calH^B$ be the state space, where $\calH^A=\calH^B=\parens*{\Co^{2N}}^{\otimes (r+1)}$. For each $j_0\in [N]$, 
	define
	\begin{equation}
		\ket{\Psi_{j_0}^{(r)}}:=
		\frac{1}{\sqrt{d^r}}\sum_{\bj\in \mathsf{H}^r}
		\underbrace{\vphantom{\ket{j_0}\bigotimes_{s\in [r]}\parens*{\sqrt{H_{j_sj_{s+1}}^*}\ket{j_{s+1}}+\sqrt{1-\abs*{H_{j_sj_{s+1}}}}\ket{j_{s+1}+N}}}
		\ket{\bj}}_{\in\calH^A}
		\otimes
		\underbrace{\ket{j_0} \bigotimes_{s\in [r]}\parens*{\sqrt{H_{j_sj_{s+1}}^*}\ket{j_{s+1}}+\sqrt{1-\abs*{H_{j_sj_{s+1}}}}\ket{j_{s+1}+N}}}_{\in \calH^B}
		\label{eq:Psi_j^r}
	\end{equation}
	where $\ket{\bj}=\ket{j_0}\ldots\ket{j_r}\in \calH^A$.
	Let: \begin{itemize}\item $T^{(r)}:\calH \rightarrow \calH$
	be any unitary operator such that
	\begin{equation}
		T^{(r)}\parens*{\ket{j}\ket{z_1}\ldots\ket{z_{2r+1}}}=
		\begin{cases}
			\ket{\Psi_{j}^{(r)}},
			&j\in [N],z_1=\ldots=z_{2r+1}=0,\\
			\text{any state},
			&o.w.
		\end{cases}
		\label{eq:T^r}
	\end{equation}
	for all $j,z_1,\ldots,z_{2r+1}\in [2N]$; \item  $S^{(r)}:\calH\rightarrow\calH$ be the reverse order operator such that
        \begin{equation*}
            S^{(r)}\ket{a_0,\ldots,a_{2r+1}}=\ket{a_{2r+1},\ldots, a_0}
        \end{equation*}
	for all $a_s\in [2N], s\in [2r+2]$.\end{itemize} Then 
	a step of $r$-parallel quantum walk for $H$ is defined as $Q^{(r)}:=T^{(r)\dagger}S^{(r)}T^{(r)}$.
\end{definition}

The parallel quantum walk defined above naturally generalizes the original quantum walk in Definition~\ref{def:quantum walk for H}.
The key idea is that we extend the state $\ket{\Psi_j}$ in \eqref{eq:Psi_j} which is a superposition of one step of walk $\parens*{j,k}\in \mathsf{H}$, to the state $\ket{\Psi_{j_0}^{(r)}}$ in \eqref{eq:Psi_j^r} which is a superposition of $r$ steps of walk (i.e., a path) $\bj\in \mathsf{H}^r$. 
As shown later in  Lemma~\ref{lmm:T^rdaggerS^rT^r block encode (H/d)^r}, the walk operator $Q^{(r)}$ becomes a block-encoding of the monomial $\parens*{H/d}^r$,  
a generalization of the $H/d$ obtained from the original quantum walk. 
(In this sense, the walk operator $Q^{(r)}$ is an extension of $T^\dagger Q T$ instead of $Q$.)
It should be emphasized that an $r$-parallel quantum walk is not equivalent to $r$ sequential steps of the original quantum walk,
which instead block-encodes a Chebyshev polynomial $\calT_r\parens*{H/d}$.

\begin{remark}
The term ``parallel quantum walk'' comes from the fact
that, as proved later,
the walk operator $Q^{(r)}$
can be performed by a parallel quantum circuit with a constant query depth
if the Hamiltonian $H$ is uniform-structured (see Definition \ref{def:1-uniform-structured}  in Section~\ref{sub:pre_walk}).
This result is non-trivial, because the state $\ket{\Psi_{j_0}^{(r)}}$ in Definition~\ref{def:parallel quantum walk}
contains a dependence chain (which has a sequential nature) induced by the path $\bj\in \mathsf{H}^r$,
where $j_{s+1}$ depends on $j_{s}$ for all $l\in [r]$.
This difficulty is resolved by observing that queries to oracle $\calO_H$ can be actually separated from the dependence chain (see Section~\ref{sub:re_weight}),
while queries to oracle $\calO_P$ can be parallelized if the graph $\mathsf{H}$ has a good structure (see Section~\ref{sub:pre_walk}).
\end{remark}

Now we will illustrate why $Q^{(r)}$ is a block-encoding of $\parens{H/d}^r$.
Similar to the proof of Lemma~\ref{cor:TdaggerST},
we can write $\ket{\Psi_{j_0}^{(r)}}$ into two parts
$\ket{\Psi_{j_0}^{(r)}}=\ket{\Phi_{j_0}^{(r)}}+\ket{\Phi_{j_0}^{(r)\bot}}$,
where the subnormalized state
\begin{equation}
	\ket{\Phi_{j_0}^{\parens*{r}}}:=\frac{1}{\sqrt{d^r}}\sum_{\bj\in \mathsf{H}^r}
	\ket{\bj}\otimes
	\sqrt{H_{j_0j_1}^*\ldots H_{j_{r-1}j_r}^*}\ket{\bj}
	\label{eq:Phi_j}
\end{equation}
represents the ``good'' part of $\ket{\Psi_{j_0}^{(r)}}$.
The following lemma shows some orthogonal relations between these subnormalized states in the context of $S^{(r)}.$

\begin{lemma}
	\label{lmm:Phi and Phi^bot}
	 For all $j,k\in[N]$, we have:
	\begin{enumerate}
		\item 
			 $\bra{\Phi_j^{(r)}}S^{(r)}\ket{\Phi_k^{(r)}}=\parens*{\parens*{H/d}^r}_{jk}$.
		\item
 $\bra{\Phi_j^{(r)}}S^{(r)}\ket{\Phi_k^{(r)\bot}}=
	\bra{\Phi_j^{(r)\bot}}S^{(r)}\ket{\Phi_k^{(r)\bot}}=0$.
	\end{enumerate}
\end{lemma}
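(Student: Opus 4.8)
The plan is to compute the matrix elements $\bra{\Phi_j^{(r)}}S^{(r)}\ket{\Phi_k^{(r)}}$ and the mixed terms directly from the definition \eqref{eq:Phi_j}, mimicking the $r=1$ computation carried out in the proof of Corollary~\ref{cor:TdaggerST}. Recall that $\ket{\Phi_{j_0}^{(r)}}$ lives in $\calH^A \otimes \calH^B = (\Co^{2N})^{\otimes(r+1)} \otimes (\Co^{2N})^{\otimes(r+1)}$ and has the form $\frac{1}{\sqrt{d^r}}\sum_{\bj \in \mathsf{H}^r} \ket{\bj} \otimes \sqrt{H^*_{j_0 j_1}\cdots H^*_{j_{r-1}j_r}}\,\ket{\bj}$, where both the $A$-register and the $B$-register hold the \emph{same} path state $\ket{\bj} = \ket{j_0}\ket{j_1}\cdots\ket{j_r}$. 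The operator $S^{(r)}$ reverses the order of all $2r+2$ registers, so it maps the concatenated string $(j_0,\ldots,j_r \mid j_0,\ldots,j_r)$ to $(j_r,\ldots,j_0 \mid j_r,\ldots,j_0)$. The key structural point is that $S^{(r)}$ acts on $\ket{\bj}_A \otimes \ket{\bj}_B$ by sending it to $\ket{\overleftarrow{\bj}}_A \otimes \ket{\overleftarrow{\bj}}_B$ where $\overleftarrow{\bj} = (j_r, j_{r-1}, \ldots, j_0)$ is the reversed path, which again is a path in $\mathsf{H}$ since the graph is undirected.

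For part 1, I would expand $\bra{\Phi_j^{(r)}} S^{(r)} \ket{\Phi_k^{(r)}} = \frac{1}{d^r}\sum_{\bj, \bl} \sqrt{H_{j_0 j_1}\cdots H_{j_{r-1} j_r}}\,\sqrt{H^*_{l_0 l_1}\cdots H^*_{l_{r-1} l_r}}\, \langle \bj_A \bj_B \mid S^{(r)} \mid \bl_A \bl_B\rangle$, where $\bj$ ranges over paths starting at $j$ and $\bl$ over paths starting at $k$. The inner product $\langle \bj_A \bj_B \mid \overleftarrow{\bl}_A \overleftarrow{\bl}_B\rangle$ is $1$ iff $\bj = \overleftarrow{\bl}$, i.e., iff $j_s = l_{r-s}$ for all $s$; in particular this forces $j = j_0 = l_r$ and $l_0 = j_r$, so $\bl$ is a path from $k$ to $j$. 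Substituting $j_s = l_{r-s}$ and using the Hermiticity $H_{l_s l_{s+1}} = H^*_{l_{s+1} l_s}$ to match up the product of square roots (with the branch choice $\sqrt{H^*_{jk}}(\sqrt{H_{jk}})^* = H^*_{jk}$ exactly as in Corollary~\ref{cor:TdaggerST}), the surviving sum collapses to $\frac{1}{d^r}\sum H_{k l_1} H_{l_1 l_2}\cdots H_{l_{r-1} j}$ over all internal vertices $l_1, \ldots, l_{r-1}$. But this is precisely the $(j,k)$ entry of $(H/d)^r$ expanded as a sum over walks of length $r$ — being careful that the sum over \emph{graph paths} $\bj \in \mathsf{H}^r$ is the same as the sum over all tuples of indices, since $H_{j_s j_{s+1}} = 0$ whenever $(j_s, j_{s+1}) \notin \mathsf{H}$, so non-edges contribute nothing. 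This gives $\parens*{(H/d)^r}_{jk}$, possibly up to taking a conjugate which I should check cancels given $H$ Hermitian.

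For part 2, the argument is the orthogonality trick from Corollary~\ref{cor:TdaggerST}: the state $\ket{\Phi_k^{(r)\bot}}$ has, in its $B$-register, at least one tensor factor of the form $\ket{j_{s+1} + N}$ with $j_{s+1}+N \in \{N+1, \ldots, 2N\}$, i.e., supported on the ``upper half'' of $\Co^{2N}$, whereas every $B$-register basis vector appearing in $S^{(r)}\ket{\Phi_j^{(r)}}$ (and in $\ket{\Phi_j^{(r)}}$ itself) has all its tensor factors in the ``lower half'' $\{0, \ldots, N-1\}$ — since $S^{(r)}$ only permutes registers and never changes which half a register lies in. Hence the overlap vanishes factor-by-factor. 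The same reasoning handles $\bra{\Phi_j^{(r)\bot}} S^{(r)} \ket{\Phi_k^{(r)\bot}}$: actually here I should be slightly more careful, since both sides can have upper-half factors, but one can argue by examining the earliest register in which a ``$+N$'' shift occurs, or more simply note that $\ket{\Phi_j^{(r)}}$ and $\ket{\Phi_j^{(r)\bot}}$ together decompose each tensor slot into the $\ket{k}$ versus $\ket{k+N}$ branches, and $S^{(r)}$ preserves the product structure, so the cross terms cancel in the same way as the single-step case verified in Corollary~\ref{cor:TdaggerST}.

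I expect the main obstacle to be bookkeeping rather than conceptual: tracking the reversal $S^{(r)}$ consistently across the $2r+2$ registers, making sure the square-root branch cuts multiply correctly under Hermitian conjugation (the identity $\sqrt{H^*_{jk}}(\sqrt{H_{jk}})^* = H^*_{jk}$ must be applied $r$ times along the reversed path), and confirming that replacing the sum over graph paths $\mathsf{H}^r$ by the unrestricted sum over index tuples introduces no spurious terms because $H$ already vanishes off the edge set. A secondary subtlety is that $S^{(r)}\ket{\Phi_j^{(r)}}$ reverses \emph{both} the $A$- and $B$-registers, so the matching condition $\bj = \overleftarrow{\bl}$ is imposed once by the $A$-inner-product and once — redundantly and consistently — by the $B$-inner-product restricted to the ``good'' (lower-half) branch; I would note that these two conditions agree, which is exactly why the walk operator is well-defined and why the factor of $d^r$ (rather than $d^{2r}$ or $d^{r/2}$) normalizes things correctly.
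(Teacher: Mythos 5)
Your computation for part 1 is essentially the paper's own proof: expand both states, use the fact that $S^{(r)}$ sends $\ket{\bl}_A\ket{\bl}_B$ to $\ket{\overleftarrow{\bl}}_A\ket{\overleftarrow{\bl}}_B$, extract the (redundantly imposed) matching condition $\bj=\overleftarrow{\bl}$, and collapse the product of square roots via Hermiticity and the branch convention $\sqrt{H_{jk}^*}\parens*{\sqrt{H_{jk}}}^*=H_{jk}^*$. The observations you flag as bookkeeping (the unrestricted index sum versus the sum over graph paths, the redundancy of the $A$- and $B$-register matching, the conjugate that must come out as $\parens*{H^r}_{jk}$ rather than $\parens*{H^r}_{kj}$) are exactly the points one has to check, and they all resolve as you expect. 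Your argument for $\bra{\Phi_j^{(r)}}S^{(r)}\ket{\Phi_k^{(r)\bot}}=0$ is also sound and equivalent to the paper's.

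The genuine gap is in $\bra{\Phi_j^{(r)\bot}}S^{(r)}\ket{\Phi_k^{(r)\bot}}=0$, which you flag but do not close. Your two fallbacks do not work as stated: ``the cross terms cancel in the same way as the single-step case'' is an appeal to a claim that Corollary~\ref{cor:TdaggerST} itself only asserts with ``it is easy to verify,'' and it mischaracterizes the mechanism --- nothing cancels; each cross term vanishes outright. The clean fix (and the paper's actual argument) is to look at the $\calH^A$ register rather than $\calH^B$. From \eqref{eq:Psi_j^r}, every ``$+N$'' shift lives in the $B$-register, so \emph{every} computational basis component of $\ket{\Phi_j^{(r)\bot}}$ (just like $\ket{\Phi_j^{(r)}}$) has all $r+1$ of its $A$-register factors of the form $\ket{j_t}$ with $j_t\in[N]$. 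On the other side, $S^{(r)}$ moves the original $B$-register contents into the $A$-register positions, so every component of $S^{(r)}\ket{\Phi_k^{(r)\bot}}$ has at least one $A$-register factor of the form $\ket{k_s+N}$. Since $\braket{j_t|k_s+N}=0$, both mixed terms vanish by the same one-line observation, with no case analysis on ``the earliest register in which a shift occurs.''
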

\begin{proof}
	We prove the two cases separately.
	\begin{enumerate}
		\item
			Let $j_0:=j$ and $k_0:=k$. By straightforward calculation we have:
			\begin{align*}
				S^{(r)}\ket{\Phi_k^{(r)}}&=\frac{1}{\sqrt{d^r}} \sum_{\bk\in \mathsf{H}^r}
				\sqrt{H_{k_0k_1}^*\ldots H_{k_{r-1}k_r}^*}\ket{k_r\ldots k_0}\ket{k_r\ldots k_0}\\
				\bra{\Phi_j^{(r)}}&=\frac{1}{\sqrt{d^r}}\sum_{\bj\in \mathsf{H}^r}
				\sqrt{H_{j_0j_1}\ldots H_{j_{r-1}j_r}}\bra{j_0\ldots j_r}\bra{j_0\ldots j_r}.
			\end{align*}
			Then using the self-adjointness of $H$ we obtain:
			\begin{equation*}
				\bra{\Phi_j^{(r)}}S^{(r)}\ket{\Phi_k^{(r)}}
				=\frac{1}{d^r}\sum_{\bj\in \mathsf{H}^r}H_{j_0j_1}\ldots H_{j_{r-1}j_r}
				=\parens*{\frac{H^r}{d^r}}_{jk}.
			\end{equation*}
		\item
			Recall that our state space is $\calH=\calH^A\otimes \calH^B$ with $\calH^A=\calH^B=\parens*{\Co^{2N}}^{\otimes r+1}$.
			Let us focus on the space $\calH^A$.
			Since $S^{(r)}$ is the reverse order operator,
			every computational basis component of
			the state $S^{(r)}\ket{\Phi_k^{(r)\bot}}$ has at least one subsystem $s\in [r+1]$ of the form
			$\ket{k_s+N}\in \Co^{2N}$,
			while every computational basis component of $\ket{\Phi_j^{(r)}}$ or $\ket{\Phi_j^{(r)\bot}}$ 
			has all subsystems of the form $\ket{j_t}$ for $t\in [r+1]$.
			The orthogonality statement immediately follows from $\left\langle j_t\middle\vert k_s+N \right\rangle=0$ for any $j_t,k_s\in [N]$.
	\end{enumerate}
\end{proof}

\begin{lemma}
	\label{lmm:T^rdaggerS^rT^r block encode (H/d)^r}
	$Q^{(r)}=T^{(r)\dagger} S^{(r)} T^{(r)}$ is a $(1,2rn+2r+n+2,\epsilon)$-block-encoding of $(H/d)^r$,
	if $T^{(r)}$ is implemented to precision $\epsilon/2$.
\end{lemma}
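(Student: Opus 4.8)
The plan is to mirror the proof of Corollary~\ref{cor:TdaggerST}, replacing the single-step analysis with the $r$-step machinery already set up in Lemma~\ref{lmm:Phi and Phi^bot}, and then to handle the error propagation from implementing $T^{(r)}$ only approximately. First I would treat the exact case: assume $T^{(r)}$ is performed precisely. The defining equation \eqref{eq:T^r} gives $T^{(r)}\parens*{\ket{j}\ket{0}^{\otimes 2r+1}} = \ket{\Psi_j^{(r)}}$ for $j\in[N]$, so
\begin{equation*}
	\parens*{\bra{j}\otimes\bra{0}^{\otimes 2r+1}}T^{(r)\dagger}S^{(r)}T^{(r)}\parens*{\ket{k}\otimes\ket{0}^{\otimes 2r+1}}
	= \bra{\Psi_j^{(r)}}S^{(r)}\ket{\Psi_k^{(r)}}.
\end{equation*}
Now I would expand $\ket{\Psi_j^{(r)}} = \ket{\Phi_j^{(r)}} + \ket{\Phi_j^{(r)\bot}}$ and similarly for $k$, obtaining four inner-product terms; by Lemma~\ref{lmm:Phi and Phi^bot}, three of them vanish and the surviving one equals $\parens*{\parens*{H/d}^r}_{jk}$. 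Since this holds for every $j,k\in[N]$, it follows that $\parens*{\bbone\otimes\bra{0}^{\otimes 2r+1}}Q^{(r)}\parens*{\bbone\otimes\ket{0}^{\otimes 2r+1}}$ restricted to the $j,k\in[N]$ block equals $(H/d)^r$, which is exactly the (abused) block-encoding condition of Definition~\ref{def:block-encoding} with $\alpha=1$, ancilla count $2r+1$ in $\calH^B$ plus $n+1$ more from the index register padding, and zero error. I would double-check that the ancilla count advertised in the statement, $2rn+2r+n+2$, matches: $\calH^A$ contributes $r+1$ registers of $n+1$ qubits each used as ancilla (that's $(r+1)(n+1)$), and in $\calH^B$ the $z_1,\dots,z_{2r+1}$ slots plus the extra qubit per register account for the rest; the precise bookkeeping of which qubits count as the ``$a$'' ancilla versus the $s=n$ system qubits is the one fiddly point, and I would lay out the registers explicitly to confirm the total.

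Next, the approximate case. Suppose $\tilde{T}^{(r)}$ is actually implemented with $\norm*{T^{(r)} - \tilde{T}^{(r)}} \le \epsilon/2$. Then the implemented walk operator is $\tilde{Q}^{(r)} = \tilde{T}^{(r)\dagger}S^{(r)}\tilde{T}^{(r)}$, and I would bound
\begin{equation*}
	\norm*{Q^{(r)} - \tilde{Q}^{(r)}}
	\le \norm*{T^{(r)\dagger} - \tilde{T}^{(r)\dagger}}\cdot\norm*{S^{(r)}T^{(r)}} + \norm*{\tilde{T}^{(r)\dagger}S^{(r)}}\cdot\norm*{T^{(r)} - \tilde{T}^{(r)}}
	\le 2\cdot\tfrac{\epsilon}{2} = \epsilon,
\end{equation*}
using unitarity of $S^{(r)}$, $T^{(r)}$, $\tilde{T}^{(r)}$ and submultiplicativity of the spectral norm (a standard hybrid/telescoping argument). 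Since compressing to a sub-block is a contraction, the upper-left block of $\tilde{Q}^{(r)}$ is within $\epsilon$ of that of $Q^{(r)}$, which in turn equals $(H/d)^r$ exactly; hence $\tilde{Q}^{(r)}$ is a $(1,2rn+2r+n+2,\epsilon)$-block-encoding of $(H/d)^r$, as claimed.

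I do not expect a genuine obstacle here — the conceptual content is entirely in Lemma~\ref{lmm:Phi and Phi^bot}, which is assumed. The main thing requiring care is the ancilla-counting: making sure that the registers of $\calH^A\otimes\calH^B = \parens*{\Co^{2N}}^{\otimes(2r+2)}$ are split correctly into $n$ ``genuine'' qubits (the index $\ket{j}\in\Co^N$ of the first register of $\calH^B$, say) and $2rn+2r+n+2$ ancilla qubits all initialized to $\ket{0}$, and that the ``$\bbone\otimes\bra{0}^{\otimes a}$'' form of the abused block-encoding convention in Section~\ref{sub:block_encoding} is the one that applies. A secondary subtlety is that \eqref{eq:T^r} only pins down $T^{(r)}$ on the subspace with all $z_i=0$, so one must note that $T^{(r)\dagger}$ acting on $S^{(r)}T^{(r)}\ket{k}\ket{0}^{\otimes 2r+1} = S^{(r)}\ket{\Psi_k^{(r)}}$ still produces the right overlap with $\ket{j}\ket{0}^{\otimes 2r+1}$ — but this is immediate since $\braket{\Psi_j^{(r)}|S^{(r)}|\Psi_k^{(r)}}$ is precisely what the inner product computes regardless of how $T^{(r)}$ behaves off the distinguished subspace, because $\bra{j}\bra{0}^{\otimes 2r+1}T^{(r)\dagger} = \parens*{T^{(r)}\ket{j}\ket{0}^{\otimes 2r+1}}^\dagger = \bra{\Psi_j^{(r)}}$.
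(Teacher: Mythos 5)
Your proposal is correct and follows essentially the same route as the paper's proof: reduce the exact case to $\bra{\Psi_j^{(r)}}S^{(r)}\ket{\Psi_k^{(r)}}=\parens*{\parens*{H/d}^r}_{jk}$ via the splitting $\ket{\Psi_j^{(r)}}=\ket{\Phi_j^{(r)}}+\ket{\Phi_j^{(r)\bot}}$ and Lemma~\ref{lmm:Phi and Phi^bot}, then propagate the $\epsilon/2$ error through the two occurrences of $T^{(r)}$. Your explicit telescoping bound and your remark on why the behaviour of $T^{(r)}$ off the distinguished subspace is irrelevant are just slightly more detailed versions of what the paper leaves implicit, and your ancilla count $2(r+1)(n+1)-n=2rn+2r+n+2$ checks out.
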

\begin{proof}
	We first show
	\begin{equation}
		\parens*{\bbone \otimes \bra{0}^{\otimes 2rn+2r+n+2}}T^{(r)\dagger}S^{(r)}T^{(r)}
		\parens*{\bbone \otimes \ket{0}^{\otimes 2rn+2r+n+2}}=(H/d)^r
		\label{eq:TdaggerST=(H/d)^r}
	\end{equation}
	for precise $T^{(r)}$.
	This part is similar to the proof of Lemma~\ref{cor:TdaggerST},
	and it suffices to show
	\begin{equation}
		\bra{\Psi_j^{(r)}}S^{(r)}\ket{\Psi_l^{(r)}} = \bra{j}(H/d)^r\ket{l}
		\label{eq:TdaggerST=(H/d)^r element}
	\end{equation}
	for all $j,l\in [N]$, where $\ket{\Psi_j^{(r)}}$ is defined as above.
	Equation \eqref{eq:TdaggerST=(H/d)^r element} can be obtained by splitting $\ket{\Psi_j^{(r)}}=\ket{\Phi_j^{(r)}}+\ket{\Phi_j^{(r)\bot}}$
	and then applying Lemma~\ref{lmm:Phi and Phi^bot}.

	For approximated $T^{(r)}$ with precision $\epsilon/2$,
	by linearity of error bound propagation, 
	the LHS of \eqref{eq:TdaggerST=(H/d)^r} is approximated to precision $\epsilon$.
\end{proof}

In order to implement an $r$-parallel quantum walk $Q^{(r)}$,
we only need to focus on the $T^{(r)}$ part,
since $S^{(r)}$ can be trivially implemented by a quantum circuit of constant depth using SWAP gates.
The outline of an implementation of $T^{(r)}$ is presented in Figure~\ref{fig:implementation of T^r}.
\begin{figure}[!ht]
	\centering
	\begin{tcolorbox}[colback=white,colframe=black,boxrule=0.2mm,arc=0mm]
		\subsubsection*{State space}
		$\calH=\parens*{\bigotimes_{s\in [r+1]}\calH_s^A}\otimes\parens*{\bigotimes_{s\in [r+1]}\calH_s^B}$, where $\calH_s^A=\calH_s^B=\Co^{2N}$ for all $s$.
		\subsubsection*{Input}
		Any state $\ket{j_0}\otimes \ket{0}\otimes \ldots \otimes \ket{0}$ for $j_0\in [N]$ (due to the definition of $T^{(r)}$ in \eqref{eq:T^r}),
		with $\ket{j_0},\ket{0}\in \Co^{2N}$.
		\subsubsection*{Output}
		The state $\ket{\Psi_{j_0}^{(r)}}$ defined in \eqref{eq:Psi_j^r}.

		$T^{(r)}$ can be implemented in the following ways.
		\subsubsection*{Pre-walk}
		\begin{enumerate}
			\item 
				\label{stp:1 in implementation of T^r}
				Prepare in the subspace $\calH^A$ a pre-walk state
				\begin{equation}
					\ket{p_{j_0}^{(r)}}:=
					\frac{1}{\sqrt{d^r}}\sum_{\bj\in \mathsf{H}^r}\ket{\bj}\in \calH^A.
					\label{eq:pre-walk state}
				\end{equation}
		\end{enumerate}
		\subsubsection*{Re-weight}
		\begin{enumerate}[resume]
			\item
				\label{stp:2 in implementation of T^r}
				Copy the computational basis states in $\calH^A$ to $\calH^B$;
				that is, apply $\mathrm{COPY}^{\parens*{r+1}\cdot\parens*{n+1}}$
				to obtained the state
				\begin{equation*}
					\frac{1}{\sqrt{d^r}}\sum_{\bj\in \mathsf{H}^r}\ket{\bj}\ket{\bj}
					\in \calH^A\otimes\calH^B.
				\end{equation*}
			\item
				\label{stp:3 in implementation of T^r}
				Query $r$ copies of the oracle $\calO_H$ in parallel,
				each in the subspace $\calH_s^A\otimes\calH_{s+1}^B$ for $s\in [r]$.\footnotemark{}
				Similar to the proof of Lemma~\ref{lmm:complexity of T},
				for each query we compute $H_{j_sj_{s+1}}$ in a temporary ancilla space,
				conditioned on which rotates the state in $\calH^B_{s+1}$,
				then uncompute $H_{j_sj_{s+1}}$.
				Finally we obtain the goal state $\ket{\Psi_{j_0}^{(r)}}$.
		\end{enumerate}
	\end{tcolorbox}
	\caption{Implementation of $T^{(r)}$.}
	\label{fig:implementation of T^r}
\end{figure}

Note that the implementation 
consists of two stages: pre-walk and re-weight,
where the two oracles $\calO_H$ and $\calO_P$ are queried separately ---
$\calO_P$ is only queried in the pre-walk stage,
while $\calO_H$ is only queried in the re-weight stage.

\subsubsection{Pre-walk and Uniform-structured Hamiltonians}
\label{sub:pre_walk}

\footnotetext{Here we exploit the two copies of $\ket{\bj}$ to parallelize the queries to the oracle $\calO_H$.
The idea of using multiple copies of data is intuitive and ubiquitous in classical parallel computing.
An alternative way to achieve the parallelization is: given a single copy of $\ket{\bj}\in \calH^A$,
first query $\ceil{r/2}$ copies of $\calO_H$ in parallel, each in the subspace $\calH_{s}^A\otimes \calH_{s+1}^A$ for even $s$;
and then query $\floor{r/2}$ copies of $\calO_H$ in parallel, each in $\calH_{s}^A\otimes \calH_{s+1}^A$ for odd $s$.
This technique is used later for proving Corollary~\ref{cor:log depth path for 1-uniform-structured}.
In this alternative way, one can also modify Definition~\ref{def:parallel quantum walk} 
(and relevant statements, including Lemma~\ref{lmm:Phi and Phi^bot})
such that \eqref{eq:Phi_j} only contains one copy of $\ket{\bj}$.
}

Now we give a detailed description of  pre-walk. At the same time, we introduce a class of Hamiltonians --- uniform-structured Hamiltonians,
for which the pre-walk can be conducted in a parallel fashion.
The state $\ket{p_{j_0}^{(r)}}=\frac{1}{\sqrt{d^r}}\sum_{\bj\in \mathsf{H}^r}\ket{\bj}$ in \eqref{eq:pre-walk state} earns the name ``pre-walk state''
because it is a superposition of all paths generated by $r$ steps of \textit{unweighted random walk} on the graph $\mathsf{H}$ starting from the vertex $j_0$.
We call the process of generating $\ket{p_{j_0}^{(r)}}$ an \textit{$r$-pre-walk}.
For simplicity, we assume $\ket{p_{j_0}^{(r)}}\in \parens*{\Co^{N}}^{\otimes r+1}$,
that is, $\ket{j_s}\in \Co^{N}$ for all $\ket{j_s}$.

For the pre-walk, we only need to focus on the graph $\mathsf{H}$ and the oracle $\calO_P$ that characterizes its sparse structure,
as $\ket{p_{j_0}^{(r)}}$ does not involve any weight $H_{jk}$.
One remaining question is what kind of oracle $\calO_P$ to be used in our algorithm.
Since the complexities that we consider are  measured in terms of the query complexity with respect to $\calO_P$ and gate complexity,
 for practical reasons, $\calO_P$ should be reasonably efficiently implementable.
Conversely, if $\calO_P$ is powerful enough (thus hard to implement),
then intuitively the pre-walk can be done with only a few queries to $\calO_P$.
For instance, given an oracle $\calO_P=\calO_{\mathit{path}}$ 
that directly gives the path generated from walks according to a sequence of choices,
as shown in the following lemma,
then the query complexity is $O\parens*{1}$.
Recall that $L\parens*{j,t}$ denotes the column index of the $t^{\text{th}}$ nonzero entry in row $j$ of $H$,
i.e., the $t^{\text{th}}$ neighbor of vertex $j$ in the graph $\mathsf{H}$.

\begin{lemma}[Pre-walk with a strong path oracle $\calO_{\mathit{path}}$]
	Let $\calO_P=\calO_{\mathit{path}}$ give
	a path generated by $r$ steps of walk starting from $j_0$,
	according to the sequence of choices $\bt\in [d]^{r}$;
	that is, take $\calX=[N],\calY=[N]^r$,
	and $P\parens*{j_0,\bt}=(j_1,\ldots,j_r)$ with $j_{s+1}:=L(j_s,t_s)$ for $s\in [r]$.
	Then the $r$-pre-walk can be implemented by a quantum circuit of
	\begin{itemize}
		\item 
			size $O\parens*{1}$ w.r.t.\ queries to $\calO_P$, and
		\item
			depth $O\parens*{1}$ and size $O\parens*{r\log d}$ w.r.t.\ gates.
	\end{itemize}
\end{lemma}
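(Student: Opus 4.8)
The plan is to implement the $r$-pre-walk by first creating a uniform superposition over all choice sequences $\bt \in [d]^r$, then applying $\calO_{\mathit{path}}$ once to obtain the corresponding path, and finally uncomputing the choice register. Concretely, I would work in a state space $\parens*{\Co^N}^{\otimes (r+1)} \otimes \parens*{\Co^d}^{\otimes r}$, where the last factor is an auxiliary ``choice'' register. Starting from $\ket{j_0}\otimes\ket{0}^{\otimes r}\otimes\ket{0}^{\otimes r}$, the first step is to prepare $\frac{1}{\sqrt{d^r}}\sum_{\bt\in[d]^r}\ket{\bt}$ in the choice register by applying, in parallel, $r$ copies of the single-register uniform-superposition-of-size-$d$ preparation (each of which is $O(\log d)$-size and $O(1)$-depth — e.g.\ Hadamards when $d$ is a power of two, or a simple rotation circuit in general). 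This yields $\ket{j_0}\ket{0}^{\otimes r}\otimes\frac{1}{\sqrt{d^r}}\sum_{\bt}\ket{\bt}$ in $O(1)$ depth and $O(r\log d)$ size of gates, with no queries.

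The second step is a single query to $\calO_P = \calO_{\mathit{path}}$ acting on the first register $\ket{j_0}$, the choice register holding $\bt$, and the path register holding $\ket{0}^{\otimes r}$, producing $\frac{1}{\sqrt{d^r}}\sum_{\bt\in[d]^r}\ket{j_0}\ket{P(j_0,\bt)}\ket{\bt}$. The crucial observation is that, because $L(j_s,\cdot)$ is a bijection on $[d]$ for each fixed $j_s$ (the neighbors of vertex $j_0$ traversed step by step), the map $\bt \mapsto P(j_0,\bt) = (j_1,\ldots,j_r)$ is a bijection between $[d]^r$ and the set of length-$r$ walks $\bj \in \mathsf{H}^r$ starting at $j_0$; hence the state is exactly $\frac{1}{\sqrt{d^r}}\sum_{\bj\in\mathsf{H}^r}\ket{\bj}\otimes\ket{\bt(\bj)}$, where $\bt(\bj)$ is the unique preimage. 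Since the correspondence $\bj \leftrightarrow \bt$ is one-to-one, the choice register is now a ``garbage'' register uniquely determined by $\bj$, and applying $\calO_{\mathit{path}}^\dagger$ would only undo everything — instead I uncompute $\bt$ from $\bj$ directly. This is the one place that requires a little care: I would note that each $t_s$ is recovered as $t_s = L^{-1}(j_s, j_{s+1})$, and in this strong-oracle model one may assume a companion oracle (or simply that $\calO_{\mathit{path}}$'s inverse, applied with the path register as control input, cleans the choices) so that the choice register can be disentangled in $O(1)$ queries and $O(r\log d)$-size, $O(1)$-depth of gates. After this disentangling step, the state is $\ket{p_{j_0}^{(r)}} = \frac{1}{\sqrt{d^r}}\sum_{\bj\in\mathsf{H}^r}\ket{\bj}$ as desired.

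Putting the three steps together gives $O(1)$ queries to $\calO_P$ (one forward, one for uncomputation) and $O(1)$-depth, $O(r\log d)$-size of gates — the $r\log d$ coming from the $r$ parallel size-$d$ superposition preparations and the parallel copy/uncompute bookkeeping on the $r$-component choice register. The main obstacle I anticipate is the uncomputation of the choice register: the naive $\calO_{\mathit{path}}^\dagger$ restores $\ket{0}^{\otimes r}$ in the \emph{path} register, not the \emph{choice} register, so one must argue that the bijectivity of each $L(j_s,\cdot)$ lets us invert the walk and erase the choices cheaply; stating precisely what inverse access is granted in this ``strong oracle'' setting is the subtle modeling point, whereas the complexity accounting itself is routine given Lemma~\ref{lmm:parallel copy} and the constant-depth single-register preparation.
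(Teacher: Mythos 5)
Your proposal follows the same basic route as the paper's proof -- prepare the uniform superposition $\frac{1}{\sqrt{d^r}}\sum_{\bt\in[d]^r}\ket{\bt}$ with $O(1)$-depth, $O(r\log d)$-size of Hadamard-type gates, then make a single query to $\calO_{\mathit{path}}$ -- but you have misread the oracle model, and this creates the entire ``subtle modeling point'' you flag at the end. In the paper's input model (Section~\ref{sec:preliminaries}), $\calO_P$ is an \emph{in-place} oracle, $\calO_P\ket{x,y}=\ket{x,P(x,y)}$, with the explicit requirement that $P(x,\cdot)$ is a bijection on $\calY$ for every fixed $x$; it is not the three-register XOR-style oracle $\ket{x,y,z}\mapsto\ket{x,y,z\oplus P(x,y)}$ that your construction assumes. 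Consequently there is no separate choice register left over after the query: the single application of $\calO_{\mathit{path}}$ overwrites $\ket{\bt}$ with $\ket{j_1,\ldots,j_r}$ directly, the bijectivity you correctly identify is exactly what makes this a well-defined unitary, and the state is already $\ket{p_{j_0}^{(r)}}$ with nothing to uncompute. Your third step -- disentangling $\bt$ from $\bj$ via $t_s=L^{-1}(j_s,j_{s+1})$ or a ``companion oracle'' -- is therefore unnecessary, and as written it is also the one unjustified part of your argument, since no such inverse access is granted by the lemma's hypotheses (the function $L^{-1}$ only becomes an assumption later, in Definition~\ref{def:1-uniform-structured}). So the proof is salvageable essentially by deleting your last step and citing the in-place definition of $\calO_P$; the complexity accounting you give for the remaining two steps matches the paper's.
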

\begin{proof}
	Assume $d$ is a power of two w.l.o.g.
	Starting from the initial state $\ket{j_0}\ket{0}^{\otimes r}$,
	the pre-walk can be implemented in the following way.
	First prepare a superposition $\frac{1}{\sqrt{d^r}}\sum_{\bt\in [d]^r}\ket{\bt}$ in the second register
	by an $O\parens*{1}$-depth and $O\parens*{r\log d}$-size quantum circuit using Hadamard gates,
	then query $\calO_P$ to compute $P(j_0,\bt)=(j_1,\ldots,j_r)$ and thus obtain the goal state $\ket{p_{j_0}^{(r)}}$.
\end{proof}

In general, it might be expensive to implement $\calO_{\mathit{path}}$.
For example, given the oracle $\calO_L$ that computes the function $L$,
the straightforward way to implement the strong oracle $\calO_{\mathit{path}}$ requires $r$ sequential queries to $\calO_L$.
However, for a special class of Hamiltonians,
generating a path according to a sequence of choices can be done more efficiently
by exploiting parallelism in computing compositions of the function $L$.
This forms the basic idea of  uniform-structured Hamiltonians.
Let function $L^{(r)}:[N]\times [d]^{r}\rightarrow [N]$ be inductively defined as
\begin{equation*}
	L^{(r)}\parens*{j,t_0,\ldots,t_{r-1}}:=L\parens*{L^{(r-1)}\parens*{j,t_0,\ldots,t_{r-2}},t_{r-1}}
\end{equation*}
for $j\in [N],\bt\in [d]^{r}$,
with $L^{(1)}:=L$.
Note that $L^{(r)}$ gives the destination of $r$ steps of walk according to a sequence of choices.
Uniform-structured Hamiltonians are a class of Hamiltonians for which the function $L^{(r)}$ can be computed efficiently in parallel.

\begin{definition}[Uniform-structured Hamiltonian]
	\label{def:1-uniform-structured}
	A $d$-sparse Hamiltonian $H$ with the associated oracle $\calO_P$ is uniform-structured if:
	\begin{itemize}
		\item
			For all $r\in \N$,
			the corresponding $L^{(r)}$ can be expressed as
			\begin{equation}
				L^{(r)}\parens*{j,\bt}=f\parens*{j,g(t_0)\circ \ldots \circ g(t_{r-1})}
				\label{eq:1-uniform-structured L^r}
			\end{equation}
			where the function $f,g$ and the operator $\circ$ with input/output lengths $O\parens*{n}$ satisfy that:
			\begin{itemize}
				\item 
					$f$ and the mapping $\ket{0}\mapsto \frac{1}{\sqrt{d}}\sum_{t\in [d]}\ket{g\parens*{t}}$
					are arithmetic-depth-efficiently computable (see Definition~\ref{def:depth-efficient computation}) with $O\parens*{1}$ queries to $\calO_P$;
				\item
					$\circ$ is associative and arithmetic-depth-efficiently computable.
			\end{itemize}
		\item
			There exists an ``inverse'' function $L^{(-1)}$ such that
			$L^{(-1)}\parens*{j,L(j,t)}=g(t)$
			for all $j\in [N],t\in [d]$,
			and $L^{(-1)}$ is arithmetic-depth-efficiently computable with $O\parens*{1}$ queries to $\calO_P$.
	\end{itemize}
\end{definition}

\begin{remark}
	\label{rem:depth efficient of g implies the mapping}
	One might notice in the first condition that the expression $g(t_0)\circ\ldots\circ g(t_{r-1})$ is ready to be computed in parallel
	by Lemma~\ref{lmm:sequence of associative operators},
	which is actually the key to implement a parallel pre-walk.
	We also point out that if the function $g$ and its inverse are both arithmetic-depth-efficiently computable,
	then the mapping $\ket{0}\mapsto \frac{1}{\sqrt{d}}\sum_{t\in [d]}\ket{g(t)}$ is also arithmetic-depth-efficiently computable
	by evaluating $g$ in place on the state $\frac{1}{\sqrt{d}}\sum_{t\in [d]}\ket{t}$,
	which can be prepared by applying Hadamard gates in parallel.
	
    In the second condition, the inverse function $L^{(-1)}$ actually enables an efficient garbage cleaning, as shown later in Corollary~\ref{cor:log depth path for 1-uniform-structured}.
\end{remark}

For a better understanding of the quite involved Definition~\ref{def:1-uniform-structured}, we show three examples of uniform-structured Hamiltonians.
The first example is a band Hamiltonian,
which has its nonzero entries concentrated within a band around the diagonal.

\begin{lemma}[Band Hamiltonian]
	\label{lmm:band Hamiltonian is uniform-structured}
	Assume $d\in [N]$ is odd.
	Let $H$ be a $d$-band Hamiltonian,
	i.e., $H_{jk}=0$ if $k\notin \calB_j^d$,
	where $\calB_j^d:=\braces*{j+t-\parens*{d-1}/2:t\in [d]}$
	with the addition and subtraction in $\Z_N$.\footnotemark{}
	Let $\calO_P$ be an empty oracle,
	that is, take $\calX=\calY=\emptyset$ and $P$ to be undefined.
	Then $H$ is uniform-structured.
	\footnotetext{This allows a slightly more general definition than the usual band Hamiltonian,
		because in $\Z_N$ when the band has its center close to $0$ it will wrap back from the $N-1$ side.
	For instance, taking $H_{03}=H_{30}=1$ in Example~\ref{eg:band Hamiltonian} $H$ is still $3$-band.}
\end{lemma}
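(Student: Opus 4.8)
The plan is to verify the two bullet points in Definition~\ref{def:1-uniform-structured} directly, exhibiting explicit choices of $f$, $g$, $\circ$, and $L^{-1}$. The key observation is that a $d$-band Hamiltonian has a completely rigid sparse structure: the $t^{\text{th}}$ nonzero entry in row $j$ sits at column $L(j,t) = j + t - (d-1)/2 \bmod N$, independently of any data about $H$ — which is why $\calO_P$ can be taken empty. So I would first write $L(j,t) = j + \delta(t) \bmod N$ where $\delta(t) := t - (d-1)/2$, and then observe that composing $r$ such shifts just adds the offsets: $L^{(r)}(j,\bt) = j + \delta(t_0) + \cdots + \delta(t_{r-1}) \bmod N$. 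This immediately suggests setting $g(t) := \delta(t) = t - (d-1)/2$ (computed in $\Z_N$), taking $\circ$ to be addition in $\Z_N$ (which is associative), and $f(j,x) := j + x \bmod N$. Matching against \eqref{eq:1-uniform-structured L^r} then requires only the identity $L^{(r)}(j,\bt) = f\bigl(j,\, g(t_0)\circ\cdots\circ g(t_{r-1})\bigr)$, which follows from the additivity of shifts by an easy induction on $r$ with base case $L^{(1)} = L$.

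Second, I would check the efficiency conditions. The operator $\circ$ is addition in $\Z_N$, which is arithmetic-depth-efficiently computable by Lemma~\ref{lmm:parallel quantum circuit for elementary arithmetics} (addition, plus a constant-depth conditional subtraction of $N$ to reduce mod $N$) — and it needs zero queries to the empty oracle $\calO_P$. The function $f(j,x) = j + x \bmod N$ is likewise an addition modulo $N$, hence arithmetic-depth-efficient. For the state-preparation mapping $\ket{0}\mapsto \frac{1}{\sqrt d}\sum_{t\in[d]}\ket{g(t)}$: since $g(t) = t - (d-1)/2$ is a bijection on its range, by Remark~\ref{rem:depth efficient of g implies the mapping} it suffices that $g$ and $g^{-1}(x) = x + (d-1)/2$ are both arithmetic-depth-efficiently computable, which they are (each is a single modular addition by a constant). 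Finally, for the inverse function: we need $L^{-1}(j,k)$ with $L^{-1}(j,L(j,t)) = g(t)$. Since $L(j,t) = j + g(t) \bmod N$, the natural choice is $L^{-1}(j,k) := k - j \bmod N$, and then $L^{-1}(j,L(j,t)) = (j + g(t)) - j = g(t)$ in $\Z_N$, as required; this is again a single modular subtraction, hence arithmetic-depth-efficient with no oracle queries.

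I do not expect a serious obstacle here — the band structure is the simplest possible "uniform" structure, and the whole proof reduces to the statement that translations compose additively. The one point requiring mild care is the reduction modulo $N$: one must make sure that "$x \bmod N$" can genuinely be done within the arithmetic-depth-efficient budget, but since $x$ is always of the form $j + (\text{sum bounded by } r\cdot N)$, a logarithmic-depth division-with-remainder (or, if $r$ is treated as a constant for the purpose of this definition, a constant number of conditional subtractions) suffices; I would invoke Lemma~\ref{lmm:parallel quantum circuit for elementary arithmetics} for division. A secondary subtlety is checking that the range of $g$ and the intermediate partial sums $g(t_0)\circ\cdots\circ g(t_s)$ all fit in the $O(n)$-bit registers demanded by Definition~\ref{def:1-uniform-structured}, which holds since every such value lies in $\Z_N$ and $\log N = n$. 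With these checks in place the lemma follows.
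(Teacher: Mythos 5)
Your proposal is correct and follows essentially the same route as the paper: both take $f(j,k)=j+k$, $g(t)=t-(d-1)/2$, $\circ$ to be addition in $\Z_N$, and $L^{-1}(j,k)=k-j$, then invoke Lemma~\ref{lmm:parallel quantum circuit for elementary arithmetics} and Remark~\ref{rem:depth efficient of g implies the mapping} for the efficiency conditions. Your extra remarks on the modular reduction and register widths are harmless elaborations of points the paper leaves implicit.
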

\begin{example}
	\label{eg:band Hamiltonian}
	The $4\times 4$ Hamiltonian $H$ with matrix form
	\begin{equation*}
		H:=
		\begin{bmatrix}
			1&i&0&0\\
			-i&2&3&0\\
			0&3&-1&-i\\
			0&0&i&1
		\end{bmatrix}
	\end{equation*}
	is $3$-band.
\end{example}
\begin{proof}[Proof of Lemma~\ref{lmm:band Hamiltonian is uniform-structured}]
	Note that for band Hamiltonians we do not need to query $\calO_P$.
	The lemma is proved by verifying the conditions in Definition~\ref{def:1-uniform-structured}.
Note that 	$L(j,t)=j+t-\parens*{d-1}/2$,
	where the addition and subtraction are in $\Z_N$ 
	(although in this case the nonzero entries are not necessarily ordered,
	the correctness of the algorithm is unaffected).
	\begin{itemize}
		\item 
			We have
			\begin{equation*}
				L^{(r)}\parens*{j,\bt}=j+\parens*{t_0-\parens*{d-1}/2}+\ldots+\parens*{t_{r-1}-\parens*{d-1}/2};
			\end{equation*}
			that is, take $f(j,k)=j+k$,
			$g(t)=t-\parens*{d-1}/2$,
			and $\circ$ to be addition in \eqref{eq:1-uniform-structured L^r}.
			\begin{itemize}
				\item 
					By Lemma~\ref{lmm:parallel quantum circuit for elementary arithmetics},
					$f$ is arithmetic-depth-efficiently computable.
					As in Remark~\ref{rem:depth efficient of g implies the mapping},
					the mapping $\ket{0}\mapsto \frac{1}{\sqrt{d}}\sum_{t\in [d]}\ket{g(t)}$ is arithmetic-depth-efficiently computable too.
				\item
					The addition is obviously associative; and is arithmetic-depth-efficiently computable by Lemma~\ref{lmm:parallel quantum circuit for elementary arithmetics}.
			\end{itemize}
		\item
			Take the inverse function to be $L^{(-1)}(j,k)=k-j$,
			which is arithmetic-depth-efficiently computable by Lemma~\ref{lmm:parallel quantum circuit for elementary arithmetics}.
	\end{itemize}
\end{proof}

The second example is a tensor product of Pauli matrices.
Recall that any Hamiltonian can be expressed as a sum of (scaled) tensor products of Pauli matrices,
which form a basis for the Hermitian space. 
In Section~\ref{sub:general_uniform_structured_hamiltonian}, we will further show that this Pauli sum is also uniform-structured (according to the extended definition).

\begin{lemma}[Tensor product of Pauli matrices]
	\label{lmm:tensor product of Pauli matrices}
	Let $H$ be a (scaled) tensor product of Pauli matrices,
	that is, $H=\alpha\bigotimes_{k\in [n]}\sigma_k$ with $\sigma_k\in \braces*{\bbone,X,Y,Z}$ and $\alpha$ being a constant.
	Let $\calO_P$ give an \mbox{$n$-bit} string $s$ characterizing the Pauli string $\angles*{\sigma_k}_{k\in [n]}$.
	In particular, take $\calX=[1]$, $\calY=[N]$ and $P\parens*{x,y}=y\oplus s$,
	where the $k^{\text{th}}$ bit of $s$ is defined as
	\begin{equation*}
		s_k:=
		\begin{cases}
			0,&\text{$\sigma_k\in \braces*{\bbone, Z}$ (diagonal),}\\
			1,&\text{$\sigma_k\in \braces*{X,Y}$ (off-diagonal).}
		\end{cases}
	\end{equation*}
	Then $H$ is uniform-structured.
\end{lemma}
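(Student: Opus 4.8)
The plan is to verify the conditions in Definition~\ref{def:1-uniform-structured} directly, using the fact that a tensor product of Pauli matrices is $1$-sparse with a particularly simple combinatorial structure. First I would observe that $H=\alpha\bigotimes_{k\in[n]}\sigma_k$ has exactly one nonzero entry per row, and that $H_{jk}\neq 0$ if and only if $j\oplus k=s$, where $s$ is the $n$-bit string in the statement recording which tensor factors are off-diagonal Paulis (i.e.\ $X$ or $Y$). Indeed, $X$ and $Y$ flip a bit while $\bbone$ and $Z$ preserve it, so the unique neighbour of vertex $j$ in the graph $\mathsf{H}$ is $j\oplus s$. Since $d=1$, the sequence of choices $\bt\in[d]^r$ is trivial (each $t_\ell=0$), and we simply have $L(j,0)=j\oplus s$.

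Next I would exhibit the decomposition \eqref{eq:1-uniform-structured L^r}. Iterating the walk gives
\begin{equation*}
    L^{(r)}(j,\bt)=j\oplus\underbrace{s\oplus\cdots\oplus s}_{r\text{ times}},
\end{equation*}
so we set $f(j,k):=j\oplus k$, let $g(t):=s$ for all $t\in[1]$ (computed via one query to $\calO_P$, namely $\calO_P\ket{0,0}=\ket{0,s}$, i.e.\ $P(0,0)=0\oplus s=s$), and take $\circ$ to be bitwise XOR. Then $g(t_0)\circ\cdots\circ g(t_{r-1})=s^{\oplus r}$ and $f(j,s^{\oplus r})=j\oplus s^{\oplus r}=L^{(r)}(j,\bt)$, as required. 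Bitwise XOR is associative, and both $f$ and $\circ$ are XOR operations on $O(n)$-bit strings, hence arithmetic-depth-efficiently computable by Lemma~\ref{lmm:parallel quantum circuit for elementary arithmetics} (XOR is even simpler than addition, implementable in constant depth). The mapping $\ket{0}\mapsto\frac{1}{\sqrt{d}}\sum_{t\in[d]}\ket{g(t)}=\ket{s}$ is just one query to $\calO_P$ since $d=1$, so it too is arithmetic-depth-efficiently computable with $O(1)$ queries to $\calO_P$. Finally, for the inverse function I would take $L^{-1}(j,k):=j\oplus k$, so that $L^{-1}(j,L(j,0))=j\oplus(j\oplus s)=s=g(0)$, and this is again an XOR of $O(n)$-bit strings, hence arithmetic-depth-efficiently computable (in fact with no query to $\calO_P$ needed here, which trivially satisfies the $O(1)$-query bound).

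There is essentially no hard obstacle here: the proof is a matter of correctly identifying $s$ as the ``off-diagonal support'' string and checking that every map involved is an XOR, which trivially meets the arithmetic-depth-efficiency and constant-query requirements. The one point requiring a little care is bookkeeping the normalization and the role of $\alpha$ — since $\norm{H}_{\max}=1$ is assumed, the scalar $\alpha$ does not affect the sparse structure (it only rescales the single nonzero entry returned by $\calO_H$), so it plays no role in verifying the conditions of Definition~\ref{def:1-uniform-structured}, which concern only the graph $\mathsf{H}$ and the oracle $\calO_P$. I would also remark in passing that this example is exactly the special case discussed informally in the high-level overview, where $r$ steps of the walk perform $\ket{j}\mapsto(H_{jk})^r\ket{j\oplus s^{\oplus r}}$ with $s^{\oplus r}$ computed by a depth-$O(\log r)$ binary tree of XORs.
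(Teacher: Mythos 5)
Your proof is correct and follows essentially the same route as the paper: verify Definition~\ref{def:1-uniform-structured} with $f(j,k)=j\oplus k$, $g(t)=s$, and $\circ=\oplus$. The only (immaterial) difference is your choice of inverse function $L^{-1}(j,k)=j\oplus k$ rather than the paper's constant function $L^{-1}(j,k)=s$; both satisfy $L^{-1}(j,L(j,t))=g(t)$ and meet the arithmetic-depth-efficiency and $O(1)$-query requirements.
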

\begin{proof}
	Note that the sparsity $d=1$ for $H$, because all Pauli matrices are $1$-sparse.
	The lemma is proved by verifying the conditions in Definition~\ref{def:1-uniform-structured}.
It holds that 	$L(j,t)=j\oplus s$ where $\oplus$ is the bit-wise XOR operator.
	\begin{itemize}
		\item
			We have
			\begin{equation*}
				L^{(r)}\parens*{j,\bt}=j\oplus s\oplus \ldots\oplus s;
			\end{equation*}
			that is, take $f\parens*{j,k}=j\oplus k$, $g(t)=s$,
			and $\circ$ to be $\oplus$ in \eqref{eq:1-uniform-structured L^r}.
			\begin{itemize}
				\item
					$f$ is arithmetic-depth-efficiently computable,
					while the mapping $\ket{0}\mapsto\frac{1}{\sqrt{d}}\sum_{t\in [d]}\ket{g(t)}$
					is trivial to perform with a single query to $\calO_P$ because $d=1$.
				\item
					The XOR $\oplus$ is obviously associative and arithmetic-depth-efficiently computable.
			\end{itemize}
		\item
			Take the inverse function to be $L^{(-1)}\parens*{j,k}=s$,
			which can be computed by a single query to $\calO_P$.
	\end{itemize}
\end{proof}

The third example is a local Hamiltonian term,
which acts non-trivially on a subsystem of $l$ qubits,
whose positions are indicated by the $l$ bits of $1$ in an $n$-bit string $s$.
The sum of many local Hamiltonian terms is a local Hamiltonian,
which will be investigated in Section~\ref{sub:general_uniform_structured_hamiltonian} as a uniform-structured Hamiltonian (according to the extended definition).

\begin{lemma}[Local Hamiltonian term]
	\label{lmm:local Hamiltonian term is uniform-structured}
	Let $H$ be an $l$-local Hamiltonian term;
	that is, $H=H_s\otimes \bbone_{\bar{s}}$,
	where $H_s$ is a Hamiltonian acting on the subsystem of $l$ qubits whose positions are indicated
	by $l$ bits of $1$ in the $n$-bit string $s$,
	and $\bbone_{\bar{s}}$ is the identity operator on the subsystem of the rest $n-l$ qubits.
	Let $\calO_P$ give the locality parameter $s$.
	In particular, set $\calX=[1]$, $\calY=[N]$ and $P\parens*{x,y}=y\oplus s$.
	Then $H$ is uniform-structured.
\end{lemma}
\begin{example}
	Let $H:=A\otimes \bbone \otimes B\otimes \bbone$ be a $16\times 16$ Hamiltonian
	with $A$ and $B$ being $2\times 2$ Hamiltonians,
	and $\bbone$ being $2\times 2$ identity matrix.
	Then $H$ is a $2$-local Hamiltonian term $H_s\otimes \bbone_{\bar{s}}$,
	with $H_s=A\otimes B$ and $s=1010$.
\end{example}
\begin{proof}[Proof of Lemma~\ref{lmm:local Hamiltonian term is uniform-structured}]
	The lemma is proved by verifying the conditions in Definition~\ref{def:1-uniform-structured}.
	We use the superscript $i$ to denote the $i^{\text{th}}$ bit of a number.
	Note that the sparsity $d=2^l$, $L(j,t)=j\vartriangleleft_s \parens*{t\restriction_s}$,
	where the operator $\restriction_s:[d]\rightarrow [N]$ lifts an $l$-bit string to an $n$-bit string
	according to $s$, defined as
	\begin{equation*}
		\parens*{b\restriction_s}^i:=s^i\cdot b^{s^0+\ldots +s^i},
	\end{equation*}
	and the operator $\vartriangleleft_s:[N]\times [N]\rightarrow [N]$ overwrites an $n$-bit string by another according to $s$, defined as
	\begin{equation}
		(a\vartriangleleft_s b)^i:= a^i(1-s^i)+b^is^i
		\label{eq:vartriangleleft}
	\end{equation}
	for all $a,b\in [N],i\in [n]$.
	For instance, $101\restriction_{01011}=01001$ and $10011\vartriangleleft_{01011}01001=11001$.
	
	Technically,
	we further define two operators $\vartriangleleft:[N]\times [N]^2\rightarrow [N]$ and $\vartriangleleft^\vee:[N]^2\times [N]^2\rightarrow [N]^2$
	such that
	\begin{equation}
    	a\vartriangleleft (b,x):=a\vartriangleleft_x b,\qquad (a,x)\vartriangleleft^\vee(b,y):=(a\vartriangleleft_y b,x\vee y)
    	\label{eq:vartriangleleft2}
	\end{equation}
	for all $a,b,x,y\in [N], i\in [n]$, where $\vee$ is the bit-wise OR operator.
	\begin{itemize}
		\item 
			We have
			
			\begin{equation*}
				L^{(r)}\parens*{j,\bt}=j\vartriangleleft \parens*{t_0\restriction_s,s} \vartriangleleft^\vee \ldots \vartriangleleft^\vee
				\parens*{t_{r-1}\restriction_s, s};
			\end{equation*}
			that is, take $f\parens*{j,\parens*{k,x}}=j\vartriangleleft \parens*{k,x}$,
			$g(t)=\parens*{t\restriction_s,s}$,
			and $\circ$ to be $\vartriangleleft^\vee$ in \eqref{eq:1-uniform-structured L^r}.
			\begin{itemize}
				\item 
					To compute $f$,
					it suffices to compute the operator $\vartriangleleft$,
					which according to \eqref{eq:vartriangleleft} and \eqref{eq:vartriangleleft2} 
					is arithmetic-depth-efficiently computable.

					The mapping $\ket{0}\mapsto \frac{1}{\sqrt{d}}\sum_{t\in [d]}\ket{g(t)}$ can be performed 
					by first querying $\calO_P$ to obtain $\ket{s}$ in an ancilla,
					then conditioned on it implementing $n$ controlled Hadamard gates in parallel,
					assuming $d$ is a power of two w.l.o.g.
					This is arithmetic-depth-efficient with $O\parens*{1}$ queries.
				\item
				    
					The associativity of $\vartriangleleft^\vee$ is easy to verify
					by noting that $a\vartriangleleft_x b\vartriangleleft_y c=a\vartriangleleft_x \parens*{b\vartriangleleft_y c}$
					for all $a,b,c,x,y\in [N]$ and the associativity of $\vee$.
					Also, $\vartriangleleft^\vee$ is arithmetic-depth-efficiently computable by \eqref{eq:vartriangleleft2}.
			\end{itemize}
		\item
			Take the inverse function to be $L^{(-1)}(j,k)=\parens*{k\wedge s,s}$, where $\wedge$ is the bit-wise AND operator.
			This is arithmetic-depth-efficiently computable with $O\parens*{1}$ queries to $\calO_P$.
	\end{itemize}
\end{proof}

Now we are ready to present the parallel pre-walk subroutine for uniform-structured Hamiltonians.
As aforementioned, the parallelism relies on the structure of $L^{(r)}$,
which will be shown to be computable by a logarithmic depth quantum circuit with respect to $r$.
The goal is achieved through several lemmas and corollaries.

\begin{lemma}
	\label{lmm:log depth L^r for 1-uniform-structured}
	For a uniform-structured Hamiltonian $H$,
	the mapping
	\begin{equation*}
		\ket{j}\ket{g\parens*{\bt}}\ket{z}\mapsto \ket{j}\ket{g\parens*{\bt}}\ket{z\oplus L^{(r)}\parens*{j,\bt}}
	\end{equation*}
	for all $j,z\in [N],\bt\in [d]^{r}$ with $g\parens*{\bt}=\parens*{g\parens*{t_0},\ldots,g\parens*{t_{r-1}}}$,
	can be implemented by a quantum circuit of
	\begin{itemize}
		\item
			size $O\parens*{1}$ w.r.t.\ queries to $\calO_P$, and
		\item 
			depth $O\parens*{\log r\cdot \log^2 n}$ and size $O\parens*{r n^4}$ w.r.t.\ gates.
	\end{itemize}
\end{lemma}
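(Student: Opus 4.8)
The plan is to exploit the structural identity $L^{(r)}(j,\bt)=f\bigl(j,\, g(t_0)\circ\cdots\circ g(t_{r-1})\bigr)$ furnished by Definition~\ref{def:1-uniform-structured}, so that the only ``long'' subcomputation --- the iterated $\circ$ over $r$ operands --- is carried out by the logarithmic-depth binary tree of Lemma~\ref{lmm:sequence of associative operators}. Concretely I would: (i) allocate a fresh $O(n)$-qubit register and invoke Lemma~\ref{lmm:sequence of associative operators} with $U_\circ$ (the unitary $\ket{x,y,z}\mapsto\ket{x,y,z\oplus(x\circ y)}$), applied to the $r$ operands $x_i:=g(t_{i-1})$ that are already present in the input register $\ket{g(\bt)}$, so as to write $G:=g(t_0)\circ\cdots\circ g(t_{r-1})$ into the fresh register; (ii) apply the circuit for $f$ once, writing $f(j,G)=L^{(r)}(j,\bt)$ into $\ket{z}$; and (iii) run step (i) in reverse to erase $G$, restoring $\ket{j}$ and $\ket{g(\bt)}$ unchanged. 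The two hypotheses of Definition~\ref{def:1-uniform-structured} that make this go through are precisely associativity of $\circ$ (needed by the tree) and the fact that $\circ$ has $O(n)$-bit output, so every partial product stays on $O(n)$ qubits.

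For the gate complexity, Definition~\ref{def:depth-efficient computation} says that every invocation of $U_\circ$ and the single invocation of the $f$-circuit is arithmetic-depth-efficient on $b=O(n)$ qubits, hence $O(\log^2 n)$-depth and $O(n^4)$-size. Lemma~\ref{lmm:sequence of associative operators} uses $O(\log r)$-depth and $O(r)$-size of $U_\circ$ (its internal ancilla bookkeeping being lower-order), so step (i) contributes $O(\log r\cdot\log^2 n)$ depth and $O(r\,n^4)$ size, step (ii) adds $O(\log^2 n)$ depth and $O(n^4)$ size, and step (iii) matches step (i); summing gives exactly the claimed $O(\log r\cdot\log^2 n)$-depth and $O(r\,n^4)$-size.

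The delicate point --- and the one I expect to be the main obstacle --- is pushing the query bound down to $O(1)$, since a naive implementation incurs $O(1)$ queries to $\calO_P$ at each of the $O(\log r)$ tree levels, i.e.\ $O(\log r)$ query depth. The remedy rests on the observation that for a uniform-structured Hamiltonian the $O(1)$ queries to $\calO_P$ inside the circuits for $f$ and $\circ$ only serve to read out fixed structural parameters of $H$ (for instance the string $s$ in Lemmas~\ref{lmm:tensor product of Pauli matrices} and~\ref{lmm:local Hamiltonian clause is uniform-structured}), so that their query inputs are independent of the variable data $j,\bt,z$ and of the intermediate partial products. Hence one can hoist all of these queries to the front of the circuit, execute them with $O(1)$ query depth, broadcast their $O(n)$-bit outcomes to the $O(r)$ copies of $U_\circ$ in the tree by parallel copying (Lemma~\ref{lmm:parallel copy}), which adds only $O(\log r)$ extra depth (all $n$ bits copied in parallel) and $O(r\,n)$ size --- well within the gate budget above --- and thereafter run every $U_\circ$ and the $f$-circuit as purely unitary gate subcircuits with the fetched parameters supplied as extra inputs. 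Step (iii) reuses the same broadcast parameters, and all ancillas (the fresh register for $G$ and the hoisted parameter copies) are returned to $\ket{0}$ by reverse computation, leaving $O(1)$ total queries to $\calO_P$ alongside the stated gate depth and size.
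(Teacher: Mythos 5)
Your proof is correct and follows essentially the same route as the paper's: evaluate $g(t_0)\circ\cdots\circ g(t_{r-1})$ via the logarithmic-depth tree of Lemma~\ref{lmm:sequence of associative operators}, apply $f$ once, and uncompute, yielding exactly the claimed depth and size. Your additional query-hoisting argument addresses a point the paper's proof passes over in silence --- it attributes the $O(1)$ queries to $\calO_P$ solely to the single application of $f$ and never discusses the $O(1)$ queries that Definition~\ref{def:1-uniform-structured} permits inside each of the $O(r)$ invocations of $\circ$ --- and your observation that in all the relevant instances those queries read only data-independent structural parameters (so they can be made once up front and broadcast by parallel copying within the stated gate budget) is the natural way to justify the stated $O(1)$ total query count.
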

\begin{proof}
	Since $\circ$ is associative and arithmetic-depth-efficiently computable,
	by Lemma~\ref{lmm:sequence of associative operators}
	we can first evaluate
	$g(t_0)\circ\ldots \circ g(t_{r-1})$
	by an $O\parens*{\log r\cdot \log^2 n}$-depth and $O\parens*{rn^4}$-size quantum circuit.
	Also $f$ is arithmetic-depth-efficiently computable with $O\parens*{1}$ queries to $\calO_P$,
	thus $L^{(r)}\parens*{j,\bt}=f\parens*{j,g(t_0)\circ\ldots\circ g(t_{r-1})}$
	can be computed by additional depth $O\parens*{\log^2 n}$ and size $O\parens*{n^4}$ w.r.t.\ gates and $O\parens*{1}$ queries.
	Finally apply $\mathrm{COPY}^n$ to copy the result into $\ket{z}$,
	followed by garbage cleaning using  reverse computation.
	The total  complexity follows from summing these complexities up.
\end{proof}

Note that for different $s\in [r+1]$, the function $L^{(s)}$ can be computed in parallel,
then we obtain the following corollary.

\begin{corollary}
	\label{cor:log depth L^r for 1-uniform-structured parallel}
	For a uniform-structured Hamiltonian $H$,
	the mapping
	\begin{equation}
		\ket{j}\ket{g\parens*{\bt}}\ket{\bz}\mapsto\ket{j}\ket{g\parens*{\bt}}\bigotimes_{s=1}^{r}\ket{z_{s-1}\oplus L^{(s)}\parens*{j,t_0,\ldots,t_{s-1}}}
		\label{eq:log depth L^r for 1-uniform-structured parallel}
	\end{equation}
	for all $j\in [N],\bt\in [d]^r, \bz\in [N]^r$,
	can be implemented by a quantum circuit of
	\begin{itemize}
		\item
			depth $O\parens*{1}$ and size $O\parens*{r}$ w.r.t.\ queries to $\calO_P$, and
		\item 
			depth $O\parens*{\log r\cdot\log^2 n}$ and size $O\parens*{r^2n^4}$ w.r.t.\ gates.
	\end{itemize}
\end{corollary}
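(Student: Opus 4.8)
The plan is to reduce the corollary to $r$ parallel invocations of the circuit from Lemma~\ref{lmm:log depth L^r for 1-uniform-structured}. Observe that for each $s\in\braces*{1,\ldots,r}$ the value $L^{\parens*{s}}\parens*{j,t_0,\ldots,t_{s-1}}$ is exactly the output that Lemma~\ref{lmm:log depth L^r for 1-uniform-structured} produces when instantiated with the prefix input $\parens*{j,g\parens*{t_0},\ldots,g\parens*{t_{s-1}}}$ and target register $\ket{z_{s-1}}$. Since that circuit is reversible and leaves its input registers $\ket{j},\ket{g\parens*{\bt}}$ unchanged while XOR-ing its result into the target register, composing one instance for every $s$ realizes precisely the mapping \eqref{eq:log depth L^r for 1-uniform-structured parallel}.

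The only subtlety is that quantum fan-out is not free: the registers $\ket{j}$ and the $\ket{g\parens*{t_i}}$'s are needed as inputs by many of the $r$ sub-circuits, so they cannot all be accessed on a single physical copy within the same circuit layers. I would therefore first apply the string-copy circuit $\mathrm{COPY}_r^{O\parens*{rn}}$ from Lemma~\ref{lmm:parallel copy} to create $r$ pristine copies of the $O\parens*{rn}$-qubit block $\ket{j}\ket{g\parens*{\bt}}$, at a cost of $O\parens*{\log r}$-depth and $O\parens*{r^2 n}$-size of gates and no queries; then run the $s$-th sub-circuit on the $s$-th copy; and finally erase all the copies with the inverse copy circuit, restoring the single input block and leaving the $\ket{z_{s-1}}$ registers updated as desired.

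For the complexity, the copy and uncopy steps add $O\parens*{\log r}$ to the gate depth and $O\parens*{r^2 n}$ to the gate size, both absorbed into the claimed bounds. By Lemma~\ref{lmm:log depth L^r for 1-uniform-structured}, the $s$-th sub-circuit uses $O\parens*{1}$ queries to $\calO_P$, has gate depth $O\parens*{\log s\cdot\log^2 n}\le O\parens*{\log r\cdot\log^2 n}$ and gate size $O\parens*{sn^4}$; running the $r$ of them in parallel gives query depth $O\parens*{1}$, query size $\sum_{s=1}^{r}O\parens*{1}=O\parens*{r}$, gate depth $O\parens*{\log r\cdot\log^2 n}$, and gate size $\sum_{s=1}^{r}O\parens*{sn^4}=O\parens*{r^2 n^4}$, which is exactly the statement. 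I expect the fan-out/copying bookkeeping to be the only real obstacle here; the rest is a routine parallel composition of already-established building blocks.
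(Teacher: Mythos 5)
Your proposal is correct and matches the paper's own proof essentially step for step: fan out $r$ copies of $\ket{j}\ket{g(\bt)}$ via $\mathrm{COPY}_r^{O(rn)}$, run the Lemma~\ref{lmm:log depth L^r for 1-uniform-structured} circuit in parallel on each prefix, and uncompute the copies, with the same complexity accounting. No issues.
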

\begin{proof}
	First make $r$ copies of $\ket{j}\ket{g(\bt)}$
	by applying parallel $\mathrm{COPY}_{r}^{O(rn)}$ in Lemma~\ref{lmm:parallel copy},
	which has depth $O\parens*{\log r}$ and size $O\parens*{r^2n}$.
	Then apply Lemma~\ref{lmm:log depth L^r for 1-uniform-structured}
	to each of the $r$ copies,
	for the $s^{\text{th}}$ copy taking the input $\parens*{j,g\parens*{t_0},\ldots,g\parens*{t_{s-1}}}$
	to compute $L^{(s)}\parens*{j,t_0,\ldots,t_{s-1}}$,
	which can be done in parallel for $s=1$ to $r$.
	Finally apply $\mathrm{COPY}^n$ in parallel to store the result,
	followed by garbage cleaning using reverse computation.
	The final complexity comes from summing up these complexities.
\end{proof}

With the help of the inverse function $L^{(-1)}$,
we can generate a path state $\ket{\bj}$ from the state $\ket{j_0,g\parens*{\bt}}$
by erasing the redundant information $\ket{g(\bt)}$ in \eqref{eq:log depth L^r for 1-uniform-structured parallel},
as shown in the following corollary.

\begin{corollary}
	\label{cor:log depth path for 1-uniform-structured}
	For a uniform-structured Hamiltonian $H$,
	the mapping
	\begin{equation*}
		\ket{j_0,g\parens*{\bt}}\mapsto\ket{\bj}
	\end{equation*}
	for all $j_0\in [N],\bt\in [d]^r$,
	where $\bj\in \mathsf{H}^r$ has $j_{s+1}:=L(j_s,t_s)$ for $s\in [r]$,
	can be implemented by a quantum circuit of
	\begin{itemize}
		\item
			depth $O\parens*{1}$ and size $O\parens*{r}$ w.r.t.\ queries to $\calO_P$, and
		\item 
			depth $O\parens*{\log r\cdot \log^2 n}$ and size $O\parens*{r^2n^4}$ w.r.t.\ gates.
	\end{itemize}
\end{corollary}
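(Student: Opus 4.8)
The plan is to grow the full path first, carrying along the redundant register $\ket{g(\bt)}$ as garbage, and then to erase that garbage using the inverse function $L^{-1}$ promised by Definition~\ref{def:1-uniform-structured}. Concretely, I would work on the input $\ket{j_0, g(\bt)}$ together with $r$ fresh ancilla registers initialized to $\ket{0}$.

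\textbf{Step 1 (grow the path).} Apply Corollary~\ref{cor:log depth L^r for 1-uniform-structured parallel} with $\bz = 0$. This performs
$\ket{j_0}\ket{g(\bt)}\ket{0}^{\otimes r} \mapsto \ket{j_0}\ket{g(\bt)}\bigotimes_{s=1}^{r}\ket{L^{(s)}(j_0, t_0,\ldots,t_{s-1})}$,
and since $L^{(s)}(j_0, t_0,\ldots,t_{s-1}) = j_s$ by the inductive definition of $L^{(s)}$ together with the recurrence $j_{s+1} = L(j_s, t_s)$, the state becomes $\ket{j_0}\ket{g(\bt)}\ket{j_1,\ldots,j_r}$. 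By Corollary~\ref{cor:log depth L^r for 1-uniform-structured parallel}, this costs $O(1)$-depth and $O(r)$-size of queries to $\calO_P$, and $O(\log r \cdot \log^2 n)$-depth and $O(r^2 n^4)$-size of gates.

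\textbf{Step 2 (erase $\ket{g(\bt)}$).} At this point the desired path $\ket{\bj}$ is present but entangled with the garbage $\ket{g(t_0),\ldots,g(t_{r-1})}$. By the second bullet of Definition~\ref{def:1-uniform-structured}, $L^{-1}(j_s, j_{s+1}) = L^{-1}(j_s, L(j_s,t_s)) = g(t_s)$ for every $s \in [r]$. Hence, for all $s = 0,\ldots,r-1$ in parallel, I would evaluate $L^{-1}(j_s, j_{s+1})$ using the adjacent path registers $\ket{j_s},\ket{j_{s+1}}$ and XOR the result into the register holding $\ket{g(t_s)}$, which resets every $g$-register to $\ket{0}$ and leaves exactly $\ket{\bj}$ (plus $r$ clean ancilla registers). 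Since $L^{-1}$ is arithmetic-depth-efficiently computable with $O(1)$ queries to $\calO_P$, doing $r$ evaluations in parallel, including uncomputing the temporary arithmetic ancillas of each $L^{-1}$ invocation, costs $O(1)$-depth and $O(r)$-size of queries to $\calO_P$, and $O(\log^2 n)$-depth and $O(r n^4)$-size of gates.

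Summing the two steps yields the claimed $O(1)$-depth and $O(r)$-size of $\calO_P$ queries and $O(\log r \cdot \log^2 n)$-depth and $O(r^2 n^4)$-size of gates. The only point that needs care is the correctness of the Step~2 uncomputation: one must check that the identity $L^{-1}(j_s, j_{s+1}) = g(t_s)$ holds along every branch of the superposition — which it does, since each branch fixes a concrete choice vector $\bt$ and then $j_{s+1} = L(j_s, t_s)$ on that branch — and that every arithmetic ancilla used in evaluating $f$, $\circ$, and $L^{-1}$ is restored to $\ket{0}$, so that the output is precisely $\ket{\bj}$ with no residual entanglement. I expect this bookkeeping to be the main (though routine) obstacle; everything else is a direct chaining of the earlier lemmas.
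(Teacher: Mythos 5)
Your proposal matches the paper's proof essentially verbatim: apply Corollary~\ref{cor:log depth L^r for 1-uniform-structured parallel} to populate the path registers $\ket{j_1,\ldots,j_r}$, then use the identity $L^{-1}(j_s,j_{s+1})=g(t_s)$ to uncompute the $\ket{g(\bt)}$ registers. The only detail the paper adds is that the $r$ evaluations of $L^{-1}$ cannot all occupy a single circuit layer, since the evaluations for $s$ and $s+1$ both read the shared register $\ket{j_{s+1}}$; the paper therefore cleans odd $s$ first and then even $s$, a two-round scheduling that costs only a constant factor in depth and leaves your stated complexities unchanged.
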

\begin{proof}
	Since $j_s=L^{(s)}\parens*{j_0,t_0,\ldots,t_{s-1}}$ for $s=1$ to $r$,
	to perform the required mapping
	we only need to clean the $\ket{g\parens*{\bt}}$ in \eqref{eq:log depth L^r for 1-uniform-structured parallel}
	after applying Corollary~\ref{cor:log depth L^r for 1-uniform-structured parallel} to $\ket{j_0,g\parens*{\bt}}\ket{0}^{\otimes r}$.
	Recall that $L^{(-1)}(j_s,j_{s+1})=g\parens*{t_s}$ for $s\in [r]$,
	so we can compute $L^{(-1)}$ by taking inputs $\ket{j_s,j_{s+1}}$ to clean $g\parens*{t_s}$ first on odd $s$
	then on even $s$, thereby for different $s$ the computation can be done in parallel.
	Since $L^{(-1)}$ is arithmetic-depth-efficiently computable with $O\parens*{1}$ queries to $\calO_P$,
	the cleaning process can be done by a quantum circuit of depth $O\parens*{\log^2 n}$ and size $O\parens*{rn^4}$ w.r.t.\ gates and $O\parens*{1}$ queries.
	The final complexity follows from summing these complexities up.
\end{proof}

A combination of the above results gives a parallel pre-walk algorithm for uniform-structured Hamiltonians.

\begin{lemma}[Pre-walk on uniform-structured Hamiltonians]
	\label{lmm:pre-walk on 1-uniform-structured}
	Let $H$ be a uniform-structured Hamiltonian,
	then an $r$-pre-walk on its graph $\mathsf{H}$,
	i.e., preparing the state $\ket{p_{j_0}^{(r)}}$,
	can be implemented by a quantum circuit of
	\begin{itemize}
		\item
			depth $O\parens*{1}$ and size $O\parens*{r}$ w.r.t.\ queries to $\calO_P$, and
		\item 
			depth $O\parens*{\log r\cdot \log^2 n}$ and size $O\parens*{r^2n^4}$ w.r.t.\ gates.
	\end{itemize}
\end{lemma}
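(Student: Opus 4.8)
# Proof Proposal for Lemma~\ref{lmm:pre-walk on 1-uniform-structured}

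The plan is to assemble the pre-walk state $\ket{p_{j_0}^{(r)}} = \frac{1}{\sqrt{d^r}}\sum_{\bj\in\mathsf{H}^r}\ket{\bj}$ by first creating a uniform superposition over sequences of choices $\bt\in[d]^r$ in the transformed coordinates $g(\bt)$, then converting that superposition into the path state $\ket{\bj}$ by invoking Corollary~\ref{cor:log depth path for 1-uniform-structured}. Concretely, I would start from the initial state $\ket{j_0}\ket{0}^{\otimes r}$ (with the second register holding $r$ blocks of $O(n)$ qubits each). Applying the mapping $\ket{0}\mapsto\frac{1}{\sqrt d}\sum_{t\in[d]}\ket{g(t)}$ in parallel to each of the $r$ blocks — which is arithmetic-depth-efficiently computable with $O(1)$ queries to $\calO_P$ by the definition of uniform-structured Hamiltonians (Definition~\ref{def:1-uniform-structured}) — produces the state $\ket{j_0}\otimes\frac{1}{\sqrt{d^r}}\sum_{\bt\in[d]^r}\ket{g(\bt)}$. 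Since these $r$ instances act on disjoint registers, performing them simultaneously costs only $O(1)$-depth and $O(r)$-size of queries to $\calO_P$, and $O(\log^2 n)$-depth, $O(rn^4)$-size of gates.

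Next I would apply Corollary~\ref{cor:log depth path for 1-uniform-structured}, which maps $\ket{j_0,g(\bt)}\mapsto\ket{\bj}$ where $\bj\in\mathsf{H}^r$ is the path with $j_{s+1}=L(j_s,t_s)$. By linearity over the superposition, this produces exactly $\frac{1}{\sqrt{d^r}}\sum_{\bt\in[d]^r}\ket{\bj(\bt)}$. One subtlety to address: the sum over $\bt\in[d]^r$ should match the sum over $\bj\in\mathsf{H}^r$ with the correct normalization. Here I would use the fact that for each fixed $j_0$, the map $\bt\mapsto\bj(\bt)$ is a bijection between $[d]^r$ and the set of length-$(r{+}1)$ paths in $\mathsf{H}$ starting at $j_0$ (this follows because, for each vertex $j_s$, $L(j_s,\cdot)$ enumerates its $d$ neighbors — padding with the convention $L(j,t)=0$ or allowing repeated neighbors for vertices of degree less than $d$, consistent with the $1/\sqrt{d^r}$ normalization already built into the definition of $\ket{\psi_j}$). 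Thus the resulting state is precisely $\ket{p_{j_0}^{(r)}}$.

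Finally I would add up the costs: the superposition-preparation step contributes $O(1)$-depth, $O(r)$-size of $\calO_P$ queries and $O(\log^2 n)$-depth, $O(rn^4)$-size of gates; Corollary~\ref{cor:log depth path for 1-uniform-structured} contributes $O(1)$-depth, $O(r)$-size of $\calO_P$ queries and $O(\log r\cdot\log^2 n)$-depth, $O(r^2n^4)$-size of gates. Summing, the dominant terms give $O(1)$-depth and $O(r)$-size of queries to $\calO_P$, and $O(\log r\cdot\log^2 n)$-depth and $O(r^2n^4)$-size of gates, as claimed.

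I expect the main obstacle to be not the complexity bookkeeping (which is routine given the earlier corollaries) but rather the verification that the intermediate superposition $\frac{1}{\sqrt{d^r}}\sum_{\bt}\ket{g(\bt)}$ is indeed what Corollary~\ref{cor:log depth path for 1-uniform-structured} expects as input — i.e., that $g$ being (essentially) injective on $[d]$ ensures $\{g(t):t\in[d]\}$ has $d$ (counted-with-multiplicity) elements so that no unwanted interference or normalization mismatch occurs — together with making the bijection $\bt\leftrightarrow\bj$ airtight when some vertices have degree strictly less than $d$. Most of this is already implicit in the setup of Childs' quantum walk, so the argument should reduce to a short remark citing Definition~\ref{def:1-uniform-structured} and the structure of $\ket{\psi_j}$ in~\eqref{eq:psi_j}.
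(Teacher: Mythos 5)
Your proposal follows exactly the paper's own two-step argument: prepare $\ket{j_0}\otimes\frac{1}{\sqrt{d^r}}\sum_{\bt\in[d]^r}\ket{g(\bt)}$ by $r$ parallel applications of the arithmetic-depth-efficient map $\ket{0}\mapsto\frac{1}{\sqrt{d}}\sum_{t\in[d]}\ket{g(t)}$, then apply Corollary~\ref{cor:log depth path for 1-uniform-structured} and sum the costs. The additional remarks on the bijection $\bt\leftrightarrow\bj$ and the normalization are a reasonable elaboration of points the paper leaves implicit, but the proof route and complexity bookkeeping are identical.
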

\begin{proof}
	Let $\bigotimes_{s\in [r+1]}\calH_s$ be the state space with $\calH_s =\Co^N$.
	The process of preparing $\ket{p_{j_0}^{(r)}}$ is shown below,
	starting from the initial state $\ket{j_0}\ket{0}^{\otimes r}$
	with $\ket{j_0},\ket{0}\in \Co^N$.
	\begin{enumerate}
		\item
			\label{stp:1 in lmm:pre-walk on 1-uniform-structured}
			Perform the mapping $\ket{0}\mapsto \frac{1}{\sqrt{d}}\sum_{t\in [d]}\ket{g\parens*{t}}$ in each $\calH_s$ for $s=1$ to $r$,
			to obtain the state
			\begin{equation*}
				\frac{1}{\sqrt{d^r}}\ket{j_0}\sum_{\bt\in [d]^r}\ket{g\parens*{\bt}}.
			\end{equation*}
		\item
			Apply Corollary~\ref{cor:log depth path for 1-uniform-structured}
			we obtain the goal state $\ket{p_{j_0}^{(r)}}=\frac{1}{\sqrt{d^r}}\sum_{\bj\in \mathsf{H}^r}\ket{\bj}$.
	\end{enumerate}
	Each mapping $\ket{0}\mapsto \frac{1}{\sqrt{d}}\sum_{t\in [d]}\ket{g\parens*{t}}$
	in Step~\ref{stp:1 in lmm:pre-walk on 1-uniform-structured} is arithmetic-depth-efficient with $O\parens*{1}$ queries to $\calO_P$,
	due to the definition of uniform-structured Hamiltonians.
	Combined with Corollary~\ref{cor:log depth path for 1-uniform-structured} the final complexity is obtained.
\end{proof}

\subsubsection{Re-weight}
\label{sub:re_weight}

Intuitively, the re-weight procedure in the implementation of $T^{(r)}$ in Figure~\ref{fig:implementation of T^r} adjusts the ``weight'' of each path $\ket{\bj}$ in the pre-walk state $\ket{p_{j_0}^{(r)}}$ according to the entries in $H$, given by the oracle $\calO_H$. 
As we will see in the following, the re-weight analysis is in fact simpler than the pre-walk because there is no requirement on the sparse structure of $H$.

\begin{lemma}[Re-weight]
	\label{lmm:walk stage complexity}
	Re-weight of $\ket{p_{j_0}^{(r)}}$,
	i.e., performing the mapping $\ket{p_{j_0}^{(r)}}\mapsto \ket{\Psi_{j_0}^{(r)}}$,
	where $\ket{\Psi_{j_0}^{(r)}}$ is defined in \eqref{eq:Psi_j^r},
	can be implemented to precision $\epsilon$ by a quantum circuit of
	\begin{itemize}
		\item 
			depth $O(1)$ and size $O(r)$ w.r.t.\ queries to $\calO_H^b$ with $b=O\parens*{\log \parens*{1/\epsilon}}$, and
		\item
			depth $O\parens*{\log^2\log \parens*{1/\epsilon}}$ and size $O\parens*{r \bracks*{n+\log^4\parens*{1/\epsilon}}}$ w.r.t.\ gates,
	\end{itemize}
	for $r=\polylog \parens*{1/\epsilon}$.
\end{lemma}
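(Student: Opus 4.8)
The plan is to implement the re-weight map $\ket{p_{j_0}^{(r)}}\mapsto\ket{\Psi_{j_0}^{(r)}}$ exactly along the three steps sketched in Figure~\ref{fig:implementation of T^r}: first copy the path registers of $\calH^A$ into $\calH^B$; then, for each $s\in[r]$ in parallel, query $\calO_H^b$ to compute $H_{j_sj_{s+1}}$ into a fresh ancilla (reading the source index $j_s$ from $\calH_s^A$ and the target index $j_{s+1}$ from $\calH_{s+1}^B$), apply a conditioned single-qubit rotation on the ``sign qubit'' of $\calH_{s+1}^B$, and uncompute $H_{j_sj_{s+1}}$ by a second query; and finally tally the complexity and run the error analysis that pins down $b=\Theta\parens*{\log\parens*{1/\epsilon}}$. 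Note that $\calO_P$ is never invoked, in agreement with the claim that re-weight is insensitive to the sparse structure of $H$.

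For the copy step I would use $\mathrm{COPY}^{\parens*{r+1}\parens*{n+1}}$, which by Lemma~\ref{lmm:parallel copy} is $O(1)$-depth and $O\parens*{rn}$-size of gates. The structural point that makes everything parallel --- and which is exactly why $\calO_H$ can be decoupled from the dependence chain, as stated in the remark following Definition~\ref{def:parallel quantum walk} --- is that the factor indexed by $s$ reads its \emph{source} index $j_s$ from $\calH_s^A$, which is never touched afterwards, whereas only the register $\calH_{s+1}^B$ gets rotated. Hence the $r$ blocks ``compute $H_{j_sj_{s+1}}$ / rotate / uncompute $H_{j_sj_{s+1}}$'' act on pairwise disjoint registers together with disjoint fresh ancillas, so they run fully in parallel; the resulting $2r$ queries to $\calO_H^b$ occupy $O(1)$ query-depth and $O(r)$ query-size.

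For the rotation inside each block, write $h:=H_{j_sj_{s+1}}=\abs*{h}e^{i\phi}$, so that $\sqrt{h^*}=\sqrt{\abs*{h}}\,e^{-i\phi/2}$ for a fixed branch of the square root, and the required map on the sign qubit, $\ket{j_{s+1}}\mapsto\sqrt{h^*}\ket{j_{s+1}}+\sqrt{1-\abs*{h}}\ket{j_{s+1}+N}$, factors as a $Y$-rotation by $\theta_Y:=2\arccos\sqrt{\abs*{h}}$ (which fixes the two magnitudes) followed by the diagonal one-qubit phase sending $\ket{0}$ to $e^{-i\phi/2}\ket{0}$ and fixing $\ket{1}$. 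From the $b$-bit encoding of $h$ I would compute $\theta_Y$ and $\phi/2$ by an arithmetic-depth-efficient circuit on $O(b)$ qubits --- this only uses $\sqrt{\cdot}$, $\arctan$, and $\arccos$ (via $\arccos x=\pi/2-\arctan\parens*{x/\sqrt{1-x^2}}$), which are elementary --- costing $O\parens*{\log^2 b}$-depth and $O\parens*{b^4}$-size by Lemma~\ref{lmm:parallel quantum circuit for elementary arithmetics}. The controlled $Y$-rotation then costs $O\parens*{\log b}$-depth and $O(b)$-size by Lemma~\ref{lmm:parallel controlled Z-rotation}, and the controlled phase differs from a controlled $R_Z$ only by a controlled global phase --- itself a product of single-qubit phase gates spread by $\mathrm{COPY}$ --- hence obeys the same bound. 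Uncomputing the angles and then $h$ finishes the block, exactly as in Step~3 of the proof of Lemma~\ref{lmm:complexity of T}. Running all $s$ in parallel, this step is $O\parens*{\log^2 b}$-depth and $O\parens*{rb^4}$-size of gates, so in total the gate cost is $O\parens*{\log^2 b}$-depth and $O\parens*{r\parens*{n+b^4}}$-size.

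The only genuinely delicate part is the error analysis fixing $b$. Let $a:=\abs*{H_{jk}}\in[0,1]$ and let $\tilde a$ be the magnitude read from $\calO_H^b$, so $\abs*{a-\tilde a}=O\parens*{2^{-b}}$. Routing the analysis through the \emph{angle} would fail near $a=0$ and $a=1$, where $\arccos\sqrt{\cdot}$ has unbounded derivative; instead I would bound the \emph{amplitude} error directly using $\abs*{\sqrt{x}-\sqrt{y}}\le\sqrt{\abs*{x-y}}$ for $x,y\ge0$ (and, for the complex square root, $\abs*{\sqrt{z}-\sqrt{w}}=O\parens*{\sqrt{\abs*{z-w}}}$ uniformly on the unit disk, away from the branch locus), which gives each factor an $\ell_2$-error $O\parens*{2^{-b/2}}$ from using $\tilde a$, plus an extra $O\parens*{2^{-b}}$ from discretizing $\theta_Y$ and $\phi$ to $b$ bits (here $\abs*{\cos\alpha-\cos\beta}\le\abs*{\alpha-\beta}$ keeps the discretization error insensitive to where $a$ lies). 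Since every factor, and hence every partial product, has norm $1$, a telescoping estimate over the $r$ factors bounds the total error on $\ket{\Psi_{j_0}^{(r)}}$ by $O\parens*{r\,2^{-b/2}}$. With $r=\polylog\parens*{1/\epsilon}$ we have $\log r=O\parens*{\log\log\parens*{1/\epsilon}}$, so choosing $b=\Theta\parens*{\log\parens*{1/\epsilon}}$ makes this $\le\epsilon$; and with this $b$ we get $\log^2 b=O\parens*{\log^2\log\parens*{1/\epsilon}}$ and $b^4=O\parens*{\log^4\parens*{1/\epsilon}}$, matching the claimed $O\parens*{\log^2\log\parens*{1/\epsilon}}$-depth and $O\parens*{r\bracks*{n+\log^4\parens*{1/\epsilon}}}$-size of gates. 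I expect the square-root-induced amplification near the endpoints (together with the attendant branch-cut care for complex entries) to be the main obstacle, handled exactly as above.
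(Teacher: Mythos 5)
Your proposal is correct and follows essentially the same route as the paper's proof: copy the $\calH^A$ registers into $\calH^B$, query the $r$ pairwise-disjoint subspaces $\calH_s^A\otimes\calH_{s+1}^B$ in parallel (compute $H_{j_sj_{s+1}}$, rotate, uncompute), compute the rotation angles arithmetic-depth-efficiently by Lemma~\ref{lmm:parallel quantum circuit for elementary arithmetics} and apply the controlled rotations of Lemma~\ref{lmm:parallel controlled Z-rotation}, with $b=\Theta(\log(1/\epsilon))$. Your amplitude-level error bound via $|\sqrt{x}-\sqrt{y}|\le\sqrt{|x-y|}$ is in fact more careful than the paper's (which simply asserts each rotation must be $O(\epsilon/r)$-precise), while the one point the paper treats more explicitly is the consistency of the sign of $\sqrt{H_{jk}^*}$ for negative real entries, which it resolves by an $O(2^{-b})$ perturbation of the imaginary part rather than your appeal to a fixed branch.
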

\begin{proof}
	We analyze separately the query complexity and gate complexity of the re-weight stage,
	including Step~\ref{stp:2 in implementation of T^r} and Step~\ref{stp:3 in implementation of T^r}
	in Figure~\ref{fig:implementation of T^r}.
	\begin{itemize}
		\item 
			For query complexity,
			only Step~\ref{stp:3 in implementation of T^r} involves queries to $\calO_H$.
			As $\calH_s^A\otimes\calH_{s+1}^B$ are disjoint for $s\in [r]$,
			these $r$ queries are independent, and thus can be done in parallel
			with depth $O\parens*{1}$ and size $O\parens*{r}$.
			The precision $b$ of oracle $\calO_H^b$ will be determined in the gate complexity analysis below.
		\item
			For gate complexity,
			in Step~\ref{stp:2 in implementation of T^r} the $\mathrm{COPY}^{\parens*{r+1}\cdot\parens*{n+1}}$ gate can be implemented by an $O(1)$-depth and $O(rn)$-size quantum circuit.
			In Step~\ref{stp:3 in implementation of T^r},
			one needs to apply $r$ controlled rotations conditioned on some $H_{jk}$;
			that is, perform the mapping
			\begin{equation}
				\ket{0}\ket{H_{jk}}\mapsto \parens*{\sqrt{H_{jk}^*}\ket{0}+\sqrt{1-\abs*{H_{jk}}}\ket{1}}\ket{H_{jk}}.
				\label{eq:H rotation}
			\end{equation}
			To achieve a total precision $\epsilon$ of $T^{(r)}$,
			each rotation in \eqref{eq:H rotation} needs to be $O(\epsilon/r)$-precise,
			which requires $H_{jk}$ given by the oracle $\calO_H^b$ to have
			$b=O\parens*{\log\parens*{r/\epsilon}}=O\parens*{\log\parens*{1/\epsilon}}$ bits of precision.
			To perform the rotation, first compute $\sqrt{\parens*{1-\abs*{H_{jk}}}/H_{jk}^*}$ and its arctangent
			arithmetic-depth-efficiently by Lemma~\ref{lmm:parallel quantum circuit for elementary arithmetics},
			then apply $\mathrm{C}_b$-$R_Y$ in Corollary~\ref{lmm:parallel controlled Z-rotation}.

			Following~\cite{BC12},
			to satisfy the condition $\sqrt{H_{jk}^*}\parens*{\sqrt{H_{jk}}}^*=H_{jk}^*$
			one should be careful in choosing the sign of $\sqrt{H_{jk}^*}$ for $H_{jk}<0$.
			This problem is addressed by adding an $O\left(2^{-b}\right)$ disturbance on the imaginary part of $H_{jk}$
			to force it to be nonzero (thereby forcing $H_{jk}$ to be complex) for those $H_{jk}<0$,
			with the total precision unchanged up to a constant factor.\footnotemark{}
			Now for $H_{jk}=re^{i\theta}$, let $\sqrt{H_{jk}^*}:=\sqrt{r}e^{i\theta/2}$,
			which is also arithmetic-depth-efficiently computable by Lemma~\ref{lmm:parallel quantum circuit for elementary arithmetics}.
			From Corollary~\ref{lmm:parallel controlled Z-rotation},
			a $\mathrm{C}_b$-$R_Y$ gate can be implemented by an $O\parens*{\log b}$-depth and $O\parens*{b}$-size quantum circuit.
			The total complexity follows by summing up the complexities of $r$ rotations.
	\end{itemize}
	\footnotetext{In~\cite{BC12}, to avoid this sign ambiguity,
		they assign different signs to $\sqrt{H_{jk}^*}$ for $H_{jk}$ above and below the diagonal,
		and replace $H$ with $H+\norm*{H}_{\max}\bbone$ to force the diagonal elements to be non-negative.
		For Hamiltonian simulation problem, this modification works well by only introducing a global phase.
	    Here we provide an alternative solution that is not restricted to the Hamiltonian simulation task.}
\end{proof}

Finally, combining the pre-walk complexity (Lemma~\ref{lmm:pre-walk on 1-uniform-structured} in Section~\ref{sub:pre_walk}), the re-weight complexity (Lemma~\ref{lmm:walk stage complexity} in the above),
and the negligible complexity of implementing $S^{(r)}$
gives the total complexity of the parallel quantum walk 
for uniform-structured Hamiltonians.

\begin{theorem}[Parallel quantum walks for uniform-structured Hamiltonians]
    \label{thm:parallel quantum walks for uniform-structured Hamiltonian}
	For a uniform-structured Hamiltonian $H$,
	the $r$-parallel quantum walk $Q^{(r)}$ can be performed to precision $\epsilon$ by a quantum circuit of
	\begin{itemize}
		\item 
			depth $O(1)$ and size $O(r)$ w.r.t.\ queries to $\calO_H^b$ with $b=O\parens*{\log \parens*{1/\epsilon}}$,
		\item
			depth $O\parens*{1}$ and size $O\parens*{r}$ w.r.t.\ queries to $\calO_P$, and
		\item
			depth $O\parens*{\log r\cdot \log^2 n+\log^2\log \parens*{1/\epsilon}}$ and size $O\parens*{r^2n^4 +r\log^4\parens*{1/\epsilon}}$ w.r.t.\ gates,
	\end{itemize}
	for $r=\polylog\parens*{1/\epsilon}$.
\end{theorem}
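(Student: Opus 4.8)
We assemble the three ingredients already in place: the parallel pre-walk of Lemma~\ref{lmm:pre-walk on 1-uniform-structured}, the re-weight of Lemma~\ref{lmm:walk stage complexity}, and the essentially free reverse-order operator $S^{(r)}$, invoking Lemma~\ref{lmm:T^rdaggerS^rT^r block encode (H/d)^r} to transfer the accuracy of $T^{(r)}$ to that of $Q^{(r)}$. First, since $Q^{(r)}=T^{(r)\dagger}S^{(r)}T^{(r)}$ by Definition~\ref{def:parallel quantum walk}, it suffices to implement $T^{(r)}$, its inverse $T^{(r)\dagger}$ (the same circuit run backwards, at the same cost), and $S^{(r)}$. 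The operator $S^{(r)}$ just reverses the order of the $2r+2$ registers, each holding $O(n)$ qubits, so it is realized by a single layer of $\mathrm{SWAP}$ gates --- $O(1)$-depth and $O(rn)$-size --- and is absorbed into the stated bounds. By Lemma~\ref{lmm:T^rdaggerS^rT^r block encode (H/d)^r}, it is enough to implement $T^{(r)}$ to precision $\epsilon/2$ for $Q^{(r)}$ to be the claimed $\epsilon$-block-encoding of $(H/d)^r$.

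Next, I would implement $T^{(r)}$ along the two-stage scheme of Figure~\ref{fig:implementation of T^r}. The pre-walk stage, i.e.\ preparing $\ket{p_{j_0}^{(r)}}$ in $\calH^A$, is exactly Lemma~\ref{lmm:pre-walk on 1-uniform-structured}, costing $O(1)$-depth and $O(r)$-size of queries to $\calO_P$ together with $O\parens*{\log r\cdot\log^2 n}$-depth and $O\parens*{r^2n^4}$-size of gates. The re-weight stage --- the $\mathrm{COPY}^{(r+1)\cdot(n+1)}$ from $\calH^A$ into $\calH^B$ (an $O(1)$-depth, $O(rn)$-size layer) followed by the $r$ parallel entry-controlled $R_Y$ rotations --- is Lemma~\ref{lmm:walk stage complexity}, run to precision $\epsilon/2$ rather than $\epsilon$; this only adjusts the constant hidden in $b=O\parens*{\log(1/\epsilon)}$ (using $r=\polylog(1/\epsilon)$ so that $\log(r/\epsilon)=O(\log(1/\epsilon))$) and leaves all asymptotics intact, costing $O(1)$-depth and $O(r)$-size of queries to $\calO_H^b$ plus $O\parens*{\log^2\log(1/\epsilon)}$-depth and $O\parens*{r\bracks*{n+\log^4(1/\epsilon)}}$-size of gates. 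Because pre-walk touches only $\calO_P$ and re-weight only $\calO_H$, the two query tallies pass through unchanged; for gates the depths add (the two stage depths plus the $O(1)$-depth $\mathrm{COPY}$ and $\mathrm{SWAP}$ layers) and so do the sizes, giving the claimed $O\parens*{\log r\cdot\log^2 n+\log^2\log(1/\epsilon)}$-depth and $O\parens*{r^2n^4+r\log^4(1/\epsilon)}$-size. Counting $T^{(r)\dagger}$ merely doubles constants.

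The assembly is routine; essentially the only things requiring care are (i) splitting the error budget so that implementing re-weight to precision $\epsilon/2$ makes $T^{(r)}$, hence via Lemma~\ref{lmm:T^rdaggerS^rT^r block encode (H/d)^r} also $Q^{(r)}$, accurate enough, and (ii) verifying that the glue between the two stages --- the $\mathrm{COPY}$ linking $\calH^A$ and $\calH^B$ and the register-reversing $S^{(r)}$ --- is genuinely negligible in both depth and size, so that the dominant contributions are precisely those of Lemmas~\ref{lmm:pre-walk on 1-uniform-structured} and~\ref{lmm:walk stage complexity}. I do not anticipate any substantive obstacle beyond this bookkeeping.
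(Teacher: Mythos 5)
Your proposal is correct and follows exactly the paper's route: the paper proves this theorem by the same one-step assembly of Lemma~\ref{lmm:pre-walk on 1-uniform-structured} (pre-walk), Lemma~\ref{lmm:walk stage complexity} (re-weight), and the negligible cost of $S^{(r)}$. The extra bookkeeping you supply (the $\epsilon/2$ budget for $T^{(r)}$ via Lemma~\ref{lmm:T^rdaggerS^rT^r block encode (H/d)^r}, and the constant-factor doubling for $T^{(r)\dagger}$) is implicit in the paper and handled correctly.
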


\subsection{Extension: A Parallel Quantum Walk for a Sum of Hamiltonians}
\label{sub:general_uniform_structured_hamiltonian}

As shown in the previous section,
the parallel quantum walk can be efficiently implemented in parallel for the class of uniform-structured Hamiltonians,
which are however somewhat restricted in applications.
Now we extend the framework in Section~\ref{sub:parallel_version} to a parallel quantum walk for \textit{a sum of Hamiltonians},
that is, a Hamiltonian of the form $H=\sum_{w\in [m]} H_w$,
where $H_w$ are $d$-sparse Hamiltonians and $m=\poly\parens*{n}$.
In this extended framework,
we generalize the class of uniform-structured Hamiltonians
to include more Hamiltonians of practical interest, like Pauli sums and local Hamiltonians.

Recall that in Section~\ref{sub:parallel_version} the good sparse structure of a Hamiltonian $H$ is a key to
efficiently implement the parallel quantum walk.
The intuition behind the extended framework in this section is then:
if some Hamiltonians has the same type of good sparse structures,
then a sum of them is still structured well enough for exploiting parallelism.
The organization of this section is similar to  Section~\ref{sub:parallel_version}.
For readability, we only provide the essential definitions and lemmas here,
and leave more details to Appendix~\ref{sec:details_of_section_extension}.

Let us first define the extended parallel quantum walk for a sum of Hamiltonians. 
Recall that the state $\ket{\Psi_{j_0}^{(r)}}$ in \eqref{eq:Psi_j^r},
which is a superposition of paths $\bj=(j_0,\ldots,j_{r-1})$ in the graph $\mathsf{H}$,
is a key ingredient in the  parallel quantum walk in the previous section.
Now in the case of a sum Hamiltonian $H=\sum_w H_w$,
the graph $\mathsf{H}$ can be ``decomposed'' into a sum of subgraphs $\mathsf{H}_w$,
so in a path $\bj\in \mathsf{H}$ each edge $\parens*{j_s,j_{s+1}}$ belongs to at least one subgraph.
Thus, we can define an extended state $\ket{\Psi_{j_0}^{(r,m)}}$,
which is still a superposition of paths $\bj\in \mathsf{H}$, but each tensored with a corresponding string $\bw$,
such that $\parens*{j_s,j_{s+1}}\in H_{w_s}$ for all $s\in [r]$.
In this way, the extended parallel quantum walk can better exploit the sum structure of the Hamiltonian $H=\sum_{w\in [m]} H_w$,
as shown later.

\begin{definition}[$(r,m)$-parallel quantum walk]
	\label{def:(r,m)-parallel quantum walk}
	Given the Hamiltonian $H$ as above.
	Let $\calH=\calH^W\otimes \calH^A\otimes\calH^B$ be the walk space,
	where $\calH^W=\parens*{\Co^{m}}^{\otimes r}$ and $\calH^A=\calH^B=\parens*{\Co^{2N}}^{\otimes r+1}$. For each $j_0\in [N]$,
	define $\ket{\Psi_{j_0}^{(r,m)}}:=$
	\begin{equation}
		\frac{1}{\sqrt{(md)^r}}
		\sum_{\bw\in [m]^r}
		\sum_{\bj \in \mathsf{H}^{\bw}}
		\underbrace{\vphantom{\ket{j_0}\bigotimes_{s\in [r]}\parens*{\sqrt{\tilde{H}_{j_sj_{s+1}}^*}\ket{j_{s+1}}+\sqrt{1-\abs*{\tilde{H}_{j_sj_{s+1}}}}\ket{j_{s+1}+N}}}\ket{\bw}}_{\in \calH^W}\otimes
		\underbrace{\vphantom{\ket{j_0}\bigotimes_{s\in [r]}\parens*{\sqrt{\tilde{H}_{j_sj_{s+1}}^*}\ket{j_{s+1}}+\sqrt{1-\abs*{\tilde{H}_{j_sj_{s+1}}}}\ket{j_{s+1}+N}}}\ket{\bj}}_{\in \calH^A}\otimes
		\underbrace{\ket{j_0}\bigotimes_{s\in [r]}\parens*{\sqrt{\tilde{H}_{j_sj_{s+1}}^*}\ket{j_{s+1}}+\sqrt{1-\abs*{\tilde{H}_{j_sj_{s+1}}}}\ket{j_{s+1}+N}}}_{\in \calH^B}
		\label{eq:Psi_j^(r,m)}
	\end{equation}
	where $\bj\in \mathsf{H}^{\bw}$ denotes $\parens*{j_s,j_{s+1}}\in \mathsf{H}_{w_s}$ for all $s\in [r]$,
	and $\tilde{H}_{jk}:=H_{jk}/c(j,k)$
	with $c(j,k):=\sum_{w\in [m]} \bbracks*{(j,k)\in \mathsf{H}_w}$\footnote{Here $\bbracks{\cdot}$ stands for the Iverson bracket; that is, $\bbracks{p}=1$ if $p$ is true and $\bbracks{p}=0$ if $p$ is false.} the number of subgraphs containing the edge $\parens*{j,k}\in \mathsf{H}$.
	Let 
	\begin{itemize}
	    \item 
        	$T^{(r,m)}:\calH\rightarrow\calH$ be any unitary operator such that 
        	\begin{equation*}
        		T^{(r,m)}\parens*{\ket{\bw}\ket{j}\ket{\bz}}=
        		\begin{cases}
        			\ket{\Psi_{j}^{(r,m)}},
        			&j\in [N],\bw=0,\bz=0,\\
        			\text{any state},
        			&o.w.
        		\end{cases}
        	\end{equation*}
        	for all $\bw\in [m]^r,j\in [2N],\bz\in [2N]^{2r+1}$.
        \item
        	$S^{(r,m)}:\calH\rightarrow \calH$ reverses the order in the subspace $\calH^A\otimes \calH^B$;
        	that is, $S^{(r,m)}=\bbone^W\otimes S^{(r)}$,
	        where $S^{(r)}:\calH^A\otimes\calH^B\rightarrow\calH^A\otimes\calH^B$ is the reverse order operator in Definition~\ref{def:parallel quantum walk}.
	\end{itemize}
	Then a step of $(r,m)$-parallel quantum walk for $H$ is defined as $Q^{(r,m)}:=T^{(r,m)\dagger}S^{(r,m)}T^{(r,m)}$.
\end{definition}

\begin{remark}
    \label{rem:calO_tildeH} 
    Note that in \eqref{eq:Psi_j^(r,m)}, the amplitudes are determined by a new Hamiltonian $\tilde{H}$,
    which is entry-wise rescaled from $H$.
    The rescaling factor for each entry $H_{jk}$ is $c(j,k)$,
    i.e., the number of overlapping subgraphs $H_w$ on the edge $(j,k)\in \mathsf{H}$.
    For later implementation of the extended parallel quantum walk,
    we will consider a new oracle $\calO_{\tilde{H}}$ giving an entry of $\tilde{H}$
    such that $\calO_{\tilde{H}}\ket{j,k,z}=\ket{j,k,z\oplus \tilde{H}_{jk}}$. 
    Note that if $c(j,k)$ can be efficiently computed,
    $\calO_{\tilde{H}}$ can be easily constructed from $\calO_H$,
    with the total precision scaled by at most $m$ for the construction,
    as $c\parens*{j,k}=\sum_{w\in [m]}\bbracks{\parens*{j,k}\in H_w}\leq m$.
    The overhead caused by this precision scaling will be shown to be negligible later.
\end{remark}

The extended parallel quantum walk defined above also block-encodes a monomial of $H$,
as shown in the following lemma.

\begin{lemma}
	\label{lmm:(r,m)-parallel quantum walk}
	$Q^{(r,m)}=T^{(r,m)\dagger}S^{(r,m)}T^{(r,m)}$ is a $(1,r\ceil{\log m}+2rn+2r+n+2,\epsilon)$-block-encoding of $\parens*{\frac{H}{md}}^r$,
	if $T^{(r,m)}$ is implemented to precision $\epsilon/2$.
\end{lemma}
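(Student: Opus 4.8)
The plan is to mirror the proof of Lemma~\ref{lmm:T^rdaggerS^rT^r block encode (H/d)^r} almost verbatim, adapting only the bookkeeping for the extra index register $\calH^W$ and for the rescaled entries $\tilde H_{jk}$. First I would reduce the claim to an entry-wise identity: it suffices to show that for precise $T^{(r,m)}$,
\begin{equation*}
	\bra{\Psi_j^{(r,m)}}S^{(r,m)}\ket{\Psi_l^{(r,m)}}=\bra{j}\!\left(\tfrac{H}{md}\right)^{\!r}\!\ket{l}
\end{equation*}
for all $j,l\in[N]$, since the left-hand side is exactly the $(j,l)$ block of $(\bbone\otimes\bra{0}^{\otimes a})T^{(r,m)\dagger}S^{(r,m)}T^{(r,m)}(\bbone\otimes\ket{0}^{\otimes a})$ with $a=r\ceil{\log m}+2rn+2r+n+2$ the total number of ancilla qubits; the approximate case then follows by the same linear error-propagation argument as before, costing precision $\epsilon$ when $T^{(r,m)}$ is $\epsilon/2$-precise.

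Next I would split $\ket{\Psi_{j_0}^{(r,m)}}=\ket{\Phi_{j_0}^{(r,m)}}+\ket{\Phi_{j_0}^{(r,m)\bot}}$, where the ``good'' part keeps only the $\sqrt{\tilde H_{j_sj_{s+1}}^*}\ket{j_{s+1}}$ branch in each factor of the $\calH^B$ register:
\begin{equation*}
	\ket{\Phi_{j_0}^{(r,m)}}:=\frac{1}{\sqrt{(md)^r}}\sum_{\bw\in[m]^r}\sum_{\bj\in\mathsf{H}^{\bw}}\ket{\bw}\ket{\bj}\otimes\sqrt{\tilde H_{j_0j_1}^*\cdots\tilde H_{j_{r-1}j_r}^*}\ket{\bj}.
\end{equation*}
Then I would establish the analogue of Lemma~\ref{lmm:Phi and Phi^bot}: the $\bot$-terms vanish against $S^{(r,m)}$ for the same reason as before, namely $S^{(r,m)}$ acts as the reverse-order operator on $\calH^A\otimes\calH^B$ (leaving $\calH^W$ fixed), so every basis component of $S^{(r,m)}\ket{\Phi^{(r,m)\bot}}$ carries at least one subsystem $\ket{j_s+N}$ in $\calH^A$ while every component of $\ket{\Phi^{(r,m)}}$ or $\ket{\Phi^{(r,m)\bot}}$ has all $\calH^A$-subsystems in $[N]$; orthogonality is immediate. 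For the good part, $S^{(r,m)}$ sends $\ket{\bw}\ket{k_0\ldots k_r}\ket{k_0\ldots k_r}\mapsto\ket{\bw}\ket{k_r\ldots k_0}\ket{k_r\ldots k_0}$, and self-adjointness of each $H_w$ (hence of $\tilde H$, since $c(j,k)=c(k,j)$) collapses the overlap to
\begin{equation*}
	\frac{1}{(md)^r}\sum_{\bw\in[m]^r}\sum_{\bj\in\mathsf{H}^{\bw}}\tilde H_{j_0j_1}\cdots\tilde H_{j_{r-1}j_r},\qquad j_0=j,\ j_r=l.
\end{equation*}

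The one genuinely new point is the final combinatorial identification of this double sum with $\left((H/md)^r\right)_{jl}$. Here I would observe that the inner sum over $\bw$ factorizes: for a fixed path $\bj$, the number of strings $\bw\in[m]^r$ with $(j_s,j_{s+1})\in\mathsf{H}_{w_s}$ for every $s$ is exactly $\prod_{s\in[r]}c(j_s,j_{s+1})$, since each coordinate $w_s$ is chosen independently among the $c(j_s,j_{s+1})$ subgraphs covering the edge $(j_s,j_{s+1})$. Plugging $\tilde H_{j_sj_{s+1}}=H_{j_sj_{s+1}}/c(j_s,j_{s+1})$, the factors $c(j_s,j_{s+1})$ cancel exactly, leaving $\frac{1}{(md)^r}\sum_{\bj}H_{j_0j_1}\cdots H_{j_{r-1}j_r}=\left((H/md)^r\right)_{jl}$, where the sum is over all length-$(r+1)$ walks $\bj$ from $j$ to $l$ in the complete graph (terms with a zero entry drop out automatically, so restricting to $\bj\in\mathsf{H}$ is harmless). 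I expect this cancellation — making the rescaling $\tilde H=H/c$ precisely undo the overcounting introduced by summing over $\bw$ — to be the conceptual crux; the rest is a routine transcription of the $d$-sparse argument with $d\mapsto md$ and an extra spectator register, together with noting that $Q^{(r,m)}$ block-encodes into the same upper-left block (now with the $\calH^W$ ancillas set to $0$ as well) so the claimed ancilla count $r\ceil{\log m}+2rn+2r+n+2$ is correct.
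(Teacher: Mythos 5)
Your proposal is correct and follows essentially the same route as the paper's proof: the same good/bad decomposition of $\ket{\Psi_{j_0}^{(r,m)}}$, the same orthogonality argument against $S^{(r,m)}$, and the same cancellation in which the rescaling $\tilde H_{jk}=H_{jk}/c(j,k)$ undoes the multiplicity from summing over $\bw$ (the paper phrases this edge-by-edge as $\sum_{w}\bbracks{(j_s,j_{s+1})\in\mathsf{H}_w}\tilde H_{j_sj_{s+1}}=H_{j_sj_{s+1}}$, you phrase it path-by-path via the factorized count $\prod_s c(j_s,j_{s+1})$, which is the same computation). Your write-up actually supplies details the paper omits, and all of them check out.
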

\begin{proof}
    Postponed to Appendix~\ref{sec:details_of_section_extension}.
\end{proof}

As a straightforward generalization of the implementation of $T^{(r)}$ in Figure~\ref{fig:implementation of T^r}, an implementation of $T^{(r,m)}$ is shown in Figure~\ref{fig:implementation of T^(r,m)}.
\begin{figure}[!ht]
	\centering
	\begin{tcolorbox}[colback=white,colframe=black,boxrule=0.2mm,arc=0mm]
		\subsubsection*{State space}
		$\calH=\parens*{\bigotimes_{s\in [r]}\calH_s^W}\otimes\parens*{\bigotimes_{s\in [r+1]}\calH_s^A}\otimes\parens*{\bigotimes_{s\in [r+1]}\calH_s^B}$
		with $\calH_s^W=\Co^{m}$ and $\calH_s^A=\calH_s^B=\Co^{2N}$ for all $s$.
		\subsubsection*{Input}
		Any state $\ket{0}^{\otimes mr}\ket{j_0}\ket{0}^{\otimes 2N\parens*{2r+1}}$ for $j_0\in [N]$,
		with $\ket{j_0}\in \Co^{2N},\ket{0}\in \Co$.
		\subsubsection*{Output}
		The state $\ket{\Psi_{j_0}^{(r,m)}}$ defined in \eqref{eq:Psi_j^(r,m)}.

		$T^{(r,m)}$ can be implemented in the following ways.
		\subsubsection*{Pre-walk}
		\begin{enumerate}
			\item 
				\label{stp:1 in implementation of T^{(r,m)}}
				Prepare in the subspace $\calH^W\otimes \calH^A$ a pre-walk state
				\begin{equation}
					\ket{p_{j_0}^{(r,m)}}:=
					\frac{1}{\sqrt{\parens*{md}^r}}\sum_{\bw\in [m]^r}\sum_{\bj\in \mathsf{H}^{\bw}}\ket{\bw}\ket{\bj}\in \calH^W\otimes\calH^A.
					\label{eq:extended pre-walk state}
				\end{equation}
		\end{enumerate}
		\subsubsection*{Re-weight}
		\begin{enumerate}[resume]
			\item
				\label{stp:2 in implementation of T^{(r,m)}}
				Copy the computational basis states in $\calH^A$ to $\calH^B$;
				that is, apply $\mathrm{COPY}^{\parens*{r+1}\cdot\parens*{n+1}}$
				to obtained the state
				\begin{equation*}
					\frac{1}{\sqrt{\parens*{md}^r}}\sum_{\bw\in [m]^r}\sum_{\bj\in \mathsf{H}^{\bw}}\ket{\bw}\ket{\bj}\ket{\bj}
					\in \calH.
				\end{equation*}
			\item
				\label{stp:3 in implementation of T^{(r,m)}}
				Query $r$ copies of the modified oracle $\calO_{\tilde{H}}$ in parallel,
				each in the subspace $\calH_s^A\otimes\calH_{s+1}^B$ for $s\in [r]$.
				For each query we compute $\tilde{H}_{j_sj_{s+1}}$ in a temporary ancilla space,
				conditioned on which rotates the state in $\calH^B_{s+1}$,
				then uncompute $\tilde{H}_{j_sj_{s+1}}$ with another query.
				Finally we obtain the goal state $\ket{\Psi_{j_0}^{(r,m)}}$.
		\end{enumerate}
	\end{tcolorbox}
	\caption{Implementation of $T^{(r,m)}$}
	\label{fig:implementation of T^(r,m)}
\end{figure}
We call the procedure of preparing the (extended) pre-walk state $\ket{p_{j_0}^{(r,m)}}$ in \eqref{eq:extended pre-walk state} an $\parens*{r,m}$-pre-walk.
Now we will introduce the notion of $m$-uniform-structured Hamiltonians as an extension of the uniform-structured Hamiltonians in Section~\ref{sub:parallel_version}.
We first redefine the function $L:[m]\times[N]\times [d]\rightarrow [N]$ such that 
$L\parens*{w,j,t}$ is the $t^{\text{th}}$ neighbor of vertex $j$ in subgraph $\mathsf{H}_w$,
and let $L^{(r)}:[m]^r\times[N]\times[d]^r\rightarrow [N]$ be inductively defined as
\begin{equation*}
	L^{(r)}\parens*{\bw,j,\bt}:=L\parens*{w_{r-1},L^{(r-1)}\parens*{w_0,\ldots,w_{r-2},j,t_0,\ldots,t_{r-2}},t_{r-1}}
\end{equation*}
for $\bw\in [m]^r,j\in [N],\bt\in [d]^r$,
with $L^{(1)}:=L$.

\begin{definition}[$m$-uniform-structured Hamiltonian]
	\label{def:m-uniform-structured}
	A sum Hamiltonian $H=\sum_{w\in [m]}H_w$ with the associated oracle $\calO_P$ is $m$-uniform-structured if:
	\begin{itemize}
		\item
			For all $r\in \N$,
			its corresponding $L^{(r)}$ can be expressed as
			\begin{equation}
				L^{(r)}\parens*{\bw,j,\bt}=f\parens*{j,g(w_0,t_0)\circ \ldots \circ g(w_{r-1},t_{r-1})}
				\label{eq:m-uniform-structured L^r}
			\end{equation}
			where the function $f,g$ and the operator $\circ$ with input/output lengths $O\parens*{n}$ satisfy that:
			\begin{itemize}
				\item 
					$f$ and the mapping $\ket{w}\ket{0}\mapsto \ket{w}\frac{1}{\sqrt{d}}\sum_{t\in [d]}\ket{g(w,t)}$
					are arithmetic-depth-efficiently computable with $O(1)$ queries to $\calO_P$ for all $w\in [m]$;
				\item
					$\circ$ is associative and arithmetic-depth-efficiently computable.
			\end{itemize}
		\item
			There exists an inverse function $L^{(-1)}$ such that
			$L^{(-1)}\parens*{w,j,L(w,j,t)}=g(w,t)$
			for all $w\in [m],j\in [N],t\in [d]$,
			and $L^{(-1)}$ is arithmetic-depth-efficiently computable with $O\parens*{1}$ queries to $\calO_P$.
		\item
			The function $\bbracks{\parens*{j,k}\in \mathsf{H}_w}$ can be arithmetic-depth-efficiently computed with $O(1)$ queries to $\calO_P$,
			given $j,k\in [N],w\in [m]$ as inputs.
	\end{itemize}
\end{definition}

Note that the first two conditions in Definition~\ref{def:m-uniform-structured}
are naturally generalized from Definition~\ref{def:1-uniform-structured},
while the third condition is set to guarantee the efficiency of computing $c\parens*{j,k}=\sum_{w\in [m]}\bbracks{\parens*{j,k}\in H_w}$,
which is used in the construction of the oracle $\calO_{\tilde{H}}$ from $\calO_H$, as mentioned in Remark~\ref{rem:calO_tildeH}.

Now we present two important examples of $m$-uniform-structured Hamiltonians:
Pauli sums and local Hamiltonians,
which are generalizations of tensor products of Pauli matrices and local Hamiltonian terms respectively.

\begin{lemma}[Pauli sum]
	\label{lmm:sum of tensor products of Pauli matrices}
	Let $H$ be a Pauli sum,
	that is, $H=\sum_{w\in [m]}H_w$ where $H_w$ are (scaled) tensor products of Pauli matrices defined in Lemma~\ref{lmm:tensor product of Pauli matrices}.
	Let $\calO_P$ give an $n$-bit string $s(w)$ characterizing the Pauli string of $H_w$ for each $w$.
	In particular,
	set $\calX=[m]$, $\calY=[N]$ and $P\parens*{w,y}=y\oplus s(w)$.
	Then $H$ is $m$-uniform-structured.
\end{lemma}
\begin{proof}
    Postponed to Appendix~\ref{sec:details_of_section_extension}.
\end{proof}

\begin{lemma}[Local Hamiltonian]
	\label{lmm:local Hamiltonian is m-uniform-structured}
	Let $H$ be an $(l,m)$-local Hamiltonian,
	that is, $H=\sum_{w\in [m]} H_w$
	where $H_w$ are $l$-local Hamiltonian terms defined in Lemma~\ref{lmm:local Hamiltonian is m-uniform-structured}.
	Let $\calO_P$ give an $n$-bit string $s\parens*{w}$ characterizing the locality of $H_w$,
	i.e., the positions of the $l$ qubits $H_w$ acts on, for each $w$.
	In particular, set $\calX=[m]$, $\calY=[N]$ and $P\parens*{w,y}=y\oplus s\parens*{w}$.
	Then $H$ is $m$-uniform-structured.
\end{lemma}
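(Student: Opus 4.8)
The plan is to verify the three conditions of Definition~\ref{def:m-uniform-structured}, following the proof of Lemma~\ref{lmm:local Hamiltonian clause is uniform-structured} but keeping track of the locality mask $s\parens*{w}$ explicitly. As in the single-clause case the sparsity is $d = 2^l$, and one step of walk inside $\mathsf{H}_w$ is $L\parens*{w,j,t} = j\vartriangleleft_{s(w)}\parens*{t\restriction_{s(w)}}$ (where, as before, $(j,k)\in\mathsf{H}_w$ means that $j$ and $k$ agree on all coordinates outside the support of $s\parens*{w}$, zero entries of $H_w$ being harmless). Iterating, $L^{(r)}\parens*{\bw,j,\bt} = j\vartriangleleft_{s(w_0)}\parens*{t_0\restriction_{s(w_0)}}\vartriangleleft_{s(w_1)}\cdots\vartriangleleft_{s(w_{r-1})}\parens*{t_{r-1}\restriction_{s(w_{r-1})}}$.

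The one genuinely new point is that the overwrite operator $\vartriangleleft_{s(w)}$ now depends on $w$, so unlike the single-clause proof we cannot take $\circ=\vartriangleleft_s$. I would instead let $g(w,t)$ return the pair of $n$-bit strings $\parens*{s(w),\ t\restriction_{s(w)}}$ --- a \emph{mask} recording the touched positions and a \emph{value} recording the bits to be written there --- and define $\circ$ as the last-write-wins merge $\parens*{\mu_1,\nu_1}\circ\parens*{\mu_2,\nu_2}:=\parens*{\mu_1\vee\mu_2,\ \parens*{\nu_1\wedge\overline{\mu_2}}\vee\nu_2}$ and $f\parens*{j,\parens*{\mu,\nu}}:=\parens*{j\wedge\overline{\mu}}\vee\nu = j\vartriangleleft_\mu\nu$. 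Unwinding definitions, $f\parens*{j,\,g(w_0,t_0)\circ\cdots\circ g(w_{r-1},t_{r-1})}$ applies the partial assignments left to right with later ones overriding earlier ones, which is precisely $L^{(r)}\parens*{\bw,j,\bt}$, so \eqref{eq:m-uniform-structured L^r} holds. Associativity of $\circ$ holds because it is just composition of partial Boolean assignments: both $\parens*{p_1\circ p_2}\circ p_3$ and $p_1\circ\parens*{p_2\circ p_3}$ equal the assignment whose domain is $\mathrm{dom}(p_1)\cup\mathrm{dom}(p_2)\cup\mathrm{dom}(p_3)$ and which at each position takes the value of the rightmost $p_i$ whose domain contains it. The maps $f$, $\circ$, and $\ket{w}\ket{0}\mapsto\ket{w}\frac{1}{\sqrt d}\sum_{t\in[d]}\ket{g(w,t)}$ are all computable bitwise from $s\parens*{w}$ with $O(1)$ queries to $\calO_P$ --- the last by querying $\calO_P$ for $s\parens*{w}$, placing Hadamards on the $l$ positions indicated by bit $1$ in $s\parens*{w}$ (exactly as in Lemma~\ref{lmm:local Hamiltonian clause is uniform-structured}), computing $t\restriction_{s(w)}$, and cleaning garbage --- hence arithmetic-depth-efficient by Lemma~\ref{lmm:parallel quantum circuit for elementary arithmetics}.

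For the inverse function I would take $L^{-1}\parens*{w,j,k} = \parens*{s(w),\ k\wedge s(w)}$: if $k = j\vartriangleleft_{s(w)}\parens*{t\restriction_{s(w)}}$ then $k\wedge s(w) = t\restriction_{s(w)}$, so $L^{-1}\parens*{w,j,L(w,j,t)} = g(w,t)$, and this costs one query to $\calO_P$ plus a bitwise AND. Finally, $\bbracks{\parens*{j,k}\in\mathsf{H}_w}$ equals $1$ exactly when $\parens*{j\oplus k}\wedge\overline{s(w)} = 0$, i.e.\ the negation of an OR of $n$ bits, which is arithmetic-depth-efficiently computable with $O(1)$ queries to $\calO_P$ (compute $s\parens*{w}$, form $\parens*{j\oplus k}\wedge\overline{s(w)}$ bitwise, and fan in the OR in logarithmic depth as in Corollary~\ref{cor:parallel controlled Z gate}). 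Collecting these checks shows $H$ is $m$-uniform-structured. The main obstacle is precisely setting up the mask/value encoding so that $\circ$ remains associative despite the $w$-dependence of the overwrite operator; once that is in place, the rest is the bitwise bookkeeping of the single-clause argument.
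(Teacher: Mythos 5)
Your proposal is correct and follows essentially the same route as the paper: you package $g(w,t)$ as the (mask, value) pair $\parens*{s(w),\,t\restriction_{s(w)}}$ and take $\circ$ to be the last-write-wins merge, which is exactly the paper's operator $\vartriangleleft^\vee$ (with the two components in the opposite order), with the same inverse $L^{-1}\parens*{w,j,k}=\parens*{s(w),\,k\wedge s(w)}$ and an equivalent membership test $\bbracks{\parens*{j\oplus k}\wedge\overline{s(w)}=0}$. The associativity argument via composition of partial Boolean assignments and the $O(1)$-query, arithmetic-depth-efficient implementations also match the paper's verification of Definition~\ref{def:m-uniform-structured}.
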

\begin{proof}
    Postponed to Appendix~\ref{sec:details_of_section_extension}.
\end{proof}

\begin{remark}
	Although any Hamiltonian can be represented as a Pauli sum due to the fact that tensor products of Pauli matrices form a basis for the Hermitian space,
	the number of summands $m$ can be large.
	Since the complexity of the algorithm depends on $m$, 
	only those Pauli sums with small $m$ are of practical interest. The same difficulty exists when one tries to represent any Hamiltonian as a local Hamiltonian,
	because the parameter $l$ can be large.
\end{remark}

Following the same line of analysis as in Section~\ref{sub:parallel_version},
we have the following theorem that the (extended) parallel quantum walk for uniform-structured Hamiltonians in Definition~\ref{def:m-uniform-structured}
can be efficiently implemented in parallel.

\begin{theorem}[Parallel quantum walks for $m$-uniform-structured Hamiltonians]
	\label{thm:complexity of Q^(r,m)}
	For an $m$-uniform-structured Hamiltonian,
	the $(r,m)$-parallel quantum walk $Q^{(r,m)}$ can be implemented to precision $\epsilon$ by a quantum circuit of
	\begin{itemize}
		\item
			depth $O\parens*{1}$ and size $O\parens*{r}$ w.r.t.\ queries to $\calO_H^b$ with $b=O\parens*{\log\parens*{m/\epsilon}}$,
		\item
			depth $O\parens*{1}$ and size $O\parens*{rm}$ w.r.t.\ queries to $\calO_P$, and
		\item
			depth $O\parens*{\log r\cdot \log^2 n+\log^2\log \parens*{m/\epsilon}}$ and size $O\parens*{mr^2n^4+r\log^4\parens*{m/\epsilon}}$ w.r.t.\ gates,
	\end{itemize}
	for $r=\polylog \parens*{1/\epsilon}$.
\end{theorem}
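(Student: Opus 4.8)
The plan is to mimic the structure of the $d$-sparse case (Theorem~\ref{thm:parallel quantum walks for uniform-structured Hamiltonian}), tracking the extra overhead introduced by the sum-of-$m$ structure. As before, it suffices to implement $T^{(r,m)}$ to precision $\epsilon/2$, since $S^{(r,m)} = \bbone^W \otimes S^{(r)}$ is a constant depth of SWAP gates and $Q^{(r,m)} = T^{(r,m)\dagger}S^{(r,m)}T^{(r,m)}$; Lemma~\ref{lmm:(r,m)-parallel quantum walk} then gives the desired block-encoding. We follow the two-stage template in Figure~\ref{fig:implementation of T^(r,m)}: an $(r,m)$-pre-walk producing $\ket{p_{j_0}^{(r,m)}}$, followed by a re-weight that applies the rotations indexed by the entries of the rescaled Hamiltonian $\tilde H$.

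For the pre-walk, I would first argue that the analogues of Lemma~\ref{lmm:log depth L^r for 1-uniform-structured} and Corollaries~\ref{cor:log depth L^r for 1-uniform-structured parallel}--\ref{cor:log depth path for 1-uniform-structured} go through verbatim with $g(t)$ replaced by $g(w,t)$ and $L^{(r)}$ by the $m$-indexed version of Definition~\ref{def:m-uniform-structured}: since $\circ$ is still associative and arithmetic-depth-efficiently computable, Lemma~\ref{lmm:sequence of associative operators} evaluates $g(w_0,t_0)\circ\cdots\circ g(w_{r-1},t_{r-1})$ in $O(\log r\cdot\log^2 n)$-depth and $O(rn^4)$-size, and one query layer to $\calO_P$ suffices because the $r$ evaluations of $g$ act on disjoint registers. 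The new ingredient is that we must also prepare the superposition over $\bw\in[m]^r$ — done by $r$ parallel uniform-superposition preparations over $[m]$, costing $O(\log m)$ depth, $O(rm)$ size of gates — and then, for each $s$, prepare $\frac{1}{\sqrt d}\sum_{t}\ket{g(w_s,t)}$ controlled on $w_s$, which by the first condition of Definition~\ref{def:m-uniform-structured} is arithmetic-depth-efficient with $O(1)$ queries to $\calO_P$ each, hence $O(1)$ query depth, $O(rm)$ query size in parallel. Running the $L^{(r)}$-composition and cleaning the $g(w,t)$ registers via $L^{-1}$ on odd/even $s$ as in Corollary~\ref{cor:log depth path for 1-uniform-structured} yields $\ket{p_{j_0}^{(r,m)}}$ with the stated query and gate bounds, absorbing the $m$-factors into $O(mr^2 n^4)$ size.

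For the re-weight, the key point is that we query $\calO_{\tilde H}$ rather than $\calO_H$. I would invoke Remark~\ref{rem:calO_tildeH}: since the third condition of Definition~\ref{def:m-uniform-structured} makes $\bbracks{(j,k)\in\mathsf{H}_w}$ arithmetic-depth-efficiently computable with $O(1)$ queries to $\calO_P$, the count $c(j,k)=\sum_{w\in[m]}\bbracks{(j,k)\in\mathsf{H}_w}$ is computable by Corollary~\ref{cor:parallel addition of a sequence} in $O(\log m + \log\log(1/\epsilon))$ depth with $O(m)$ queries to $\calO_P$; then one call to $\calO_H^b$ followed by an arithmetic division gives $\tilde H_{jk}$ to $b$ bits, so a single effective query to $\calO_{\tilde H}$ costs $O(1)$ depth of $\calO_H$-queries and $O(m)$ of $\calO_P$-queries. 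The precision analysis is the same as Lemma~\ref{lmm:walk stage complexity} but with an extra factor $m$ from the rescaling: to achieve total precision $\epsilon$ in $T^{(r,m)}$ each of the $r$ rotations must be $O(\epsilon/(mr))$-precise, forcing $b = O(\log(mr/\epsilon)) = O(\log(m/\epsilon))$ for $r=\polylog(1/\epsilon)$; the controlled-$R_Y$ rotations and the arctangent/square-root arithmetic are arithmetic-depth-efficient, contributing $O(\log^2\log(m/\epsilon))$ depth and $O(r\log^4(m/\epsilon))$ size. Summing the pre-walk and re-weight contributions — query depths $O(1)$, query sizes $O(r)$ for $\calO_H^b$ and $O(rm)$ for $\calO_P$, gate depth $O(\log r\cdot\log^2 n + \log^2\log(m/\epsilon))$, gate size $O(mr^2n^4 + r\log^4(m/\epsilon))$ — gives the theorem.

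The main obstacle I anticipate is the bookkeeping around $\calO_{\tilde H}$: one must verify that constructing $\calO_{\tilde H}$ from $\calO_H$ and $\calO_P$ (i) does not blow up the query depth to $\calO_H$ beyond $O(1)$ — it does not, because the $\calO_P$-heavy part computing $c(j,k)$ is counted separately and the single $\calO_H$-call is uncomputed by a second $\calO_H$-call in the same two layers — and (ii) propagates errors correctly, since the division by $c(j,k)\le m$ can amplify the $2^{-b}$ error in $H_{jk}$ by a factor $m$, which is exactly what the $b=O(\log(m/\epsilon))$ choice compensates. The sign-of-square-root subtlety from~\cite{BC12} carries over unchanged since $\tilde H$ is still Hermitian with the same diagonal-perturbation fix applicable. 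Everything else is a routine transcription of Section~\ref{sub:parallel_version} with $w$-indexed data and an extra $O(\log m)$ additive depth / $O(m)$ multiplicative size overhead, and I would relegate the detailed verification of the two examples (Lemmas~\ref{lmm:sum of tensor products of Pauli matrices} and~\ref{lmm:local Hamiltonian is m-uniform-structured}) and Lemma~\ref{lmm:(r,m)-parallel quantum walk} to Appendix~\ref{sec:details_of_section_extension}.
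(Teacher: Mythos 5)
Your proposal is correct and follows essentially the same route as the paper's proof: a two-stage implementation of $T^{(r,m)}$ (an $(r,m)$-pre-walk obtained by transcribing Lemma~\ref{lmm:log depth L^r for 1-uniform-structured} and Corollaries~\ref{cor:log depth L^r for 1-uniform-structured parallel}--\ref{cor:log depth path for 1-uniform-structured} with $w$-indexed data, followed by a re-weight that builds $\calO_{\tilde H}$ from $\calO_H$ and the $\calO_P$-based computation of $c(j,k)$, with the precision rescaled by $m$ to give $b=O(\log(m/\epsilon))$), exactly as in Lemmas~\ref{lmm:pre-walk on m-uniform-structured} and~\ref{lmm:walk stage complexity of m-uniform-structured}. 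Your accounting is slightly looser in places (e.g., the pre-walk itself only needs $O(r)$ rather than $O(rm)$ queries to $\calO_P$; the $O(rm)$ total comes from the $c(j,k)$ computation in the re-weight), but these overestimates are absorbed into the stated bounds and do not affect correctness.
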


Note that when $m=1$, the above  theorem reduces to Theorem~\ref{thm:parallel quantum walks for uniform-structured Hamiltonian}. For readability, the proof of this  theorem is postponed to  Appendix~\ref{sec:details_of_section_extension}.

\section{Parallel LCU for Hamiltonian Series}
\label{sec:linear_combinations_of_hamiltonian_powers} 

In this section, we show how to implement a linear combination of block-encoded powers of a Hamiltonian
(i.e., a Hamiltonian power series) by a parallel quantum circuit. 
The method is based on the LCU (linear combination of unitaries) algorithm developed through~\cite{SOGKL02,CW12,Kothari14,BCCKS15,BCK15,CKS17,GSLW19}. 
Here we adopt the block-encoding version of LCU in~\cite{GSLW19}.
The results in this section will be used to implement a linear combination of parallel quantum walks
to approximate the evolution unitary $e^{-iHt}$ in Section~\ref{sec:parallel_hamiltonian_simulation}.

We first recall two technical lemmas from~\cite{GSLW19}.
Although they apply to more general matrices,
here for our purpose we restrict them to the Hamiltonians.

\begin{lemma}[Product of block-encoded Hamiltonian powers~\cite{GSLW19}]
	\label{lmm:product of block-encoded Hamiltonian powers}
	Let $K$ be any Hamiltonian. For $n,m\in \N$, if 
	$U$ is an $(\alpha,a,\delta)$-block-encoding of $K^n$ and 
	$V$ is a $(\beta,b,\epsilon)$-block-encoding of $K^m$, 
	then $\parens*{\bbone_a\otimes V}\parens*{\bbone_b\otimes U}$
	is an $\parens*{\alpha\beta,a+b,\alpha\epsilon+\beta\delta}$-block-encoding of $K^{n+m}$, 
	where $\bbone_s$ is the identity operator acting on the proper subsystem composed of the $s$ qubits.
\end{lemma}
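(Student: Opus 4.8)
The plan is to unfold the definition of block-encoding and compose the two approximations. First I would write down what the two hypotheses give us: since $U$ is an $(\alpha,a,\delta)$-block-encoding of $K^n$, we have $\norm{K^n - \alpha(\bra{0}^{\otimes a}\otimes\bbone)U(\ket{0}^{\otimes a}\otimes\bbone)}\le\delta$, and similarly $\norm{K^m - \beta(\bra{0}^{\otimes b}\otimes\bbone)V(\ket{0}^{\otimes b}\otimes\bbone)}\le\epsilon$. The target claim is that $W:=(\bbone_a\otimes V)(\bbone_b\otimes U)$ satisfies $\norm{K^{n+m}-\alpha\beta(\bra{0}^{\otimes a+b}\otimes\bbone)W(\ket{0}^{\otimes a+b}\otimes\bbone)}\le\alpha\epsilon+\beta\delta$, where the two ancilla registers are ordered so that the projections compose cleanly.

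Next I would compute the "extracted block" of $W$. Writing $\Pi_a=\ket{0}^{\otimes a}\!\bra{0}^{\otimes a}$ etc., and using that $\bbone_a\otimes V$ acts trivially on the first register while $\bbone_b\otimes U$ acts trivially on the second, one checks that
\[
  (\bra{0}^{\otimes a+b}\otimes\bbone)\,(\bbone_a\otimes V)(\bbone_b\otimes U)\,(\ket{0}^{\otimes a+b}\otimes\bbone)
  = \tilde V\,\tilde U,
\]
where $\tilde U:=(\bra{0}^{\otimes a}\otimes\bbone)U(\ket{0}^{\otimes a}\otimes\bbone)$ and $\tilde V:=(\bra{0}^{\otimes b}\otimes\bbone)V(\ket{0}^{\otimes b}\otimes\bbone)$ are the respective sub-blocks; this is the only slightly fiddly step, and it is where the ordering convention on the ancillas (and the abuse of notation allowing the ancilla on either side, as noted after Definition~\ref{def:block-encoding}) matters. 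Then $\alpha\tilde U$ is $\delta$-close to $K^n$ and $\beta\tilde V$ is $\epsilon$-close to $K^m$, both in spectral norm, and $\norm{\tilde U}\le 1$, $\norm{\tilde V}\le 1$ since they are sub-blocks of unitaries.

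Finally I would invoke the standard product error bound: if $\norm{A-A'}\le\delta$ and $\norm{B-B'}\le\epsilon$ with $\norm{A'},\norm{B'}\le 1$ (taking $A'=\tilde U$ scaled appropriately, etc.), then $\norm{AB-A'B'}\le\norm{A}\epsilon'+\norm{B'}\delta'$ for the suitably normalized quantities; concretely,
\[
  \norm{K^{n+m}-\alpha\beta\tilde V\tilde U}
  \le \norm{K^m(K^n-\alpha\tilde U)} + \norm{(K^m-\beta\tilde V)(\alpha\tilde U)}
  \le \norm{K^m}\delta + \alpha\norm{\tilde U}\epsilon \le \delta + \alpha\epsilon,
\]
wait — this gives $\norm{K^m}\le 1$ only if $\norm{K}\le 1$, which is not assumed, so instead I would split symmetrically as $K^{n+m}-\alpha\beta\tilde V\tilde U = (K^m-\beta\tilde V)K^n + \beta\tilde V(K^n-\alpha\tilde U)$ and bound $\norm{\beta\tilde V}\le\beta$, $\norm{K^n}$ — hmm, $\norm{K^n}$ is also unbounded. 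The correct split, matching the claimed bound $\alpha\epsilon+\beta\delta$, uses the block quantities on both sides: $\alpha\beta\tilde V\tilde U - K^{n+m}= \beta\tilde V(\alpha\tilde U-K^n) + (\beta\tilde V-K^m)K^n$, but to get $\beta\delta$ we need $\norm{\beta\tilde V}\le\beta$ (true) and to get $\alpha\epsilon$ we need $\norm{K^n}\le\alpha$, which holds because $\norm{K^n}\le\norm{\alpha\tilde U}+\delta$... This is the one genuine subtlety, and the clean resolution is $\alpha\beta\tilde V\tilde U-K^{n+m} = (\alpha\beta\tilde V\tilde U-\beta\tilde V K^n)+(\beta\tilde V K^n-K^m K^n)=\beta\tilde V(\alpha\tilde U-K^n)+(\beta\tilde V-K^m)K^n$, then $\norm{\cdot}\le\beta\delta+\epsilon\norm{K^n}$, and here I would use the hypothesis implicit in the block-encoding chain that these are applied with $\norm{K^n}\le\alpha$ (as is always the case in our applications, where $\alpha$ is a valid normalization). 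I expect this norm-bookkeeping to be the main obstacle; everything else is routine operator-algebra manipulation, and I would keep the write-up to the block-extraction identity plus the triangle-inequality estimate.
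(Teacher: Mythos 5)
The paper does not actually prove this lemma---it is imported from~\cite{GSLW19} (Lemma~53 there) without proof---and your argument is precisely the standard one from that source: extract the sub-blocks $\tilde U,\tilde V$, check the product identity on the disjoint ancilla registers, and apply the triangle inequality to the telescoping split $\beta\tilde V(\alpha\tilde U-K^n)+(\beta\tilde V-K^m)K^n$, using $\norm{\beta\tilde V}\leq\beta$ and $\norm{K^n}\leq\alpha$. The norm subtlety you flag is genuine but is the same (harmless) imprecision already present in the cited lemma: in general one only has $\norm{K^n}\leq\alpha+\delta$, so the bound is strictly $(\alpha+\delta)\epsilon+\beta\delta$, and the second-order term $\delta\epsilon$ is absorbed wherever the lemma is applied; in this paper's uses $K=H/(md)$ with $\norm{K}\leq 1$ and $\alpha=\beta=1$, so $\norm{K^n}\leq\alpha$ holds exactly and no hypothesis is missing.
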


\begin{definition}[State preparation unitary]
	\label{def:state preparation unitary}
	Let $\ba\in \Co^{R}$ and $\alpha:=\norm*{\ba}_1$, where $\norm*{\cdot}_1$ is the $l_1$ norm.
	A unitary $V$ is called an $\parens*{\alpha,\epsilon}$-state-preparation-unitary of $\ba$,
	if the state $V\ket{0}$ is $\epsilon$-close\footnotemark{} to $\frac{1}{\sqrt{\alpha}}\sum_{r\in [R]}\sqrt{a_r}\ket{r}$ with $\ket{0}\in \Co^{R}$,
	where $a_r$ is the $r^{\text{th}}$ entry of $\ba$.
	\footnotetext{Here, for simplicity of later proofs, the precision is measured by $l_2$-norm,
	which is different from the one in~\cite{GSLW19}. See Appendix~\ref{sec:state_preparation_for_lcu} for more details.}
\end{definition}

This definition
is a special case of \textit{state preparation pair} in~\cite{GSLW19},
as shown in Appendix~\ref{sec:state_preparation_for_lcu}.

\begin{lemma}[LCU for Hamiltonian series~\cite{GSLW19}]
	\label{lmm:linear combination of block-encoded Hamiltonian powers}
	Let $\ba\in \Co^{R}$ be an $R$-dimensional vector,
	$K$ be a Hamiltonian, and \begin{itemize}
\item	$A:=\sum_{r\in [R]} a_r K^r$ be a power series of $K$;
\item	$V$ be an $(\alpha,\delta)$-state-preparation-unitary of $\ba$; 
\item	$U_r$ be an $(\beta,b,\epsilon)$-block-encoding of $K^r$.\end{itemize}If
	$U:=\sum_{r\in [R]} \ket{r}\!\bra{r}\otimes U_r$,
	then $\parens*{V^\dagger\otimes \bbone}U\parens*{V\otimes \bbone}$ is
	an $\parens*{\alpha\beta,s+b,\alpha\beta R\delta+\alpha\beta\epsilon}$-block-encoding of $A$ with $s:=\ceil{\log R}$.
\end{lemma}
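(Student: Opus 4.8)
The statement to prove is Lemma~\ref{lmm:linear combination of block-encoded Hamiltonian powers}, which is essentially the block-encoding version of LCU restricted to powers of a single Hamiltonian $K$. The plan is to verify directly from Definition~\ref{def:block-encoding} that $W:=(V^\dagger\otimes\bbone)U(V\otimes\bbone)$ is an $(\alpha\beta,s+b,\alpha\beta R\delta+\alpha\beta\epsilon)$-block-encoding of $A=\sum_{r\in[R]}a_rK^r$. First I would handle the \emph{exact} case (i.e.\ $\delta=\epsilon=0$): assume $V\ket{0}=\frac{1}{\sqrt\alpha}\sum_r\sqrt{a_r}\ket{r}$ exactly and each $U_r$ is an exact $(\beta,b,0)$-block-encoding of $K^r$, so $(\bra{0}^{\otimes b}\otimes\bbone)U_r(\ket{0}^{\otimes b}\otimes\bbone)=K^r/\beta$. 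Then I would compute the top-left block of $W$ by sandwiching with $\ket{0}^{\otimes s}\otimes\ket{0}^{\otimes b}$: the $V$-registers contribute $\langle 0|V^\dagger(\ket{r}\!\bra{r'})V|0\rangle=\frac{1}{\alpha}\sqrt{a_r^*}\sqrt{a_{r'}}$ wait — more carefully, since $U=\sum_r\ket{r}\!\bra{r}\otimes U_r$, one gets
\begin{equation*}
(\bra{0}^{\otimes s+b}\otimes\bbone)W(\ket{0}^{\otimes s+b}\otimes\bbone)=\sum_{r\in[R]}\big|\langle r|V|0\rangle\big|^2\cdot\frac{K^r}{\beta}=\sum_{r\in[R]}\frac{a_r}{\alpha}\cdot\frac{K^r}{\beta}=\frac{A}{\alpha\beta},
\end{equation*}
which is exactly the block-encoding condition with $\alpha\beta$ in the normalization and $0$ error. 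This is the conceptual heart, and it is short.

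**Error propagation.** The second step is to account for the two sources of error. I would introduce the idealized operators $\tilde V$ (exact state-preparation unitary for $\ba$) and $\tilde U_r$ (exact block-encodings of $K^r$), set $\tilde U:=\sum_r\ket{r}\!\bra{r}\otimes\tilde U_r$ and $\tilde W:=(\tilde V^\dagger\otimes\bbone)\tilde U(\tilde V\otimes\bbone)$, and bound $\norm{W-\tilde W}$ by a triangle-inequality/telescoping argument, using that all the operators in sight are unitary (so multiplying by them preserves the spectral norm) and that the top-left block map $M\mapsto(\bra 0\otimes\bbone)M(\ket 0\otimes\bbone)$ is a contraction in spectral norm. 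For the state-preparation error: since $V\ket 0$ is $\delta$-close in $\ell_2$ to $\tilde V\ket 0$, I need that replacing $V$ by $\tilde V$ on both sides changes the relevant block by at most... here I have to be a little careful, because Definition~\ref{def:state preparation unitary} only controls $V$ acting on $\ket 0$, not $V$ as a whole unitary. The cleanest route is to observe that only $V\ket 0$ and $\langle 0|V^\dagger$ enter the top-left block of $W$, so I should expand $(\bra 0\otimes\bbone)W(\ket 0\otimes\bbone)=\sum_{r}\overline{\langle r|V|0\rangle}\,\langle r|V|0\rangle\,B_r$ where $B_r:=(\bra 0^{\otimes b}\otimes\bbone)U_r(\ket 0^{\otimes b}\otimes\bbone)$, and compare coefficient vectors: the vector $(\langle r|V|0\rangle)_r$ is $\delta$-close in $\ell_2$ to $\frac1{\sqrt\alpha}(\sqrt{a_r})_r$, which has $\ell_2$-norm $1$, so its "rank-one outer product with itself" (giving the coefficients $|\langle r|V|0\rangle|^2$ and, with the cross-terms, really $\overline{\langle r|V|0\rangle}\langle r'|V|0\rangle$ — but note $U$ is block-diagonal in $r$ so only $r=r'$ survives) changes by $O(\delta)$ in $\ell_1$; combined with $\norm{B_r}\le 1$ this contributes at most $\approx 2\delta$ to the block, scaled by $\alpha\beta$ in the un-normalized statement, giving the $\alpha\beta R\delta$ term after a cruder bound. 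For the block-encoding error: each $\norm{B_r-K^r/\beta}\le\epsilon$, and the coefficients $a_r/\alpha$ sum in absolute value to $1$, so $\norm{\sum_r (a_r/\alpha)(B_r-K^r/\beta)}\le\epsilon$; scaling by $\alpha\beta$ gives the $\alpha\beta\epsilon$ term. Adding the two gives the claimed $\alpha\beta R\delta+\alpha\beta\epsilon$.

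**The main obstacle.** I expect the delicate point to be the $\delta$-dependence — specifically pinning down why it comes out as $R\delta$ rather than $\delta$, and making the comparison of coefficient vectors rigorous given that Definition~\ref{def:state preparation unitary} is a statement about $V\ket 0$ only. The honest bookkeeping is: if $\norm{(c_r)_r-\frac1{\sqrt\alpha}(\sqrt{a_r})_r}_2\le\delta$ with both vectors of norm $\le 1$, then $\sum_r\big||c_r|^2-a_r/\alpha\big|\le \norm{(|c_r|-\sqrt{a_r/\alpha})_r}_2\cdot\norm{(|c_r|+\sqrt{a_r/\alpha})_r}_2\le\delta\cdot 2$, so actually one gets $2\delta$ here and the $R$ must be entering through a looser route in~\cite{GSLW19} (e.g.\ bounding an $\ell_\infty$-to-$\ell_1$ conversion over $R$ terms, or from their asymmetric state-preparation-\emph{pair} definition); I would follow the bound exactly as stated in Lemma~\ref{lmm:linear combination of block-encoded Hamiltonian powers} and Appendix~\ref{sec:state_preparation_for_lcu}, citing~\cite{GSLW19}, rather than trying to tighten it. Modulo that, the proof is a direct unfolding of definitions plus triangle inequalities; I would also note at the end that the dimension count $s+b$ with $s=\ceil{\log R}$ is immediate since the $V$-register has $\ceil{\log R}$ qubits and the $U_r$'s share the same $b$-qubit ancilla.
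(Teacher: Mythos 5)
The paper never proves this lemma itself: it imports Lemma~52 of~\cite{GSLW19} (stated for state-preparation \emph{pairs}) and, in Appendix~\ref{sec:state_preparation_for_lcu}, converts an $(\alpha,\delta)$-state-preparation-unitary into an $(\alpha,s,\alpha R\delta)$-state-preparation-pair via the inequality $\norm{\cdot}_1\leq\sqrt{R}\,\norm{\cdot}_2$; feeding that pair into the cited lemma produces exactly the error $\alpha\beta R\delta+\alpha\beta\epsilon$. Your direct verification is therefore a genuinely different, essentially self-contained route, and your diagnosis of the $\delta$-term is exactly right: honest $\ell_2$ bookkeeping gives $O(\delta)$, and the factor $R$ enters only through the paper's loose $\ell_2$-to-$\ell_1$ conversion in the appendix. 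Your error-propagation scheme (only $V\ket{0}$ enters the top-left block, $U$ is block-diagonal in $r$ so cross terms vanish, $\norm{B_r}\leq 1$, compare coefficient vectors) is sound and would in fact yield a bound at least as tight as the stated one.

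The one step that does not survive scrutiny is the ``conceptual heart'': for complex $a_r$ the displayed identity $\sum_r\abs*{\braket{r|V|0}}^2K^r/\beta=\sum_r(a_r/\alpha)K^r/\beta$ is false, because the coefficients $\abs*{\braket{r|V|0}}^2=\abs*{a_r}/\alpha$ are nonnegative reals, so the phase of $a_r$ is irretrievably lost when the \emph{same} $V$ is conjugated on both sides. This is not a vacuous worry: the lemma is applied in Section~\ref{sec:parallel_hamiltonian_simulation} with $a_r=(-i)^r/r!$. The repair is either to use a genuine pair $(V_L,V_R)$ with $\braket{0|V_L^\dagger|r}\braket{r|V_R|0}=a_r/\alpha$ as in~\cite{GSLW19}, or to absorb the phases $e^{i\arg a_r}$ into the unitaries $U_r$ and keep only $\abs*{a_r}$ in the state preparation. (In fairness, the paper's Appendix~\ref{sec:state_preparation_for_lcu} commits the same imprecision when it sets $V_C=V^\dagger$, $V_D=V$ and asserts $c_r^*d_r=v_r$, so this is a gap you inherited rather than introduced --- but your proof should resolve it rather than paper over it with the half-noticed ``wait --- more carefully'' aside.) The dimension count $s+b$ and the $\alpha\beta\epsilon$ term are fine as you argue them.
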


Combining Lemma~\ref{lmm:product of block-encoded Hamiltonian powers} and
Lemma~\ref{lmm:linear combination of block-encoded Hamiltonian powers},
we can obtain the following: 

\begin{corollary}
	\label{cor:linear combination of block-encoded Hamiltonian powers}
	Let $\ba\in \Co^{R}$ be an $R$-dimensional vector,
	$K$ be a Hamiltonian, 
	$s:=\ceil{\log R}$, and 
	\begin{itemize}
	    \item 
        	$A:=\sum_{r\in [R]} a_r K^r$ be a power series of $K$; 
	    \item 
	        $V$ be an $(\alpha,\delta)$-state-preparation-unitary of $\ba$;
        \item	
            $W_r$ be a $(1, b_r,\epsilon)$-block-encoding of $K^r$, and
        	\begin{equation}
        	    W:=\sum_{r\in [R]}\ket{r}\!\bra{r}\otimes \prod_{j\in [s]} \parens*{\bbone_{b-b_{2^j}}\otimes W_{2^j}}^{r_j}
        		\label{eq:W}
        	\end{equation}
	        where $b:=\sum_{j\in [s]}b_{2^j}$ and $r_j$ is the $j^{\text{th}}$ bit of $r$.
    \end{itemize}
	Then $\parens*{V^\dagger\otimes\bbone}W\parens*{V\otimes\bbone}$ is
	an $\parens*{\alpha,s+b,\alpha R \delta+\alpha s\epsilon}$-block-encoding of $A$.
\end{corollary}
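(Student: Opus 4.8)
The plan is to prove the corollary by composing the two cited lemmas: first use Lemma~\ref{lmm:product of block-encoded Hamiltonian powers} to show that, for every fixed $r\in[R]$, the inner operator $W_r':=\prod_{j\in[s]}\parens*{\bbone_{b-b_{2^j}}\otimes W_{2^j}}^{r_j}$ is a $(1,b,s\epsilon)$-block-encoding of $K^r$, and then feed the collection $\braces*{W_r'}_{r\in[R]}$ into Lemma~\ref{lmm:linear combination of block-encoded Hamiltonian powers} with $V$ as the state-preparation unitary, observing that $\sum_{r\in[R]}\ket{r}\!\bra{r}\otimes W_r'$ is exactly the operator $W$ of \eqref{eq:W}.

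For the first step, I would begin from the elementary fact that padding a block-encoding with fresh $\ket{0}$ ancillas it never touches does not change what it encodes: since $W_{2^j}$ is a $(1,b_{2^j},\epsilon)$-block-encoding of $K^{2^j}$, the padded operator $\bbone_{b-b_{2^j}}\otimes W_{2^j}$ is a $(1,b,\epsilon)$-block-encoding of $K^{2^j}$, because the overlap $\bra{0}^{\otimes(b-b_{2^j})}\ket{0}^{\otimes(b-b_{2^j})}=1$ factors out of the defining projection. The crucial observation is that $b=\sum_{j\in[s]}b_{2^j}$ is the concatenation of $s$ disjoint ancilla blocks, one per index $j$, and the factor for index $j$ acts nontrivially (through $W_{2^j}$) only on block $j$ while carrying all other blocks along via the identity. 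Hence I can apply Lemma~\ref{lmm:product of block-encoded Hamiltonian powers} iteratively over the indices $j$ with $r_j=1$ — there are $w(r)\le s$ of them, where $w(r)$ denotes the Hamming weight of $r$ — always with $\alpha=\beta=1$, to conclude that $W_r'$ is a $(1,b,w(r)\epsilon)$-block-encoding of $K^{\sum_{j:r_j=1}2^j}=K^r$; the ancilla count stays equal to $b$ at every step because successive products reuse the already-allocated blocks, and the error accumulates only additively. Since $w(r)\le s$, this gives a $(1,b,s\epsilon)$-block-encoding of $K^r$ (and for $r=0$ the empty product is the identity, a $(1,b,0)$-block-encoding of $K^0=\bbone$).

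For the second step, set $U_r:=W_r'$, so that $U:=\sum_{r\in[R]}\ket{r}\!\bra{r}\otimes U_r$ coincides with $W$. Each $U_r$ is a $(\beta,b,\epsilon')$-block-encoding of $K^r$ with $\beta=1$ and $\epsilon'=s\epsilon$, and $V$ is an $(\alpha,\delta)$-state-preparation-unitary of $\ba$, so Lemma~\ref{lmm:linear combination of block-encoded Hamiltonian powers} yields that $\parens*{V^\dagger\otimes\bbone}W\parens*{V\otimes\bbone}$ is an $\parens*{\alpha\beta,s+b,\alpha\beta R\delta+\alpha\beta\epsilon'}$-block-encoding of $A=\sum_{r\in[R]}a_r K^r$, which is precisely the asserted $\parens*{\alpha,s+b,\alpha R\delta+\alpha s\epsilon}$-block-encoding upon substituting $\beta=1$ and $\epsilon'=s\epsilon$.

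The only genuinely delicate point — and thus the main obstacle — is the ancilla bookkeeping in the first step: one must verify that the ancilla count of $W_r'$ is exactly $b$ independently of $r$, rather than the $\sum_{j:r_j=1}b_{2^j}$ that a careless application of Lemma~\ref{lmm:product of block-encoded Hamiltonian powers} to \emph{unpadded} operators would produce, and correspondingly that the error stays $w(r)\epsilon\le s\epsilon$ additively rather than blowing up. Making the disjointness-of-active-blocks argument precise, and checking that Lemma~\ref{lmm:product of block-encoded Hamiltonian powers} applied to the padded factors indeed returns the combined padded operator so that the iteration closes, is where the care is needed; the rest is direct substitution into the two cited lemmas, and the ordering of the factors in the product $\prod_{j\in[s]}$ is immaterial since the iteration of Lemma~\ref{lmm:product of block-encoded Hamiltonian powers} can be carried out in whatever order they appear.
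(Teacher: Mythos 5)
Your proposal is correct and follows essentially the same route as the paper's own proof: apply Lemma~\ref{lmm:product of block-encoded Hamiltonian powers} to show each $\prod_{j\in [s]}\parens*{\bbone_{b-b_{2^j}}\otimes W_{2^j}}^{r_j}$ is a $(1,b,s\epsilon)$-block-encoding of $K^r$, then feed these into Lemma~\ref{lmm:linear combination of block-encoded Hamiltonian powers}. Your additional care with the padded-ancilla bookkeeping is sound and simply makes explicit a step the paper leaves implicit.
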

\begin{proof}
	Since $W_{2^j}$ is a $(1,b_{2^j},\epsilon)$-block-encoding of $K^{2^j}$,
 by Lemma~\ref{lmm:product of block-encoded Hamiltonian powers}, 
	\begin{equation*}
	       U_r:=\prod_{j\in [s]}\parens*{\bbone_{b-b_{2^j}}\otimes W_{2^j}}^{r_j}
	\end{equation*} 
        is a $\parens*{1,b,s\epsilon}$-block-encoding of $K^r$.
	Then by Lemma~\ref{lmm:linear combination of block-encoded Hamiltonian powers}
	we reach the conclusion.
\end{proof}

Now we present two lemmas showing how to implement $V$ and $W$ in Corollary~\ref{cor:linear combination of block-encoded Hamiltonian powers}
by parallel quantum circuits.
The idea of Lemma~\ref{lmm:parallel state preparation} is similar to a data structure for matrix sampling in~\cite{KP17},
but here we consider complex amplitudes (rather than real amplitudes in~\cite{KP17}) and explicitly compute the gate complexity. This lemma is based on previous works on quantum sampling~\cite{GR02}.

\begin{lemma}[Parallel state preparation]
	\label{lmm:parallel state preparation}
	Let $\ba\in \Co^R$ be an $R$-dimensional vector
	such that for each $r\in [R]$, $a_r$ is arithmetic-depth-efficiently computable given $r$ as input,
	and let $\alpha:=\norm*{\ba}_1$.
	An $(\alpha,\epsilon)$-state-preparation-unitary $V$ of $\ba$ can be implemented by
	an $O\parens*{\log^3\log\parens*{1/\epsilon}}$-depth and $O\parens*{\log^5\parens*{1/\epsilon}}$-size quantum circuit
	for $R=O\parens*{\log \parens*{1/\epsilon}}$.
\end{lemma}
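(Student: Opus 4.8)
The plan is to build the state-preparation unitary $V$ via a binary-tree (bottom-up) approach analogous to the quantum sampling technique of Grover and Rudolph~\cite{GR02}, but carried out with arithmetic-depth-efficient circuits so that the depth is only poly-logarithmic in $R=O(\log(1/\epsilon))$, i.e.\ $\operatorname{polylog}\log(1/\epsilon)$. First I would set $R$ (WLOG a power of two) and think of the target amplitudes $\sqrt{a_r}/\sqrt{\alpha}$ as being indexed by the leaves of a balanced binary tree of depth $s=\lceil\log R\rceil=O(\log\log(1/\epsilon))$. For each internal node $v$ at depth $\ell$, corresponding to a dyadic interval $I_v\subseteq[R]$, define the (complex) partial sum $A_v := \sum_{r\in I_v} a_r$ and its "mass" $|A_v|$ together with a phase correction. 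Because each $a_r$ is arithmetic-depth-efficiently computable from $r$, and there are only $R=O(\log(1/\epsilon))$ of them, every $A_v$ can be computed by summing $O(R)$ terms; using Corollary~\ref{cor:parallel addition of a sequence} (parallel addition of a sequence) together with Lemma~\ref{lmm:parallel quantum circuit for elementary arithmetics}, each $A_v$ — and the ratios $A_{v_L}/A_v$, their moduli, square roots and arctangents needed to specify a single-qubit rotation angle — is arithmetic-depth-efficiently computable, i.e.\ by an $O(\log^2 b)$-depth, $O(b^4)$-size circuit with $b=O(\log(1/\epsilon))$ the bit precision.

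Next I would describe $V$ layer by layer. Start in the all-zero state on $s$ qubits. At layer $\ell$ (for $\ell=0,\dots,s-1$), conditioned on the first $\ell$ qubits holding a node label $v$ at depth $\ell$, we must apply to qubit $\ell+1$ a controlled $Y$-rotation by angle $\theta_v$ (to split the mass $|A_v|$ into $|A_{v_L}|,|A_{v_R}|$) followed by a controlled $Z$-rotation realizing the relative phase $\arg(A_{v_L})-\arg(A_v)$ on the $\ket{0}$ branch (equivalently, a diagonal phase). The rotation angles depend on the $\ell$-bit control register; this is exactly the situation handled by Lemma~\ref{lmm:parallel controlled Z-rotation} (parallel controlled $R_Z/R_Y$), up to first computing the $b$-bit angle $\gamma_v$ into an ancilla register via the arithmetic-depth-efficient circuit above, applying $\mathrm{C}_b\text{-}R_Y$ and $\mathrm{C}_b\text{-}R_Z$, and then uncomputing the ancilla. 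Each such layer therefore costs $O(\log^2 b)$ gate-depth and $O(b^4)$ gate-size for the arithmetic, plus $O(\log b)$ for the controlled rotations; since $b=O(\log(1/\epsilon))$ and $s=O(\log\log(1/\epsilon))$, summing over the $s$ layers gives total depth $O(s\cdot\log^2 b)=O\bigl(\log\log(1/\epsilon)\cdot\log^2\log(1/\epsilon)\bigr)=O(\log^3\log(1/\epsilon))$ and total size $O(s\cdot b^4)=O(\log\log(1/\epsilon)\cdot\log^4(1/\epsilon))=O(\log^5(1/\epsilon))$, matching the claim. Finally, after $s$ layers the label register holds $r$ and the amplitude is $\sqrt{a_r/\alpha}$ up to the accumulated phase corrections, which by telescoping equals $\arg(a_r)$ as desired, so $V\ket{0}$ is the target state.

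The remaining point is the error analysis: I would argue that carrying $b=\Theta(\log(1/\epsilon))$ bits of precision through the $O(s)$ layers introduces a total $l_2$ error of at most $O(s\cdot 2^{-b}\cdot\mathrm{poly}(R))$ in $V\ket{0}$, which is $\le\epsilon$ for a suitable choice of the constant in $b=O(\log(1/\epsilon))$; here one uses that each arithmetic sub-step is $2^{-b}$-accurate (Lemma~\ref{lmm:parallel quantum circuit for elementary arithmetics}), that each rotation angle is then specified to precision $2^{-b}\cdot\mathrm{poly}(R)$, and that errors in a product of $O(s)$ unitaries add subadditively. The main obstacle I anticipate is not any single estimate but the bookkeeping of the phase corrections — ensuring the per-layer relative-phase rotations telescope correctly to reproduce $\arg(a_r)$ for complex $a_r$ (the $\ket{0}$-branch vs.\ $\ket{1}$-branch convention must be consistent with Definition~\ref{def:state preparation unitary}), and handling the degenerate case $A_v=0$ (an empty or cancelling subtree), where the rotation angle is ill-defined; this can be dealt with by a harmless perturbation of the $a_r$'s (as done elsewhere in the paper for $H_{jk}<0$) or by noting such subtrees carry zero amplitude and can be assigned an arbitrary angle.
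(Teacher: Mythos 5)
Your overall strategy --- a Grover--Rudolph binary tree of depth $s=O(\log\log(1/\epsilon))$, with the rotation angles computed into ancillas by arithmetic-depth-efficient circuits and applied via $\mathrm{C}_b$-$R_Y$ and $\mathrm{C}_b$-$R_Z$ (Lemma~\ref{lmm:parallel controlled Z-rotation}), and the resulting depth/size accounting $O(s\log^2 b)=O(\log^3\log(1/\epsilon))$ and $O(R\,b^4)=O(\log^5(1/\epsilon))$ --- is exactly the paper's proof. However, there is a genuine flaw in how you define the node quantities. You set $A_v:=\sum_{r\in I_v}a_r$ (a \emph{complex} partial sum) and propose to split the ``mass'' according to $|A_{v_L}|/|A_v|$. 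This does not produce the target amplitudes: the probability that the prepared state assigns to the subtree $I_v$ must be $\sum_{r\in I_v}|a_r|/\alpha$, whereas $|A_v|=\bigl|\sum_{r\in I_v}a_r\bigr|$ can be much smaller (indeed zero) due to phase cancellation, and moreover $|A_{v_L}|+|A_{v_R}|\neq|A_v|$ in general, so your split $\bigl(\sqrt{\gamma_v},\sqrt{1-\gamma_v}\bigr)$ with $\gamma_v=|A_{v_L}|/|A_v|$ does not even assign $|A_{v_R}|/|A_v|$ to the right branch. The ``cancelling subtree'' degeneracy you flag at the end is a symptom of this, and it cannot be repaired by a small perturbation: a subtree with $\Theta(1)$ fraction of the $\ell_1$-mass but near-zero complex sum would receive near-zero amplitude, a constant error rather than an $O(\epsilon)$ one.

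The fix is what the paper does: take the partial sums of the \emph{moduli}, $S(j,l)=\sum_{r=j}^{l-1}|a_r|$, so that $\gamma_x=S(u,w)/S(u,v)\in[0,1]$ is a genuine conditional probability with $1-\gamma_x=S(w,v)/S(u,v)$, the leaf probabilities telescope to $|a_r|/\alpha$, and $S(j,l)=0$ only for genuinely empty subtrees. The phases are then handled separately: writing $a_r=e^{i\theta_r}|a_r|$, the relative phase applied on the $\ket{1}$ branch at each node is $\beta_x=\theta_w-\theta_u$ (differences of the phases of the individual \emph{leftmost entries} of the child intervals, not arguments of complex partial sums), which telescopes along the root-to-leaf path to reproduce $\theta_r$ up to a global phase. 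With this correction the remainder of your argument --- computing all $S(j,l)$ at once by the inductive scheme of Lemma~\ref{lmm:sequence of associative operators} with extra $\mathrm{COPY}$ gates, and the per-layer error budget $\epsilon/s$ with $b=O(\log(1/\epsilon))$ bits --- goes through as in the paper.
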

\begin{proof}
	Assume $R=2^s$ w.l.o.g.,
	because one can enlarge the dimension of $\ba$ by appending enough $0$ entries.
	Let $a_r=e^{i\theta_r}\abs*{a_r}$ and partial sum $S(j,l):=\sum_{r=j}^{l-1} \abs*{a_r}$.
	Recall that our goal is to perform the mapping $\ket{0}^{\otimes s}\mapsto \frac{1}{\sqrt{\alpha}}\sum_{r\in [R]}\sqrt{a_r}\ket{r}$.
	The state preparation $V$ consists of $s$ steps,
	and in the $k^{\text{th}}$ step we perform the mapping
	\begin{equation}
        \begin{cases}
            \ket{0}^{\otimes s}\mapsto U_{0,0}\ket{0}\ket{0}^{\otimes s-1}&k=0,\\
            \ket{x}\ket{0}^{\otimes s-k}\mapsto \ket{x}U_{k,x}\ket{0}\ket{0}^{\otimes s-k-1}&1\leq k < s,
        \end{cases}
		\label{eq:parallel state preparation}
	\end{equation}
	for all $x\in [2^{k}]$, where $U_{k,x}$ is a single qubit gate such that
        \begin{equation*}
            U_{k,x}\ket{0}=\sqrt{\gamma_x}\ket{0}+e^{i\beta_x/2}\sqrt{1-\gamma_x}\ket{1},
        \end{equation*}
        and we define
	\begin{equation*}
		\gamma_x:=\frac{S\parens*{u,w}}{S\parens*{u,v}}
		,\qquad
		\beta_x:=\theta_w-\theta_u,
	\end{equation*}
	with $u:=2^{s-k}x$, $v:=2^{s-k}x+2^{s-k}$ and $w:=(u+v)/2$.
	A proof of the correctness of this procedure can be found in~\cite{KP17} and~\cite{GR02},
	except that here we have an additional accumulative phase $\beta_x$ to handle the complex amplitudes.\footnote{An alternative way to handle the complex amplitudes is: in the construction of the state preparation unitary, prepare the state $\frac{1}{\sqrt{\alpha}}\sum_{r\in [R]}\sqrt{\abs*{a_r}}\ket{r}$;
	and in the LCU, add all the phases $e^{i\theta_r}$ for $r\in [R]$ in parallel.}

    Let us compute the gate complexity.
	To achieve a total precision $\epsilon$,
	each step of \eqref{eq:parallel state preparation} needs to be $\parens*{\epsilon/s}$-precise.
	Each step of \eqref{eq:parallel state preparation} consists of two controlled rotations,
	$\mathrm{C}_b$-$R_Y$ and $\mathrm{C}_b$-$R_Z$,
	which are conditioned on elementary arithmetic functions of $\gamma_x$ and $\beta_x$ respectively,
	where $b=O\parens*{\log\parens*{s/\epsilon}}=O\parens*{\log\parens*{1/\epsilon}}$ is the number of bits for the required precision.
	While $\beta_x$ can be easily computed by an arithmetic-depth-efficient quantum circuit,
	as it is an elementary arithmetic function of $a_u$ and $a_w$, which are arithmetic-depth-efficiently computable from $x$;
	$\gamma_x$ needs to be computed from the more complicated $S\parens*{u,v}$.

	Note that one can compute all $S\parens*{j,l}$ required at once in an ancilla space.
	This is done by \begin{itemize}\item first computing $\parens*{2^{-b}/R}$-precise $a_r$ for all $r$
	by arithmetic-depth-efficient circuits
	on the input $\ket{0,1,\ldots,R-1}$,
	due to the assumption that $a_r$ is arithmetic-depth-efficiently computable;
	\item then computing $S(j,l)$ in an inductive way analogous to the proof of Lemma~\ref{lmm:sequence of associative operators},
	except that one needs to create a constant number of copies of each $S(j,l)$ by $\mathrm{COPY}^{b+s}$ for future computation.\end{itemize}
	The input state $\ket{0,1,\ldots,R-1}$ can be prepared by an $O\parens*{1}$-depth and $O\parens*{Rs}$-size quantum circuit,
	while the inductive summation procedure requires depth $O\parens*{s\log^2\parens*{b+s}}$ and size $O\parens*{R\parens*{b+s}^4}$
	by Lemma~\ref{lmm:parallel quantum circuit for elementary arithmetics}, Lemma~\ref{lmm:parallel copy} and the proof of Lemma~\ref{lmm:sequence of associative operators}.
	Thus all $S(j,l)$ required can be computed by an $O\parens*{\log^3\log\parens*{1/\epsilon}}$-depth and $O\parens*{\log^5\parens*{1/\epsilon}}$-size quantum circuit.

	By Lemma~\ref{lmm:parallel controlled Z-rotation} 
	the controlled rotations in each step of \eqref{eq:parallel state preparation}
	can be implemented by an $O\parens*{\log b}$-depth and $O\parens*{b}$-size quantum circuit,
	once the rotation angles,
	some elementary arithmetic functions of $\gamma_x,\beta_x$ are computed 
	by an $O\parens*{\log^2 b}$-depth and $O\parens*{b^4}$-size quantum circuit.
	Summing up the $s$ steps gives the final complexity.
\end{proof}

The next lemma
generalizes Lemma~8 in~\cite{CKS17} to the block-encoding case.
\begin{lemma}[Parallel implementation of $W$]
	\label{lmm:parallel implementation of W}
	The unitary $W$ in \eqref{eq:W} can be implemented by a quantum circuit
	of depth $\sum_{j\in [s]}\mathtt{Dep}\parens*{W_{2^j}}$ and
	size $\sum_{j\in [s]}\mathtt{Siz}\parens*{W_{2^j}}$,
	where $\mathtt{Dep}$ and $\mathtt{Siz}$ refer to the depth and size cost of a subroutine.
\end{lemma}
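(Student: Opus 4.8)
The plan is to unfold the definition of $W$ in \eqref{eq:W} and observe that it is a product of controlled subroutines, one for each bit position $j \in [s]$, where all the $j$-blocks act on \emph{disjoint} sets of qubits and hence can be arranged in parallel. Writing $r$ in binary as $r = \sum_{j \in [s]} r_j 2^j$, the conditional operation $\ket{r}\!\bra{r} \otimes \prod_{j \in [s]} (\bbone_{b-b_{2^j}} \otimes W_{2^j})^{r_j}$ can be realized by, for each $j$, applying $\bbone_{b-b_{2^j}} \otimes W_{2^j}$ controlled on the single qubit $\ket{r_j}$ (the $j^{\text{th}}$ qubit of the index register). So $W = \prod_{j \in [s]} C_j$, where $C_j$ is the unitary that applies $W_{2^j}$ (on its designated $b_{2^j}$ ancilla qubits, tensored with identity on the rest of the $b$-qubit block) controlled on the $j^{\text{th}}$ index qubit. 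The key point is that the $C_j$ act on disjoint registers: $C_j$ touches only index qubit $j$ and the ancilla block allotted to $W_{2^j}$ (which is disjoint by the choice $b := \sum_{j \in [s]} b_{2^j}$, so the block-encoding ancillas for different $j$ do not overlap).

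First I would make this disjointness precise: partition the $b$-qubit ancilla register as $\bigotimes_{j \in [s]} \calH_{b_{2^j}}$, so that $\bbone_{b - b_{2^j}} \otimes W_{2^j}$ is exactly "$W_{2^j}$ on block $j$, identity elsewhere." Then $C_j$ and $C_{j'}$ commute for $j \neq j'$ since they act nontrivially on disjoint qubits, so the product $\prod_j C_j$ is order-independent, and — crucially — all $s$ of them can be placed in the same circuit layers. Concretely, stack the circuits for the controlled-$W_{2^j}$ side by side; the resulting circuit has depth $\max_{j \in [s]} \mathtt{Dep}(\text{controlled-}W_{2^j})$ and size $\sum_{j \in [s]} \mathtt{Siz}(\text{controlled-}W_{2^j})$. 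Then I would note that adding a single control qubit to $W_{2^j}$ costs at most a constant factor overhead in depth and size (by the paper's convention, controlled versions of the building blocks are freely available, or one controls each gate), so $\mathtt{Dep}(\text{controlled-}W_{2^j}) = O(\mathtt{Dep}(W_{2^j}))$ and similarly for size.

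There is, however, a mismatch between what I sketched and the statement: the lemma claims depth $\sum_{j\in[s]} \mathtt{Dep}(W_{2^j})$, not $\max_j \mathtt{Dep}(W_{2^j})$. I would reconcile this by simply not insisting on the optimal parallel arrangement — one can implement $W$ as the \emph{sequential} product $C_{s-1} \cdots C_1 C_0$, in which case the depth is literally $\sum_{j \in [s]} \mathtt{Dep}(\text{controlled-}W_{2^j}) \le \sum_{j\in[s]} O(\mathtt{Dep}(W_{2^j}))$ and the size is likewise $\sum_{j\in[s]} \mathtt{Siz}(\text{controlled-}W_{2^j})$; the stated bound is an upper bound and is all that later sections need (since $s = O(\log R)$ is small, the extra factor is harmless). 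Thus the proof is essentially: decompose $W$ by bits, identify each factor as a controlled block-encoding on its own register, control-ify each $W_{2^j}$ at constant overhead, and compose.

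The main obstacle I anticipate is bookkeeping rather than conceptual: making sure the block-encoding ancilla registers for the different powers $W_{2^j}$ really are disjoint (this is exactly why $b$ is defined as $\sum_j b_{2^j}$ rather than reused), and verifying that the identity padding $\bbone_{b - b_{2^j}}$ in \eqref{eq:W} is consistent across the $j$'s so that the product is well-typed. Once the register layout is fixed, the depth/size accounting is immediate from the fact that tensoring and controlling with auxiliary qubits changes gate counts by at most a constant factor.
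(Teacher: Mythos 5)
Your final circuit --- the sequential product $C_{s-1}\cdots C_1C_0$ of singly-controlled $W_{2^j}$'s, with a constant-factor overhead for adding each control --- is exactly the paper's construction (its Figure for $W$ is the phase-estimation-style cascade of controlled $W_{2^0},\ldots,W_{2^{s-1}}$), and the depth and size accounting you give for that arrangement is correct. However, the reasoning you use to get there contains a genuine error: the claim that the $C_j$ act on disjoint registers, commute, and could in principle be stacked into the same layers to achieve depth $\max_{j\in[s]}\mathtt{Dep}\parens*{W_{2^j}}$. Each $W_{2^j}$ is a block-encoding of $K^{2^j}$, so it acts on the \emph{shared} system register $\ket{\phi}$ in addition to its own $b_{2^j}$-qubit ancilla block; only the ancilla blocks are disjoint (that is what $b=\sum_{j\in[s]}b_{2^j}$ buys you, via Lemma~\ref{lmm:product of block-encoded Hamiltonian powers}). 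This sharing is not incidental bookkeeping --- it is the entire point: the product $\prod_{j\in[s]}\parens*{\bbone_{b-b_{2^j}}\otimes W_{2^j}}^{r_j}$ block-encodes $K^r$ precisely because the encoded operators $K^{2^j}$ are \emph{multiplied} on the common system register. If the $W_{2^j}$ really acted on pairwise disjoint registers, their product would be a tensor product of block-encodings, not a block-encoding of $K^r$, and the lemma's target object would not be implemented at all.

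Consequently the sequential arrangement is not a ``non-optimal choice'' you are free to make for convenience of matching the stated bound; it is forced, and depth $\max_j\mathtt{Dep}\parens*{W_{2^j}}$ is not achievable by this route. (Even granting that the encoded operators $K^{2^j}$ commute, the unitaries $W_{2^j}$ need not, and in any case two gates touching a common qubit cannot occupy the same circuit layer regardless of commutation.) So you should delete the disjointness/parallelization paragraph and present the cascade directly: control each $W_{2^j}$ on the $j^{\text{th}}$ index qubit at constant overhead, apply them in sequence on the shared system register with disjoint ancilla blocks, and read off $\sum_{j\in[s]}\mathtt{Dep}\parens*{W_{2^j}}$-depth and $\sum_{j\in[s]}\mathtt{Siz}\parens*{W_{2^j}}$-size. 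That is the paper's (one-line) proof.
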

\begin{proof}
	A parallel quantum circuit implementation of $W$ is shown in Figure~\ref{fig:parallel quantum circuit for W},
	where we omit the identity operator $\bbone_{b-b_{2^j}}$ on the ancilla space for simplicity of notations.
	Since the controlled version of $W_{2^j}$ has the same complexity as $W_{2^j}$ up to a constant factor,
	the final complexity then follows easily.
	\begin{figure}[!ht]
		\centering
		\begin{quantikz}[row sep=0.4cm, column sep=0.9cm]
			\lstick{$\ket{r_0}$}           & \ctrl{4}                     & \qw                          & \qw                              & \qw                              & \qw\\
			\lstick{$\ket{r_1}$}           & \qw                          & \ctrl{3}                     & \qw                              & \qw                              & \qw\\
			\lstick{$\vdots$}              &                              &                              & \ddots                           &                                  & \\
			\lstick{$\ket{r_{s-1}}$}       & \qw                          & \qw                          & \qw                    & \ctrl{1}                         & \qw\\
			\lstick{$\ket{\phi}$}          & \gate[2,bundle={2}]{W_{2^0}} & \gate[2,bundle={2}]{W_{2^1}} & \ \ldots\ \qw                    & \gate[2,bundle={2}]{W_{2^{s-1}}} & \qw\\
			\lstick{$\ket{0}^{\otimes b}$} &                              &                              & \ \ldots\ \qwbundle[alternate]{} &                                  & \qwbundle[alternate]{}
		\end{quantikz}
		\caption{A parallel quantum circuit for $W$.}
		\label{fig:parallel quantum circuit for W}
	\end{figure}
\end{proof}

\section{Parallel Hamiltonian Simulation}
\label{sec:parallel_hamiltonian_simulation}

Now we are ready to present a parallel quantum simulation algorithm for uniform-structured Hamiltonians
by assembling the techniques developed in the previous sections.
Following~\cite{BCK15, BCCKS15, CKS17}, we simulate the evolution unitary $e^{-iHt}$ by first 
splitting the time interval $t$ into small segments each of length $\Delta t$,
then approximating the evolution within each segment
by the truncated Taylor series
\begin{equation}
	e^{-iH\Delta t}\approx \sum_{r=0}^{R-1} \frac{\parens*{-iH\Delta t}^r}{r!}.
	\label{eq:truncated Taylor series}
\end{equation}
In our setting, $\Delta t$ should be chosen such that the monomial $\parens*{H\Delta t}^r$ can be obtained from the parallel quantum walk in Section~\ref{sec:parallel_quantum_walk}.
Using the results of Section~\ref{sec:linear_combinations_of_hamiltonian_powers} to combine these walk operators, 
we get a block-encoding of $e^{-iH\Delta t}$,
then we apply these $e^{-iH\Delta t}$ sequentially on the initial state. In this way we are able to simulate $e^{-iHt}$.
To guarantee a close to $1$ success amplitude after each application of $e^{-iH\Delta t}$,
a technique introduced in~\cite{BCCKS14,BCCKS15,BCK15}
called robust oblivious amplitude amplification will be used.

Recall that we assume $\norm*{H}_{\max}=1$.
Since the monomial $\parens*{\frac{H}{md}}^r$ can be obtained from a parallel quantum walk for $m$-uniform-structured Hamiltonians,
we choose $\Delta t:=\frac{1}{md}$.
By rescaling $H$ with a constant factor $1/2$,
the eigenvalue of $H\Delta t$ lies in the interval $[-1/2,1/2]$.
The following lemma shows that to achieve an $\epsilon$-precision in \eqref{eq:truncated Taylor series}
it suffices to choose $R=\ceil{\log \parens*{1/\epsilon}}$,\footnote{Actually $R=O\parens*{\frac{\log(1/\epsilon)}{\log\log\parens*{1/\epsilon}}}$ is also sufficient (for example, see~\cite{BCCKS15}). 
Nevertheless, for simplicity, here we choose $R=\ceil{\log(1/\epsilon)}$ instead.
According to Lemma~\ref{lmm:parallel Hamiltonian simulation delta t},
this simplified choice only incurs a constant factor
in the depth complexity and a poly-logarithmic factor in the size complexity,
which has a negligible impact on our main results.
}
by taking $z=-iH\Delta t$.

\begin{lemma}
	\label{lmm:choice of R}
	Assume $R:=\ceil{\log \parens*{1/\epsilon}}\geq 4$ w.l.o.g., we have
	\begin{equation*}
		\abs*{e^{z}-\sum_{r=0}^{R-1} \frac{z^r}{r!}}=\abs*{\sum_{r=R}^\infty \frac{z^r}{r!}}\leq \epsilon
	\end{equation*}
	for $z\in \Co$ with $\abs*{z}\leq 1/2$.
\end{lemma}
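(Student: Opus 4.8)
The plan is to bound the tail of the exponential series directly. First I would write $\abs{\sum_{r=R}^\infty z^r/r!} \le \sum_{r=R}^\infty \abs{z}^r/r! \le \sum_{r=R}^\infty (1/2)^r/r!$, using $\abs{z}\le 1/2$ and the triangle inequality. Then I would factor out the first term of the tail, $(1/2)^R/R!$, to get $\sum_{r=R}^\infty (1/2)^r/r! = \frac{(1/2)^R}{R!}\sum_{k=0}^\infty \frac{(1/2)^k \, R!}{(R+k)!}$, and observe that $\frac{R!}{(R+k)!} = \frac{1}{(R+1)(R+2)\cdots(R+k)} \le \frac{1}{(R+1)^k} \le 1$ for all $k\ge 0$. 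Hence the geometric-type sum is bounded by $\sum_{k=0}^\infty (1/2)^k = 2$, giving $\abs{\sum_{r=R}^\infty z^r/r!} \le \frac{2}{2^R R!}$.

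Next I would show $\frac{2}{2^R R!} \le \epsilon$ whenever $R = \ceil{\log(1/\epsilon)} \ge 4$. Since $R\ge 4$ gives $R! \ge 24 > 16 = 2^4$, we have $2^R R! \ge 2^R \cdot 2^4 > 2^{R+1}$, so $\frac{2}{2^R R!} < 2^{-R}$. Finally, $R = \ceil{\log(1/\epsilon)} \ge \log(1/\epsilon)$ implies $2^{-R} \le 2^{-\log(1/\epsilon)} = \epsilon$ (interpreting $\log$ as base-$2$, consistent with the paper's convention elsewhere), which completes the estimate. The first equality in the lemma statement, $e^z - \sum_{r=0}^{R-1} z^r/r! = \sum_{r=R}^\infty z^r/r!$, is just the definition of the Taylor series of $e^z$, which converges everywhere on $\Co$, so nothing further is needed there.

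There is essentially no obstacle here; the only mild care needed is tracking constants so that the $R\ge 4$ hypothesis is genuinely used and the bound lands exactly at $\epsilon$ rather than some constant multiple of it. A slightly cleaner variant, if one prefers, is to bound the tail by the single crude estimate $\sum_{r=R}^\infty (1/2)^r/r! \le (1/2)^R/R! \cdot \frac{1}{1 - 1/(2(R+1))} \le 2(1/2)^R/R!$ and then proceed identically; either route is a few lines. I would present the factored-geometric-series version since it makes the role of $R!$ transparent.
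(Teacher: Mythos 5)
Your proposal is correct and follows essentially the same route as the paper: apply the triangle inequality to reduce to the real tail $\sum_{r\geq R}\abs{z}^r/r!$, show it is at most $2^{-R}$ using $\abs{z}\leq 1/2$ and $R\geq 4$, and conclude via $2^{-R}\leq\epsilon$. The only (immaterial) difference is that the paper bounds the tail by the Lagrange remainder $e^{\xi}\abs{z}^R/R!$ with $0\leq\xi\leq 1/2$, whereas you factor out $(1/2)^R/R!$ and dominate the rest by a geometric series; both yield the same estimate.
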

\begin{proof}
	Since $\sum_{r=R}^\infty \frac{z^r}{r!}$ is absolutely convergent,
	we have $\abs*{\sum_{r=R}^\infty \frac{z^r}{r!}}\leq \sum_{r=R}^\infty \frac{\abs*{z}^r}{r!}$.
	The RHS of the above equation can be bounded by the Taylor remainder
	$\frac{e^{\xi}\abs*{z}^R}{R!}\leq \frac{1}{2^R}\leq \epsilon$,
	where $0\leq \xi\leq \abs*{z}\leq 1/2$.
\end{proof}

Let us analyze the complexity of implementing a block-encoding of $e^{-iH\Delta t}$.

\begin{lemma}
	\label{lmm:parallel Hamiltonian simulation delta t}
	For any $m$-uniform-structured Hamiltonian $H$, there exists a unitary $U_{\Delta t}$ that
	forms an $\parens*{\alpha, R\parens*{2\ceil{\log m}+4n}, \epsilon}$-block-encoding of $e^{-iH\Delta t}$,
	and can be implemented by a quantum circuit of
	\begin{itemize}
		\item 
			depth $O\parens*{\log R}$ and size $O\parens*{R}$
			w.r.t.\ queries to $\calO_H^b$ with $b=O\parens*{R+\log m}$,
		\item
			depth $O\parens*{\log R}$ and size $O\parens*{mR}$ w.r.t.\ queries to $\calO_P$, and
		\item
			depth $O\parens*{\log^2 R\cdot \log^2 n +\log^3 R}$
			and size $O\parens*{mR^2n^4+R\parens*{R+\log m}^4}$ w.r.t.\ gates,
	\end{itemize}
	where $\Delta t:=\frac{1}{md}$, $\alpha < e$ is a constant, and $R=\ceil{\log \parens*{1/\epsilon}}\geq 4$.
\end{lemma}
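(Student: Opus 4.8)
The plan is to realize $U_{\Delta t}$ as the parallel LCU circuit of Section~\ref{sec:linear_combinations_of_hamiltonian_powers} fed with the parallel quantum walks of Section~\ref{sub:general_uniform_structured_hamiltonian}. Put $K:=H/(md)=H\Delta t$; by the rescaling recorded in the paragraph above, $\|K\|\le 1/2$, so Lemma~\ref{lmm:choice of R} (with $z$ ranging over the eigenvalues of $-iK$) gives that the truncated Taylor series $A:=\sum_{r=0}^{R-1}\frac{(-i)^r}{r!}K^r$ satisfies $\|e^{-iH\Delta t}-A\|=O(\epsilon)$ for $R=\ceil{\log(1/\epsilon)}$. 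Hence the coefficient vector $\ba:=\bigl((-i)^r/r!\bigr)_{r\in[R]}$ has $\alpha:=\|\ba\|_1=\sum_{r<R}1/r!<e$, which is the announced scale factor. I would then invoke Corollary~\ref{cor:linear combination of block-encoded Hamiltonian powers} with this $\ba$ and $K$, taking $W_{2^j}:=Q^{(2^j,m)}$ for $j\in[s]$, $s:=\ceil{\log R}$: by Theorem~\ref{thm:complexity of Q^(r,m)} (applicable since $2^j\le R=O(\log(1/\epsilon))=\polylog(1/\epsilon)$) and Lemma~\ref{lmm:(r,m)-parallel quantum walk}, each $W_{2^j}$ is a $(1,b_{2^j},\epsilon')$-block-encoding of $K^{2^j}$ with $b_{2^j}=2^j\ceil{\log m}+2\cdot 2^j n+2\cdot 2^j+n+2$. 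The resulting $U_{\Delta t}=(V^\dagger\otimes\bbone)W(V\otimes\bbone)$ is then an $(\alpha,s+b,\alpha R\delta+\alpha s\epsilon')$-block-encoding of $A$, and, by the triangle inequality on the encoded block together with the truncation error, an $(\alpha,s+b,\epsilon)$-block-encoding of $e^{-iH\Delta t}$ once the inner precisions $\delta$ (of $V$) and $\epsilon'$ (of each walk) are set.

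For the error budget I would choose the state-preparation precision $\delta:=\Theta(\epsilon/(\alpha R))$ and the per-walk precision $\epsilon':=\Theta(\epsilon/(\alpha s))$, so that $\alpha R\delta+\alpha s\epsilon'=O(\epsilon)$; together with the $O(\epsilon)$ truncation error this gives overall precision $\epsilon$ after fixing the constants. Because $R,s=\polylog(1/\epsilon)$, both $1/\delta$ and $1/\epsilon'$ are $\poly(1/\epsilon)$, so the entry oracle is needed only at precision $b=O(\log(m/\epsilon'))=O(R+\log m)$ — the extra $\log m$ being the cost of assembling $\calO_{\tilde{H}}$ from $\calO_H$ (Remark~\ref{rem:calO_tildeH}), already folded into Theorem~\ref{thm:complexity of Q^(r,m)}. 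The ancilla count $s+b=s+\sum_{j\in[s]}b_{2^j}$ is a geometric sum: $\sum_{j\in[s]}b_{2^j}=(2^s-1)(\ceil{\log m}+2n+2)+s(n+2)=O(R(\log m+n))$, and a short check shows this plus $s$ is $\le R(2\ceil{\log m}+4n)$ for $R\ge 4$.

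For the circuit cost I would assemble the three ingredients. The unitary $V$ exists by Lemma~\ref{lmm:parallel state preparation}, since each $a_r=(-i)^r/r!$ is arithmetic-depth-efficiently computable via the factorial primitive of Lemma~\ref{lmm:parallel quantum circuit for elementary arithmetics} (legal since $r\le R=O(b)$); it costs $O(\log^3\log(1/\delta))=O(\log^3 R)$ depth and $O(\log^5(1/\delta))=O(R^5)$ size, with no queries. The operator $W$ is the controlled sequence $W_{2^0},\dots,W_{2^{s-1}}$ of Lemma~\ref{lmm:parallel implementation of W}, so its cost is the $j$-sum of the costs of $Q^{(2^j,m)}$ from Theorem~\ref{thm:complexity of Q^(r,m)}: query depth $\sum_j O(1)=O(\log R)$ and query size $\sum_j O(2^j)=O(R)$ to $\calO_H^b$, resp.\ $\sum_j O(2^j m)=O(mR)$ to $\calO_P$; gate depth $\sum_j O\bigl(\log(2^j)\log^2 n+\log^2\log(m/\epsilon')\bigr)=O(\log^2 R\cdot\log^2 n+\log^3 R)$ and gate size $\sum_j O\bigl(m\,4^j n^4+2^j(R+\log m)^4\bigr)=O(mR^2n^4+R(R+\log m)^4)$. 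Adding $V$'s contribution (dominated by $W$'s) and the negligible control wiring yields exactly the claimed bounds.

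The main obstacle I expect is not any isolated estimate but the disciplined propagation of the error budget through three nested layers of approximation — the Taylor truncation, the $\alpha R$ and $\alpha s$ amplification factors in Corollary~\ref{cor:linear combination of block-encoded Hamiltonian powers}, and the factor-$m$ loss in moving from $\calO_H$ to $\calO_{\tilde{H}}$ — while simultaneously keeping the oracle precision pinned at $b=O(R+\log m)$ bits and checking that every geometric sum over the $s=O(\log R)$ powers $2^j$ collapses to a $\poly(R)$ factor rather than picking up a stray factor of $R$. A secondary, mostly clerical, point of care is tracking the ancilla registers closely enough to land on $R(2\ceil{\log m}+4n)$ rather than merely $O(R(\log m+n))$.
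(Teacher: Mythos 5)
Your proposal is correct and follows essentially the same route as the paper's proof: instantiate Corollary~\ref{cor:linear combination of block-encoded Hamiltonian powers} with $a_r=(-i)^r/r!$, $K=H/(md)$ and $W_{2^j}=Q^{(2^j,m)}$, allocate precision $\Theta(\epsilon/(\alpha R))$ to $V$ and $\Theta(\epsilon/(\alpha s))$ to each walk, and sum the costs via Lemmas~\ref{lmm:parallel state preparation} and~\ref{lmm:parallel implementation of W} together with Theorem~\ref{thm:complexity of Q^(r,m)}. Your explicit geometric-sum accounting of the ancilla register is in fact more careful than the paper, which simply notes that its stated ancilla count is not a tight upper bound.
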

\begin{proof}
	In Corollary~\ref{cor:linear combination of block-encoded Hamiltonian powers},
	take $R=\ceil{\log \parens*{1/\epsilon}}$, $a_r=\frac{\parens*{-i}^r}{r!}$, $K=\frac{H}{md}$ and $W_r=Q^{(r,m)}$.
	Then $\alpha=\sum_{r\in [R]} \abs*{a_r} < e$ is a constant,
	where each $a_r$ is arithmetic-depth-efficiently computable by Lemma~\ref{lmm:parallel quantum circuit for elementary arithmetics}.
	Assume $R=2^s$ w.l.o.g. To achieve a total $\epsilon$-precision of $U_{\Delta t}$,
	in Corollary~\ref{cor:linear combination of block-encoded Hamiltonian powers}, 
	take the precision of $V$ to be $\epsilon/\parens*{2\alpha R}$,
	and the precision of each $W_r$ (i.e., the precision of $Q^{(r,m)}$ in Theorem~\ref{thm:complexity of Q^(r,m)})
	to be $\epsilon/\parens*{2\alpha s}$.
	By Lemma~\ref{lmm:parallel state preparation},
	the state preparation unitary $V$ can be implemented by an $O\parens*{\log^3\log\parens*{1/\epsilon}}$-depth
	and $O\parens*{\log^5\parens*{1/\epsilon}}$-size quantum circuit.
	Combining Lemma~\ref{lmm:parallel implementation of W} and Theorem~\ref{thm:complexity of Q^(r,m)},
	the implementation of $W$ has
	\begin{itemize}
		\item 
			depth $O\parens*{s}$ and size $O\parens*{R}$ w.r.t.\ queries to $\calO_H^b$ with $b=O\parens*{\log \parens*{sm/\epsilon}}$,
		\item
			depth $O\parens*{s}$ and size $O\parens*{mR}$ w.r.t.\ queries to $\calO_P$, and
		\item
			depth $O\parens*{s\bracks*{\log R\cdot\log^2n+\log^2\log\parens*{sm/\epsilon}}}$
			and size $O\parens*{mR^2n^4+R\log^4\parens*{sm/\epsilon}}$ w.r.t.\ gates.
	\end{itemize}
	The final complexity follows from summing up these complexities and the assumption $m=\poly n$.
	Note that the number of ancilla qubits required for the block-encoding is not a tight upper bound.
\end{proof}

To achieve a constant success amplitude after a sequence of $e^{-iH\Delta t}$,
we need to amplify the success amplitude to be close to 1 after each application of $e^{-iH\Delta t}$
with robust oblivious amplitude amplification.
The following lemma is a special case of Lemma~6 in~\cite{BCK15}.

\begin{lemma}[Robust oblivious amplitude amplification]
	\label{lmm:robust oblivious amplitude amplification}
	Given a unitary $U$ that $\parens*{\alpha,s,\epsilon}$-block-encodes a unitary $V$,
	where $\alpha=\arcsin\parens*{\frac{\pi}{2\parens*{2l+1}}}$ for some $l\in \N$,
	one can construct a $\parens*{1,s,O\parens*{\epsilon}}$-block-encoding of $V$
	with $O\parens*{l}$ uses of $U$, $U^\dagger$ and $\bbone-2\Pi_s$,
	where $\Pi_s:=\ket{0}\!\bra{0}_s$ is a projector on the ancilla space of $s$ qubits.
\end{lemma}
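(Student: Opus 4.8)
The plan is to reduce the statement to the standard robust oblivious amplitude amplification result of~\cite{BCK15} (their Lemma~6). First I would set up the notation matching that reference: we are given a unitary $U$ acting on an ancilla register of $s$ qubits together with the system register, such that $U$ is an $(\alpha,s,\epsilon)$-block-encoding of a unitary $V$; concretely, $\norm{(\bra{0}_s\otimes\bbone)U(\ket{0}_s\otimes\bbone)-\alpha^{-1}V}\le\epsilon$ in our convention (or the variant with the ancilla on the right, which is symmetric). The key structural observation is that the parameter $\alpha$ has been chosen to be exactly $\arcsin\!\parens*{\frac{\pi}{2(2l+1)}}$ — wait, more precisely one wants the \emph{sine} of the rotation angle to be $\frac{\pi}{2(2l+1)}$, so that after $l$ rounds of amplitude amplification the total rotation lands exactly on $\pi/2$ and the success amplitude becomes $1$ up to the error $\epsilon$. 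I would make explicit that the ``ideal'' version $\tilde U$ obtained by replacing the upper-left block with the exact $\alpha^{-1}V$ is precisely a state-rotation by this angle on the relevant two-dimensional invariant subspace, and invoke the exact (error-free) oblivious amplitude amplification identity that $l$ iterations of $-\mathcal{A}\tilde U\mathcal{A}\tilde U^\dagger\cdots$ (with $\mathcal{A}=\bbone-2\Pi_s$ the reflection about the good subspace) produce an exact block-encoding of $V$ with amplitude $1$.

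Next I would handle the robustness, i.e.\ the propagation of the $\epsilon$ error. The amplification procedure uses $U$, $U^\dagger$, and $\bbone-2\Pi_s$ a total of $O(l)$ times; replacing each occurrence of $U$ by the ideal $\tilde U$ changes the overall operator by at most $O(l)$ times $\norm{U-\tilde U}$. Here I need the standard fact that $\norm{U-\tilde U}$ is controlled by the block-encoding error: since $U$ and $\tilde U$ agree outside the upper-left block only in the sense of having a sub-block within $\epsilon$, one gets $\norm{U-\tilde U}=O(\epsilon)$ after suitably redefining the off-diagonal blocks of $\tilde U$ (this is exactly the robustness argument in~\cite{BCCKS15,BCK15}). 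Combining, the constructed operator is within $O(l\cdot\epsilon/l)=O(\epsilon)$ — actually one must be slightly careful: the bound in~\cite{BCK15} is stated so that the error does not blow up by the factor $l$; I would simply cite their Lemma~6 for the precise error bookkeeping rather than redo it, since the whole point of the lemma is to quote that result specialized to block-encodings of \emph{unitaries} $V$ (rather than general contractions) and to the specific choice of $\alpha$ that makes the amplitude exactly $1$.

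So the core of the write-up is: (i) observe $\alpha=\arcsin(\pi/(2(2l+1)))$ is the choice that makes $l$ rounds of amplitude amplification rotate the success amplitude exactly onto $1$; (ii) apply the robust oblivious amplitude amplification theorem of~\cite{BCK15} (their Lemma~6), whose hypotheses are met because $V$ is unitary and the amplitude parameter matches, to conclude one obtains a $(1,s,O(\epsilon))$-block-encoding of $V$ using $O(l)$ calls to $U$, $U^\dagger$, and $\bbone-2\Pi_s$. The main obstacle — really the only non-bookkeeping point — is verifying that our block-encoding conventions (and in particular the possibility that the ancilla projector sits on the left or the right, per the abuse of terminology in Definition~\ref{def:block-encoding}) line up with the hypotheses of~\cite{BCK15}; once that identification is made, the result is a direct specialization. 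I do not expect to need any new calculation, so the proof will essentially read: ``This is Lemma~6 of~\cite{BCK15} specialized to the case where $V$ is unitary and $\alpha=\arcsin\parens*{\frac{\pi}{2(2l+1)}}$, so that the amplified amplitude is exactly $1$.''
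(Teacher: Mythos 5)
Your proposal matches the paper's treatment exactly: the paper gives no proof of this lemma, stating only that it is a special case of Lemma~6 in~\cite{BCK15}, which is precisely the reduction and hypothesis-checking you describe. Your hesitation over the parameter is also warranted — as written $\alpha=\arcsin\left(\frac{\pi}{2(2l+1)}\right)<1$ cannot be a block-encoding normalization, and the intended reading is $\alpha=1/\sin\left(\frac{\pi}{2(2l+1)}\right)$, so that the success amplitude $1/\alpha$ equals $\sin\left(\frac{\pi}{2(2l+1)}\right)$ and $2l+1$ reflections rotate it exactly to $1$ (this is also the reading that makes the condition $\alpha\geq\beta$ in Corollary~\ref{cor:robust oblivious amplitude amplification} satisfiable).
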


\begin{corollary}
	\label{cor:robust oblivious amplitude amplification}
	Given a unitary $U$ that $\parens*{\beta,s,\epsilon}$-block-encodes a unitary $V$,
	where $\beta=O\parens*{1}$ is a known constant,
	a $\parens*{1,s+1,O\parens*{\epsilon}}$-block-encoding of $V$ can be implemented by a quantum circuit of
	\begin{itemize}
		\item 
			size $O\parens*{1}$ w.r.t.\ $U$ and $U^\dagger$, and
		\item
			depth $O\parens*{\log s+\log^2\log\parens*{1/\epsilon}}$ and size $O\parens*{s+\log^4\parens*{1/\epsilon}}$ w.r.t.\ gates.
	\end{itemize}
\end{corollary}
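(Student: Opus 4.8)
The plan is to reduce the statement to robust oblivious amplitude amplification (Lemma~\ref{lmm:robust oblivious amplitude amplification}), whose only structural hypothesis on the input is that its subnormalization equals one of the special values appearing in that lemma. These special values form an increasing sequence that is unbounded above, so since $\beta=O(1)$ is a known constant, there is a \emph{constant} index $l_0=O(1)$ whose special value $\alpha_0$ satisfies $\alpha_0\geq\beta$. The first step is therefore a \emph{rescaling} of $U$ up to subnormalization $\alpha_0$. I would introduce one fresh ancilla qubit, apply to it a single-qubit rotation $R$ with $\cos\phi=\beta/\alpha_0\in(0,1]$, then apply $U$ (uncontrolled) to the remaining registers, and call the resulting $(s+1)$-ancilla unitary $\tilde{U}$. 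A short computation in the style of the proof of Corollary~\ref{cor:TdaggerST} shows that post-selecting the fresh qubit together with the $s$ original ancillas of $U$ onto $\ket{0}$ multiplies the ``good'' block of $U$ by $\cos\phi$; hence $\tilde{U}$ is a $\parens*{\alpha_0,s+1,\epsilon}$-block-encoding of $V$. The block-encoding error is not amplified, since $\cos\phi\le1$, and even if $\phi$ is implemented only to precision $O(\epsilon)$ this adds just $O(\epsilon)$ to the error, which the robustness in Lemma~\ref{lmm:robust oblivious amplitude amplification} absorbs.

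The second step is to apply Lemma~\ref{lmm:robust oblivious amplitude amplification} to $\tilde{U}$ with $l=l_0$. This yields a $\parens*{1,s+1,O(\epsilon)}$-block-encoding of $V$ using $O(l_0)=O(1)$ applications of $\tilde{U}$, $\tilde{U}^\dagger$ and the reflection $\bbone-2\Pi_{s+1}$, where $\Pi_{s+1}:=\ket{0}\!\bra{0}_{s+1}$. Each use of $\tilde{U}$ amounts to one use of $U$ plus the fixed rotation $R$, so altogether we use $O(1)$ copies of $U$ and $U^\dagger$, as claimed.

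It remains to account for the extra gates. The reflection factors as $\bbone-2\Pi_{s+1}=X^{\otimes(s+1)}\parens*{\mathrm{C}_{s}\text{-}Z}X^{\otimes(s+1)}$, which by Corollary~\ref{cor:parallel controlled Z gate} has $O(\log s)$-depth and $O(s)$-size, and only $O(1)$ of them are needed. For $R$, to keep the overall error $O(\epsilon)$ I would compute its rotation angle to $b=O\parens*{\log(1/\epsilon)}$ bits by an arithmetic-depth-efficient circuit (Lemma~\ref{lmm:parallel quantum circuit for elementary arithmetics}), apply a $\mathrm{C}_b\text{-}R_Y$ gate (Lemma~\ref{lmm:parallel controlled Z-rotation}), and uncompute, costing $O\parens*{\log^2\log(1/\epsilon)}$-depth and $O\parens*{\log^4(1/\epsilon)}$-size, again with only $O(1)$ copies. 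Summing the two contributions gives the claimed $O\parens*{\log s+\log^2\log(1/\epsilon)}$-depth and $O\parens*{s+\log^4(1/\epsilon)}$-size.

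The only delicate point is the rescaling: one must choose $l_0$ so that $\alpha_0\geq\beta$ (so that $R$ is a genuine rotation) while keeping $l_0=O(1)$ (so the number of amplification rounds, hence the number of uses of $U$, stays constant). Both are simultaneously possible precisely because the admissible subnormalizations of Lemma~\ref{lmm:robust oblivious amplitude amplification} grow without bound while $\beta$ is a fixed $O(1)$ constant; the rest is routine block-encoding error bookkeeping.
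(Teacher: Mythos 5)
Your proposal is correct and follows essentially the same route as the paper: pick the smallest admissible amplification parameter $l$ with subnormalization $\alpha\geq\beta$ (constant since $\beta=O(1)$), dilate $U$ by one ancilla rotation to raise the subnormalization exactly to $\alpha$, apply Lemma~\ref{lmm:robust oblivious amplitude amplification}, and cost the reflection via Corollary~\ref{cor:parallel controlled Z gate} and the rotation via Lemmas~\ref{lmm:parallel quantum circuit for elementary arithmetics} and~\ref{lmm:parallel controlled Z-rotation}. If anything, your single uncompensated rotation $R$ with $\cos\phi=\beta/\alpha_0$ is the cleaner form of the paper's $\tilde{U}=(W^\dagger\otimes\bbone)(\bbone\otimes U)(W\otimes\bbone)$, which as literally written would cancel to $\bbone\otimes U$ since $U$ acts trivially on the fresh qubit.
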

\begin{proof}
	By the assumption,
	$\norm*{V-\beta\parens*{\bra{0}^{\otimes s}\otimes \bbone}U\parens*{\ket{0}^{\otimes s}\otimes \bbone}}\leq \epsilon$.
	Pick the minimum $l$ such that $\alpha:=\arcsin\parens*{\frac{\pi}{2\parens*{2l+1}}}\geq \beta$.
	Let $W$ be a unitary such that
        \begin{equation*}
            W\ket{0}=\frac{1}{\sqrt{\alpha}}\parens*{\sqrt{\beta}\ket{0}+\sqrt{\alpha^2-\beta^2}\ket{1}},
        \end{equation*}
	and let $\tilde{U}:=\parens*{W^\dagger\otimes \bbone}\parens*{\bbone\otimes U}\parens*{W\otimes \bbone}$, 
        then we have
        \begin{equation*}
            \norm*{V-\alpha\parens*{\bra{0}^{\otimes s+1}\otimes \bbone}\tilde{U}\parens*{\ket{0}^{\otimes s+1}\otimes \bbone}}\leq\epsilon.
        \end{equation*}
	Note that $\alpha=O\parens*{1}$,
	by Lemma~\ref{lmm:robust oblivious amplitude amplification} we can construct
	a $\parens*{1,s+1,O\parens*{\epsilon}}$-block-encoding of $V$
	with $O\parens*{1}$ uses of $\tilde{U}$, $\tilde{U}^\dagger$ and $\bbone-2\Pi_{s+1}$.
	To achieve an $O\parens*{\epsilon}$-precision,
	the implementation of the rotation $W$ has depth $O\parens*{\log^2\log\parens*{1/\epsilon}}$ and size $O\parens*{\log^4\parens*{1/\epsilon}}$
	by Lemma~\ref{lmm:parallel controlled Z-rotation} and Lemma~\ref{lmm:parallel quantum circuit for elementary arithmetics}.
	The reflection $\bbone-2\Pi_{s+1}$ can be implemented by a $\mathrm{C}_{s+1}$-$Z$ gate
	with an ancilla qubit,
	which has depth $O\parens*{\log s}$ and size $O\parens*{s}$ by Corollary~\ref{cor:parallel controlled Z gate}.
	The final complexity follows from summing up these complexities.
\end{proof}

Now we can give a precise statement of the main result of this paper. An informal version of this theorem was presented as Theorem \ref{main-informal}. 

\begin{theorem}[Parallel simulation of uniform-structured Hamiltonians]
	\label{thm:parallel simulation for uniform-structured Hamiltonian}
	An $m$-uniform-structured Hamiltonian $H=\sum_{w\in [m]}H_w$ acting on $n$ qubits with each $H_w$ being $d$-sparse,
	can be simulated for time $t$ to precision $\epsilon$ ($\epsilon\leq 0.05$)
	by a quantum circuit of
	\begin{itemize}
		\item 
			depth $O\parens*{\tau\log \gamma}$ and size $O\parens*{\tau\gamma}$
			w.r.t.\ queries to $\calO_H^b$ with $b=O\parens*{\gamma}$,
		\item
			depth $O\parens*{\tau\log \gamma}$ and size $O\parens*{m\tau\gamma}$ w.r.t.\ queries to $\calO_P$, and
		\item
			depth $O\parens*{\tau\parens*{\log^2\gamma\cdot \log^2 n +\log^3 \gamma}}$
			and size $O\parens*{\tau\gamma^2\cdot\parens*{mn^4+\gamma^3}}$ w.r.t.\ gates,
	\end{itemize}
	where $\tau:=mdt$, $\gamma:=\log\parens*{\tau/\epsilon}$.
\end{theorem}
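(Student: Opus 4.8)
The theorem follows by chaining together the block-encoding of $e^{-iH\Delta t}$ from Lemma~\ref{lmm:parallel Hamiltonian simulation delta t}, the amplitude-amplification step of Corollary~\ref{cor:robust oblivious amplitude amplification}, and a standard ``segmentation'' argument. First I would fix $\Delta t := \frac{1}{md}$, so that $\tau = mdt = t/\Delta t$ is precisely the number of time segments, and by the constant rescaling $H\mapsto H/2$ the eigenvalues of $H\Delta t$ lie in $[-1/2,1/2]$, as required for Lemma~\ref{lmm:choice of R} to guarantee that $R=\ceil{\log(1/\epsilon')}$ Taylor terms suffice for per-segment precision $\epsilon'$. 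Here I would set $\epsilon' := \epsilon/(2\tau)$ so that the total error over $\tau$ segments telescopes to $O(\epsilon)$; note $\log(1/\epsilon') = \log(2\tau/\epsilon) = \Theta(\gamma)$, which is where the parameter $\gamma=\log(\tau/\epsilon)$ enters all the complexity bounds.

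**Per-segment implementation.**
Next I would invoke Lemma~\ref{lmm:parallel Hamiltonian simulation delta t} with this $R=\Theta(\gamma)$ to obtain a unitary $U_{\Delta t}$ that is an $(\alpha, O(\gamma(\log m + n)), \epsilon'/2)$-block-encoding of $e^{-iH\Delta t}$ with $\alpha<e$ a known constant, using $O(\log\gamma)$-depth and $O(\gamma)$-size of queries to $\calO_H^b$ with $b=O(R+\log m)=O(\gamma)$, $O(\log\gamma)$-depth and $O(m\gamma)$-size of queries to $\calO_P$, and $O(\log^2\gamma\cdot\log^2 n+\log^3\gamma)$-depth, $O(m\gamma^2 n^4+\gamma(\gamma+\log m)^4)=O(\gamma^2(mn^4+\gamma^3))$-size of gates (absorbing $\log m=O(\log n)$). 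Since $\alpha=O(1)$ is a known constant, Corollary~\ref{cor:robust oblivious amplitude amplification} then converts $U_{\Delta t}$ into a genuine $(1, O(\gamma(n+\log m))+1, O(\epsilon'))$-block-encoding $\widetilde U_{\Delta t}$ of the \emph{unitary} $e^{-iH\Delta t}$ at the cost of $O(1)$ extra uses of $U_{\Delta t}$ and $U_{\Delta t}^\dagger$ plus $O(\log\gamma+\log^2\log(1/\epsilon'))=O(\log\gamma)$ additional gate depth and $O(\gamma+\log^4(1/\epsilon'))=O(\gamma^4)$ additional gate size — all dominated by the cost of $U_{\Delta t}$ itself.

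**Composing the segments and bookkeeping the error.**
Finally I would apply $\widetilde U_{\Delta t}$ sequentially $\tau$ times to the initial state (on the same ancilla register, re-initialized to $\ket{0}$ between applications, or using robust oblivious amplitude amplification across the whole product as in~\cite{BCK15,BCCKS15,CKS17}). Multiplying the depths and sizes by $\tau$ gives exactly the claimed $O(\tau\log\gamma)$-depth / $O(\tau\gamma)$-size of queries to $\calO_H^b$, $O(\tau\log\gamma)$-depth / $O(m\tau\gamma)$-size of queries to $\calO_P$, and $O(\tau(\log^2\gamma\cdot\log^2 n+\log^3\gamma))$-depth / $O(\tau\gamma^2(mn^4+\gamma^3))$-size of gates. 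The error analysis is the routine part: each $\widetilde U_{\Delta t}$ approximates $e^{-iH\Delta t}$ in spectral norm to $O(\epsilon')=O(\epsilon/\tau)$, and since each segment unitary is itself unitary (so errors do not amplify under composition), the telescoping bound $\norm{\prod_{k}\widetilde U_{\Delta t} - (e^{-iH\Delta t})^\tau}\le \tau\cdot O(\epsilon/\tau)=O(\epsilon)$ holds; rescaling the constants (which is why the hypothesis $\epsilon\le 0.05$ is imposed) yields precision $\epsilon$.

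**Main obstacle.**
The mathematically substantive work is all packed into the earlier lemmas — the parallel pre-walk for $m$-uniform-structured Hamiltonians, the parallel LCU, and the parallel state preparation — so at this stage the only real subtlety is \emph{consistent error budgeting across the three nested approximations} (truncation of the Taylor series, imperfect state preparation $V$, imperfect block-encodings $W_r=Q^{(r,m)}$ of the monomials, and the amplitude-amplification robustness), together with verifying that the oracle precision parameter $b=O(\gamma)$ propagated from Lemma~\ref{lmm:parallel Hamiltonian simulation delta t} is enough after it is scaled by $\tau$ in the segmentation. Checking that $b=O(R+\log m)=O(\gamma)$ survives the $\epsilon\mapsto\epsilon/\tau$ rescaling (since $\log(\tau/\epsilon)=\gamma$ already accounts for it) is the one place where one must be slightly careful; everything else is multiplication of the single-segment bounds by $\tau$.
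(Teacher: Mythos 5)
Your proposal is correct and follows essentially the same route as the paper's own proof: fix $\Delta t = 1/(md)$, build the $(\alpha,\cdot,\cdot)$-block-encoding of $e^{-iH\Delta t}$ via Lemma~\ref{lmm:parallel Hamiltonian simulation delta t} with per-segment precision $\epsilon/\tau$, convert it to a $(1,\cdot,O(\epsilon/\tau))$-block-encoding by Corollary~\ref{cor:robust oblivious amplitude amplification}, and compose $\tau$ segments (the paper does this via Lemma~\ref{lmm:product of block-encoded Hamiltonian powers} on fresh ancillas rather than by ``re-initializing'' a register) so that errors add to $O(\epsilon)$. The only piece you omit is the case where $\tau = t/\Delta t$ is not an integer, which the paper handles by simulating the rescaled Hamiltonian $H\Delta t/\tilde{t}$ for the final fractional segment at negligible extra cost; this is a routine technicality rather than a gap in the argument.
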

\begin{proof}
	Let $\Delta t:=1/\parens*{md}$.
	First consider the case when $\tau=t/\Delta t$ is an integer.
	To achieve a total precision $\epsilon$,
	applying Lemma~\ref{lmm:parallel Hamiltonian simulation delta t} followed by Corollary~\ref{cor:robust oblivious amplitude amplification}
	with precision $\epsilon/\tau$
	gives a $\parens*{1,O\parens*{\gamma\parens*{n+\log m}},O\parens*{\epsilon/\tau}}$-block-encoding of $e^{-iH\Delta t}$.
	Repeat the above procedure $\tau$ times; that is, 
	using Lemma~\ref{lmm:product of block-encoded Hamiltonian powers} to multiply these block-encoded $e^{-iH\Delta t}$
	we obtain a $\parens*{1,O\parens*{\tau\gamma\parens*{n+\log m}},O\parens*{\epsilon}}$-block-encoding of $e^{-iHt}$.
	By properly scaling the precision $\epsilon$ we can remove the constant factor in $O\parens*{\epsilon}$
	and implement $e^{-iHt}$ to precision $\epsilon$ with the same overhead up to a constant factor.
	The final complexity follows from summing up these complexities.

	For the case when $\tau$ is not an integer,
	that is, $\tilde{t}:=t-\bracks*{t/\Delta t}\neq 0$,
	we can independently simulate the last segment for time $\tilde{t}$.
	This can be done through simulating $\tilde{H}:=H\Delta t/\tilde{t}$ instead for time $\Delta t$,
	where the oracle $\calO_{\tilde{H}}$ for $\tilde{H}$ is easy to construct from $\calO_H$, 
	with at most $O\parens*{\log^2\gamma}$-depth and $O\parens*{\gamma^4}$-size of overhead for the required precision
	by Lemma~\ref{lmm:parallel quantum circuit for elementary arithmetics}.
	The final complexity is unchanged.
\end{proof}

\section{Lower Bounds}
\label{sec:lower_bounds}

In this section,  we prove Theorem~\ref{thm:lower bounds} in Section~\ref{sub:results}, which gives a lower bound on the gate depth of simulating a uniform-structured Hamiltonian and  implies that
the $\polylog\log\parens*{1/\epsilon}$ factor in the gate depth in Theorem~\ref{thm:parallel simulation for uniform-structured Hamiltonian}
cannot be significantly improved to $o\parens*{\log\log\parens*{1/\epsilon}}$.
Our proof is based on the proof of Theorem 1.2 in~\cite{BCCKS14},
which gives a lower bound that simulating any sparse Hamiltonian to precision $\epsilon$ has size
$\Omega\parens*{\frac{\log\parens*{1/\epsilon}}{\log\log\parens*{1/\epsilon}}}$ w.r.t.\ queries,
as an extension of the ``no-fast-forwarding theorem'' in~\cite{BACS06}.
Their proof basically reduces the problem of computing the parity of $N$ bits (with unbounded error, i.e., with success probability strictly greater than $1/2$),
to simulating a $2$-sparse $2N\times 2N$ Hamiltonian (to a high precision).
Our lower bound is achieved by two simple observations: the Hamiltonian used there is $6$-band, which is actually uniform-structured as shown in Lemma~\ref{lmm:band Hamiltonian is uniform-structured};
and computing the parity of $N$ bits with unbounded error requires depth $\Omega\parens*{\log N}$.

\begin{proof}[Proof of Theorem~\ref{thm:lower bounds}]
    We will show that there exists a uniform-structured Hamiltonian $H$ such that simulating $H$ to precision $\epsilon$
    requires gate depth $\Omega\parens*{\log\log\parens*{1/\epsilon}}$.
    Following~\cite{BCCKS14}, consider a $2N\times 2N$ Hamiltonian $H$ determined by an $N$-bit string $x=x_0\ldots x_{N-1}$,
    such that
    \begin{equation*}
        \bra{j,k}H\ket{j-1,k\oplus x_j}=\bra{j-1,k\oplus x_j}H\ket{j,k}=\frac{\sqrt{\parens*{N-j+1}j}}{N}
    \end{equation*}
    for all $j\in [N]$ and $k\in [2]$,
    where $\oplus$ stands for the XOR operator.
    Note that here $H$ is $6$-band, thus is uniform-structured by Lemma~\ref{lmm:band Hamiltonian is uniform-structured}.
    Also we have $\norm{H}_{\max}\leq 1$, which can be normalized to $1$ with a constant overhead.
    In~\cite{BCCKS14} it is shown that: 
    \begin{enumerate}
        \item
            $\abs*{\bra{N,\text{\small PARITY}\parens*{x}}e^{-iHt}\ket{0,0}}=\abs{\sin\parens*{t/N}^N}$,
            where $\text{\small PARITY}(x)=x_0\oplus\ldots\oplus x_{N-1}$ is the parity of $x$;
        \item
            if $N=\Theta\parens*{\frac{\log\parens*{1/\epsilon}}{\log\log\parens*{1/\epsilon}}}$,
            then there is an unbounded-error algorithm to compute $\text{\small PARITY}(x)$,
            by simulating $H$ for a constant time $t$ to precision $\epsilon$ on the input $\ket{0,0}$,
            followed by a computational basis measurement.
    \end{enumerate}
    
    Take $N=\Theta\parens*{\frac{\log\parens*{1/\epsilon}}{\log\log\parens*{1/\epsilon}}}$.
    To finish the proof,
    it suffices to show that computing $\text{\small PARITY}(x)$ with unbounded error
    requires gate depth $\Omega\parens*{\log N}=\Omega\parens*{\log\log \parens*{1/\epsilon}}$.
    This is trivial because $\text{\small PARITY}(x)$ depends on all $x_j$ for $j\in [N]$,
    while any $o\parens*{\log N}$-depth quantum circuit only involves $o\parens*{N}$ input qubits.
    More precisely, for the sake of contradiction,
    suppose there exists an $o\parens*{\log N}$-depth quantum circuit that takes an input $\ket{x}$ and outputs $\text{\small PARITY}(x)$
    by a measurement on the first output qubit with success probability $>1/2$,
    then there must be an input qubit holding $\ket{x_k}$ for some $k\in [N]$ that is not connected to the output qubit by a path of gates and wires.
    However, $x_k=0$ and $x_k=1$ yield different values of $\text{\small PARITY}(x)$. This gives a contradiction.
\end{proof}

\section{Applications}
\label{sec:applications}

It seems that the parallel quantum algorithm for Hamiltonian simulation developed above can be applied to a wide range of simulation tasks in physics.
As a concrete illustration,
three examples of physical interest are presented in this section.
For each example, we explicitly calculate the total gate complexity of the parallel quantum simulation algorithm for it  and compare with the prior art.
In particular, we calculate the gate cost for implementing the oracle $\calO_H$ and $\calO_P$ in the algorithm. 
Since our choices of $\calO_P$ turn out to be efficiently implementable by quantum gates in these examples,
compared to the commonly chosen oracle $\calO_L$,
we believe that our definition of the oracle $\calO_P$ is more reasonable in these applications.
The results in this section were already summarized in Subsection \ref{sub:results} as Table~\ref{tab:comparison with prior works simulating three models}.

\subsection{Simulation of the Heisenberg Model}
\label{sub:simulation_of_the_heisenberg_model}

Many body localization (MBL) is an intriguing phenomenon in the long-time behaviour of a closed quantum system with disorders and interactions~\cite{NH15,AP17,AL18}. In contrary to the conventional assumption in quantum statistical mechanics that a system coupling to a bath (i.e., a large environment) after a long time will achieve a thermal equilibrium which erases the initial condition, the MBL system as an isolated many-body quantum system resists such thermalization --- in a local subsystem the information of the initial state is remembered forever. While a theoretical understanding of MBL still remains challenging since its introduction by Anderson in 1958~\cite{Anderson58}, tremendous numerical works on various systems have been conducted through recent years (for example, see~\cite{NH15,AL18,AABS19} for reviews). A typical example for numeric studies of MBL is one-dimensional Heisenberg model~\cite{NH15,LLA15,CMNRS18}. Due to the difficulty of simulating many-body dynamics classically, quantum simulations can investigate properties of MBL in larger systems intractable for classical computers.

Following~\cite{CMNRS18},
we consider the problem of  simulating the one-dimensional nearest-neighbor Heisenberg model with a random magnetic field in the $z$ direction.
More concretely, we will simulate an $n$-qubit Hamiltonian
\begin{equation}
	H=\sum_{w\in [n]}\parens*{X_wX_{w+1}+Y_wY_{w+1}+Z_wZ_{w+1}+h_w Z_w}\label{eq:Heisenburg Hamiltonian},
\end{equation}
where $h_w\in [-1,1]$ is chosen uniformly at random,
the subscript $w$ indicates the qubit $w$ that the Pauli matrix acts on,
and $w=n$ is equivalent to $w=0$ by assuming the periodic boundary conditions.

First observe that $H$ is a $(2,n)$-local Hamiltonian
with local term $H_w:=X_wX_{w+1}+Y_{w}Y_{w+1}+Z_wZ_{w+1}+h_wZ_w$ for $w\in [n]$.
The locality of $H_w$ is indicated by an $n$-bit string $s(w)$ with the $u^{\text{th}}$ bit defined as
\begin{equation*}
	s(w)_u:=\bbracks{w=u\vee w+1=u}
\end{equation*}
for all $u\in [n]$, given by the oracle $\calO_P$ as in Lemma~\ref{lmm:local Hamiltonian is m-uniform-structured}.
Then by Lemma~\ref{lmm:local Hamiltonian is m-uniform-structured}, $H$ is $m$-uniform-structured with $m=n$,
thus we can apply Theorem~\ref{thm:parallel simulation for uniform-structured Hamiltonian} to simulate $H$.
For comparison, take the simulation time $t=n$ as in~\cite{CMNRS18}.
Since $\norm*{H}_{\max}\leq\sum_{w\in [n]}\norm*{H_w}_{\max}=O\parens*{n}$,
to simulate $H$ for time $t$, it is equivalent to simulate a normalized $H$ with max norm $\leq 1$ for a rescaled time $\tilde{t}:=\norm{H}_{\max}t=O\parens*{n^2}$.
Thus we can determine the parameters in Theorem~\ref{thm:parallel simulation for uniform-structured Hamiltonian}:  $\tau=O\parens*{n^3}$, $\gamma=O\parens*{\log \parens*{n/\epsilon}}$ and $b=O\parens*{\log\parens*{n/\epsilon}}$.

For the total complexity of the algorithm, 
one should also compute the gate complexities for implementing the oracle $\calO_H^b$ and $\calO_P$.

\begin{itemize}
	\item 
		To implement $\calO_P$,
		that is, to perform the mapping $\ket{w}\ket{0}\mapsto \ket{w}\ket{s(w)}$,
		one can first prepare the state $\ket{0,\ldots,n-1}$ in the second register,
		then make $n$ copies of $\ket{w}$ by applying $\mathrm{COPY}_n^{\ceil{\log n}}$,
		and finally calculate each Boolean function $s(w)_u$ for $u\in [n]$ in parallel followed by garbage cleaning.
		Together with Lemma~\ref{lmm:parallel quantum circuit for elementary arithmetics},
		this can be done by an $O\parens*{\log n}$-depth and $O\parens*{n\log n}$-size quantum circuit.
	\item
		To implement $\calO_H^b$, one needs to first generate $n$ uniform random $h_w\in [-1,1]$ to $b$-bit precision
		in the pre-processing.
		This can be done by an $O\parens*{1}$-depth and $O\parens*{nb}$-size quantum circuit using Hadamard gates followed by single-qubit measurements.
		The pre-processing only needs to be done once.
		Later when a $h_w$ is required each time we can apply a $\mathrm{COPY}^b$ gate to prepare a new copy of it.

		To perform the mapping $\ket{j,k,0}\mapsto \ket{j,k,H_{jk}}$,
		one can calculate an entry $H_{jk}$ by summing up those $\parens*{H_w}_{jk}$ with $\parens*{j,k}\in \mathsf{H}_{w}$.
		Recall that $\bbracks{\parens*{j,k}\in \mathsf{H}_w}$ is arithmetic-depth-efficiently computable for uniform-structured $H$,
		and $\parens*{H_w}_{jk}$ is easy to determine given a copy of $h_w$,
		therefore the above procedure can be implemented by an $O\parens*{\log^2 n+\log b}$-depth and $O\parens*{n^5+nb}$-size quantum circuit
		(in a way analogous to computing $c(j,k)$ in the proof of Lemma~\ref{lmm:walk stage complexity of m-uniform-structured}).
\end{itemize}

Thus, we obtain the following: 
\begin{corollary}[Parallel simulation of the Heisenberg model]
    \label{cor:parallel_simulation_of_the_heisenberg_model}
	The Heisenberg Hamiltonian in \eqref{eq:Heisenburg Hamiltonian} can be simulated for time $t = n$ to precision $\epsilon$
	by an $O\parens*{n^3\log^3 n \cdot \log^3\log\parens*{1/\epsilon}}$-depth 
	and $O\parens*{n^8\log^5\parens*{n/\epsilon}}$-size quantum circuit.
\end{corollary}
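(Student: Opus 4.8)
The plan is to specialize the general bound of Theorem~\ref{thm:parallel simulation for uniform-structured Hamiltonian} to the Heisenberg Hamiltonian, fix the values of the parameters $m,d,\tau,\gamma,b$ for this instance, and then fold in the oracle-implementation costs established just above in order to turn query depth/size into genuine gate depth/size. First I would record the structural facts about $H$ in \eqref{eq:Heisenburg Hamiltonian}: it is a $(2,n)$-local Hamiltonian with clauses $H_w=X_wX_{w+1}+Y_wY_{w+1}+Z_wZ_{w+1}+h_wZ_w$, so by Lemma~\ref{lmm:local Hamiltonian is m-uniform-structured} it is $m$-uniform-structured with $m=n$, and each $H_w$ is $d$-sparse with $d=2^2=4$. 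Since $\norm{H}_{\max}\le\sum_{w\in[n]}\norm{H_w}_{\max}=O(n)$, simulating $H$ for time $t=n$ is, after normalization to $\norm{\cdot}_{\max}=1$, equivalent to simulating a max-norm-$1$ Hamiltonian for a rescaled time $\tilde t=\norm{H}_{\max}\,t=O(n^2)$ (a further constant rescaling as in Section~\ref{sec:parallel_hamiltonian_simulation} only contributes a global phase). Hence in Theorem~\ref{thm:parallel simulation for uniform-structured Hamiltonian} the parameters become $\tau=md\tilde t=O(n^3)$, $\gamma=\log(\tau/\epsilon)=O(\log(n/\epsilon))$, and $b=O(\gamma)=O(\log(n/\epsilon))$.

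With these parameters, Theorem~\ref{thm:parallel simulation for uniform-structured Hamiltonian} yields a circuit with $O(\tau\log\gamma)$-depth and $O(\tau\gamma)$-size of $\calO_H^b$ queries, $O(\tau\log\gamma)$-depth and $O(m\tau\gamma)$-size of $\calO_P$ queries, and $O(\tau(\log^2\gamma\cdot\log^2 n+\log^3\gamma))$-depth and $O(\tau\gamma^2(mn^4+\gamma^3))$-size of gates, counting each query temporarily as one gate. To convert the query costs into gate costs I would substitute the implementations described above: each $\calO_P$ query costs $O(\log n)$-depth and $O(n\log n)$-size of gates, each $\calO_H^b$ query costs $O(\log^2 n+\log b)$-depth and $O(n^5+nb)$-size of gates, and the sampling of the random fields $\{h_w\}$ is a one-off preprocessing of $O(1)$-depth and $O(nb)$-size, which is subdominant.

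Then I would add everything up. The total gate depth is $O(\tau\log\gamma)\cdot O(\log^2 n+\log b)+O(\tau\log\gamma)\cdot O(\log n)+O(\tau(\log^2\gamma\cdot\log^2 n+\log^3\gamma))$, and the total gate size is $O(\tau\gamma)\cdot O(n^5+nb)+O(m\tau\gamma)\cdot O(n\log n)+O(\tau\gamma^2(mn^4+\gamma^3))$. Substituting $\tau=O(n^3)$, $m=n$, $\gamma=b=O(\log(n/\epsilon))$ and hence $\log\gamma=O(\log\log(n/\epsilon))$, the depth collapses to $O(n^3\log^2 n\cdot\log^2\log(n/\epsilon)+n^3\log^3\log(n/\epsilon))$, which is bounded by $O(n^3\log^3 n\cdot\log^3\log(1/\epsilon))$, and the size collapses to $O(n^8\log^2(n/\epsilon)+n^3\log^5(n/\epsilon))$, which is bounded by $O(n^8\log^5(n/\epsilon))$; these are exactly the claimed bounds.

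There is no genuine mathematical obstacle here; the one delicate point will be the asymptotic bookkeeping in the last step. I would need to check that expanding a query into $O(\log^2 n+\log b)$ gate-depth does not spoil the doubly-logarithmic $\epsilon$-dependence — it does not, since $\log b=O(\log\log(n/\epsilon))$ — and that the $O(n^8)$ in the size indeed originates from the term $O(\tau\gamma)\cdot O(n^5)$ together with the $\tau\gamma^2 m n^4$ gate term. I would also have to be explicit that the stated $O(n^3\log^3 n\cdot\log^3\log(1/\epsilon))$ and $O(n^8\log^5(n/\epsilon))$ are deliberately loose upper bounds obtained by absorbing the $n$-part and the $\epsilon$-part of each $\log(n/\epsilon)$ factor separately (using $\log\log(n/\epsilon)=O(\log\log n+\log\log(1/\epsilon))$ and $\log^2\log n=O(\log n)$).
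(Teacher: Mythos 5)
Your proposal is correct and follows essentially the same route as the paper: identify $H$ as a $(2,n)$-local, hence $n$-uniform-structured Hamiltonian, rescale time to $\tilde t=O(n^2)$ so that $\tau=O(n^3)$ and $\gamma=b=O(\log(n/\epsilon))$, invoke Theorem~\ref{thm:parallel simulation for uniform-structured Hamiltonian}, and substitute the stated gate costs of $\calO_P$ and $\calO_H^b$. The asymptotic bookkeeping you flag (absorbing $\log\log(n/\epsilon)$ into $\log^3 n\cdot\log^3\log(1/\epsilon)$ and collecting the size into $O(n^8\log^5(n/\epsilon))$) matches the deliberately loose bounds the paper records.
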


In~\cite{CMNRS18} the performance of different quantum simulation algorithms on this task are compared,
amongst which the asymptotically best one is based on quantum signal processing~\cite{LC17,LC19}, which achieves a gate complexity $O\left(n^3\log n+ \frac{\log \left(1/\epsilon\right)}{\log\log\left(1/\epsilon\right)}\cdot n\log n\right)$. Later a quantum algorithm to simulate lattice Hamiltonians~\cite{HHKL18} shows a better dependence on $n$ in the gate complexity, with depth $O\parens*{n\polylog\parens*{n/\epsilon}}$ and size $O\parens*{n^2\polylog\parens*{n/\epsilon}}$.
We note that in the gate depth all these works have a $\polylog\parens*{1/\epsilon}$ dependence, while Corollary~\ref{cor:parallel_simulation_of_the_heisenberg_model} only contains a $\polylog\log\parens*{1/\epsilon}$ factor.

\subsection{Simulation of the Sachdev-Ye-Kitaev Model}
\label{sub:simulation_of_the_sachdev_ye_kitaev_model}

The Sachdev-Ye-Kitaev (SYK) model~\cite{SY93,Kitaev15}, a simple but important exactly solvable many-body system, has drawn an increasing interest in the condense matter physics and high energy physics communities due to its many striking properties~\cite{SY93,Kitaev15,MS16,PR16,Rosenhaus19} and its potential to have an interesting holographic dual~\cite{Kitaev15,MS16}. Like the MBL problem in Section~\ref{sub:simulation_of_the_heisenberg_model},
numeric studies of a larger SYK model enabled by quantum simulation could extend our understandings about its features and dual interpretation.

Following~\cite{GELdS17,BBN19},
we consider the problem of  simulating the SYK model
evolving under a Hamiltonian
\begin{equation}
	H=\frac{1}{4\cdot 4!} \sum_{p,q,r,s=0}^{2n-1}J_{pqrs}\gamma_p\gamma_q\gamma_r\gamma_s,\label{eq:SYK model}
\end{equation}
where each $J_{pqrs}\sim \calN\parens*{0,\sigma^2}$ is chosen randomly from a normal distribution with variance $\sigma^2=\frac{3! J^2}{\parens*{2n}^3}$ ($J$ is assumed to be a constant),
and $\gamma_p$ are Majorana fermion operators such that $\braces*{\gamma_p,\gamma_q}=2\bbracks*{p=q}\bbone$.
The Majorana operator can be expressed as a tensor product of Pauli matrices by the Jordan-Wigner transformation
\begin{equation}
	\gamma_{p}\mapsto
	\begin{cases}
		Z_0\ldots Z_{p/2-1}X_{p/2},&\text{$p$ is even},\\
		Z_0\ldots Z_{\parens*{p-3}/2}Y_{\parens*{p-1}/2},&\text{$p$ is odd}
	\end{cases}
	\label{eq:Jordan-Wigner Majorana}
\end{equation}
for $p\in [2n]$ (unlike in~\cite{GELdS17}, our index starts from $0$),
where as usual the subscript of a Pauli matrix indicates the qubit it acts on.
Now $H$ can be expressed as a Pauli sum on $n$ qubits:
\begin{equation}
	H=\alpha \sum_{w\in [2n]^4} J_{w} H_w\label{eq:SYK qubit Hamiltonian}
\end{equation}
where $\alpha$ is a constant, $J_w$ is chosen randomly from a normal distribution,
and each $H_w$ is a tensor product of Pauli matrices.

It can be seen by Lemma~\ref{lmm:sum of tensor products of Pauli matrices} that the  SYK Hamiltonian in \eqref{eq:SYK qubit Hamiltonian} is $m$-uniform-structured
 with $m=(2n)^4$,
thus we can apply Theorem~\ref{thm:parallel simulation for uniform-structured Hamiltonian} to simulate it.
Note that
\begin{equation*}
	\norm*{H}_{\max}\leq \sum_{w\in [2n]^4}\alpha\norm*{J_w H_w}_{\max}\leq
	\sum_{w\in [2n]^4}\alpha\E\bracks*{\abs*{J_w}}=\alpha\parens*{2n}^4\cdot \frac{\sqrt{2}\sigma}{\sqrt{\pi}}=O\parens*{n^{2.5}}
\end{equation*}
where we use $J_w\sim \calN\parens*{0,\sigma^2}$ with $\sigma^2=O\parens*{1/n^3}$.
Thus, to simulate $H$ for time $t$,
it is equivalent to simulate its normalized Hamiltonian with max norm $\leq 1$ for a rescaled time
$\tilde{t}:=\norm{H}_{\max}t=O\parens*{n^{2.5}t}$.
We can determine the parameters in Theorem~\ref{thm:parallel simulation for uniform-structured Hamiltonian}:   $\tau=O\parens*{n^{6.5}t}$, $\gamma=O\parens*{\log\parens*{nt/\epsilon}}$ and $b=O\parens*{\log\parens*{nt/\epsilon}}$.

For the total complexity of the algorithm,
we starts from defining a function $\calP:[2n]^4\rightarrow \braces*{1,i,-1,-i}\times\braces*{\bbone,X,Y,Z}^n \simeq [4]^{n+1}$
that maps $w\in [m]$ to the Pauli string (with a global phase) of $H_w$.
For example, if $H_w=-iX\otimes Z\otimes Z$ then $\calP(w)=\parens*{-i,X,Z,Z}$.
One can also write the Jordan-Wigner transformation in \eqref{eq:Jordan-Wigner Majorana}
as a function $\calJ:[2n]\rightarrow \braces*{\bbone,X,Y,Z}^n\simeq [4]^n$,
which can be computed by an $O\parens*{\log n}$-depth and $O\parens*{n\log n}$-size quantum circuit.
To see this, that is, to perform the mapping $\ket{p}\ket{0}\mapsto \ket{p}\ket{\calJ\parens*{p}}$,
we can first prepare the state $\ket{0,\ldots, n-1}$ in the second register,
then make $n$ copies of $\ket{p}$,
and finally compute the $q^{\text{th}}$ bit of $\calJ\parens*{p}$:
\begin{equation*}
	\calJ\parens*{p}_q:=
	\begin{cases}
		3\bbracks{q<p/2}+\bbracks{q=p/2},&\text{$p$ is even},\\
		3\bbracks{q<(p-1)/2}+2\bbracks{q=(p-1)/2},&\text{$p$ is odd}
	\end{cases}
	\in [4]
\end{equation*}
for all $q\in [n]$ in parallel followed by garbage cleaning.
By Lemma~\ref{lmm:parallel quantum circuit for elementary arithmetics},
this can be done by an $O\parens*{\log n}$-depth and $O\parens*{n\log n}$-size quantum circuit.
Since for all $(p,q,r,s)\in [2n]^4$, the function $\calP\parens*{p,q,r,s}$ can be computed from
$\calJ\parens*{p},\calJ\parens*{q},\calJ\parens*{r},\calJ\parens*{s}$
by calculating Pauli algebra on each entry of these $n$-tuples
and gathering up the global phases,
$\calP\parens*{w}$ can be computed by an $O\parens*{\log n}$-depth and $O\parens*{n\log n}$-size quantum circuit.

Now we will determine the gate complexities for implementing the oracle $\calO_H^b$ and $\calO_P$: 

\begin{itemize}
	\item 
		To implement $\calO_P$, note that $s(w)$ in Lemma~\ref{lmm:sum of tensor products of Pauli matrices}
		is easily computable from $\calP(w)$ by function restriction, thus $\calO_P$ can be implemented by an
		$O\parens*{\log n}$-depth and $O\parens*{n\log n}$-size quantum circuit.
	\item
		To implement $\calO_H^b$, one needs to first generate $m$ independent $J_w\in \calN\parens*{0,\sigma^2}$ to $b$-bit precision
		in the pre-processing.
		This can be done by the Box-Muller transform~\cite{BM58},
		which derives two independent standard normally distributed variables from two independent uniformly distributed variables in $[0,1]$
		by elementary arithmetic.
		Thus it suffices to generate $m$ uniformly random variables in $[0,1]$ to $b$-bit precision
		by an $O\parens*{1}$-depth and $O\parens*{n^4b}$-size quantum circuit using Hadamard gates followed by single-qubit measurements,
		and then use Lemma~\ref{lmm:parallel quantum circuit for elementary arithmetics}
		to perform the Box-Muller transform with a scaling $\sigma$
		by an $O\parens*{\log^2 b}$-depth and $O\parens*{n^4b^4}$-size quantum circuit.
		The pre-processing only needs to be done once.
		Later when a $J_w$ is required each time we can apply a $\mathrm{COPY}^b$ gate to make a new copy of it.

		We perform the mapping $\ket{j,k,0}\mapsto \ket{j,k,H_{jk}}$
		in a way similar to the one in Section~\ref{sub:simulation_of_the_heisenberg_model};
		that is, sum up those $\parens*{H_w}_{jk}$ with $\parens*{j,k}\in \mathsf{H}_w$.
		Recall that each $\bbracks{\parens*{j,k}\in \mathsf{H}_w}$ is arithmetic-depth-efficiently computable for uniform-structured $H$,
		and $\parens*{H_w}_{jk}$ is easy to compute from the function $\calP(w)$ and a copy of $J_w$. 
		Therefore,  the above procedure can be implemented by an $O\parens*{\log^2 n +\log b}$-depth and $O\parens*{n^8+n^4b}$-size quantum circuit
		(in a way analogous to the proof of Lemma~\ref{lmm:walk stage complexity of m-uniform-structured}).
\end{itemize}

Applying Theorem~\ref{thm:parallel simulation for uniform-structured Hamiltonian},  we obtain the following: 

\begin{corollary}[Parallel simulation of the SYK model]
    \label{cor:parallel_simulation_of_the_SYK_model}
	The SYK model defined in \eqref{eq:SYK model} can be simulated for time $t$ to precision $\epsilon$
	by an $O\parens*{n^{6.5}\log^3 n\cdot t\log^3\log\parens*{t/\epsilon}}$-depth
	and $O\parens*{n^{14.5}t\log^5\parens*{nt/\epsilon}}$-size quantum circuit.
\end{corollary}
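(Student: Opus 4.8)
The plan is to obtain the corollary by a direct substitution into Theorem~\ref{thm:parallel simulation for uniform-structured Hamiltonian}, using the facts established in the discussion above: that the SYK Hamiltonian written in the form~\eqref{eq:SYK qubit Hamiltonian} is $m$-uniform-structured with $m=\parens*{2n}^4$ (by Lemma~\ref{lmm:sum of tensor products of Pauli matrices}); that after normalizing so that $\norm*{H}_{\max}\leq 1$ and rescaling the simulation time to $\tilde t=O\parens*{n^{2.5}t}$, each summand $H_w$ is a $1$-sparse Pauli string so $d=1$, whence $\tau=mdt=O\parens*{n^{6.5}t}$, $\gamma=\log\parens*{\tau/\epsilon}=O\parens*{\log\parens*{nt/\epsilon}}$ and $b=O\parens*{\gamma}$; and that the two oracles admit the explicit gate implementations computed just above the corollary.

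First I would invoke Theorem~\ref{thm:parallel simulation for uniform-structured Hamiltonian} with these parameters. Counting each oracle query temporarily as one gate, it provides a quantum circuit with gate depth $O\parens*{\tau\parens*{\log^2\gamma\cdot\log^2 n+\log^3\gamma}}$ and gate size $O\parens*{\tau\gamma^2\parens*{mn^4+\gamma^3}}$, which additionally uses $O\parens*{\tau\log\gamma}$-depth and $O\parens*{\tau\gamma}$-size of queries to $\calO_H^b$, and $O\parens*{\tau\log\gamma}$-depth and $O\parens*{m\tau\gamma}$-size of queries to $\calO_P$.

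Next I would compose this with the oracle implementations. Since we work in the parallel-query model, one layer of parallel queries to $\calO_P$ (resp.\ to $\calO_H^b$) is replaced by the $O\parens*{\log n}$-depth, $O\parens*{n\log n}$-size circuit (resp.\ the $O\parens*{\log^2 n+\log b}$-depth, $O\parens*{n^8+n^4 b}$-size circuit) constructed above, after the one-time Box--Muller preprocessing that produces the random couplings $J_w$ in $O\parens*{\log^2 b}$-depth and $O\parens*{n^4 b^4}$-size. Multiplying the query depth by the per-query gate depth, the query size by the per-query gate size, adding the contributions of both oracles, of the preprocessing, and of the intrinsic gate cost of the algorithm, and then simplifying with $m=O\parens*{n^4}$, $\gamma=O\parens*{\log\parens*{nt/\epsilon}}$, $b=O\parens*{\gamma}$ and keeping only the dominant terms, yields the clean (and deliberately loose) bounds $O\parens*{n^{6.5}\log^3 n\cdot t\log^3\log\parens*{t/\epsilon}}$ on the gate depth and $O\parens*{n^{14.5}t\log^5\parens*{nt/\epsilon}}$ on the gate size claimed in the corollary.

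I do not expect a real obstacle: the substantive work --- that the SYK Hamiltonian is uniform-structured and that both of its oracles have the stated low-depth, moderate-size implementations --- has already been carried out in the paragraphs preceding the corollary, so what remains is bookkeeping. The only points needing a little care are (i) correctly interleaving the parallel-query complexity with the per-query gate cost, rather than treating oracles as unit-cost gates; (ii) checking that the preprocessing that samples the $J_w$'s is done once and is asymptotically dominated by the rest of the circuit; and (iii) verifying that the displayed expressions are genuine upper bounds after absorbing lower-order polynomial-in-$n$ and poly-logarithmic factors into the stated ones.
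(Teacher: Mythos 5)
Your proposal is correct and follows essentially the same route as the paper: establish that the Jordan--Wigner form of the SYK Hamiltonian is $m$-uniform-structured with $m=(2n)^4$, normalize and rescale time to get $\tau=O\parens*{n^{6.5}t}$, $\gamma=O\parens*{\log\parens*{nt/\epsilon}}$, $b=O\parens*{\gamma}$, substitute into Theorem~\ref{thm:parallel simulation for uniform-structured Hamiltonian}, and then replace each parallel query layer by the explicit oracle circuits (with the one-time Box--Muller preprocessing for the $J_w$'s) before absorbing lower-order terms. The bookkeeping you describe --- query depth times per-query gate depth, query size times per-query gate size, plus the intrinsic gate cost --- is exactly what the paper's (largely implicit) derivation of the corollary amounts to.
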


The prior best algorithm based on asymmetric qubitization~\cite{BBN19} for this task has gate complexity $O\parens*{n^{3.5}t+n^{2.5}t\polylog\parens*{n/\epsilon}}$, which improves the product-formula-based algorithm~\cite{GELdS17} with gate complexity $O\parens*{n^{10}t^2/\epsilon}$. Later work~\cite{XSSS20} proposes a sparse SYK model also of physical interest and gives a simulation algorithm based on qubitization with a $\polylog\parens*{1/\epsilon}$ dependence in the gate complexity. Compared to these works, by introducing parallelism  Corollary~\ref{cor:parallel_simulation_of_the_SYK_model} only has a $\polylog\log\parens*{1/\epsilon}$ dependence in the gate depth.

\subsection{Simulation of Quantum Chemistry in Second Quantization}
\label{sub:simulation_of_quantum_chemistry_in_second_quantization}

One of the most attractive prospects of quantum simulation is in quantum chemistry to study the static and dynamic properties of chemicals~\cite{YWBTA14,CRODJKKMPSSVA19,BBMC20}. Prior works on quantum simulation algorithm for chemistry mainly focus on exploiting the special structure of molecular Hamiltonians (e.g.\ based on first or second quantization, using Gaussian orbital bases or plane wave bases, etc.) to obtain better complexities (see~\cite{BBMC20} for a review). Compared to these fault-tolerant quantum algorithms, diverse variational quantum algorithms to circumvent a direct simulation (e.g.\ variational quantum eigensolver~\cite{PMSYZLAO14}) also have been explored in recent years due to their potential for immediate applications in the NISQ era~\cite{Google20}. 

Following~\cite{BBKWLA16}
we consider the problem of  simulating a molecular electronic structure Hamiltonian in the second-quantized form.
In this form, we will simulate a Hamiltonian
\begin{equation}
	H=\sum_{p,q\in [n]}h_{pq}a_p^\dagger a_q + \frac{1}{2} \sum_{p,q,r,s\in [n]}h_{pqrs}a_p^\dagger a_q^\dagger a_r a_s
	\label{eq:chemistry fermion}
\end{equation}
where $n$ represents the number of spin orbitals, $h_{pq}, h_{pqrs}$ are one-electron and two-electron integrals,
and $a_p^\dagger, a_p$ are fermionic creation and annihilation operators satisfying the relations
\begin{equation*}
	\braces*{a_p^\dagger, a_q}=\bbracks*{p=q}\bbone,\qquad \braces*{a_p^\dagger, a_q^\dagger}=\braces*{a_p,a_q}=0.
\end{equation*}
As in the ``database'' algorithm~\cite{BBKWLA16},
we assume that these $b$-bit precise $h_{pq}$ and $h_{pqrs}$ are precomputed and stored in a database; 
for example, an $O\parens*{(nb)^4}$-size quantum-read/classical-write RAM (QCRAM),
to which one quantum access can be done by an $O\parens*{\log \parens*{nb}}$-depth and $O\parens*{(nb)^4}$-size quantum circuit.

One can apply Jordan-Wigner transformation to the creation and annihilation operators $a_p^\dagger,a_p$
to obtain Hamiltonians acting on qubits:
\begin{align}
	a_p^\dagger &\mapsto \frac{1}{2}Z_0\ldots Z_{p-2}\parens*{X_{p-1}-iY_{p-1}}\label{eq:creation to qubit}\\
	a_p&\mapsto \frac{1}{2} Z_0\ldots Z_{p-2}\parens*{X_{p-1}+iY_{p-1}}
	\label{eq:annihilation to qubit}
\end{align}
for $p\in [n]$,
where as usual the subscript of a Pauli matrix indicates the qubit it acts on.
Note that each resulting qubit Hamiltonian in \eqref{eq:creation to qubit} and \eqref{eq:annihilation to qubit}
is a sum of two tensor products of Pauli matrices,
hence by splitting them and applying the transformation to \eqref{eq:chemistry fermion}
we obtain $H$ as a Pauli sum on $n$ qubits:
\begin{equation}
	H=\sum_{w\in [m]}h_w H_w\label{eq:chemistry qubit},
\end{equation}
where $m=O\parens*{n^4}$,
each $h_w$ is some $h_{pq}$ or $h_{pqrs}$ in \eqref{eq:chemistry fermion} (up to a constant factor),
and each $H_w$ is a tensor product of Pauli matrices.
Here we omit the explicit mapping from 
the indices $p,q,r,s$ in \eqref{eq:chemistry fermion}
to the index $w$ in \eqref{eq:chemistry qubit},
but mention that the mapping can be efficiently performed
by an $O\parens*{1}$-depth quantum circuit.

Similar to the SYK model, we see by Lemma~\ref{lmm:sum of tensor products of Pauli matrices} that  
the molecular Hamiltonian $H$ in \eqref{eq:chemistry qubit} is $m$-uniform-structured with $m=O\parens*{n^4}$.
According to~\cite{BBKWLA16}, we have $\sum_{w\in [m]}\abs*{h_w}=O\parens*{n^4}$. 
Thus the max norm of $H$ is bounded by 
\begin{equation*}
    \norm{H}_{\max}\leq \sum_{w\in [m]}\norm{h_w H_w}_{\max}\leq \sum_{w\in [m]}\abs*{h_w}=O\parens*{n^4}.
\end{equation*}
To simulate $H$ for time $t$ it is equivalent to simulate its normalized Hamiltonian for a rescaled time $\tilde{t}:=\norm{H}_{\max}t=O\parens*{n^4t}$.
The parameters in Theorem~\ref{thm:parallel simulation for uniform-structured Hamiltonian} can be determined as follows:  $\tau=O\parens*{n^8 t}$, $\gamma=O\parens*{\log\parens*{nt/\epsilon}}$ and $b=O\parens*{\log \parens*{nt/\epsilon}}$
.

Now let us compute the gate complexities for implementing the oracle $\calO_H^b$ and $\calO_P$:

\begin{itemize}
	\item
		The construction of $\calO_P$ is similar to the one in Section~\ref{sub:simulation_of_the_sachdev_ye_kitaev_model}.
		Here we will omit the details and claim that it can be implemented by an $O\parens*{\log n}$-depth and $O\parens*{n\log n}$-size quantum circuit.
	\item
		To implement $\calO_H^b$, that is, to perform the mapping $\ket{j,k,0}\mapsto \ket{j,k,H_{jk}}$,
		we sum up those $\parens*{H_{w}}_{jk}$ with $\parens*{j,k}\in \mathsf{H}_w$ to obtain $H_{jk}$.
		Recall that $\bbracks{\parens*{j,k}\in \mathsf{H}_w}$ is arithmetic-depth-efficiently computable for uniform-structured $H$.
		Also $\parens*{H_{w}}_{jk}$ is easy to compute given: a copy of $h_w$ which can be read out from the database
		by an $O\parens*{\log \parens*{nb}}$-depth and $O\parens*{(nb)^4}$-size quantum circuit;
		together with a string characterizing the Pauli string of $H_w$ (like the function $\calP(w)$ for the SYK model)
		computable by an $O\parens*{\log n}$-depth and $O\parens*{n\log n}$-size quantum circuit.
		Therefore the above procedure can be implemented by
		an $O\parens*{\log^2 n+\log b}$-depth and $O\parens*{n^8 b^4}$-size quantum circuit
		(in a way analogous to the proof of Lemma~\ref{lmm:walk stage complexity of m-uniform-structured}).
\end{itemize}

As an application of Theorem~\ref{thm:parallel simulation for uniform-structured Hamiltonian}, we have the following: 

\begin{corollary}[Parallel simulation of molecular Hamiltonians]
    \label{cor:parallel_simulation_of_the_molecular_hamiltonian}
	The molecular Hamiltonian in \eqref{eq:chemistry fermion} can be simulated for time $t$ to precision $\epsilon$
	by an $O\parens*{n^8\log^3 n \cdot t \log^3\log\parens*{t/\epsilon}}$-depth
	and $O\parens*{n^{16}t\log^5\parens*{nt/\epsilon}}$-size quantum circuit.
\end{corollary}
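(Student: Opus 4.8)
The plan is to derive the corollary directly from Theorem~\ref{thm:parallel simulation for uniform-structured Hamiltonian}, specializing its parameters to the molecular model and substituting the oracle subcircuits built in the discussion above. First I would collect the structural facts already in hand: the qubit Hamiltonian $H$ in \eqref{eq:chemistry qubit} is a Pauli sum with $m=O(n^4)$ terms, hence $m$-uniform-structured by Lemma~\ref{lmm:sum of tensor products of Pauli matrices}, and each summand $H_w$ is a tensor product of Pauli matrices and therefore $1$-sparse, so $d=1$. Using the bound $\sum_{w\in[m]}\abs*{h_w}=O(n^4)$ from~\cite{BBKWLA16} gives $\norm*{H}_{\max}=O(n^4)$, so simulating $H$ for time $t$ amounts, after normalization, to simulating a unit-max-norm Hamiltonian for the rescaled time $\tilde t=\norm*{H}_{\max}t=O(n^4 t)$. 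This pins the parameters of Theorem~\ref{thm:parallel simulation for uniform-structured Hamiltonian}: $\tau=md\tilde t=O(n^8 t)$, $\gamma=\log(\tau/\epsilon)=O(\log(nt/\epsilon))$, and $b=O(\gamma)=O(\log(nt/\epsilon))$.

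Next I would apply Theorem~\ref{thm:parallel simulation for uniform-structured Hamiltonian} with these values to obtain a circuit with $O(\tau\log\gamma)$-depth and $O(\tau\gamma)$-size of queries to $\calO_H^b$, $O(\tau\log\gamma)$-depth and $O(m\tau\gamma)$-size of queries to $\calO_P$, together with $O(\tau(\log^2\gamma\cdot\log^2 n+\log^3\gamma))$-depth and $O(\tau\gamma^2(mn^4+\gamma^3))$-size of native one- and two-qubit gates. To turn this into a pure gate count I would replace each oracle invocation by its implementation recorded above: $\calO_P$ costs $O(\log n)$-depth and $O(n\log n)$-size, while $\calO_H^b$ --- one QCRAM read fetching a copy of the relevant $h_w$, plus the summation of the $(H_w)_{jk}$ over the $O(1)$-query, arithmetic-depth-efficiently computable predicates $\bbracks*{(j,k)\in\mathsf{H}_w}$ --- costs $O(\log^2 n+\log b)$-depth and $O(n^8 b^4)$-size. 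Adding the native-gate depth to (depth of queries)$\times$(oracle depth) for each oracle, and likewise for size, and then plugging in $\tau,\gamma,b$, I would simplify using $\log b,\log\gamma=O(\log\log(nt/\epsilon))=O(\log\log n+\log\log(t/\epsilon))$ together with the crude absorptions $\log^k\log n=O(\log^k n)$ to reach the claimed $O(n^8\log^3 n\cdot t\log^3\log(t/\epsilon))$-depth; on the size side the $\calO_H^b$ contribution $O(\tau\gamma)\cdot O(n^8 b^4)=O(n^{16}t\log^5(nt/\epsilon))$ dominates both the $\calO_P$ term $O(m\tau\gamma)\cdot O(n\log n)$ and the native term $O(\tau\gamma^2(mn^4+\gamma^3))$, giving the stated $O(n^{16}t\log^5(nt/\epsilon))$.

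I expect the only real work to be this last bookkeeping step, and in particular checking that the doubly-logarithmic precision dependence is not destroyed when queries are expanded into gates: the point is that neither oracle implementation has depth growing like $\log(1/\epsilon)$ --- it grows only with $\log n$ and with $b=O(\log(1/\epsilon))$ nested inside one further logarithm --- so the $O(\tau\log\gamma)$-depth of queries to $\calO_H^b$ and $\calO_P$ expands only into something of order $\tau\cdot\polylog(n)\cdot\polylog\log(1/\epsilon)$. A minor point I would state explicitly is the input-access convention inherited from~\cite{BBKWLA16}: the precomputed $b$-bit integrals $h_{pq},h_{pqrs}$ are assumed to sit in an $O((nb)^4)$-size QCRAM with $O(\log(nb))$-depth reads, which together with the $m$-uniform-structured implementation of $\bbracks*{(j,k)\in\mathsf{H}_w}$ is exactly what places $\calO_H^b$ inside the cost bound used above; and the case where $\tau$ is not an integer is handled by the same short-segment device as in the proof of Theorem~\ref{thm:parallel simulation for uniform-structured Hamiltonian}, with no asymptotic change.
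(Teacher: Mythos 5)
Your proposal is correct and follows essentially the same route as the paper: identify the Jordan--Wigner--transformed Hamiltonian as an $m$-uniform-structured Pauli sum with $m=O(n^4)$ and $d=1$, rescale by $\norm{H}_{\max}=O(n^4)$ to get $\tau=O(n^8t)$, $\gamma=b=O(\log(nt/\epsilon))$, instantiate Theorem~\ref{thm:parallel simulation for uniform-structured Hamiltonian}, and expand each oracle call using the $O(\log n)$-depth, $O(n\log n)$-size implementation of $\calO_P$ and the $O(\log^2 n+\log b)$-depth, $O(n^8b^4)$-size implementation of $\calO_H^b$ built from the QCRAM reads. The final bookkeeping, including the observation that the $\calO_H^b$ size term $O(\tau\gamma)\cdot O(n^8b^4)$ dominates, matches the paper's derivation.
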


Prior work~\cite{BBKWLA16} gives a quantum simulation algorithm for molecular Hamiltonians with gate complexity $O\parens*{\frac{n^8t\log\parens*{nt/\epsilon}}{\log\log\parens*{nt/\epsilon}}}$. Later algorithmic improvements (e.g.~\cite{BBSKSWLA17,BWMMNC18,KMWGACB18,BBMN19,BGMMB19}) focus on parameters other than the precision $\epsilon$, and all these works have a poly-logarithmic dependence on $\epsilon$ as the Hamiltonian simulation subroutines used there have such dependence. By allowing parallelism, Corollary~\ref{cor:parallel_simulation_of_the_molecular_hamiltonian} exponentially improves the dependence to $\polylog\log\parens*{1/\epsilon}$ with respect to the depth.

\section*{Acknowledgements}

We thank Zhengfeng Ji for pointing out that local Hamiltonians can be represented as Pauli sums.
We acknowledge an anonymous reviewer for raising the open question (in Section~\ref{sub:discussion}) about whether our method can be generalized for simulating time-dependent Hamiltonians via the Dyson series.
Zhicheng Zhang also thanks Rolando D.\ Somma for mentioning that the results in \cite{AA17} do not concern the depth complexity.

This work was supported in part by the National Natural Science Foundation of China under
Grant 61832015. 
Zhicheng Zhang was supported by the Sydney Quantum Academy, NSW, Australia.
Qisheng Wang was also supported in part by the MEXT Quantum Leap Flagship Program (MEXT Q-LEAP) grants No. JPMXS0120319794.

\bibliographystyle{quantum}
\bibliography{main}

\begin{thebibliography}{100}

\bibitem{Feynman82}
Richard~P. {F}eynman.
\newblock ``Simulating physics with computers''.
\newblock \href{https://dx.doi.org/10.1007/BF02650179}{International Journal of
  Theoretical Physics {\bf 21}, 467--488}~(1982).

\bibitem{Lloyd96}
Seth {L}loyd.
\newblock ``Universal quantum simulators''.
\newblock \href{https://dx.doi.org/10.1126/science.273.5278.1073}{Science {\bf
  273}, 1073--1078}~(1996).

\bibitem{CKS17}
Andrew~M. {C}hilds, Robin {K}othari, and Rolando~D. {S}omma.
\newblock ``Quantum algorithm for systems of linear equations with
  exponentially improved dependence on precision''.
\newblock \href{https://dx.doi.org/10.1137/16M1087072}{SIAM Journal on
  Computing {\bf 46}, 1920--1950}~(2017).

\bibitem{AGGW17}
Joran~van {A}peldoorn, Andr{\'{a}}s {G}ily{\'{e}}n, Sander {G}ribling, and
  Ronald {d}e Wolf.
\newblock ``Quantum {SDP}{-}solvers: {B}etter upper and lower bounds''.
\newblock \href{https://dx.doi.org/10.22331/q-2020-02-14-230}{{Quantum} {\bf
  4}, 230}~(2020).

\bibitem{FGG14}
Edward {F}arhi, Jeffrey {G}oldstone, and Sam {G}utmann.
\newblock ``A quantum approximate optimization algorithm''~(2014).
\newblock  \href{http://arxiv.org/abs/1411.4028}{arXiv:1411.4028}.

\bibitem{CGJ19}
Shantanav {C}hakraborty, Andr{\'{a}}s {G}ily{\'{e}}n, and Stacey {J}effery.
\newblock ``The power of block{-}encoded matrix powers: improved regression
  techniques via faster {H}amiltonian simulation''.
\newblock In Proceedings of the 46th International Colloquium on Automata,
  Languages, and Programming (ICALP '19).
\newblock \href{https://dx.doi.org/10.4230/LIPIcs.ICALP.2019.33}{Volume 132,
  pages 33:1--33:14}.
\newblock ~(2019).

\bibitem{LC19}
Guang~Hao {L}ow and Isaac~L. {C}huang.
\newblock ``Hamiltonian simulation by qubitization''.
\newblock \href{https://dx.doi.org/10.22331/q-2019-07-12-163}{{Quantum} {\bf
  3}, 163}~(2019).

\bibitem{Childs09}
Andrew~M. {C}hilds.
\newblock ``On the relationship between continuous{-} and discrete{-}time
  quantum walk''.
\newblock \href{https://dx.doi.org/10.1007/s00220-009-0930-1}{Communications in
  Mathematical Physics {\bf 294}, 581--603}~(2009).

\bibitem{BC12}
Dominic~W. {B}erry and Andrew~M. {C}hilds.
\newblock ``Black{-}box {H}amiltonian simulation and unitary implementation''.
\newblock \href{https://dx.doi.org/10.26421/QIC12.1-2-4}{Quantum Information \&
  Computation {\bf 12}, 29--62}~(2012).

\bibitem{BCK15}
Dominic~W. {B}erry, Andrew~M. {C}hilds, and Robin {K}othari.
\newblock ``Hamiltonian simulation with nearly optimal dependence on all
  parameters''.
\newblock In Proceedings of the 56th Annual IEEE Symposium on Foundations of
  Computer Science (FOCS '15).
\newblock \href{https://dx.doi.org/10.1109/FOCS.2015.54}{Pages 792--809}.
\newblock ~(2015).

\bibitem{LPSS18}
Lucas {L}amata, Adrian {P}arra{-R}odriguez, Mikel {S}anz, and Enrique {S}olano.
\newblock ``Digital{-}analog quantum simulations with superconducting
  circuits''.
\newblock \href{https://dx.doi.org/10.1080/23746149.2018.1457981}{Advances in
  Physics: X {\bf 3}, 1457981}~(2018).

\bibitem{AT03}
Dorit {A}haronov and Amnon {T}a{-S}hma.
\newblock ``Adiabatic quantum state generation''.
\newblock \href{https://dx.doi.org/10.1137/060648829}{SIAM Journal on Computing
  {\bf 37}, 47--82}~(2007).

\bibitem{BACS06}
Dominic~W. {B}erry, Graeme {A}hokas, Richard {C}leve, and Barry~C. {S}anders.
\newblock ``Efficient quantum algorithms for simulating sparse
  {H}amiltonians''.
\newblock \href{https://dx.doi.org/10.1007/s00220-006-0150-x}{Communications in
  Mathematical Physics {\bf 270}, 359--371}~(2006).

\bibitem{WBHS10}
Nathan {W}iebe, Dominic~W. {B}erry, Peter {H\o}yer, and Barry~C. {S}anders.
\newblock ``Higher order decompositions of ordered operator exponentials''.
\newblock \href{https://dx.doi.org/10.1088/1751-8113/43/6/065203}{Journal of
  Physics A: Mathematical and Theoretical {\bf 43}, 065203}~(2010).

\bibitem{CK11}
Andrew~M. {C}hilds and Robin {K}othari.
\newblock ``Simulating sparse {H}amiltonians with star decompositions''.
\newblock In Theory of Quantum Computation, Communication, and Cryptography
  (TQC '10).
\newblock \href{https://dx.doi.org/10.1007/978-3-642-18073-6_8}{Pages 94--103}.
\newblock Springer Berlin Heidelberg~(2011).

\bibitem{CW12}
Andrew~M. {C}hilds and Nathan {W}iebe.
\newblock ``Hamiltonian simulation using linear combinations of unitary
  operations''.
\newblock \href{https://dx.doi.org/10.26421/QIC12.11-12-1}{Quantum Information
  \& Computation {\bf 12}, 901--924}~(2012).

\bibitem{LKW19}
Guang~Hao {L}ow, Vadym {K}liuchnikov, and Nathan {W}iebe.
\newblock ``Well{-}conditioned multiproduct {H}amiltonian simulation''~(2019).
\newblock  \href{http://arxiv.org/abs/1907.11679}{arXiv:1907.11679}.

\bibitem{CS19}
Andrew~M. {C}hilds and Yuan {S}u.
\newblock ``Nearly optimal lattice simulation by product formulas''.
\newblock \href{https://dx.doi.org/10.1103/PhysRevLett.123.050503}{Physical
  Review Letters {\bf 123}, 050503}~(2019).

\bibitem{Campbell19}
Earl {C}ampbell.
\newblock ``Random compiler for fast {H}amiltonian simulation''.
\newblock \href{https://dx.doi.org/10.1103/PhysRevLett.123.070503}{Physical
  Review Letters {\bf 123}, 070503}~(2019).

\bibitem{COS19}
Andrew~M. {C}hilds, Aaron {O}strander, and Yuan {S}u.
\newblock ``Faster quantum simulation by randomization''.
\newblock \href{https://dx.doi.org/10.22331/q-2019-09-02-182}{Quantum {\bf 3},
  182}~(2019).

\bibitem{OWC20}
Yingkai Ouyang, David~R. White, and Earl~T. Campbell.
\newblock ``Compilation by stochastic {H}amiltonian sparsification''.
\newblock \href{https://dx.doi.org/10.22331/q-2020-02-27-235}{{Quantum} {\bf
  4}, 235}~(2020).

\bibitem{CHKT20}
Chi{-F}ang {C}hen, Hsin{-Y}uan {H}uang, Richard {K}ueng, and Joel~A. {T}ropp.
\newblock ``Concentration for random product formulas''.
\newblock \href{https://dx.doi.org/10.1103/PRXQuantum.2.040305}{PRX Quantum
  {\bf 2}, 040305}~(2021).

\bibitem{SHC21}
Yuan Su, Hsin-Yuan Huang, and Earl~T. Campbell.
\newblock ``Nearly tight {T}rotterization of interacting electrons''.
\newblock \href{https://dx.doi.org/10.22331/q-2021-07-05-495}{{Quantum} {\bf
  5}, 495}~(2021).

\bibitem{FSKKE21}
Paul~K. {F}aehrmann, Mark {S}teudtner, Richard {K}ueng, M{\'a}ria
  Kieferov{\'a}, and Jens {E}isert.
\newblock ``Randomizing multi-product formulas for improved {H}amiltonian
  simulation''.
\newblock \href{https://dx.doi.org/10.22331/q-2022-09-19-806}{{Quantum} {\bf
  6}, 806}~(2022).

\bibitem{HW22}
Matthew Hagan and Nathan Wiebe.
\newblock ``Composite quantum simulations''.
\newblock \href{https://dx.doi.org/10.22331/q-2023-11-14-1181}{{Quantum} {\bf
  7}, 1181}~(2023).

\bibitem{CBH22}
Chien~Hung Cho, Dominic~W. Berry, and Min-Hsiu Hsieh.
\newblock ``Doubling the order of approximation via the randomized product
  formula''~(2022).
\newblock  \href{http://arxiv.org/abs/2210.11281}{arXiv:2210.11281}.

\bibitem{LSTT22}
Guang~Hao Low, Yuan Su, Yu~Tong, and Minh~C. Tran.
\newblock ``Complexity of implementing {T}rotter steps''.
\newblock \href{https://dx.doi.org/10.1103/PRXQuantum.4.020323}{PRX Quantum
  {\bf 4}, 020323}~(2023).

\bibitem{ZSJZ22}
Pei Zeng, Jinzhao Sun, Liang Jiang, and Qi~Zhao.
\newblock ``Simple and high-precision {H}amiltonian simulation by compensating
  {T}rotter error with linear combination of unitary operations''~(2022).
\newblock  \href{http://arxiv.org/abs/2212.04566}{arXiv:2212.04566}.

\bibitem{RWW22}
Gumaro Rendon, Jacob Watkins, and Nathan Wiebe.
\newblock ``Improved error scaling for {T}rotter simulations through
  extrapolation''~(2022).
\newblock  \href{http://arxiv.org/abs/2212.14144}{arXiv:2212.14144}.

\bibitem{BCCKS15}
Dominic~W. {B}erry, Andrew~M. {C}hilds, Richard {C}leve, Robin {K}othari, and
  Rolando~D. {S}omma.
\newblock ``Simulating {H}amiltonian dynamics with a truncated {T}aylor
  series''.
\newblock \href{https://dx.doi.org/10.1103/PhysRevLett.114.090502}{Physical
  Review Letters {\bf 114}, 090502}~(2015).

\bibitem{BCCKS14}
Dominic~W. {B}erry, Andrew~M. {C}hilds, Richard {C}leve, Robin {K}othari, and
  Rolando~D. {S}omma.
\newblock ``Exponential improvement in precision for simulating sparse
  {H}amiltonians''.
\newblock In Proceedings of the 46th Annual ACM SIGACT Symposium on Theory of
  Computing (STOC '14).
\newblock \href{https://dx.doi.org/10.1145/2591796.2591854}{Pages 283--292}.
\newblock ~(2014).

\bibitem{Kothari14}
Robin {K}othari.
\newblock ``Efficient algorithms in quantum query complexity''.
\newblock PhD thesis.
\newblock University of Waterloo.
\newblock ~(2014).
\newblock
  url:~\href{http://hdl.handle.net/10012/8625}{http://hdl.handle.net/10012/8625}.

\bibitem{HHL09}
Aram~W. {H}arrow, Avinatan {H}assidim, and Seth {L}loyd.
\newblock ``Quantum algorithm for linear systems of equations''.
\newblock \href{https://dx.doi.org/10.1103/PhysRevLett.103.150502}{Physical
  Review Letters {\bf 103}, 150502}~(2009).

\bibitem{LYC16}
Guang~Hao {L}ow, Theodore~J. {Y}oder, and Isaac~L. {C}huang.
\newblock ``Methodology of resonant equiangular composite quantum gates''.
\newblock \href{https://dx.doi.org/10.1103/PhysRevX.6.041067}{Physical Review X
  {\bf 6}, 041067}~(2016).

\bibitem{LC17}
Guang~Hao {L}ow and Isaac~L. {C}huang.
\newblock ``Optimal {H}amiltonian simulation by quantum signal processing''.
\newblock \href{https://dx.doi.org/10.1103/PhysRevLett.118.010501}{Physical
  Review Letters {\bf 118}, 010501}~(2017).

\bibitem{GSLW19}
Andr\'{a}s {G}ily\'{e}n, Yuan {S}u, Guang~Hao {L}ow, and Nathan {W}iebe.
\newblock ``Quantum singular value transformation and beyond: {E}xponential
  improvements for quantum matrix arithmetics''.
\newblock In Proceedings of the 51st Annual ACM SIGACT Symposium on Theory of
  Computing (STOC '19).
\newblock \href{https://dx.doi.org/10.1145/3313276.3316366}{Page 193–204}.
\newblock ~(2019).

\bibitem{HHKL18}
Jeongwan {H}aah, Matthew~B. {H}astings, Robin {K}othari, and Guang~Hao {L}ow.
\newblock ``Quantum algorithm for simulating real time evolution of lattice
  {H}amiltonians''.
\newblock \href{https://dx.doi.org/10.1137/18M1231511}{SIAM Journal on
  Computing {\bf 0}, FOCS18--250--FOCS18--284}~(2018).

\bibitem{LW19}
Guang~Hao {L}ow and Nathan {W}iebe.
\newblock ``Hamiltonian simulation in the interaction picture''~(2019).
\newblock  \href{http://arxiv.org/abs/1805.00675}{arXiv:1805.00675}.

\bibitem{Low19}
Guang~Hao {L}ow.
\newblock ``Hamiltonian simulation with nearly optimal dependence on spectral
  norm''.
\newblock In Proceedings of the 51st Annual ACM SIGACT Symposium on Theory of
  Computing (STOC '19).
\newblock \href{https://dx.doi.org/10.1145/3313276.3316386}{Pages 491--502}.
\newblock ~(2019).

\bibitem{MLC+21}
John~M. Martyn, Yuan Liu, Zachary~E. Chin, and Isaac~L. Chuang.
\newblock ``Efficient fully{-}coherent quantum signal processing algorithms for
  real{-}time dynamics simulation''.
\newblock \href{https://dx.doi.org/10.1063/5.0124385}{The Journal of Chemical
  Physics {\bf 158}, 024106}~(2023).

\bibitem{ZZS+21}
Qi~Zhao, You Zhou, Alexander~F. Shaw, Tongyang Li, and Andrew~M. Childs.
\newblock ``Hamiltonian simulation with random inputs''.
\newblock \href{https://dx.doi.org/10.1103/PhysRevLett.129.270502}{Physical
  Review Letters {\bf 129}, 270502}~(2022).

\bibitem{CW00}
Richard {C}leve and John {Watrous}.
\newblock ``Fast parallel circuits for the quantum {F}ourier transform''.
\newblock In Proceedings of the 41st Annual IEEE Symposium on Foundations of
  Computer Science (FOCS '00).
\newblock \href{https://dx.doi.org/10.1109/SFCS.2000.892140}{Pages 526--536}.
\newblock ~(2000).

\bibitem{Shor94}
Peter~W. {S}hor.
\newblock ``Algorithms for quantum computation: discrete logarithms and
  factoring''.
\newblock In Proceedings of the 35th Annual IEEE Symposium on Foundations of
  Computer Science (FOCS '94).
\newblock \href{https://dx.doi.org/10.1109/SFCS.1994.365700}{Pages 124--134}.
\newblock ~(1994).

\bibitem{PS13}
Paul {P}ham and Krysta~M. {S}vore.
\newblock ``A {2D} nearest{-}neighbor quantum architecture for factoring in
  polylogarithmic depth''.
\newblock \href{https://dx.doi.org/10.26421/QIC13.11-12-3}{Quantum Information
  \& Computation {\bf 13}, 937--962}~(2013).

\bibitem{RS14}
Martin {R\"{o}}tteler and Rainer {S}teinwandt.
\newblock ``A quantum circuit to find discrete logarithms on ordinary binary
  elliptic curves in depth ${O}(\log^2 n)$''.
\newblock \href{https://dx.doi.org/10.26421/QIC14.9-10-11}{Quantum Information
  \& Computation {\bf 14}, 888--900}~(2014).

\bibitem{Grover96}
Lov~K. {G}rover.
\newblock ``A fast quantum mechanical algorithm for database search''.
\newblock In Proceedings of the 28th Annual ACM SIGACT Symposium on Theory of
  Computing (STOC '96).
\newblock \href{https://dx.doi.org/10.1145/237814.237866}{Pages 212--219}.
\newblock ~(1996).

\bibitem{Zalka99}
Christof {Z}alka.
\newblock ``Grover's quantum searching algorithm is optimal''.
\newblock \href{https://dx.doi.org/10.1103/PhysRevA.60.2746}{Physical Review A
  {\bf 60}, 2746--2751}~(1999).

\bibitem{GWC00}
Robert~M. {G}ingrich, Colin~P. {W}illiams, and Nicolas~J. {C}erf.
\newblock ``Generalized quantum search with parallelism''.
\newblock \href{https://dx.doi.org/10.1103/PhysRevA.61.052313}{Physical Review
  A {\bf 61}, 052313}~(2000).

\bibitem{GR04}
Lov~K. {G}rover and Jaikumar {R}adhakrishnan.
\newblock ``Quantum search for multiple items using parallel queries''~(2004).
\newblock
  \href{http://arxiv.org/abs/quant-ph/0407217}{arXiv:quant-ph/0407217}.

\bibitem{JMW17}
Stacey {J}effery, Fr{\'{e}}d{\'{e}}ric {M}agniez, and Ronald {d}e {W}olf.
\newblock ``Optimal parallel quantum query algorithms''.
\newblock \href{https://dx.doi.org/10.1007/s00453-016-0206-z}{Algorithmica {\bf
  79}, 509--529}~(2017).

\bibitem{Burchard19}
Paul {B}urchard.
\newblock ``Lower bounds for parallel quantum counting''~(2019).
\newblock  \href{http://arxiv.org/abs/1910.04555}{arXiv:1910.04555}.

\bibitem{GKLPZ20}
Tudor {G}iurgica{-T}iron, Iordanis {K}erenidis, Farrokh {L}abib, Anupam
  {P}rakash, and William {Z}eng.
\newblock ``Low depth algorithms for quantum amplitude estimation''.
\newblock \href{https://dx.doi.org/10.22331/q-2022-06-27-745}{{Quantum} {\bf
  6}, 745}~(2022).

\bibitem{GHP00}
Frederic {G}reen, Steven {H}omer, and Christopher {P}ollett.
\newblock ``On the complexity of quantum {ACC}''.
\newblock In Proceedings 15th Annual IEEE Conference on Computational
  Complexity (CCC '00).
\newblock \href{https://dx.doi.org/10.1109/CCC.2000.856756}{Pages 250--262}.
\newblock ~(2000).

\bibitem{MN02}
Cristopher {M}oore and Martin {N}ilsson.
\newblock ``Parallel quantum computation and quantum codes''.
\newblock \href{https://dx.doi.org/10.1137/S0097539799355053}{SIAM Journal on
  Computing {\bf 31}, 799--815}~(2002).

\bibitem{GHMP02}
Frederic {G}reen, Steven {H}omer, Cristopher {M}oore, and Christopher
  {P}ollett.
\newblock ``Counting, fanout and the complexity of quantum {ACC}''.
\newblock \href{https://dx.doi.org/10.26421/QIC2.1-3}{Quantum Information \&
  Computation {\bf 2}, 35--65}~(2002).

\bibitem{TD04}
Barbara~M. {T}erhal and David~P. {D}i{V}incenzo.
\newblock ``Adaptive quantum computation, constant depth quantum circuits and
  {A}rthur{-M}erlin games''.
\newblock \href{https://dx.doi.org/10.26421/QIC4.2-5}{Quantum Information \&
  Computation {\bf 4}, 134--145}~(2004).

\bibitem{FGHZ05}
Stephen {F}enner, Frederic {G}reen, Steven {H}omer, and Yong {Z}hang.
\newblock ``Bounds on the power of constant{-}depth quantum circuits''.
\newblock In Proceedings of the 15th International Conference on Fundamentals
  of Computation Theory (FCT '05).
\newblock \href{https://dx.doi.org/10.1007/11537311_5}{Pages 44--55}.
\newblock ~(2005).

\bibitem{HS05}
Peter {H\o}yer and Robert {\v S}palek.
\newblock ``Quantum fan-out is powerful''.
\newblock \href{https://dx.doi.org/10.4086/toc.2005.v001a005}{Theory of
  Computing {\bf 1}, 81--103}~(2005).

\bibitem{BGH07}
Debajyoti {B}era, Frederic {G}reen, and Steven {H}omer.
\newblock ``Small depth quantum circuits''.
\newblock \href{https://dx.doi.org/10.1145/1272729.1272739}{SIGACT News {\bf
  38}, 35--50}~(2007).

\bibitem{TT13}
Yasuhiro {T}akahashi and Seiichiro {T}ani.
\newblock ``Collapse of the hierarchy of constant{-}depth exact quantum
  circuits''.
\newblock \href{https://dx.doi.org/10.1007/s00037-016-0140-0}{Computational
  Complexity {\bf 25}, 849--881}~(2016).

\bibitem{CM20}
Matthew {C}oudron and Sanketh {M}enda.
\newblock ``Computations with greater quantum depth are strictly more powerful
  (relative to an oracle)''.
\newblock In Proceedings of the 52nd Annual ACM SIGACT Symposium on Theory of
  Computing (STOC '20).
\newblock \href{https://dx.doi.org/10.1145/3357713.3384269}{Pages 889--901}.
\newblock ~(2020).

\bibitem{CCL20}
Nai{-H}ui {C}hia, Kai{-M}in {C}hung, and Ching{-Y}i {L}ai.
\newblock ``On the need for large quantum depth''.
\newblock \href{https://dx.doi.org/10.1145/3570637}{Journal of the ACM{\bf
  70}}~(2023).

\bibitem{JSTWWZ20}
Jiaqing {J}iang, Xiaoming {S}un, Shang-Hua {T}eng, Bujiao {W}u, Kewen {W}u, and
  Jialin {Z}hang.
\newblock ``Optimal space{-}depth trade{-}off of {CNOT} circuits in quantum
  logic synthesis''.
\newblock In Proceedings of the 31st Annual ACM SIAM Symposium on Discrete
  Algorithms (SODA '20).
\newblock \href{https://dx.doi.org/10.1137/1.9781611975994.13}{Pages 213--229}.
\newblock ~(2020).

\bibitem{BGK18}
Sergey {B}ravyi, David {G}osset, and Robert {K\"{o}}nig.
\newblock ``Quantum advantage with shallow circuits''.
\newblock \href{https://dx.doi.org/10.1126/science.aar3106}{Science {\bf 362},
  308--311}~(2018).

\bibitem{WKST19}
Adam~Bene {W}atts, Robin {K}othari, Luke {S}chaeffer, and Avishay {T}al.
\newblock ``Exponential separation between shallow quantum circuits and
  unbounded fan{-}in shallow classical circuits''.
\newblock In Proceedings of the 51st Annual ACM SIGACT Symposium on Theory of
  Computing (STOC '19).
\newblock \href{https://dx.doi.org/10.1145/3313276.3316404}{Pages 515--526}.
\newblock ~(2019).

\bibitem{LeGall19}
Fran{\c{c}}ois {L}e {G}all.
\newblock ``Average{-}case quantum advantage with shallow circuits''.
\newblock In Proceedings of the 34th Computational Complexity Conference (CCC
  '19).
\newblock \href{https://dx.doi.org/10.4230/LIPIcs.CCC.2019.21}{Pages 1--20}.
\newblock ~(2019).

\bibitem{BGKT20}
Sergey {B}ravyi, David {G}osset, Robert {K\"{o}}nig, and Marco {T}omamichel.
\newblock ``Quantum advantage with noisy shallow circuits''.
\newblock \href{https://dx.doi.org/10.1038/s41567-020-0948-z}{Nature Physics
  {\bf 16}, 1040--1045}~(2020).

\bibitem{QKW22}
Yihui Quek, Mark~M. Wilde, and Eneet Kaur.
\newblock ``Multivariate trace estimation in constant quantum depth''~(2022).
\newblock  \href{http://arxiv.org/abs/2206.15405}{arXiv:2206.15405}.

\bibitem{Jozsa05}
Richard {J}ozsa.
\newblock ``An introduction to measurement based quantum computation''~(2005).
\newblock
  \href{http://arxiv.org/abs/quant-ph/0508124}{arXiv:quant-ph/0508124}.

\bibitem{BK09}
Anne {B}roadbent and Elham {K}ashefi.
\newblock ``Parallelizing quantum circuits''.
\newblock \href{https://dx.doi.org/10.1016/j.tcs.2008.12.046}{Theoretical
  Computer Science {\bf 410}, 2489--2510}~(2009).

\bibitem{BKP11}
Dan {B}rowne, Elham {K}ashefi, and Simon {P}erdrix.
\newblock ``Computational depth complexity of measurement{-}based quantum
  computation''.
\newblock In Theory of Quantum Computation, Communication, and Cryptography
  (TQC '10).
\newblock \href{https://dx.doi.org/10.1007/978-3-642-18073-6_4}{Volume 6519,
  pages 35--46}.
\newblock ~(2011).

\bibitem{BBGHKLSS2013}
Robert {B}eals, Stephen {B}rierley, Oliver {G}ray, Aram~W. {H}arrow, Samuel
  {K}utin, Noah {L}inden, Dan {S}hepherd, and Mark {S}tather.
\newblock ``Efficient distributed quantum computing''.
\newblock \href{https://dx.doi.org/10.1098/rspa.2012.0686}{Proceedings of the
  Royal Society A: Mathematical, Physical and Engineering Sciences {\bf 469},
  20120686}~(2013).

\bibitem{YF09}
Mingsheng {Y}ing and Yuan {F}eng.
\newblock ``An algebraic language for distributed quantum computing''.
\newblock \href{https://dx.doi.org/10.1109/TC.2009.13}{IEEE Transactions on
  Computers {\bf 58}, 728--743}~(2009).

\bibitem{YZL18}
Mingsheng {Y}ing, Li~{Z}hou, and Yangjia {L}i.
\newblock ``Reasoning about parallel quantum programs''~(2019).
\newblock  \href{http://arxiv.org/abs/1810.11334}{arXiv:1810.11334}.

\bibitem{NH15}
Rahul {N}andkishore and David~A. {H}use.
\newblock ``Many{-}body localization and thermalization in quantum statistical
  mechanics''.
\newblock
  \href{https://dx.doi.org/10.1146/annurev-conmatphys-031214-014726}{Annual
  Review of Condensed Matter Physics {\bf 6}, 15--38}~(2015).

\bibitem{LLA15}
David~J. {L}uitz, Nicolas {L}aflorencie, and Fabien {A}let.
\newblock ``Many{-}body localization edge in the random{-}field {H}eisenberg
  chain''.
\newblock \href{https://dx.doi.org/10.1103/PhysRevB.91.081103}{Physical Review
  B {\bf 91}, 081103}~(2015).

\bibitem{CMNRS18}
Andrew~M. {C}hilds, Dmitri {M}aslov, Yunseong {N}am, Neil~J. {R}oss, and Yuan
  {S}u.
\newblock ``Toward the first quantum simulation with quantum speedup''.
\newblock \href{https://dx.doi.org/10.1073/pnas.1801723115}{Proceedings of the
  National Academy of Sciences {\bf 115}, 9456--9461}~(2018).

\bibitem{SY93}
Subir {S}achdev and Jinwu {Y}e.
\newblock ``Gapless spin{-}fluid ground state in a random quantum {H}eisenberg
  magnet''.
\newblock \href{https://dx.doi.org/10.1103/PhysRevLett.70.3339}{Physical Review
  Letters {\bf 70}, 3339--3342}~(1993).

\bibitem{Kitaev15}
Alexei~Y. {K}itaev.
\newblock ``A simple model of quantum holography''.
\newblock Talks at KITP, April 7, 2015 and May 27, 2015.

\bibitem{MS16}
Juan {M}aldacena and Douglas {S}tanford.
\newblock ``{R}emarks on the {S}achdev{-Y}e{-K}itaev model''.
\newblock \href{https://dx.doi.org/10.1103/PhysRevD.94.106002}{Physical Review
  D {\bf 94}, 106002}~(2016).

\bibitem{GELdS17}
Laura {G}arc{\'{i}}a{-\'{A}}lvarez, {\'{I}}{\~{n}}igo~Luis {E}gusquiza, Lucas
  {L}amata, Adolfo {d}el Campo, Julian {S}onner, and Enrique {S}olano.
\newblock ``Digital quantum simulation of minimal {AdS/CFT}''.
\newblock \href{https://dx.doi.org/10.1103/PhysRevLett.119.040501}{Physical
  Review Letters {\bf 119}, 040501}~(2017).

\bibitem{YWBTA14}
Man{-H}ong {Y}ung, James~D. {W}hitfield, Sergio {B}oixo, David~G. {T}empel, and
  Al{\'{a}}n {A}spuru{-G}uzik.
\newblock ``Introduction to quantum algorithms for physics and chemistry''.
\newblock In Advances in Chemical Physics.
\newblock \href{https://dx.doi.org/10.1002/9781118742631.ch03}{Pages 67--106}.
\newblock John Wiley {\&} Sons, Inc.~(2014).

\bibitem{BBMC20}
Bela {B}auer, Sergey {B}ravyi, Mario {M}otta, and Garnet~Kin{-L}ic {C}han.
\newblock ``Quantum algorithms for quantum chemistry and quantum materials
  science''.
\newblock \href{https://dx.doi.org/10.1021/acs.chemrev.9b00829}{Chemical
  Reviews {\bf 120}, 12685--12717}~(2020).

\bibitem{BBKWLA16}
Ryan {B}abbush, Dominic~W. {B}erry, Ian~D. {K}ivlichan, Annie~Y. {Wei},
  Peter~J. {L}ove, and Al{\'{a}}n {A}spuru{-G}uzik.
\newblock ``Exponentially more precise quantum simulation of fermions in second
  quantization''.
\newblock \href{https://dx.doi.org/10.1088/1367-2630/18/3/033032}{New Journal
  of Physics {\bf 18}, 033032}~(2016).

\bibitem{BBN19}
Ryan {B}abbush, Dominic~W. {B}erry, and Hartmut {N}even.
\newblock ``Quantum simulation of the {S}achdev{-Y}e{-K}itaev model by
  asymmetric qubitization''.
\newblock \href{https://dx.doi.org/10.1103/PhysRevA.99.040301}{Physical Review
  A {\bf 99}, 040301}~(2019).

\bibitem{BBSKSWLA17}
Ryan {B}abbush, Dominic~W. {B}erry, Yuval~R. {S}anders, Ian~D. {K}ivlichan,
  Artur {S}cherer, Annie~Y. {W}ei, Peter~J. {L}ove, and Al{\'{a}}n
  {A}spuru{-G}uzik.
\newblock ``Exponentially more precise quantum simulation of fermions in the
  configuration interaction representation''.
\newblock \href{https://dx.doi.org/10.1088/2058-9565/aa9463}{Quantum Science
  and Technology {\bf 3}, 015006}~(2017).

\bibitem{BWMMNC18}
Ryan {B}abbush, Nathan {W}iebe, Jarrod {M}c{C}lean, James {M}c{C}lain, Hartmut
  {N}even, and Garnet~Kin{-L}ic {C}han.
\newblock ``Low{-}depth quantum simulation of materials''.
\newblock \href{https://dx.doi.org/10.1103/PhysRevX.8.011044}{Physic Review X
  {\bf 8}, 011044}~(2018).

\bibitem{KMWGACB18}
Ian~D. {K}ivlichan, Jarrod {M}c{C}lean, Nathan {W}iebe, Craig {G}idney,
  Al{\'{a}}n {A}spuru{-G}uzik, Garnet~Kin{-L}ic {C}han, and Ryan {B}abbush.
\newblock ``Quantum simulation of electronic structure with linear depth and
  connectivity''.
\newblock \href{https://dx.doi.org/10.1103/PhysRevLett.120.110501}{Physical
  Review Letters {\bf 120}, 110501}~(2018).

\bibitem{BBMN19}
Ryan {B}abbush, Dominic~W. {B}erry, Jarrod~R. {M}c{C}lean, and Hartmut {N}even.
\newblock ``Quantum simulation of chemistry with sublinear scaling in basis
  size''.
\newblock \href{https://dx.doi.org/10.1038/s41534-019-0199-y}{npj Quantum
  Information{\bf 5}}~(2019).

\bibitem{BGMMB19}
Dominic~W. {B}erry, Craig {G}idney, Mario {M}otta, Jarrod~R. {M}c{C}lean, and
  Ryan {B}abbush.
\newblock ``Qubitization of arbitrary basis quantum chemistry leveraging
  sparsity and low rank factorization''.
\newblock \href{https://dx.doi.org/10.22331/q-2019-12-02-208}{Quantum {\bf 3},
  208}~(2019).

\bibitem{Bennett73}
Charles~H. {B}ennett.
\newblock ``Logical reversibility of computation''.
\newblock \href{https://dx.doi.org/10.1147/rd.176.0525}{IBM Journal of Research
  and Development {\bf 17}, 525--532}~(1973).

\bibitem{NC10}
Michael~A. {N}ielsen and Isaac~L. {C}huang.
\newblock ``Quantum computation and quantum information: 10th anniversary
  edition''.
\newblock \href{https://dx.doi.org/10.1017/CBO9780511976667}{Cambridge
  University Press}. ~(2010).

\bibitem{GR02}
Lov~K. {G}rover and Terry {R}udolph.
\newblock ``Creating superpositions that correspond to efficiently integrable
  probability distributions''~(2002).
\newblock
  \href{http://arxiv.org/abs/quant-ph/0208112}{arXiv:quant-ph/0208112}.

\bibitem{AA17}
Yosi {A}tia and Dorit {A}haronov.
\newblock ``Fast{-}forwarding of {H}amiltonians and exponentially precise
  measurements''.
\newblock \href{https://dx.doi.org/10.1038/s41467-017-01637-7}{Nature
  Communications{\bf 8}}~(2017).

\bibitem{GSS21}
Shouzhen {G}u, Rolando~D. {S}omma, and Burak {\c{S}}ahino{\u{g}}lu.
\newblock ``Fast{-}forwarding quantum evolution''.
\newblock \href{https://dx.doi.org/10.22331/q-2021-11-15-577}{{Quantum} {\bf
  5}, 577}~(2021).

\bibitem{MNRS11}
Fr{\'{e}}d{\'{e}}ric {M}agniez, Ashwin {N}ayak, J{\'{e}}r{\'{e}}mie {R}oland,
  and Miklos {S}antha.
\newblock ``Search via quantum walk''.
\newblock \href{https://dx.doi.org/10.1137/090745854}{{SIAM} Journal on
  Computing {\bf 40}, 142--164}~(2011).

\bibitem{ZLY22}
Xiao-Ming Zhang, Tongyang Li, and Xiao Yuan.
\newblock ``Quantum state preparation with optimal circuit depth:
  implementations and applications''.
\newblock \href{https://dx.doi.org/10.1103/PhysRevLett.129.230504}{Physical
  Review Letters {\bf 129}, 230504}~(2022).

\bibitem{STY+21}
Xiaoming Sun, Guojing Tian, Shuai Yang, Pei Yuan, and Shengyu Zhang.
\newblock ``Asymptotically optimal circuit depth for quantum state preparation
  and general unitary synthesis''.
\newblock \href{https://dx.doi.org/10.1109/TCAD.2023.3244885}{IEEE Transactions
  on Computer-Aided Design of Integrated Circuits and Systems {\bf 42},
  3301--3314}~(2023).

\bibitem{Ros21}
Gregory Rosenthal.
\newblock ``Query and depth upper bounds for quantum unitaries via {Grover}
  search''~(2021).
\newblock  \href{http://arxiv.org/abs/2111.07992}{arXiv:2111.07992}.

\bibitem{PZ22}
Pei Yuan and Shengyu Zhang.
\newblock ``Optimal (controlled) quantum state preparation and improved unitary
  synthesis by quantum circuits with any number of ancillary qubits''.
\newblock \href{https://dx.doi.org/10.22331/q-2023-03-20-956}{{Quantum} {\bf
  7}, 956}~(2023).

\bibitem{CCHLLS23}
Nai-Hui Chia, Kai-Min Chung, Yao-Ching Hsieh, Han-Hsuan Lin, Yao-Ting Lin, and
  Yu-Ching Shen.
\newblock ``On the impossibility of general parallel fast-forwarding of
  {H}amiltonian simulation''.
\newblock In Proceedings of the Conference on Proceedings of the 38th
  Computational Complexity Conference (CCC '23).
\newblock \href{https://dx.doi.org/10.4230/LIPIcs.CCC.2023.33}{Pages 1--45}.
\newblock ~(2023).

\bibitem{BR93}
Mihir Bellare and Phillip Rogaway.
\newblock ``Random oracles are practical: A paradigm for designing efficient
  protocols''.
\newblock In Proceedings of the 1st ACM Conference on Computer and
  Communications Security (CCC '93).
\newblock \href{https://dx.doi.org/10.1145/168588.168596}{Pages 62--73}.
\newblock ~(1993).

\bibitem{BDFLSZ11}
Dan Boneh, {\"O}zg{\"u}r Dagdelen, Marc Fischlin, Anja Lehmann, Christian
  Schaffner, and Mark Zhandry.
\newblock ``Random oracles in a quantum world''.
\newblock In Proceedings of the 17th International Conference on the Theory and
  Application of Cryptology and Information Security.
\newblock \href{https://dx.doi.org/10.1007/978-3-642-25385-0_3}{Pages 41--69}.
\newblock ~(2011).

\bibitem{Llo00}
Seth Lloyd.
\newblock ``Coherent quantum feedback''.
\newblock \href{https://dx.doi.org/10.1103/PhysRevA.62.022108}{Physical Review
  A {\bf 62}, 022108}~(2000).

\bibitem{GJ09}
John Gough and Matthew~R. James.
\newblock ``The series product and its application to quantum feedforward and
  feedback networks''.
\newblock \href{https://dx.doi.org/10.1109/TAC.2009.2031205}{IEEE Transactions
  on Automatic Control {\bf 54}, 2530--2544}~(2009).

\bibitem{WLY21}
Qisheng Wang, Riling Li, and Mingsheng Ying.
\newblock ``Equivalence checking of sequential quantum circuits''.
\newblock \href{https://dx.doi.org/10.1109/TCAD.2021.3117506}{IEEE Transactions
  on Computer-Aided Design of Integrated Circuits and Systems {\bf 41},
  3143--3156}~(2022).

\bibitem{KPEKML20}
Bobak~T. {K}iani, Giacomo~De {P}alma, Dirk {E}nglund, William {K}aminsky, Milad
  {M}arvian, and Seth {L}loyd.
\newblock ``Quantum advantage for differential equation analysis''.
\newblock \href{https://dx.doi.org/10.1103/PhysRevA.105.022415}{Physical Review
  A {\bf 105}, 022415}~(2022).

\bibitem{BCOW17}
Dominic~W. {B}erry, Andrew~M. {C}hilds, Aaron {O}strander, and Guoming {W}ang.
\newblock ``Quantum algorithm for linear differential equations with
  exponentially improved dependence on precision''.
\newblock \href{https://dx.doi.org/10.1007/s00220-017-3002-y}{Communications in
  Mathematical Physics {\bf 365}, 1057--1081}~(2017).

\bibitem{KSB19}
M{\'a}ria Kieferov{\'a}, Artur Scherer, and Dominic~W. Berry.
\newblock ``Simulating the dynamics of time{-}dependent {H}amiltonians with a
  truncated {D}yson series''.
\newblock \href{https://dx.doi.org/10.1103/PhysRevA.99.042314}{Physical Review
  A {\bf 99}, 042314}~(2019).

\bibitem{BCSWW20}
Dominic~W. Berry, Andrew~M. Childs, Yuan Su, Xin Wang, and Nathan Wiebe.
\newblock ``Time{-}dependent {H}amiltonian simulation with ${L}^{1}$-norm
  scaling''.
\newblock \href{https://dx.doi.org/10.22331/q-2020-04-20-254}{Quantum {\bf 4},
  254}~(2020).

\bibitem{CKH21}
Yi-Hsiang Chen, Amir Kalev, and Itay Hen.
\newblock ``Quantum algorithm for time{-}dependent {H}amiltonian simulation by
  permutation expansion''.
\newblock \href{https://dx.doi.org/10.1103/PRXQuantum.2.030342}{PRX Quantum
  {\bf 2}, 030342}~(2021).

\bibitem{GAW19}
Andr{\'{a}}s {G}ily{\'{e}}n, Srinivasan {A}runachalam, and Nathan {W}iebe.
\newblock ``Optimizing quantum optimization algorithms via faster quantum
  gradient computation''.
\newblock In Proceedings of the 30th Annual ACM SIAM Symposium on Discrete
  Algorithms (SODA '19).
\newblock \href{https://dx.doi.org/10.1137/1.9781611975482.87}{Pages
  1425--1444}.
\newblock ~(2019).

\bibitem{KP20}
Iordanis Kerenidis and Anupam Prakash.
\newblock ``A quantum interior point method for {LP}s and {SDP}s''.
\newblock \href{https://dx.doi.org/10.1145/3406306}{ACM Transactions on Quantum
  Computing {\bf 1}, 1--32}~(2020).

\bibitem{Reif83}
John~H. {R}eif.
\newblock ``Logarithmic depth circuits for algebraic functions''.
\newblock \href{https://dx.doi.org/10.1137/0215017}{SIAM Journal on Computing
  {\bf 15}, 231--242}~(1986).

\bibitem{Szegedy04}
Mario {S}zegedy.
\newblock ``Quantum speed{-}up of {M}arkov chain based algorithms''.
\newblock In Proceedings of the 45th Annual IEEE Symposium on Foundations of
  Computer Science (FOCS '04).
\newblock \href{https://dx.doi.org/10.1109/FOCS.2004.53}{Pages 32--41}.
\newblock ~(2004).

\bibitem{SOGKL02}
Rolando~D. {S}omma, Gerardo {O}rtiz, James~E. {G}ubernatis, Emanuel {K}nill,
  and Raymond {L}aflamme.
\newblock ``Simulating physical phenomena by quantum networks''.
\newblock \href{https://dx.doi.org/10.1103/PhysRevA.65.042323}{Physical Review
  A {\bf 65}, 042323}~(2002).

\bibitem{KP17}
Iordanis {K}erenidis and Anupam {P}rakash.
\newblock ``Quantum recommendation systems''.
\newblock In 8th Innovations in Theoretical Computer Science Conference (ITCS
  '17).
\newblock \href{https://dx.doi.org/10.4230/LIPIcs.ITCS.2017.49}{Volume~67,
  pages 49:1--49:21}.
\newblock ~(2017).

\bibitem{AP17}
Dmitry~A. {A}banin and Zlatko {P}api{\'{c}}.
\newblock ``Recent progress in many{-}body localization''.
\newblock \href{https://dx.doi.org/10.1002/andp.201700169}{Annalen der Physik
  {\bf 529}, 1700169}~(2017).

\bibitem{AL18}
Fabien {A}let and Nicolas {L}aflorencie.
\newblock ``Many{-}body localization: {A}n introduction and selected topics''.
\newblock \href{https://dx.doi.org/10.1016/j.crhy.2018.03.003}{Comptes Rendus
  Physique {\bf 19}, 498--525}~(2018).

\bibitem{Anderson58}
Philip~W. {A}nderson.
\newblock ``Absence of diffusion in certain random lattices''.
\newblock \href{https://dx.doi.org/10.1103/PhysRev.109.1492}{Physical Review
  {\bf 109}, 1492--1505}~(1958).

\bibitem{AABS19}
Dmitry~A. {A}banin, Ehud {A}ltman, Immanuel {B}loch, and Maksym {S}erbyn.
\newblock ``Colloquium: {M}any{-}body localization, thermalization, and
  entanglement''.
\newblock \href{https://dx.doi.org/10.1103/RevModPhys.91.021001}{Reviews of
  Modern Physics {\bf 91}, 021001}~(2019).

\bibitem{PR16}
Joseph {P}olchinski and Vladimir {R}osenhaus.
\newblock ``The spectrum in the {S}achdev{-Y}e{-K}itaev model''.
\newblock \href{https://dx.doi.org/10.1007/JHEP04(2016)001}{Journal of High
  Energy Physics {\bf 2016}, 1--25}~(2016).

\bibitem{Rosenhaus19}
Vladimir {R}osenhaus.
\newblock ``An introduction to the {SYK} model''.
\newblock \href{https://dx.doi.org/10.1088/1751-8121/ab2ce1}{Journal of Physics
  A: Mathematical and Theoretical {\bf 52}, 323001}~(2019).

\bibitem{BM58}
George E.~P. {B}ox and Mervin~E. {M}uller.
\newblock ``A note on the generation of random normal deviates''.
\newblock \href{https://dx.doi.org/10.1214/aoms/1177706645}{The Annals of
  Mathematical Statistics {\bf 29}, 610--611}~(1958).

\bibitem{XSSS20}
Shenglong {X}u, Leonard {S}usskind, Yuan {S}u, and Brian {S}wingle.
\newblock ``A sparse model of quantum holography''~(2020).
\newblock  \href{http://arxiv.org/abs/2008.02303}{arXiv:2008.02303}.

\bibitem{CRODJKKMPSSVA19}
Yudong {C}ao, Jonathan {R}omero, Jonathan~P. {O}lson, Matthias {D}egroote,
  Peter~D. {J}ohnson, M{\'{a}}ria Kieferov{\'{a}}, Ian~D. Kivlichan, Tim Menke,
  Borja Peropadre, Nicolas P.~D. Sawaya, Sukin Sim, Libor Veis, and Al{\'{a}}n
  {A}spuru{-G}uzik.
\newblock ``Quantum chemistry in the age of quantum computing''.
\newblock \href{https://dx.doi.org/10.1021/acs.chemrev.8b00803}{Chemical
  Reviews {\bf 119}, 10856--10915}~(2019).

\bibitem{PMSYZLAO14}
Alberto {P}eruzzo, Jarrod {M}c{C}lean, Peter {S}hadbolt, Man{-H}ong {Y}ung,
  Xiao{-Q}i {Z}hou, Peter~J. {L}ove, Al{\'{a}}n {A}spuru{-G}uzik, and Jeremy~L.
  {O'B}rien.
\newblock ``A variational eigenvalue solver on a photonic quantum processor''.
\newblock \href{https://dx.doi.org/10.1038/ncomms5213}{Nature
  Communications{\bf 5}}~(2014).

\bibitem{Google20}
Google AI Quantum~{and} Collaborators, Frank Arute, Kunal Arya, Ryan Babbush,
  Dave Bacon, Joseph~C. Bardin, Rami Barends, Sergio Boixo, Michael Broughton,
  Bob~B. Buckley, et~al.
\newblock ``Hartree{-F}ock on a superconducting qubit quantum computer''.
\newblock \href{https://dx.doi.org/10.1126/science.abb9811}{Science {\bf 369},
  1084--1089}~(2020).

\end{thebibliography}

\onecolumn
\appendix

\section{Details Omitted in Section~\ref{sub:general_uniform_structured_hamiltonian}}
\label{sec:details_of_section_extension}

In this appendix we provide details of Section~\ref{sub:general_uniform_structured_hamiltonian}.

\begin{proof}[Proof of Lemma~\ref{lmm:(r,m)-parallel quantum walk}]
	Similar to the proof of Lemma~\ref{lmm:T^rdaggerS^rT^r block encode (H/d)^r},
	we can write $\ket{\Psi_{j_0}^{(r,m)}}=\ket{\Phi_{j_0}^{(r,m)}}+\ket{\Phi_{j_0}^{(r,m)\bot}}$,
	where the subnormalized state
	\begin{equation*}
		\ket{\Phi_{j_0}^{(r,m)}}:=\frac{1}{\sqrt{(md)^r}}\sum_{\bw\in [m]^r}\sum_{\bj\in \mathsf{H}^{\bw}}\ket{\bw}\ket{\bj}
		\otimes \sqrt{\tilde{H}_{j_0j_1}^*\ldots \tilde{H}_{j_{r-1}j_r}^*}\ket{\bj}
	\end{equation*}
	These subnormalized states also satisfy some orthogonal relations analogous to Lemma~\ref{lmm:Phi and Phi^bot}:
 	for all $j,k\in [N]$,
	\begin{enumerate}
		\item 
	 $\bra{\Phi_{j}^{(r,m)}}S^{(r,m)}\ket{\Phi_{k}^{(r,m)}}=\parens*{\parens*{\frac{H}{md}}^r}_{jk}$.
		\item
	 $\bra{\Phi_j^{(r,m)}}S^{(r,m)}\ket{\Phi_k^{(r,m)\bot}}=\bra{\Phi_j^{(r,m)\bot}}S^{(r,m)}\ket{\Phi_k^{(r,m)\bot}}=0$.
	\end{enumerate}
	We omit the proof details here, but mention that this can done by the same techniques as in the proof of Lemma~\ref{lmm:Phi and Phi^bot},
	combined with the fact that $\sum_{w\in [m]} \tilde{H}_{j_{s}j_{s+1}}=H_{j_sj_{s+1}}$.
	Using the orthogonal relations the conclusion is easy to obtain as in the proof of Lemma~\ref{lmm:T^rdaggerS^rT^r block encode (H/d)^r}.
\end{proof}

\begin{proof}[Proof of Lemma~\ref{lmm:sum of tensor products of Pauli matrices}]
	The proof is similar to the proof of Lemma~\ref{lmm:tensor product of Pauli matrices}.
	We verify the conditions in Definition~\ref{def:m-uniform-structured}.
	Observe that $L(w,j,t)=j\oplus x(w)$
	where $\oplus$ is the bit-wise XOR operator.
	\begin{itemize}
		\item 
			We have
			\begin{equation*}
				L^{(r)}\parens*{\bw,j,\bt}=j\oplus s(w_0)\oplus\ldots \oplus s(w_{r-1});
			\end{equation*}
			that is, take $f(j,k)=j\oplus k$, $g(w,t)=s(w)$ and $\circ$ to be $\oplus$ in \eqref{eq:m-uniform-structured L^r}.
			\begin{itemize}
				\item 
					$f$ is arithmetic-depth-efficiently computable,
					while computing $g(w,t)$ requires $O\parens*{1}$ queries to $\calO_P$.
				\item
					The XOR $\oplus$ is obviously associative and arithmetic-depth-efficiently computable.
			\end{itemize}
		\item
			Take the inverse function to be $L^{(-1)}\parens*{w,j,k}=s\parens*{w}$,
			which can be computed by a single query to $\calO_P$.
		\item
			Observe that $(j,k)\in \mathsf{H}_w$ if and only if $j\oplus k = x(w)$;
			that is, $\bbracks{\parens*{j,k}\in \mathsf{H}_w}=\bbracks{j\oplus k=x(w)}$,
			which is obviously arithmetic-depth-efficiently computable with $O\parens*{1}$ queries to $\calO_P$.
	\end{itemize}
\end{proof}

\begin{proof}[Proof of Lemma~\ref{lmm:local Hamiltonian is m-uniform-structured}]
	In this proof we use superscript $i$ to denote the $i^{\text{th}}$ bit of a number.
	We verify the conditions in Definition~\ref{def:m-uniform-structured}.
	Observe that $L(w,j,t)=j \vartriangleleft_{s(w)} \parens*{t\restriction_{s(w)}}$,
	with $\vartriangleleft_s$ and $\restriction_s$ defined
	in the proof of Lemma~\ref{lmm:local Hamiltonian term is uniform-structured}.
	\begin{itemize}
		\item 
			We have
			\begin{equation*}
				L^{(r)}\parens*{\bw,j,\bt}=
				j\vartriangleleft\parens*{t_0\restriction_{s(w_0)},s(w_0)}\vartriangleleft^\vee \ldots\vartriangleleft^\vee \parens*{t_{r-1}\restriction_{s(w_{r-1})},s(w_{r-1})};
			\end{equation*}
			that is, take
			$f\parens*{j,(k,x)}=j\vartriangleleft(k,x)$,
			$g\parens*{w,t}=\parens*{t\restriction_{s(w)},s(w)}$,
			and $\circ$ to be $\vartriangleleft^\vee$,
			where $\vartriangleleft$ and $\vartriangleleft^\vee$ are defined in the proof of Lemma~\ref{lmm:local Hamiltonian term is uniform-structured}.
			\begin{itemize}
				\item
					As shown in the proof of Lemma~\ref{lmm:local Hamiltonian term is uniform-structured},
					$f$ is arithmetic-depth-efficiently computable.

					The mapping $\ket{w}\ket{0}\mapsto \ket{w}\frac{1}{\sqrt{d}}\sum_{t\in [d]}\ket{g(w,t)}$ can be performed
					by first querying $\calO_P$ to obtain $\ket{s\parens*{w}}$ in an ancilla,
					then conditioned on it implementing $n$ controlled Hadamard gates in parallel,
					assuming $d$ is a power of two w.l.o.g.
					This is arithmetic-depth-efficient with $O\parens*{1}$ queries.
				\item 
					$\vartriangleleft^\vee$ is associative and arithmetic-depth-efficiently computable,
					as shown in the proof of Lemma~\ref{lmm:local Hamiltonian term is uniform-structured}.
			\end{itemize}
		\item
			Take the inverse function to be $L^{(-1)}\parens*{w,j,k}=\parens*{k\wedge s(w),s(w)}$ with $\wedge$ the bit-wise AND operator.
			This is arithmetic-depth-efficiently computable with $O\parens*{1}$ queries to $\calO_P$.
		\item
			Observe that $(j,k)\in \mathsf{H}_w$ if and only if $j^i=k^i$ for all $i$ with $s(w)^i=0$;
			that is,
			\begin{equation*}
				\bbracks{\parens*{j,k}\in \mathsf{H}_w}=\bigwedge_{i\in [n]} \parens*{j^ik^i+(1-j^i)(1-k^i)}\cdot\parens*{1-s(w)^i}
			\end{equation*}
			which is arithmetic-depth-efficiently computable by an $O\parens*{\log n}$-depth and $O(n)$-size quantum circuit
			by taking $\circ$ to be AND gate in Lemma~\ref{lmm:sequence of associative operators},
			with additional $O(1)$ queries to $\calO_P$ to compute $s(w)$.
	\end{itemize}
\end{proof}

\subsection{Proof of Theorem~\ref{thm:complexity of Q^(r,m)}}

Recall that the Hamiltonians considered in Section~\ref{sub:general_uniform_structured_hamiltonian}
has the form $H=\sum_{w\in [m]} H_w$.
To prove Theorem~\ref{thm:complexity of Q^(r,m)},
following the same line of analysis as in Section~\ref{sub:parallel_version},
we will show that:
\begin{enumerate}
    \item
        for $m$-uniform-structured Hamiltonians,
        the pre-walk, i.e., preparing the pre-walk state $\ket{p_{j_0}^{(r,m)}}$ in \eqref{eq:extended pre-walk state},
        can be implemented by a parallel quantum circuit;
    \item
        the re-weight can be efficiently implemented in parallel;
    \item
        the (extended) parallel quantum walk for uniform-structured Hamiltonians can be efficiently implemented in parallel.
\end{enumerate}

Firstly we have the following lemma as a generalization of Corollary~\ref{cor:log depth path for 1-uniform-structured}.

\begin{lemma}
	\label{lmm:log depth path for m-uniform-structured}
	For an $m$-uniform-structured Hamiltonian $H$, the mapping
	\begin{equation*}
		\ket{\bw,j_0,g\parens*{\bw,\bt}}\mapsto \ket{\bw}\ket{\bj}
	\end{equation*}
	for all $j_0\in [N],\bw\in [m]^r, \bt\in [d]^{r}$,
	where $g\parens*{\bw,\bt}:=\parens*{g\parens*{w_0,t_0},\ldots,g\parens*{w_{r-1},t_{r-1}}}$
	and $\bj\in \mathsf{H}^r$ satisfies $j_{s+1}=L\parens*{w_s,j_s,t_s}$ for $s\in [r]$,
	can be implemented by a quantum circuit of
	\begin{itemize}
		\item
			depth $O\parens*{1}$ and size $O\parens*{r}$ w.r.t.\ queries to $\calO_P$, and
		\item
			depth $O\parens*{\log r\cdot \log^2 n}$ and size $O\parens*{r^2n^4}$ w.r.t.\ gates.
	\end{itemize}
\end{lemma}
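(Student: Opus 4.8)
The plan is to mirror the proof of Corollary~\ref{cor:log depth path for 1-uniform-structured}, adapting each step from the single-Hamiltonian setting to the sum-Hamiltonian setting by carrying the extra $\bw$-register along for the ride. Concretely, I would first invoke the analog of Corollary~\ref{cor:log depth L^r for 1-uniform-structured parallel}: since for an $m$-uniform-structured Hamiltonian $L^{(r)}$ admits the decomposition $L^{(r)}\parens*{\bw,j,\bt}=f\parens*{j,g(w_0,t_0)\circ\ldots\circ g(w_{r-1},t_{r-1})}$ with $\circ$ associative and arithmetic-depth-efficient, Lemma~\ref{lmm:sequence of associative operators} computes the prefix $g(w_0,t_0)\circ\ldots\circ g(w_{s-1},t_{s-1})$ for every $s=1,\ldots,r$ in parallel, costing $O\parens*{\log r\cdot\log^2 n}$-depth and $O\parens*{r^2n^4}$-size of gates together with $O\parens*{1}$-depth and $O\parens*{r}$-size of queries to $\calO_P$ (the queries enter only through computing the $g(w_s,t_s)$ values and $f$). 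Applying $f$ on each prefix then yields all $j_s=L^{(s)}\parens*{w_0,\ldots,w_{s-1},j_0,t_0,\ldots,t_{s-1}}$, and a parallel $\mathrm{COPY}^n$ stores them.

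The second step is the garbage-cleaning, which is where the inverse function $L^{-1}$ comes in: since $L^{-1}\parens*{w_s,j_s,j_{s+1}}=g(w_s,t_s)$, once all the $j_s$ are available I can uncompute the redundant register $\ket{g\parens*{\bw,\bt}}$ by evaluating $L^{-1}$ on inputs $\ket{w_s,j_s,j_{s+1}}$. As in the proof of Corollary~\ref{cor:log depth path for 1-uniform-structured}, I would do this first on odd $s$ then on even $s$ so that the evaluations for different $s$ act on disjoint triples $\parens*{j_s,j_{s+1}}$ and can therefore be run in parallel in two rounds. Since $L^{-1}$ is arithmetic-depth-efficient with $O\parens*{1}$ queries to $\calO_P$, this cleaning costs $O\parens*{\log^2 n}$-depth and $O\parens*{rn^4}$-size of gates with $O\parens*{1}$-depth and $O\parens*{r}$-size of queries. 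Finally I uncompute the prefix partial-sums from the first step by reverse computation, leaving exactly $\ket{\bw}\ket{\bj}$. Summing the depths and sizes of the three phases gives the claimed $O\parens*{1}$-depth, $O\parens*{r}$-size of $\calO_P$ queries and $O\parens*{\log r\cdot\log^2 n}$-depth, $O\parens*{r^2n^4}$-size of gates.

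The only real subtlety — the main obstacle, though a mild one — is bookkeeping the $\bw$-register correctly: it must be treated as a \emph{fixed} control throughout, never overwritten, and the $g$-values and the $L^{-1}$-evaluations must be fed the matching $w_s$ component, which requires making $O\parens*{r}$ copies of the relevant slices of $\bw$ (via parallel $\mathrm{COPY}_r$ as in Corollary~\ref{cor:log depth L^r for 1-uniform-structured parallel}) so that the parallel evaluations do not contend for the same $\ket{w_s}$ qubits; this copying adds only $O\parens*{\log r}$-depth and $O\parens*{r^2\log m}$-size, absorbed into the stated bounds since $m=\poly n$. Everything else is a routine transcription of the $m=1$ argument, so I would keep the write-up short and point the reader to the proofs of Corollary~\ref{cor:log depth L^r for 1-uniform-structured parallel} and Corollary~\ref{cor:log depth path for 1-uniform-structured} for the details that carry over verbatim.
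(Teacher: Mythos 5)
Your proposal is correct and follows exactly the route the paper takes: the paper's own proof of this lemma simply states that it proceeds ``along the same line as'' Lemma~\ref{lmm:log depth L^r for 1-uniform-structured}, Corollary~\ref{cor:log depth L^r for 1-uniform-structured parallel} and Corollary~\ref{cor:log depth path for 1-uniform-structured} (using only the first two conditions of Definition~\ref{def:m-uniform-structured} and $m=\poly n$), and omits the details. Your write-up is a faithful expansion of that omitted transcription, including the prefix computation via Lemma~\ref{lmm:sequence of associative operators}, the odd/even $L^{-1}$ cleaning, and the $\bw$-register bookkeeping.
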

\begin{proof}
	The lemma can be proved along the same line as Lemma~\ref{lmm:log depth L^r for 1-uniform-structured},
	Corollary~\ref{cor:log depth L^r for 1-uniform-structured parallel} and Corollary~\ref{cor:log depth path for 1-uniform-structured}.
	Recall that $m=\poly n$.
	We omit the proof details here, but mention that only the first two conditions of Definition~\ref{def:m-uniform-structured} are needed.
\end{proof}

Then the following lemma shows
that an $(r,m)$-pre-walk can be implemented by a parallel quantum circuit,
as a generalization of Lemma~\ref{lmm:pre-walk on 1-uniform-structured}.

\begin{lemma}[Pre-walk on $m$-uniform-structured Hamiltonians]
	\label{lmm:pre-walk on m-uniform-structured}
	An $(r,m)$-pre-walk on an $m$-uniform-structured Hamiltonian $H$,
	i.e., preparing the state $\ket{p_{j_0}^{(r,m)}}$,
	can be implemented by a quantum circuit of
	\begin{itemize}
		\item
			depth $O\parens*{1}$ and size $O\parens*{r}$ w.r.t.\ queries to $\calO_P$, and
		\item 
			depth $O\parens*{\log r\cdot \log^2 n}$ and size $O\parens*{r^2n^4}$ w.r.t.\ gates.
	\end{itemize}
\end{lemma}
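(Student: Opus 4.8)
The plan is to mirror the proof of Lemma~\ref{lmm:pre-walk on 1-uniform-structured}, now working in the extended space $\calH^W\otimes\calH^A$ with $\calH^W=(\Co^m)^{\otimes r}$ and $\calH^A=(\Co^N)^{\otimes r+1}$, and to offload the bulk of the work onto Lemma~\ref{lmm:log depth path for m-uniform-structured}. Starting from the initial state $\ket{0}^{\otimes r}\otimes\ket{j_0}\otimes\ket{0}^{\otimes r}$ (with the first $r$ registers in $\calH^W$ and the last $r$ in $\calH^A$), I would proceed in three stages and then sum the costs.

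First, prepare in $\calH^W$ the uniform superposition $\frac{1}{\sqrt{m^r}}\sum_{\bw\in[m]^r}\ket{\bw}$. When $m$ is a power of two this is a single layer of Hadamard gates of depth $O\parens*{1}$; in general it is an arithmetic-depth-efficient state preparation over $[m]$ applied independently to each of the $r$ registers, which by Lemma~\ref{lmm:parallel quantum circuit for elementary arithmetics} and $m=\poly n$ costs $O\parens*{\log^2 n}$-depth and $O\parens*{rn^4}$-size of gates, with no oracle queries. Second, conditioned on each $w_s$, realize the mapping $\ket{w_s}\ket{0}\mapsto\ket{w_s}\frac{1}{\sqrt d}\sum_{t\in[d]}\ket{g(w_s,t)}$ in the $s$-th block, simultaneously for $s=0,\dots,r-1$. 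These blocks act on disjoint subsystems, so the parallel application has the depth of a single one; by the first bullet of Definition~\ref{def:m-uniform-structured} each such mapping is arithmetic-depth-efficiently computable with $O\parens*{1}$ queries to $\calO_P$, so this stage costs $O\parens*{1}$-depth and $O\parens*{r}$-size of queries to $\calO_P$ and $O\parens*{\log^2 n}$-depth, $O\parens*{rn^4}$-size of gates, producing $\frac{1}{\sqrt{(md)^r}}\sum_{\bw\in[m]^r}\ket{\bw}\otimes\ket{j_0}\sum_{\bt\in[d]^r}\ket{g(\bw,\bt)}$ with $g(\bw,\bt)=(g(w_0,t_0),\dots,g(w_{r-1},t_{r-1}))$.

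Third, apply Lemma~\ref{lmm:log depth path for m-uniform-structured} to the state $\ket{\bw,j_0,g(\bw,\bt)}$, obtaining $\ket{\bw}\ket{\bj}$ with $j_{s+1}=L(w_s,j_s,t_s)$ for $s\in[r]$; as $\bt$ ranges over $[d]^r$ the path $\bj$ ranges exactly over $\mathsf{H}^{\bw}$, so the result is precisely $\ket{p_{j_0}^{(r,m)}}$ as defined in \eqref{eq:extended pre-walk state}. By that lemma this last stage costs $O\parens*{1}$-depth and $O\parens*{r}$-size of queries to $\calO_P$, and $O\parens*{\log r\cdot\log^2 n}$-depth, $O\parens*{r^2n^4}$-size of gates, which dominates the gate cost of the first two stages. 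Summing the three stages yields the claimed bounds.

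The main obstacle I expect is bookkeeping rather than conceptual. One must check that the sum over $\bt\in[d]^r$ in the third stage reproduces $\sum_{\bj\in\mathsf{H}^{\bw}}\ket{\bj}$ with the correct multiplicities, so that the normalization $1/\sqrt{(md)^r}$ is right; this rests on the assumption — inherited from the $m=1$ case and implicit in each $\mathsf{H}_w$ being $d$-sparse — that $L(w,j,\cdot)$ enumerates the $d$ neighbours of $j$ in $\mathsf{H}_w$ without collision. One must also ensure that the ancillary information $\ket{g(\bw,\bt)}$ created in stage two is uncomputed while producing $\ket{\bj}$; this is exactly the role of the inverse function $L^{-1}$ from Definition~\ref{def:m-uniform-structured}, invoked inside Lemma~\ref{lmm:log depth path for m-uniform-structured}, so no new argument is needed beyond citing that lemma.
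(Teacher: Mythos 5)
Your proposal is correct and follows essentially the same route as the paper's proof: a uniform superposition over $\bw$ in $\calH^W$, the parallel application of the mapping $\ket{w_s}\ket{0}\mapsto\ket{w_s}\frac{1}{\sqrt d}\sum_{t}\ket{g(w_s,t)}$ on disjoint blocks, and then an invocation of Lemma~\ref{lmm:log depth path for m-uniform-structured} to convert $\ket{\bw,j_0,g(\bw,\bt)}$ into $\ket{\bw}\ket{\bj}$, with the costs summed exactly as you describe. The only cosmetic difference is that the paper simply assumes $m$ is a power of two w.l.o.g.\ and uses a single layer of Hadamards in the first stage, whereas you also sketch the general case; this does not change the argument or the stated bounds.
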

\begin{proof}
	Let $\calH^W\otimes\calH^J=\parens*{\bigotimes_{s\in [r]}\calH_s^W}\otimes \parens*{\bigotimes_{s\in [r+1]}\calH_s^J}$ be the state space
	with $\calH_s^W=\Co^{m}$ and $\calH_s^J=\Co^{N}$.
	The process of preparing $\ket{p_{j_0}^{(r,m)}}$ is presented below,
	starting from the initial state $\ket{0}^{\otimes mr}\ket{j_0}\ket{0}^{\otimes Nr}$
	with $\ket{j_0}\in \Co^{N},\ket{0}\in \Co$.
	\begin{enumerate}
		\item
			\label{stp:1 in lmm:pre-walk on m-uniform-structured}
			Prepare a superposition over computational basis states of size $m^r$ in the subspace $\calH^W$:
			\begin{equation*}
				\frac{1}{\sqrt{m^r}}\sum_{\bw\in [m]^r}\ket{\bw}\ket{j_0}\ket{0}^{\otimes Nr}.
			\end{equation*}
		\item
			\label{stp:2 in lmm:pre-walk on m-uniform-structured}
			For each $s\in [r]$,
			perform in parallel the mapping $\ket{w_s}\ket{0}\mapsto \ket{w_s}\frac{1}{\sqrt{d}}\sum_{t_s\in [d]}\ket{g\parens*{w_s,t_s}}$ in the subspace $\calH_s^W\otimes\calH_{s+1}^J$,
			to obtain the state
			\begin{equation*}
				\frac{1}{\sqrt{(md)^r}}\sum_{\bw\in [m]^r}\sum_{\bt\in [d]^r}\ket{\bw}\ket{j_0}\ket{g\parens*{\bw,\bt}}
			\end{equation*}
		\item
			Apply Lemma~\ref{lmm:log depth path for m-uniform-structured}
			to obtain the goal state $\ket{p_{j_0}^{(r,m)}}=\frac{1}{\sqrt{\parens*{md}^r}}\sum_{\bw\in [m]^r}\sum_{\bj\in \mathsf{H}^{\bw}}\ket{\bw}\ket{\bj}$.
	\end{enumerate}
	Step~\ref{stp:1 in lmm:pre-walk on m-uniform-structured} can be done by an $O(1)$-depth and $O\parens*{r\log m}$-size quantum circuit using Hadamard gates in parallel,
	assuming $m$ is a power of two w.l.o.g.
	Each mapping 
        \begin{equation*}
            \ket{w}\ket{0}\mapsto \ket{w}\frac{1}{\sqrt{d^r}}\sum_{t\in [d]}\ket{g(w,t)}
        \end{equation*}
	in Step~\ref{stp:2 in lmm:pre-walk on m-uniform-structured} is arithmetic-depth-efficient with $O\parens*{1}$-queries to $\calO_P$ due to Definition~\ref{def:m-uniform-structured}.
	Recall that $m=\poly n$,
	the final complexity follows from summing up these complexities combined with Lemma~\ref{lmm:log depth path for m-uniform-structured}.
\end{proof}

Now we move to the re-weight analysis.

\begin{lemma}
	\label{lmm:walk stage complexity of m-uniform-structured}
	Re-weight of $\ket{p_{j_0}^{(r,m)}}$,
	i.e., performing the mapping $\ket{p_{j_0}^{(r,m)}}\mapsto \ket{\Psi_{j_0}^{(r,m)}}$,
	where $\ket{\Psi_{j_0}^{(r,m)}}$ is defined in \eqref{eq:Psi_j^(r,m)},
	can be implemented to precision $\epsilon$ by a quantum circuit of 
	\begin{itemize}
		\item 
			depth $O(1)$ and size $O(r)$ w.r.t.\ queries to $\calO_H^b$ with $b=O\parens*{\log \parens*{m/\epsilon}}$,
		\item
			depth $O\parens*{1}$ and size $O\parens*{rm}$ w.r.t.\ queries to $\calO_P$, and
		\item
			depth $O\parens*{\log^2 n + \log^2\log \parens*{m/\epsilon}}$ and size $O\parens*{r \bracks*{mn^4+\log^4\parens*{m/\epsilon}}}$ w.r.t.\ gates,
	\end{itemize}
	for $r=\polylog \parens*{1/\epsilon}$.
\end{lemma}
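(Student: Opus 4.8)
The plan is to follow the proof of Lemma~\ref{lmm:walk stage complexity} almost verbatim, the only genuinely new ingredient being the construction of the rescaled-entry oracle $\calO_{\tilde H}$ out of $\calO_H$. Recall from Figure~\ref{fig:implementation of T^(r,m)} that the re-weight stage consists of Step~\ref{stp:2 in implementation of T^{(r,m)}} --- a $\mathrm{COPY}^{(r+1)(n+1)}$ duplicating the basis states of $\calH^A$ into $\calH^B$, which as before costs $O(1)$-depth and $O(rn)$-size --- followed by Step~\ref{stp:3 in implementation of T^{(r,m)}} --- the $r$ controlled rotations $\ket{0}\ket{\tilde H_{j_sj_{s+1}}}\mapsto\parens*{\sqrt{\tilde H_{j_sj_{s+1}}^*}\ket{0}+\sqrt{1-\abs*{\tilde H_{j_sj_{s+1}}}}\ket{1}}\ket{\tilde H_{j_sj_{s+1}}}$, each computed and then uncomputed with $\calO_{\tilde H}$, acting in the pairwise disjoint subspaces $\calH_s^A\otimes\calH_{s+1}^B$ for $s\in[r]$. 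So everything reduces to bounding the cost of one application of $\calO_{\tilde H}^b$ together with one such rotation, and then multiplying by $r$, the $r$ blocks being parallelizable since their subspaces are disjoint.

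First I would build $\calO_{\tilde H}$. By Remark~\ref{rem:calO_tildeH}, $\tilde H_{jk}=H_{jk}/c(j,k)$ with $c(j,k)=\sum_{w\in[m]}\bbracks*{(j,k)\in\mathsf{H}_w}$, and the third clause of Definition~\ref{def:m-uniform-structured} guarantees that $\bbracks*{(j,k)\in\mathsf{H}_w}$ is arithmetic-depth-efficiently computable with $O(1)$ queries to $\calO_P$. To evaluate $c(j,k)$: make $m$ copies of $(j,k)$ by $\mathrm{COPY}_m$ on $O(n)$ bits (Lemma~\ref{lmm:parallel copy}), $O(\log m)$-depth and $O(mn)$-size; compute the $m$ indicator bits in parallel, $O(\log^2 n)$-depth and $O(mn^4)$-size of gates plus $O(1)$-depth and $O(m)$-size of $\calO_P$-queries; sum the $m$ bits via Corollary~\ref{cor:parallel addition of a sequence}, $O(\log m)$-depth and $O(m\log m)$-size; then reverse-compute the garbage. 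Since $m=\poly n$, this is $O(\log^2 n)$-depth and $O(mn^4)$-size of gates with $O(1)$-depth and $O(m)$-size of $\calO_P$-queries. One then queries $\calO_H^{b'}$ once for $H_{jk}$ and divides by $c(j,k)$ arithmetic-depth-efficiently (Lemma~\ref{lmm:parallel quantum circuit for elementary arithmetics}); as $c(j,k)\le m$, taking $b'=b+\ceil{\log m}$ suffices to obtain $\tilde H_{jk}$ to $b$ bits. This gives $\calO_{\tilde H}^b$ at a cost of $O(1)$ queries to $\calO_H^{b'}$, $O(1)$-depth and $O(m)$-size of $\calO_P$-queries, and $O(\log^2 n+\log^2 b')$-depth, $O(mn^4+b'^4)$-size of gates.

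Finally I would re-run the single-Hamiltonian rotation analysis. To make the re-weight $\epsilon$-precise, each of the $r$ rotations must be $O(\epsilon/r)$-precise, so $\tilde H_{jk}$ is needed to $O(\log(r/\epsilon))$ bits, i.e.\ $b=O(\log(r/\epsilon))$ and $b'=O(\log(rm/\epsilon))=O(\log(m/\epsilon))$ for $r=\polylog(1/\epsilon)$. The rotation is performed by computing $\sqrt{\tilde H_{jk}^*}$ --- using the same $O(2^{-b})$ imaginary-part perturbation to resolve the sign of the square root for negative entries as in Lemma~\ref{lmm:walk stage complexity} --- and then applying $\mathrm{C}_b$-$R_Y$, which by Lemma~\ref{lmm:parallel quantum circuit for elementary arithmetics} and Lemma~\ref{lmm:parallel controlled Z-rotation} costs $O(\log^2 b)=O(\log^2\log(m/\epsilon))$-depth and $O(b^4)=O(\log^4(m/\epsilon))$-size. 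Running the $r$ blocks (oracle-construction, rotation, uncompute) in parallel and adding the $\mathrm{COPY}$ of Step~\ref{stp:2 in implementation of T^{(r,m)}} yields exactly the claimed bounds. The only point requiring genuine care --- the main obstacle, such as it is --- is verifying that the $\log m$ and $\log^2 m$ overheads from copying $(j,k)$ and from the parallel indicator evaluations are absorbed into $O(\log n)$ and $O(\log^2 n)$ via $m=\poly n$, and that the $\Theta(m)$ simultaneous $\calO_P$-queries genuinely cost only $O(1)$ query-depth; otherwise the argument is a bookkeeping re-run of Lemma~\ref{lmm:walk stage complexity}.
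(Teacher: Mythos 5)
Your proposal is correct and follows essentially the same route as the paper: the paper's proof likewise reduces to Lemma~\ref{lmm:walk stage complexity} and accounts for exactly the three extras you identify --- computing $c(j,k)$ via $\mathrm{COPY}^{2n}_m$, parallel indicator evaluation, and Corollary~\ref{cor:parallel addition of a sequence}; the precision rescaling by $m$ from $c(j,k)\leq m$; and the arithmetic-depth-efficient division $\tilde H_{jk}=H_{jk}/c(j,k)$. Your bookkeeping of the resulting depth and size bounds matches the paper's.
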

\begin{proof}
	The analysis is exactly the same as in Lemma~\ref{lmm:walk stage complexity},
	with additional complexities of:
	\begin{enumerate}
		\item
			\label{stp:1 in lmm:walk stage complexity of m-uniform-structured}
			Computing $c(j,k)=\sum_{w\in [m]}\bbracks{\parens*{j,k}\in\mathsf{H}_w} $ in $r$ subspaces,
			each of which can be implemented by a quantum circuit of
			\begin{itemize}
				\item
					depth $O\parens*{1}$ and size $O\parens*{m}$ w.r.t.\ queries to $\calO_P$, and
				\item
					depth $O\parens*{\log^2 n}$ and size $O\parens*{mn^4}$ w.r.t.\ gates.
			\end{itemize}
			To see this, note that for each subspace,
			one can first creates $m$ copies of $(j,k)$ by $\mathrm{COPY}^{2n}_{m}$,
			then compute each $\bbracks{\parens*{j,k}\in \mathsf{H}_w}$ arithmetic-depth-efficiently with $O\parens*{1}$ queries to $\calO_P$ for $w\in [m]$ in parallel,
			due to the third condition in Definition~\ref{def:m-uniform-structured}.
			Finally apply Corollary~\ref{cor:parallel addition of a sequence} to compute $c(j,k)$ in each subspace,
			followed by garbage cleaning by reverse computation.
			The final complexity follows from the assumption $m=\poly n$.
		\item
			Scaled total precision $\tilde{\epsilon}=\epsilon/m$ for the aforementioned construction of the oracle $\calO_{\tilde{H}}$.
		\item
			Arithmetic-depth-efficient circuits for the division $\tilde{H}_{jk}=H_{jk}/c(j,k)$ by Lemma~\ref{lmm:parallel quantum circuit for elementary arithmetics}.
	\end{enumerate}
\end{proof}

Finally, combining Lemma~\ref{lmm:pre-walk on m-uniform-structured}, Lemma~\ref{lmm:walk stage complexity of m-uniform-structured}
and the negligible complexity of implementing $S^{(r,m)}$
gives the complexity of the $(r,m)$-parallel quantum walk $Q^{(r,m)}$
for $m$-uniform-structured Hamiltonians in Theorem~\ref{thm:complexity of Q^(r,m)}.

\section{Definition of State Preparation}
\label{sec:state_preparation_for_lcu}

In this appendix we show that the state preparation unitary in Definition~\ref{def:state preparation unitary}
is a special case of Definition 51 in~\cite{GSLW19}:

\begin{definition}[State preparation pair~\cite{GSLW19}]
    \label{def:state preparation pair}
    Let $\ba\in \Co^R$ and $\norm{\ba}_1\leq \alpha$, the pair of unitaries $(V_C,V_D)$ is called an $(\alpha,b,\epsilon)$-state-preparation-pair
    of $\ba$ if $V_C\ket{0}=\sum_{r\in [2^b]} c_r\ket{r}$ and $V_D\ket{0}=\sum_{r\in [2^b]} d_r\ket{r}$ with $\ket{0}\in \Co^{2^b}$, such that
    $\sum_{r\in [R]}\abs*{\alpha\parens*{c_r^*d_r}-a_r}\leq \epsilon$ and for all $r\in R,\ldots,2^b-1$ we have $c_r^*d_r=0$.
\end{definition}

This is shown more precisely in the following lemma.

\begin{lemma}
    Let $\ba\in \Co^R$ and $\alpha:=\norm{\ba}_1$,
    where we assume $R=2^s \geq 4$ w.l.o.g.
    Let $V$ be an $(\alpha,\epsilon)$-state-preparation-unitary of $\ba$,
    then $(V^\dagger, V)$ is an $(\alpha, s, \alpha R \epsilon)$-state-preparation-pair of $\ba$.
\end{lemma}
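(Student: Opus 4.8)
The plan is to unfold both definitions and check that the quantities match up. Recall that an $(\alpha,\epsilon)$-state-preparation-unitary $V$ of $\ba$ is a unitary such that $V\ket{0}$ is $\epsilon$-close in $l_2$-norm to $\frac{1}{\sqrt{\alpha}}\sum_{r\in[R]}\sqrt{a_r}\ket{r}$. We want to show the pair $(V^\dagger,V)$ is an $(\alpha,s,\alpha R\epsilon)$-state-preparation-pair in the sense of Definition~\ref{def:state preparation pair}. So I would set $V_C:=V^\dagger$ and $V_D:=V$, write $V_C\ket{0}=\sum_r c_r\ket{r}$ and $V_D\ket{0}=\sum_r d_r\ket{r}$, and note that $c_r = \overline{d_r}$ since $V_C\ket{0}=V^\dagger\ket{0}$ and $V_D\ket{0}=V\ket{0}$ — wait, this needs care: $c_r=\langle r|V^\dagger|0\rangle = \overline{\langle 0|V|r\rangle}$, which is not literally $\overline{d_r}=\overline{\langle r|V|0\rangle}$. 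The cleaner route is to observe that $c_r^* d_r = \langle 0|V|r\rangle\langle r|V|0\rangle = |\langle r|V|0\rangle|^2 = |d_r|^2 \ge 0$. Then $\sum_r c_r^* d_r \ket{r}\bra{r}$ extracts exactly the squared amplitudes of the state $V\ket{0}$.

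The key computation is then to bound $\sum_{r\in[R]}\bigl|\alpha(c_r^* d_r) - a_r\bigr| = \sum_{r\in[R]}\bigl|\alpha|d_r|^2 - a_r\bigr|$. Denote the ideal state $\ket{\phi}:=\frac{1}{\sqrt\alpha}\sum_{r\in[R]}\sqrt{a_r}\ket{r}$, so its amplitudes are $\phi_r=\sqrt{a_r/\alpha}$ (here I'd note $a_r\ge 0$ is implicitly assumed, or more precisely the square roots are the appropriate branch so $|\phi_r|^2 = |a_r|/\alpha$ — I would follow whatever convention the paper uses for $\sqrt{a_r}$, treating $|\phi_r|^2 = a_r/\alpha$ when $a_r$ is a nonneg real, and handling complex $a_r$ via $a_r/\alpha = \phi_r^2$ with the chosen branch). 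We have $\|V\ket 0 - \ket\phi\|\le\epsilon$, hence $|d_r - \phi_r|\le\epsilon$ for each $r$. Then $\bigl|\alpha|d_r|^2 - \alpha|\phi_r|^2\bigr| = \alpha\,\bigl||d_r|-|\phi_r|\bigr|\cdot\bigl(|d_r|+|\phi_r|\bigr) \le \alpha\epsilon\bigl(|d_r|+|\phi_r|\bigr)$ by the reverse triangle inequality. Summing over $r\in[R]$: $\sum_{r\in[R]}\bigl(|d_r|+|\phi_r|\bigr) \le \sqrt{R}\,\|V\ket0\| + \sqrt{R}\,\|\ket\phi\| \le \sqrt R + \sqrt R = 2\sqrt R$ by Cauchy–Schwarz (and both states are unit vectors). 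So the total error is at most $2\alpha\sqrt R\,\epsilon$. This gives a bound of $2\alpha\sqrt R\epsilon$, and since $R\ge 4$ we have $2\sqrt R\le R$, so $2\alpha\sqrt R\epsilon\le\alpha R\epsilon$, matching the claimed $(\alpha,s,\alpha R\epsilon)$.

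Finally I would check the support condition: for $r\in\{R,\dots,2^s-1\}$ we need $c_r^* d_r = 0$, i.e.\ $|d_r|^2=0$. Since the ideal state $\ket\phi$ is supported on $[R]$ and $V\ket0$ is $\epsilon$-close to it, this is \emph{not} automatically zero. So I expect the main (minor) obstacle is exactly this: one must either assume $V$ is defined to have support contained in $[R]$ (which is natural, since $R=2^s$ and the circuit acts on $s$ qubits — there is no extra space, so $2^b=2^s=R$ and the condition is vacuous), or absorb a negligible out-of-support mass into the error term. I would go with the first reading: since $R=2^s$ we take $b=s$, the register has exactly $R$ basis states, and the support condition $r\in\{R,\dots,2^b-1\}=\emptyset$ is trivially satisfied. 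With $\norm{\ba}_1=\alpha\le\alpha$ holding trivially, all conditions of Definition~\ref{def:state preparation pair} are met with parameters $(\alpha,s,\alpha R\epsilon)$, completing the proof.
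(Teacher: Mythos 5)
Your core estimate is essentially the paper's: both arguments convert the $l_2$ bound on the amplitudes into an $l_1$ bound on their squares by factoring a difference of squares, bounding one factor pointwise and the other via Cauchy--Schwarz ($\|\cdot\|_1\le\sqrt{R}\,\|\cdot\|_2$), and then absorbing the resulting $2\sqrt{R}$ into $R$ using $R\ge 4$. (The paper bounds the \emph{sum} factor pointwise by $2$ and applies Cauchy--Schwarz to the difference factor; you bound the difference factor pointwise by $\epsilon$ and apply Cauchy--Schwarz to $|d_r|+|\phi_r|$ --- the two routes give the same $2\alpha\sqrt{R}\epsilon$.) Your observation that the support condition is vacuous because $b=s$ and $2^b=R$ is also exactly the paper's choice.

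The step that does not survive scrutiny is the identification $c_r^*d_r=\langle 0|V|r\rangle\langle r|V|0\rangle=|\langle r|V|0\rangle|^2$: the second equality needs $\langle 0|V|r\rangle=\overline{\langle r|V|0\rangle}$, i.e.\ $V$ Hermitian, which is not given. More consequentially, under your reading $\alpha\,c_r^*d_r$ is a nonnegative real, so for complex $a_r$ the quantity $\sum_r|\alpha(c_r^*d_r)-a_r|$ cannot be made small no matter how accurate $V$ is --- and the coefficients in the paper's application are $(-i)^r/r!$, so the complex case is the one that matters. The paper's proof instead writes $V\ket{0}=\sum_r\sqrt{v_r}\ket{r}$ and works with $c_r^*d_r=v_r=(\sqrt{v_r})^2$, the complex square of the amplitude in the chosen branch, so that $v_r-a_r/\alpha=(\sqrt{v_r}-\sqrt{a_r/\alpha})(\sqrt{v_r}+\sqrt{a_r/\alpha})$ holds as an identity of complex numbers and the whole estimate goes through for complex $\ba$. (Strictly this requires $c_r=\overline{d_r}$, i.e.\ the left preparation unitary should be the entrywise conjugate of $V$ rather than $V^\dagger$ taken literally; the paper is itself cavalier about that point, but its arithmetic is set up for complex coefficients, whereas yours degenerates to the nonnegative-real case you flag in your parenthetical.)
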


\begin{proof}
    Let $V\ket{0}=\sum_{r\in [R]} \sqrt{v_r}\ket{r}$.
    By Definition~\ref{def:state preparation unitary}, we have
    \begin{equation}
        \sqrt{\sum_{r\in [R]} \abs*{\sqrt{v_r}-\sqrt{\frac{a_r}{\alpha}}}^2}\leq \epsilon.
        \label{eq:appendix B 1}
    \end{equation}
    Note that
    \begin{equation}
        \sum_{r\in [R]}\abs*{v_r-\frac{a_r}{\alpha}}\leq\sum_{r\in [R]} \abs*{\sqrt{v_r}-\sqrt{\frac{a_r}{\alpha}}}\cdot\abs*{\sqrt{v_r}+\sqrt{\frac{a_r}{\alpha}}}
        \leq 2\sum_{r\in [R]}\abs*{\sqrt{v_r}-\sqrt{\frac{a_r}{\alpha}}}.
        \label{eq:appendix B 2}
    \end{equation}
    By the well-known inequality between $l_1$-norm and $l_2$-norm:
    $\norm{\bx}_1\leq \sqrt{R}\cdot\norm{\bx}_2$ for $\bx\in \Co^R$,
    the RHS of \eqref{eq:appendix B 2} is upper bounded by
    \begin{equation*}
        2\sqrt{R}\cdot\sqrt{\sum_{r\in [R]}\abs*{\sqrt{v_r}-\sqrt{\frac{a_r}{\alpha}}}^2}\leq 2\sqrt{R}\epsilon\leq R\epsilon,
    \end{equation*}
    where the first inequality comes from \eqref{eq:appendix B 1},
    and the second inequality uses the assumption $R\geq 4$.
    Thus we have $\sum_{r\in [R]}\abs*{v_r-\frac{a_r}{\alpha}}\leq R\epsilon$.
    Now take $b=s$, $V_C=V^\dagger$ and $V_D=V$ in Definition~\ref{def:state preparation pair}, it immediately follows that
    $\sum_{r\in [R]}\abs*{\alpha \parens*{c_r^*d_r}-a_r}=\alpha \sum_{r\in [R]}\abs*{v_r-\frac{a_r}{\alpha}}\leq \alpha R \epsilon$,
    which implies that $\parens*{V^\dagger, V}$ is an $\parens{\alpha, s, \alpha R \epsilon}$-state-preparation-pair of $\ba$.
\end{proof}

We note that in the above lemma, the upper bound $\alpha R \epsilon$ on the precision of the state preparation pair is very loose,
which is for simplicity of the proofs in Section~\ref{sec:linear_combinations_of_hamiltonian_powers} with the final results unchanged.

\end{document}